\documentclass[11pt]{article}
\usepackage[T1]{fontenc}
\usepackage{lmodern,amsmath,amssymb,amsthm,mathtools}
\usepackage[margin=1.05in,includefoot]{geometry}
\usepackage[colorlinks=true,linkcolor=blue,citecolor=blue,urlcolor=blue]{hyperref}
\hypersetup{
  pdftitle={Quantum Leakage from Shamir Sharing with Correlated Resources},
  pdfauthor={Isaac Barouch Essayag; Aryeh Lev Zabokritskiy (Yohananov)},
  pdfkeywords={Shamir secret sharing; quantum leakage; quantum side information; primitive quantum channels; matrix product states},
  pdflang={en-US}
}
\newtheorem{theorem}{Theorem}[section]
\newtheorem{lemma}[theorem]{Lemma}
\newtheorem{proposition}[theorem]{Proposition}
\newtheorem{corollary}[theorem]{Corollary}
\theoremstyle{definition}
\newtheorem{definition}[theorem]{Definition}
\theoremstyle{remark}
\newtheorem{remark}[theorem]{Remark}
\newcommand{\F}{\mathbb F}
\newcommand{\E}{\mathbb E}
\newcommand{\Tr}{\operatorname{Tr}}
\newcommand{\supp}{\operatorname{supp}}
\newcommand{\rank}{\operatorname{rank}}
\newcommand{\id}{\operatorname{id}}
\newcommand{\Adv}{\operatorname{Adv}}
\newcommand{\Dmax}{D_{\max}}
\newcommand{\norm}[1]{\left\lVert #1\right\rVert_1}
\title{Quantum Leakage from Shamir Sharing with Correlated Resources}
\author{Isaac Barouch Essayag$^{1,2}$ \and Aryeh Lev Zabokritskiy (Yohananov)$^{1,2}$}
\date{}
\begin{document}
\maketitle
\begin{center}
\small
$^1$MIGAL--Galilee Research Institute, Kiryat Shmona 1101602, Israel\\
$^2$Tel-Hai University of Kiryat Shmona in the Galilee, Israel\\
\texttt{isaac.es@migal.org.il}, \texttt{yuhanalev@telhai.ac.il}\\
Corresponding author: Aryeh Lev Zabokritskiy (Yohananov)\\
ORCID: \href{https://orcid.org/0000-0003-3151-6192}{0000-0003-3151-6192}
\end{center}
\begin{abstract}
We bound local quantum leakage from Shamir sharing uniformly over prime fields and permitted share-dependent channels. The adversary receives all outputs and a reference to a resource independent of the secret and sharing randomness.
For independent device--reference pairs at fixed output dimension, we determine the exact exponent of the worst-case trace distance for two chosen secrets at every fixed reconstruction rate above one half, including its dependence on entanglement within each pair. For unentangled one-qubit leakage, this improves the sufficient rate of about $0.7334$ proved by Batra et al.
Above one half, we also determine the exact exponent when an arbitrarily entangled support is tensored with single-device states elsewhere. Every fixed support fraction below the reconstruction rate gives exponential security, while a reconstruction set admits a known attack.
Exponential security extends to fixed finite-memory processes with primitive transfer channels, fixed stationary chains satisfying upper operator mixing, and canonical primitive matrix product states, allowing arbitrary finite adversary extensions of the device marginal.
For one-qubit outputs, fully separable resources and independent pairs whose joint output--reference states are positive under partial transpose for every share admit rates slightly below one half.
\end{abstract}
\noindent\textbf{Keywords:} Shamir secret sharing; quantum leakage; quantum side information; primitive quantum channels; matrix product states.

\section{Introduction}
Shamir sharing distributes a classical secret among $n$ devices so that any $t$ shares reconstruct it and fewer than $t$ reveal no information \cite{Shamir}. Leakage resilience asks what an adversary can learn when every device releases a small output computed from its share \cite{BDIR}.
We study this question over prime fields when each device applies a share-dependent quantum channel to its own register. The devices act without communication or feedback on a resource independent of the secret and sharing randomness. The adversary receives every output and a retained quantum reference.
Our security measure is the trace distance between the adversary's states for any two chosen secrets. How does this distance depend on the reconstruction rate $t/n$, the output dimension and the correlations in the resource?

Batra, Kitagawa, Nishimaki, and Yamakawa \cite{BKNY} proved exponential security above a reconstruction rate of about $0.7334$ for unentangled one-qubit leakage with a trivial reference. They also showed that a Greenberger--Horne--Zeilinger (GHZ) state on $t$ devices can yield constant distinguishing advantage.
They asked how far the unentangled rate can be lowered, how many devices can share an arbitrary entangled state, and whether the structure of the correlations improves on bounds obtained by revealing those devices' shares. Classical leakage supplies a benchmark for the first question. Sun and Wootters \cite{SW} proved security for any fixed number of classical bits per share at every fixed rate above one half. Quantum-proof multi-source extraction also treats quantum side information, under entropy assumptions and initial source independence \cite{CLW} or conditional independence given the side information \cite{AFPS}. Here the devices hold correlated shares of a fixed secret.

We first allow arbitrary entanglement between each device and its own part of the adversary's reference, with independence across these device--reference pairs. Fix the sharing parameters and two distinct secrets, and write $L$ for the supremum of their trace distance over the resources and channels in the class under consideration.
For fixed output dimension $d$ and $t/n\to R>1/2$, we prove
\[
 \lim_{n\to\infty}-\frac1n\log L
 =R\log\frac1{\kappa(K)},\qquad
 \kappa(K)\coloneqq \frac K\pi\sin\frac\pi K.
\]
Here $K=d^2$ for unrestricted pairs and $K=dr$ when each pair has Schmidt number at most $r\le d$. The Schmidt number is the smallest bound on Schmidt rank in a pure-state decomposition. For $d$ classical output symbols, $K=d$ even with arbitrary finite references.
The bounds hold uniformly over prime fields, evaluation points and distinct secret pairs. Local attacks attain the stated exponent for each such pair.
For a single classical bit with trivial reference, the exponent was already implied by Sun and Wootters \cite[Corollary~4.3]{SW} and the additive-sharing lower bound of Maji et al.\ \cite[Theorems~1 and~10]{Maji}.

The same independent-pair model permits a growing output budget. At $t/n\ge1/2+\eta$ for fixed $0<\eta<1/2$, each device may emit up to $\lfloor(\eta/200)\log_2 n\rfloor$ qubits with stretched-exponentially small distance. For unrestricted pairs and output dimension $d\to\infty$ in this range, $-\log L=(1+o(1))\pi^2t/(6d^4)$.
Nguyen's logarithmic-leakage bound concerns classical outputs \cite[Theorems~2 and~6]{Nguyen}. Here the outputs are quantum, the adversary retains the references, and the leading exponent is sharp.

For one-qubit outputs, we answer the first question of Batra et al.\ in part by proving exponential security at $t/n\ge1/2-5\cdot10^{-10}$ when the resource is fully separable across all devices and the adversary reference.
The same rate holds for independent families of joint output--reference states $\sigma_i(y)$ satisfying $\sigma_i(y)\le2\gamma_i$ for every share $y$, with a fixed density operator $\gamma_i$ at each device.
This includes independent qubit--reference outputs positive under partial transpose for every share and classical-bit outputs with arbitrary independent references.
For a classical bit with trivial reference, Sun and Wootters \cite{SW} crossed one half by an unspecified constant and Nguyen gave the explicit sufficient rate $1/2-10^{-14}$ \cite[Theorem~1 and Corollary~6]{Nguyen}.
Security at every positive rate for arbitrary one-qubit channels remains open, even with trivial references.

The second question concerns entanglement among the devices. Suppose an arbitrary state on a set $S$ of devices and the reference is tensored with individual states outside $S$, as in \cite[Definition~9]{BKNY}.
At $t/n\to R>1/2$ and $|S|/n\to\alpha<R$, the normalized exponent is $(R-\alpha)\log(1/\kappa(d))$. If only the full device marginal has this factorization and the adversary holds an arbitrary finite extension, the exponent is $(R-\alpha)\log(1/\kappa(d^2))$.
These results settle the fixed-fraction form of the question above one half in both models. The earlier sufficient support fraction was about $(R-0.7334)/(1-0.7334)$ for $R>0.7334$ \cite{BKNY}. Every fixed fraction below $R$ now gives exponential security, and a support of $t$ devices admits the known GHZ attack.
For a support of size $t-\ell$ with $1\le\ell<t$, our attacks improve the partial-GHZ lower bound of order $2^{-\ell}$ \cite[Section~6.3]{BKNY} to a constant times $\kappa(K)^\ell$, with $K=d$ or $d^2$ in the respective models.

A weaker assumption specifies only a marginal on some of the devices. If a marginal on at least a fixed fraction $b_0$ of them is bounded above in operator order by a fixed multiple of a product of single-device states, every fixed rate $R>1-b_0/2$ gives exponential security for fixed output dimension and arbitrary finite adversary extensions.
Thus when only the devices outside $S$ have a product marginal, security holds for $\alpha<2R-1$. The interval $2R-1\le\alpha<R$ remains open under this weaker hypothesis.
The marginal bounds give the same sufficient rate $R>1-b_0/2$ for a network of independent edge sources with a fraction $b_0$ of pairwise nonadjacent devices. They also give security above $3/4$ for graph states with a product marginal on half the vertices \cite{Hein} and, with exponentially high probability, for a Haar-random pure state of the $n$ device qubits \cite{HLW}. The random-state guarantee covers strategies chosen with knowledge of that state.

Correlations may also extend throughout the devices. At fixed output dimension, we prove exponential security at every fixed rate above one half for successive emissions from a fixed finite memory with primitive transfer channel, for fixed stationary chains satisfying upper operator mixing \cite{Hiai,HMO}, and for canonical matrix product states generated by fixed matrices with a primitive transfer channel \cite{Sanz}.
The guarantee allows every finite extension of the device marginal, uniformly over the initial memory state or matrix product boundary vectors as applicable. Its exponent and length cutoff may depend on the fixed process and the rate margin.
One fixed two-dimensional memory has correlations at every separation and satisfies the theorem. A family of two-dimensional primitive memories varying with $n$ admits constant distinguishing advantage at full threshold. These examples relate to the third question of Batra et al.\ by identifying which restrictions on a preparation process suffice.

For independent blocks of bounded size, we obtain the same above-half guarantee with arbitrary block references and a separate share-dependent channel at each device. The two-device blocks of Batra et al.\ \cite{BKNY} permit joint access to the shares within each block, with a trivial outside reference.
For resources prepared from a product device marginal by disjoint gates of fixed arity acting on complete device register groups, logarithmic depth is necessary and sufficient in order of growth for constant distinguishing advantage at full threshold with one-qubit outputs, when all causal influence passes through the gates.

Section~\ref{sec:model} defines the leakage model. Sections~\ref{sec:ea} and~\ref{sec:sharp} give the independent-pair bounds, exact exponents and growing output budgets. Section~\ref{bh:section} treats one-qubit leakage below one half. Sections~\ref{sec:correlation} and~\ref{sec:mixing} treat selected marginals, entangled supports, mixing chains and finite quantum memory. Section~\ref{sec:conclusion} states the remaining open questions.

\section{Model and notation}\label{sec:model}
We use Shamir's polynomial sharing scheme \cite{Shamir}. Throughout, $n,t$ are positive integers with $1\le t\le n$, $p>n$ is prime, and the evaluation points $x_1,\ldots,x_n$ are distinct nonzero elements of $\F_p$. Write $x\coloneqq(x_1,\ldots,x_n)$.
To share $s\in\F_p$, sample uniformly a polynomial $f$ of degree less than $t$ with $f(0)=s$, and give $Y_i\coloneqq f(x_i)$ to device $i$.
Any set $T$ of $t$ devices is a reconstruction set: Lagrange interpolation gives nonzero reconstruction coefficients $\lambda_i$ with $s=\sum_{i\in T}\lambda_iY_i$.

An initial state $\Omega_{A_1\cdots A_n E}$ is independent of the secret and sharing randomness.
For each possible share $y$, device $i$ has a fixed completely positive trace-preserving map
$\Phi_i^y:A_i\longrightarrow B_i$, where $\dim B_i=2$ unless a result specifies a larger output dimension.
An output dimension is fixed as $n$ grows except in Section~\ref{sec:growing-output}, where its dependence on $n$ is stated explicitly.
For classical-bit leakage, each $\Phi_i^y$ has a fixed classical output register: its joint output with any reference is block diagonal in the designated basis of $B_i$.
All spaces are finite dimensional, and ancillary dimensions may be arbitrarily large.
The attacker receives $B_1\cdots B_nE$ and may process them jointly.
There is no communication between devices or feedback from the attacker during leakage.
Writing $\rho_s$ for the state received by the attacker when the shared secret is $s$, our security measure is
\[
 \Adv(\rho,\sigma)\coloneqq \frac12\norm{\rho-\sigma}.
\]
Every security statement is for every pair of chosen secrets and every permitted family of maps.
A \emph{public seed} is a classical random variable, independent of the secret and the sharing randomness, whose value selects the initial state and the fixed local maps of one branch of the experiment.
Such a seed may be conditioned on, provided each branch, as a complete conditional experiment including the adversary's reference, satisfies the claimed hypotheses.
For a public-seed experiment, we average the conditional leakage bounds over the seed; the same bound also holds when the seed is withheld, by contractivity under discarding the classical label.
A uniform exponential estimate requires constants independent of the seed value.
Every bound below that is uniform over its stated class therefore extends to such experiments, provided each branch keeps the standard Shamir sampling law and a resource independent of the secret and sharing randomness.

We write $[n]\coloneqq \{1,\ldots,n\}$ and $\mathcal D(H)$ for the density operators on a finite-dimensional Hilbert space $H$.
The order $X\le Y$ means that $Y-X$ is positive semidefinite.
For an operator $X$, $\|X\|_1\coloneqq \Tr\sqrt{X^\dagger X}$ is the trace norm and $\|X\|_2\coloneqq (\Tr X^\dagger X)^{1/2}$ is the Hilbert--Schmidt norm. We write $\|X\|_\infty$ for the operator norm.
The notation $\Omega_{A_I}$ denotes the reduced state on the device registers indexed by $I$.
Every inverse of a possibly singular density operator is restricted to its support.
Unless the base is displayed, logarithms are natural, including in binary entropy functions.

For independent device--reference pairs, the initial resource has the form $\bigotimes_{i=1}^n\Omega_{A_iR_i}$, with all $R_i$ held by the adversary.
For a general resource $\Omega_{A_1\cdots A_nE}$, we will compare selected device marginals with products of single-device or block states.
The rest of the resource remains available as an adversary reference in that comparison.
The below-half section first treats independent qubit state maps and fully separable resources, then extends the argument to independent output--reference families dominated by twice a fixed density operator, including PPT qubit outputs.

The common numerical function in the Fourier bounds is
\[
 \kappa(K)\coloneqq \frac K\pi\sin\frac\pi K \quad(K\ge2),\qquad
 \kappa(4)=\frac{2\sqrt2}{\pi}<1.
\]
The argument $K$ denotes an operator-domination constant: every state of an indexed output family, jointly with any retained reference, is at most $K$ times one fixed density operator, as in Lemma~\ref{lem:fourier}.
In the qubit leakage model with an arbitrary reference, $K=4$; for an emitted system of dimension $d$, it is $d^2$.
The constant of \cite{BKNY} indexed by $b$ output qubits, which have no retained reference, equals $\kappa(2^b)$, the value $K=d$ for $d=2^b$ in this notation.

Several results reveal a subset of shares and analyze the leakage from the remaining devices.
We recall the conditioning statement of \cite[Lemma~5]{BKNY} in our notation.

\begin{lemma}[Exact conditioning]\label{lem:conditioning}
Let $S\subseteq[n]$, $r\coloneqq |S|<t$, and $I\coloneqq [n]\setminus S$.
For each secret, $Y_S$ is uniform on $\F_p^S$.
Conditioned on $Y_S=z$, the remaining shares are coordinatewise affine images of an $(n-r,t-r)$ Shamir instance.
\end{lemma}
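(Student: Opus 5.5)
The plan is to work directly with the polynomial $f$ and split it into a part determined by the secret and the revealed shares plus a part that vanishes at $0$ and on $x_S$. For the first claim, fix $s$ and take the $r+1\le t$ points $0$ and $x_j$, $j\in S$; they are distinct because every $x_j$ is nonzero. The evaluation map from polynomials of degree less than $t$ to $\F_p^{\{0\}\cup S}$ is surjective by Lagrange interpolation (a Vandermonde argument), since $r+1\le t$. Its fibres are cosets of a common kernel, so every fibre has the same size. Sampling $f$ uniformly subject to $f(0)=s$ therefore makes $(f(x_j))_{j\in S}$ uniform on $\F_p^S$. This holds for every $s$, so the law of $Y_S$ does not depend on the secret.

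For the conditional structure, fix $s$ and $z\in\F_p^S$. Let $g_{s,z}$ be the unique polynomial of degree at most $r$ with $g_{s,z}(0)=s$ and $g_{s,z}(x_j)=z_j$ for $j\in S$. Then $f=g_{s,z}+h$, where $h$ has degree less than $t$ and vanishes at $0$ and on $x_S$. Write $h(X)=X\prod_{j\in S}(X-x_j)\,q(X)$, so that $\deg q<t-r-1$. Conditioned on $(f(0),Y_S)=(s,z)$, the polynomial $f$ is uniform on this affine coset, so $q$ is uniform among polynomials of degree less than $t-r-1$. It is better to write $h(X)=\prod_{j\in S}(X-x_j)\,u(X)$ with $u$ of degree less than $t-r$ and $u(0)=0$. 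Then $u$ is exactly a sharing polynomial of the secret $0$ in an $(n-r,t-r)$ scheme, and the map $q\mapsto u$ is a bijection.

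The conclusion follows by direct evaluation. For $i\in I$,
\[
Y_i=g_{s,z}(x_i)+c_i\,u(x_i),\qquad c_i\coloneqq \prod_{j\in S}(x_i-x_j)\ne0 .
\]
The points $(x_i)_{i\in I}$ are distinct and nonzero, so $(u(x_i))_{i\in I}$ are the shares of an $(n-r,t-r)$ Shamir instance on those points, and each $Y_i$ is an affine image of its share with nonzero slope $c_i$. The secret in that instance is $0$. If one prefers the remaining instance to carry a nontrivial secret, the same decomposition applies with $g$ interpolating only $z$. Take $g_z$ with $g_z(x_j)=z_j$ and $g_z(0)=0$, and put $u(0)=s/\prod_{j\in S}(-x_j)$, which is well defined because the $x_j$ are nonzero. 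This gives an $(n-r,t-r)$ sharing of a secret $s'$ related to $s$ by a nonzero scalar, and the affine maps $y\mapsto g_z(x_i)+c_iy$ do not depend on $s$. I expect this second form to be the one used later, since it preserves distinctness of the two secrets.

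The only step that needs care is the uniformity of the conditional law of $u$. The bijection between the coset and the set of polynomials $u$ of degree less than $t-r$ with the prescribed $u(0)$ must be verified explicitly: multiplication by $\prod_{j\in S}(X-x_j)$ is injective, and the dimension count $t-(r+1)=(t-r)-1$ matches. Everything else is Lagrange interpolation.
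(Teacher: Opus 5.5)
Your proposal is correct and follows essentially the paper's route. Your second decomposition $f=g_z+L_Su$, with $L_S(X)=\prod_{j\in S}(X-x_j)$, is the paper's parametrization $f=h_z+L_Sg$; the only difference is that you normalize the interpolant by $g_z(0)=0$ (degree at most $r$) rather than taking $\deg h_z<r$, so your residual secret is $s/L_S(0)$ instead of $(s-h_z(0))/L_S(0)$.

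That second form is the one that matters, because its offsets $g_z(x_i)$ do not depend on the secret. Your first, secret-zero version has $s$-dependent offsets $g_{s,z}(x_i)$, which could not be absorbed into the leakage maps.
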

\begin{proof}
Let $h_z$ interpolate $z$ at $S$ with degree less than $r$, and let
$L_S(X)\coloneqq \prod_{j\in S}(X-x_j)$.
For $S=\varnothing$, use $h_z\coloneqq 0,L_S\coloneqq 1$.
The parametrization in \cite[Lemma~5, Eqs.~(21)--(23)]{BKNY} is
\[
 f=h_z+L_Sg,\qquad \deg g<t-r,\qquad
 g(0)=\frac{s-h_z(0)}{L_S(0)}.
\]
Conditioned on $Y_S=z$, the polynomial $g$ is uniform with these constraints, and each branch has probability $p^{-r}$ independently of $s$.
For $i\in I$,
$f(x_i)=h_z(x_i)+L_S(x_i)g(x_i)$, with $L_S(x_i)\ne0$.
These secret-independent affine coordinate transformations depend on $z$ and may be absorbed into the leakage maps.
\end{proof}

The one-device Fourier estimate reduces leakage for independent device--reference pairs to a weighted sum on the dual code of Shamir sharing, which is maximum distance separable (MDS).
Combining this reduction with the weighted estimate of Sun and Wootters gives the bound above one half in Theorem~\ref{thm:half}.
Matching lower constructions determine the exact exponent and its dependence on the entanglement within each pair.
Keeping the output-dimension dependence explicit gives the growing-budget estimate.
The below-half qubit proof instead expands each output around its mean state and controls the resulting tensors through real inner products.
For correlated resources, revealing complementary shares and comparing the selected marginal with a product transports the independent-pair or block bound to the original resource.
The support theorem uses a direct Fourier argument that retains the entangled support in the Fourier sum.

\section{Independent device--reference pairs}\label{sec:ea}
The joint output--reference space may have arbitrarily large finite dimension.
The useful replacement is domination by one fixed density operator, uniformly over the share value.
The bound for independent device--reference pairs proved here is the quantitative input for the single-device marginal comparisons below.

\begin{theorem}[Security above one half]\label{thm:half}
Fix an output dimension $d\ge2$ and $\eta\in(0,1/2)$.
Suppose the initial resource is $\bigotimes_{i=1}^n\Omega_{A_iR_i}$ and each local channel emits a system of dimension $d$.
There are constants $A,b,n_0>0$, depending only on $d,\eta$, such that, for $n\ge n_0$ and $t/n\ge1/2+\eta$,
\begin{equation}\label{eq:half}
 \Adv(\rho_s,\rho_{s'})
 \le A n^b\kappa(d^2)^{t-1}
       \exp\!\left(\frac{2\pi^2 n}{3p^2}\right).
\end{equation}
In particular, the distance is at most $2^{-cn}$ for some $c>0$ depending only on $d,\eta$.
The constants are uniform over every prime $p>n$, all distinct nonzero evaluation points, all chosen secrets and local channels, and all finite input and reference dimensions.
\end{theorem}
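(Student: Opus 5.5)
The plan has three stages: a one-device Fourier estimate, a Fourier expansion over the dual code, and the weighted estimate of Sun and Wootters. For each device $i$ write $\sigma_i(y)\coloneqq(\Phi_i^y\otimes\id_{R_i})(\Omega_{A_iR_i})$ for the joint output--reference state. Since $\Omega_{A_iR_i}$ can be purified and $\Phi_i^y\otimes\id$ acts on a $d$-dimensional output, every $\sigma_i(y)\le d^2\,(\tfrac{1}{d}I_{B_i}\otimes\Omega_{R_i})$. This holds because any state on $B\otimes R$ satisfies $\sigma_{BR}\le d\,I_B\otimes\sigma_R$, and $\sigma_R=\Omega_{R_i}$ does not depend on $y$. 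So the family is dominated by $K=d^2$ times a fixed density operator $\gamma_i$.

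The first step is the one-device Fourier estimate (Lemma~\ref{lem:fourier}, which the paper states later and I take as available). Define $\hat\sigma_i(\alpha)\coloneqq\sum_y \omega^{\alpha y}\sigma_i(y)/p$ and normalize it as $\gamma_i^{-1/2}\hat\sigma_i(\alpha)\gamma_i^{-1/2}$. The estimate should give $\sum_{\alpha\ne0}$-type control showing that the normalized weight at each nonzero frequency is at most about $\kappa(K)$ in an appropriate sense. The underlying fact is that an operator-valued function bounded by $K\gamma$ and of unit trace has Fourier coefficients whose weighted trace norm behaves like that of the indicator of a set of density $1/K$, namely $\frac{K}{\pi}\sin\frac{\pi}{K}$ up to finite-$p$ corrections of order $\pi^2/p^2$. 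Those corrections accumulate over $n$ devices to give the factor $\exp(2\pi^2n/(3p^2))$.

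Next I expand the product. We have $\rho_s=\E_f\bigotimes_i\sigma_i(f(x_i))$. Writing each factor through its Fourier transform and averaging over the Shamir randomness leaves only frequency vectors $\alpha$ in the dual code $C^\perp$, each weighted by $\omega^{\lambda(\alpha)s}$. Here $C^\perp$ is a generalized Reed--Solomon code of dimension $n-t$ and minimum distance $t+1$. In the difference $\rho_s-\rho_{s'}$ only the $\alpha$ whose secret coefficient is nonzero survive. By the triangle inequality and multiplicativity of the normalized trace bound across tensor factors (conjugating by $\gamma^{\otimes n}$, then using $\|\gamma^{1/2}X\gamma^{1/2}\|_1$ bounds), the advantage is at most a constant times $\sum_{\alpha}\prod_{i:\alpha_i\ne0}\beta_i(\alpha_i)$. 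The sum runs over nonzero $\alpha$ in the extended MDS code, with weights satisfying $\sum$-type or $\max$-type bounds by $\kappa(d^2)$.

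The main obstacle is bounding this weighted MDS sum by $n^b\kappa^{t-1}$ when $t/n\ge1/2+\eta$. Here I would invoke the weighted estimate of Sun and Wootters for MDS duals. They bound $\sum_{c\in C^\perp\setminus0}\prod_i w_i(c_i)$ when each $w_i$ has $\ell_1$-type mass at most $\kappa$ per coordinate by partitioning coordinates and using that any $n-t$ coordinates determine a codeword. Their count gives at most $\binom{n}{\cdot}$ choices times $\kappa^{\mathrm{weight}}$, and weight $\ge t$ (or $t-1$ after accounting for the secret coordinate) combined with rate above one half makes the combinatorial factor polynomial. Checking that the quantum weights satisfy exactly the hypotheses of their lemma (an $\ell_1$ bound summed over $\alpha_i$ rather than a pointwise bound) is where I expect the care to be needed; if it fails, I would use Hölder across a split of the support into a free set of size $n-t$ and a determined set. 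Finally, $\kappa(d^2)<1$ fixed and $t\ge(1/2+\eta)n$ give $2^{-cn}$ for $p$ large. For small $p$ (so $p>n$ still), the factor $e^{2\pi^2 n/(3p^2)}\le e^{2\pi^2/(3n)}$ is bounded, which gives uniformity.
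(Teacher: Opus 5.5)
Your domination step ($\sigma_i(y)\le d^2(\tfrac1dI_{B_i}\otimes\Omega_{R_i})$, with the reference marginal fixed by trace preservation) matches the paper, as does your reduction of $\Adv(\rho_s,\rho_{s'})$ to a nonnegative weighted sum over nonzero dual codewords. The step you flag as the main obstacle, however, contains a genuine gap.

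The weights that work are simply $h_i(a)=\norm{\widehat{\sigma_i}(a)}$. Trace norm is multiplicative on tensor products, so no conjugation by $\gamma^{\otimes n}$ is needed. Lemma~\ref{lem:fourier} supplies the two facts that matter: a pointwise cap $h_i(a)\le\kappa(d^2)\lambda_p$ for $a\ne0$, and the squared-mass bound $\sum_{a\ne0}h_i(a)^2\le d^2-1$, which comes from Parseval after whitening by $\gamma_i$. Your stage two never isolates the second.

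There is no $\ell_1$ bound of the kind you want to feed into Sun--Wootters. For a classical bit labelling a random half of $\F_p$, $\sum_{a\ne0}h_i(a)$ is of order $\sqrt p$. Counting supports cannot work either. With only a pointwise cap there are $p-1$ admissible symbols per coordinate and $p^{n-t+1}$ dual words, so ``$\binom{n}{\cdot}$ choices times $\kappa^{\rm weight}$'' gives no field-independent polynomial factor. Your fallback, H\"older across a free/determined split, is essentially the information-set (MDS Brascamp--Lieb) estimate of Lemma~\ref{lem:mds}. Near rate one half it gives a bound exceeding one. It yields decay only above a fixed rate bounded away from one half, about $0.92$ for one emitted qubit (Theorem~\ref{thm:ea}). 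It never gives $An^b\kappa(d^2)^{t-1}$ at every rate above one half.

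The missing idea is the thresholding-plus-list-recovery mechanism behind Lemma~\ref{lem:weighted}, which needs exactly the cap and the squared mass. Fix a small cutoff $\theta$. By the squared-mass bound, each heavy set $\{0\}\cup\{a\ne0:w_i(a)>\theta\}$ has size at most $1+K/\theta^2$, independent of $p$.

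Field-independent list recovery for GRS codes (Jeronimo, Lemma~\ref{gv:quantitative-list}) shows that only $\mathrm{poly}(n)$ dual words leave these lists on fewer than $\epsilon n$ coordinates. Each such word has weight at least $t$, so by the cap it contributes at most $\kappa(K)^t$.

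Every other word has at least $\epsilon n$ light coordinates. Sum over the at most $2^n$ marked sets and extract $\theta^{\epsilon n}$. Then apply Cauchy--Schwarz over two disjoint information sets of size $k=n-t+1$ chosen among the unmarked coordinates. These exist because $t/n\ge1/2+\eta$ gives $n-\epsilon n\ge 2k$ for small $\epsilon$; this, not a binomial count, is where the rate above one half enters. This class costs at most $2^n\theta^{\epsilon n}K^{k}\le\kappa(K)^n$ once $\theta$ is small enough.

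Applied to $\mathcal C_0$ (dimension $t-1$) with $w_i=h_i/\lambda_p$, this gives the theorem. Undoing the normalization gives the factor $\lambda_p^n\le\exp(2\pi^2n/(3p^2))$.
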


Our analysis follows the Fourier approach to quantum Shamir leakage in \cite{BKNY}.
For independent device--reference pairs, we dominate each local output family by a single state with a constant that depends on the emitted dimension and is independent of the reference dimension.
This bounds the maximum nonzero Fourier coefficient and the squared Fourier mass.
The dual MDS code of the sharing scheme then turns leakage into a weighted code sum.
We use the weighted Fourier thresholding method of Sun and Wootters \cite{SW}, based on algebraic list recovery \cite{Jeronimo}; Section~\ref{sec:growing-output} uses Nguyen's quantitative version when the output dimension grows \cite{Nguyen}.
We prove the local bounds first and return to Theorem~\ref{thm:half} after identifying that code.

\subsection{A dominated-state Fourier estimate}\label{sec:fourier}
For an integer $m\ge1$ and a map $\sigma:\F_p^m\to\mathcal D(H)$, define the normalized Fourier coefficients and their norms by
\[
 \widehat{\sigma}(a)\coloneqq \E_z\omega^{-\langle a,z\rangle}\sigma_z,\qquad
 h(a)\coloneqq \norm{\widehat{\sigma}(a)},\qquad \omega\coloneqq e^{2\pi i/p}.
\]

The output-dimension Fourier estimates of Batra et al.\ \cite[Lemma~2]{BKNY} bound both the squared Fourier mass and the nonzero coefficients.
Here a common operator dominator replaces the output-dimension hypothesis, and the finite-field cap has a multiplicative quadrature correction.

\begin{lemma}\label{lem:fourier}
Suppose $\gamma,\sigma_z$ are density operators and
$\sigma_z\le K\gamma$ for all $z$, where $K\ge2$.
Put
\[
 \lambda_p\coloneqq 1+\frac{2\pi^2}{3p^2},\qquad
 c_{K,p}\coloneqq \min\{1,\kappa(K)\lambda_p\}.
\]
Then
\[
 h(0)=1,\quad \sum_a h(a)^2\le K,\quad
 \sum_{a\ne0}h(a)^2\le K-1,\quad h(a)\le c_{K,p}\ (a\ne0).
\]
Consequently, for every real $Q\ge2$,
\[
 \sum_{a\ne0}h(a)^Q\le (K-1)c_{K,p}^{Q-2}.
\]
All bounds are independent of $\dim H$ and of $m$.
\end{lemma}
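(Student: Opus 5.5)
The plan is to prove the four bounds in order: the first two by a weighted Hilbert--Schmidt (Parseval) argument, the cap by reducing to a one-dimensional classical extremal problem. Throughout I work on $\supp\gamma$; this is harmless because $\sigma_z\le K\gamma$ forces every $\sigma_z$, and hence every $\widehat\sigma(a)$, to live there.

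\emph{First two bounds.} Since $\widehat\sigma(0)=\E_z\sigma_z$ is a density operator, $h(0)=1$. For the squared mass, write $X=\gamma^{1/4}(\gamma^{-1/4}X\gamma^{-1/4})\gamma^{1/4}$. H\"older then gives
\[
\|X\|_1\le\|\gamma^{1/4}\|_4^2\,\|\gamma^{-1/4}X\gamma^{-1/4}\|_2=\|\gamma^{-1/4}X\gamma^{-1/4}\|_2 .
\]
Parseval over $\F_p^m$ for the linear map $X\mapsto\gamma^{-1/4}X\gamma^{-1/4}$ then yields
\[
\sum_a h(a)^2\le \E_z\Tr\bigl(\sigma_z\,\gamma^{-1/2}\sigma_z\gamma^{-1/2}\bigr).
\]
Put $B_z\coloneqq\gamma^{-1/2}\sigma_z\gamma^{-1/2}\ge0$. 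Then $\sigma_z\le K\gamma$ gives $\Tr(\sigma_zB_z)\le K\Tr(\gamma B_z)=K\Tr\sigma_z=K$. So $\sum_ah(a)^2\le K$, and removing $h(0)^2=1$ gives $\sum_{a\ne0}h(a)^2\le K-1$. Neither step sees $\dim H$ or $m$.

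\emph{Cap for $a\ne0$.} The trivial bound $h(a)\le\E_z\|\sigma_z\|_1=1$ supplies the $\min$ with $1$. For the main bound, fix $a\ne0$ and group $z$ by $u=\langle a,z\rangle$. Each fibre average $\tau_u$ is a state with $\tau_u\le K\gamma$, and $\widehat\sigma(a)=\frac1p\sum_u\omega^{-u}\tau_u$. Use $\|X\|_1=\max_U\operatorname{Re}\Tr(UX)$ over unitaries, and spectrally decompose $U=\sum_je^{i\theta_j}P_j$. Then
\[
\operatorname{Re}\Tr(U\widehat\sigma(a))=\sum_j g_j\operatorname{Re}\Bigl[e^{i\theta_j}\tfrac1p\sum_u\omega^{-u}w_u(j)\Bigr],
\]
where $g_j=\Tr(\gamma P_j)$, $q_u(j)=\Tr(\tau_uP_j)$ and $w_u(j)=q_u(j)/g_j\in[0,K]$. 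This is now a commutative problem. Let $W_j\coloneqq\frac1p\sum_uw_u(j)$; the normalization of each $\tau_u$ gives $\sum_jg_jW_j=1$. Define
\[
\psi(W)\coloneqq\sup\Bigl\{\Bigl|\tfrac1p\textstyle\sum_u\omega^{-u}w_u\Bigr|:\ 0\le w_u\le K,\ \tfrac1p\sum_uw_u=W\Bigr\}.
\]
This $\psi$ is concave on $[0,K]$, being the supremum of a convex function's linear minorants over a convex set cut by a linear constraint. Jensen with the weights $g_j$ therefore gives $h(a)\le\sum_jg_j\psi(W_j)\le\psi(1)$.

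\emph{Main obstacle: evaluating $\psi(1)$.} This is the step that produces the constant $\kappa(K)\lambda_p$. By a bathtub argument, the optimum puts $w_u=K$ on about $p/K$ consecutive residues, with one fractional endpoint. This gives the value $\frac Kp\bigl|\sin(\pi m/p)/\sin(\pi/p)\bigr|$, suitably interpolated for non-integer $m=p/K$. Bounding $\frac{\pi/p}{\sin(\pi/p)}\le1+\frac{2\pi^2}{3p^2}$ (valid for $p\ge3$; $p=2$ is covered by the cap $1$) converts this into $\kappa(K)\lambda_p$. The fractional endpoint needs care; I would handle it by concavity of the arc sum in the arc length.

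\emph{Consequence.} For $Q\ge2$, write $h(a)^Q\le c_{K,p}^{Q-2}h(a)^2$ for $a\ne0$ and sum using $\sum_{a\ne0}h(a)^2\le K-1$.
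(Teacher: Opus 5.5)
Your Parseval part is correct. The $\gamma^{1/4}$-sandwiched H\"older inequality is a harmless variant of the paper's one-sided splitting. The trivial cap and the $Q$-moment step are also fine. The gap is in the cap step: $\psi$ is \emph{not} concave, so Jensen does not give $\sum_j g_j\psi(W_j)\le\psi(1)$.

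\emph{Why concavity fails.} Your justification shows only that, for each fixed phase $\phi$, the value $\psi_\phi(W)$ is concave in $W$. Here $\psi_\phi$ is the same supremum with $|\cdot|$ replaced by $\operatorname{Re}(e^{i\phi}\,\cdot\,)$, i.e.\ the value of a linear program. But $\psi=\sup_\phi\psi_\phi$ is a supremum of concave functions, and such a supremum need not be concave. Concretely, take $W\in[K/p,2K/p]$ and $x\coloneqq pW/K$. The maximizing vertex gives $\psi(W)=\frac Kp|1+(x-1)\omega^{-1}|$, which is strictly convex in $x$ for $p\ge3$.

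\emph{The intermediate claim is false.} The inequality $h(a)\le\psi(1)$ itself fails. Take $p=5$, $K=2$, $m=1$, $\gamma=I/2$, and $\sigma_z=|0\rangle\langle0|$ for $z\in\{0,1\}$, $\sigma_z=|1\rangle\langle1|$ for $z\in\{2,3,4\}$. Then $h(1)=\frac45\cos\frac\pi5\approx0.647$. On the other hand, $\psi(1)=\frac15|2+2\omega^{-1}+\omega^{-2}|\approx0.616$, attained at the vertex $(2,2,1,0,0)$. Two projectors with unequal masses $W_1=4/5$ and $W_2=6/5$ beat every single profile of mass one. So your bathtub evaluation of $\psi(1)$, however careful at the fractional endpoint, does not bound $h(a)$.

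\emph{Repair.} Apply Jensen to a concave majorant instead. Put $\bar\psi(W)\coloneqq\frac Kp\sin(\pi W/K)/\sin(\pi/p)$, which is concave on $[0,K]$.
\begin{itemize}
\item By the bathtub argument, each $\psi_\phi$ is piecewise linear with breakpoints $W=kK/p$.
\item At a breakpoint it is the real part of a rotated sum over an arc of $k$ residues. Hence it is at most $\frac Kp|\sum_{u<k}\omega^{-u}|=\frac Kp\sin(\pi k/p)/\sin(\pi/p)=\bar\psi(kK/p)$.
\item Concavity of $\bar\psi$ then gives $\psi_\phi\le\bar\psi$ on all of $[0,K]$.
\end{itemize}
This is the correct form of your endpoint remark: along a fractional edge the modulus is convex, so it lies below the chord, and the chords lie below $\bar\psi$. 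You then get
\[
h(a)\le\sum_jg_j\psi_{\theta_j}(W_j)\le\sum_jg_j\bar\psi(W_j)\le\bar\psi(1)=\kappa(K)\frac{\pi/p}{\sin(\pi/p)}\le\kappa(K)\lambda_p ,
\]
which is even slightly sharper than the lemma.

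\emph{How the paper differs.} The paper avoids concavity altogether with a pointwise linear certificate:
\[
\cos\theta\,w\le\ell w+Kq(\cos\theta-\ell)_+ ,\qquad \ell=\cos(\pi/K),\quad 0\le w\le Kq .
\]
This is additive over the spectral projectors of $U$, so no Jensen step is needed. It is essentially the tangent line at $W=1$ of the continuum majorant $\frac K\pi\sin(\pi W/K)$. The paper then controls the Riemann-sum error of $(\cos\theta-\ell)_+$ through the quadratic decay of its Fourier coefficients, which is the source of the factor $\lambda_p$.
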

The common dominator controls both parts of the estimate.
After weighting by its inverse on its support, Parseval bounds the total squared Fourier mass by $K$.
For a nonzero frequency, trace-norm duality reduces the cap to maximizing the real part of a character over a scalar weight of mass one and height at most $K$.
The maximizing weight occupies an arc, whose normalized cosine integral is $\kappa(K)$.
The finite-field quadrature error gives the factor $1+2\pi^2/(3p^2)$ in the lemma.
The complete argument, including singular dominators and this error estimate, is in Supplementary Section~\ref{supp:fourier-proof}.

The estimate applies to local leakage once the output and the adversary reference admit a common dominating state.
The following elementary pinching bound supplies that state.

\begin{lemma}[Fixed-reference domination]\label{lem:reference}
If $\sigma_{BR}$ is a state with $\dim B=d$ and
$\Tr_B\sigma\eqqcolon \tau_R$, then
\[
 \sigma_{BR}\le d I_B\otimes\tau_R
 =d^2\left(\frac{I_B}{d}\otimes\tau_R\right).
\]
\end{lemma}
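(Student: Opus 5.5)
The plan is to prove the operator inequality $\sigma_{BR}\le I_B\otimes\tau_R$ multiplied by $d$, using the pinching (twirling) inequality over a unitary group on $B$. The equality $dI_B\otimes\tau_R=d^2(I_B/d\otimes\tau_R)$ is trivial, so only the first inequality needs work.

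Fix an orthonormal basis $\{|k\rangle\}_{k=0}^{d-1}$ of $B$. Let $X$ be the cyclic shift and $Z$ the clock operator $Z|k\rangle=\omega_d^k|k\rangle$. The $d^2$ Weyl unitaries $W_{ab}=X^aZ^b$ form a unitary $1$-design on $B$. Concretely, for every operator $M$ on $B\otimes R$,
\[
 \frac1{d^2}\sum_{a,b}(W_{ab}\otimes I_R)\,M\,(W_{ab}\otimes I_R)^\dagger=\frac{I_B}{d}\otimes\Tr_B M .
\]
I would verify this on product operators $|j\rangle\langle k|\otimes N$. Averaging over $Z^b$ kills the terms with $j\ne k$, and averaging over $X^a$ then spreads the diagonal evenly.

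Apply this to $M=\sigma_{BR}$. Every term of the sum is positive semidefinite, and the term with $a=b=0$ is $\sigma_{BR}$ itself. Dropping the other terms therefore gives
\[
 \frac1{d^2}\sigma_{BR}\le \frac{I_B}{d}\otimes\tau_R ,
\]
which is exactly the claim after multiplying by $d^2$.

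There is essentially no obstacle; the only step needing care is the $1$-design identity, which is a routine finite computation. An alternative route is to write $\sigma\le\|\cdot\|$-type bounds via Schmidt decomposition: for pure $\sigma=|\psi\rangle\langle\psi|$, Cauchy--Schwarz gives $|\psi\rangle\langle\psi|\le d\,I\otimes\tau$. The general case then follows by convexity, since the right-hand side is linear in $\sigma$ through $\tau_R$. I would present the twirling argument, as it is shorter and clearly independent of $\dim R$.
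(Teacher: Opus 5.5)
Your proof is correct and is essentially the paper's argument. The paper pinches in a basis of $B$, which is your average over $Z^b$, to get $\sigma\le d\,\mathcal P_B(\sigma)$, and then bounds each diagonal block by $\tau_R$; that second step is exactly your average over $X^a$ with the non-identity terms dropped. Your single Weyl twirl onto $\frac{I_B}{d}\otimes\tau_R$ combines these two steps into one and is likewise visibly independent of $\dim R$.
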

\begin{proof}
Pinching in a basis of $B$ gives $\sigma\le d\,\mathcal P_B(\sigma)$.
Each diagonal block is positive and at most the sum of all diagonal blocks, namely $\tau_R$.
Thus $\mathcal P_B(\sigma)\le I_B\otimes\tau_R$.
\end{proof}
A local trace-preserving map leaves the reference marginal unchanged for every share value.
For one emitted qubit, Lemma~\ref{lem:reference} therefore supplies the common domination constant $K=4$, regardless of reference dimension.

\subsection{The Shamir Fourier sum}
The preceding bounds hold separately at each device.
Independence of the device--reference resources makes their Fourier coefficients tensor products, so the remaining estimate is a sum over the dual sharing code.
Assume the initial resource is $\bigotimes_i\Omega_{A_iR_i}$.
Let $\sigma_i(y_i)$ be the joint state of the emitted system $B_i$ and its pre-existing reference $R_i$.
Conditioned on $y$, the state is $\bigotimes_i\sigma_i(y_i)$.

Let
\[
 \mathcal C_0\coloneqq \{(f(x_1),\ldots,f(x_n)): f(0)=0,\ \deg f<t\}
\]
be the zero-secret sharing space, with orthogonality taken in the standard dot product on $\F_p^n$, and put $\mathcal C\coloneqq \mathcal C_0^\perp$.
Then $\mathcal C$ is a linear code of length $n$, dimension $k\coloneqq n-t+1$, and minimum nonzero Hamming weight $t$; since $t=n-k+1$, it is maximum distance separable (MDS).
For completeness, the parity-check columns are
$(x_i,x_i^2,\ldots,x_i^{t-1})^\mathsf T$.
Any $t-1$ columns are independent by a nonzero column scaling of a Vandermonde determinant.
Thus every projection of $\mathcal C$ onto $k$ coordinates is bijective; we call any such set of $k$ coordinates an information set.

The sharing vector for secret $s$ is uniform on $s\mathbf1+\mathcal C_0$, where $\mathbf1$ is the all-ones vector. Fourier inversion and coset orthogonality give
\[
 \rho_s=\sum_{a\in\mathcal C}\omega^{s\sum_i a_i}
              \bigotimes_i\widehat{\sigma_i}(a_i).
\]
The zero frequency cancels between two secrets.
The difference of two phases has modulus at most two, and trace norm is multiplicative on tensor products.
Hence
\begin{equation}\label{eq:fourier-security}
 \Adv(\rho_s,\rho_{s'})
 \le\sum_{a\in\mathcal C\setminus\{0\}}\prod_i h_i(a_i),
\end{equation}
where $h_i(a)\coloneqq\norm{\widehat{\sigma_i}(a)}$.

\subsection{The bound above one half}
The remaining sum in Equation~\eqref{eq:fourier-security} has nonnegative scalar weights, so it can be estimated by a result about generalized Reed--Solomon (GRS) codes.
Such a code has the form
\[
 \mathcal G\coloneqq \{(v_iP(x_i))_{i=1}^n:\deg P<\nu\},
 \qquad v_i\in\F_p^\times,
\]
where the evaluation points are distinct and $\nu$ is the dimension.
For $1\le\nu\le n$, every such code is MDS: a nonzero polynomial of degree below $\nu$ has fewer than $\nu$ roots, so any $\nu$ coordinates form an information set.
The zero-secret space $\mathcal C_0$ is of this form with $v_i=x_i$ and $\nu=t-1$.
Its dual is the code $\mathcal C$ appearing in the leakage sum.
Such duals are again of this form, by the following description of Vandermonde kernels.

\begin{lemma}[Vandermonde kernels]\label{lem:vandermonde-kernel}
Let $x_1,\ldots,x_N\in\F_p$ be distinct, put $T(X)\coloneqq\prod_{i=1}^N(X-x_i)$, and let $0\le M<N$.
The solutions $u\in\F_p^N$ of $\sum_iu_ix_i^l=0$ for $0\le l<M$ are exactly the vectors $(Q(x_i)/T'(x_i))_i$ with $\deg Q<N-M$.
\end{lemma}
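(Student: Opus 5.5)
We prove Lemma~\ref{lem:vandermonde-kernel} by comparing dimensions. First we show that every vector of the stated form solves the system. Then we show that the solution space and the proposed family have the same dimension $N-M$, so the inclusion is an equality.

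The key identity is partial fractions (Lagrange interpolation at the nodes $x_i$). For every polynomial $P$ with $\deg P\le N-2$,
\[
 \sum_{i=1}^N\frac{P(x_i)}{T'(x_i)}=0 .
\]
To see this, interpolate $P$ at the $N$ distinct nodes:
\[
 P(X)=\sum_i P(x_i)\frac{T(X)}{(X-x_i)T'(x_i)} .
\]
The coefficient of $X^{N-1}$ on the right is $\sum_i P(x_i)/T'(x_i)$. On the left it is zero because $\deg P<N-1$. Note that $T'(x_i)=\prod_{j\ne i}(x_i-x_j)\ne0$, since the nodes are distinct.

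Now take $\deg Q<N-M$ and $0\le l<M$. The product $P\coloneqq QX^l$ has degree at most $N-M-1+M-1=N-2$. The identity then gives
\[
 \sum_i\frac{Q(x_i)}{T'(x_i)}x_i^l=0 ,
\]
so every vector $(Q(x_i)/T'(x_i))_i$ is a solution.

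For the dimension count, consider the map $Q\mapsto (Q(x_i)/T'(x_i))_i$ on the space of polynomials of degree below $N-M$. It is linear and injective: if the image is zero, then $Q$ vanishes at $N>N-M-1$ distinct points, so $Q=0$. Its image therefore has dimension $N-M$.

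The solution space is the kernel of the $M\times N$ matrix $(x_i^l)$. Since $M\le N$, any $M$ columns of this matrix form a square Vandermonde matrix on distinct nodes, which is invertible. The matrix therefore has rank $M$, and its kernel has dimension $N-M$. The image is contained in a space of the same dimension, so the two coincide.

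The case $M=0$ is trivial: there are no constraints, the solution space is all of $\F_p^N$, and every $u$ arises from $Q(X)=\sum_i u_iT'(x_i)\prod_{j\ne i}(X-x_j)/T'(x_i)$, which has degree below $N$.

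The only step needing care is the degree bookkeeping in the identity: it is exactly the bound $\deg(QX^l)\le N-2$ that makes the coefficient of $X^{N-1}$ vanish.
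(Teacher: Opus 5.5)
Your proof is correct and follows the same route as the paper: the Lagrange top-coefficient identity $\sum_i P(x_i)/T'(x_i)=0$ for $\deg P\le N-2$ shows that every displayed vector solves the system, and a dimension count ($N-M$ on both sides) gives equality. You additionally spell out the injectivity of $Q\mapsto(Q(x_i)/T'(x_i))_i$ and the rank-$M$ Vandermonde argument, which the paper leaves implicit; the separate treatment of $M=0$ is harmless but already covered by your general dimension argument.
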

\begin{proof}
Lagrange's top-coefficient identity gives $\sum_if(x_i)/T'(x_i)=0$ whenever $\deg f\le N-2$, so every displayed vector is a solution.
These vectors form a space of dimension $N-M$, which is the dimension of the kernel.
\end{proof}

Applying the lemma with $M=\nu$ to $(a_iv_i)_i$ shows that $\mathcal G^\perp$ is the generalized Reed--Solomon code of dimension $n-\nu$ with multipliers $1/(v_iT'(x_i))$.
In particular, $\mathcal C$ is such a code of dimension $k$ with multipliers $(x_i\prod_{j\ne i}(x_i-x_j))^{-1}$.

We use the following specialization of Sun and Wootters \cite[Lemma~1.15 and Section~4.2]{SW}.
The polynomial factor in their proof uses the field-independent list-recovery bound of Jeronimo \cite[Theorem~1.1 and Corollary~8.1]{Jeronimo}.
Here list recovery bounds the number of codewords agreeing with prescribed coordinate lists on sufficiently many positions.
The lists have bounded size, and the bound is polynomial in the code length uniformly in the prime field.

\begin{lemma}[Weighted dual sum, Sun--Wootters]\label{lem:weighted}
Fix $c_*\in(0,1)$, $\Gamma>0$, and $\epsilon\in(0,1/2)$.
There are $A,b,n_0>0$, depending only on these constants, with the following property.
Let $n\ge n_0$, let $p\ge n$ be prime, and let $\mathcal G$ be a generalized Reed--Solomon code of dimension $\nu$ satisfying
$1/2+\epsilon\le\nu/n<1$.
If $w_i:\F_p^\times\to[0,\infty)$ satisfy
\[
 \max_{z\ne0}w_i(z)\le c_*,
 \qquad \sum_{z\ne0}w_i(z)^2\le\Gamma
 \quad(i\in[n]),
\]
then
\[
 \sum_{a\in\mathcal G^\perp\setminus\{0\}}
       \prod_{i\in\supp(a)}w_i(a_i)
 \le A n^b c_*^\nu.
\]
The constants are independent of the field, evaluation points, coordinate multipliers, and weights.
\end{lemma}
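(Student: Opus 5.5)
The plan follows the thresholding argument of \cite[Lemma~1.15 and Section~4.2]{SW} and keeps every constant independent of $p$. Put $k\coloneqq n-\nu$, so $\mathcal G^\perp$ is a GRS code of dimension $k\le(1/2-\epsilon)n$ (Lemma~\ref{lem:vandermonde-kernel}). It is MDS with minimum weight $\nu+1$, and any $k$ coordinates form an information set.

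Fix a threshold $\delta\in(0,c_*)$, to be chosen small in terms of $c_*,\Gamma,\epsilon$. At each coordinate define the heavy list $L_i\coloneqq\{z\ne0:w_i(z)\ge\delta\}$. The $\ell_2$ hypothesis gives $|L_i|\le\Gamma/\delta^2\eqqcolon\ell$, which is bounded independently of $p$. For a nonzero codeword $a$, let $H(a)$ be the set of positions with $a_i\in L_i$ and $U(a)\coloneqq\supp(a)\setminus H(a)$. Then
\[
 \prod_{i\in\supp(a)}w_i(a_i)\le c_*^{|H(a)|}\prod_{i\in U(a)}w_i(a_i),
\]
and each factor over $U(a)$ is below $\delta$. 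I will split the codewords according to whether $|H(a)|$ is at least some agreement threshold $\theta n$, chosen just above the list-recovery radius of a rate-$(1/2-\epsilon)$ code with lists of size $\ell$.

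\emph{Many heavy positions.} Codewords with $|H(a)|\ge\theta n$ agree with the lists $(L_i\cup\{0\})$ on at least $\theta n$ coordinates. The field-independent list-recovery bound of Jeronimo \cite[Theorem~1.1 and Corollary~8.1]{Jeronimo}, applied to the GRS code $\mathcal G^\perp$, says there are at most $\mathrm{poly}(n)$ such codewords, uniformly in $p$ and the multipliers. Rescaling coordinates by nonzero multipliers preserves list sizes, so the multipliers cause no difficulty. For each such codeword, the support has size at least $\nu+1$, every weight is at most $c_*$, and so the product is at most $c_*^{\nu}$. This range therefore contributes $A n^b c_*^{\nu}$.

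\emph{Few heavy positions.} Here $|U(a)|\ge\nu+1-\theta n$, which is a positive fraction of $n$ when $\theta$ is small relative to $\nu/n$. The plan is to sum over $H\subseteq[n]$, costing at most $2^n$ choices, and over the support pattern. For a fixed pattern, bound the light part by an $\ell_2$ argument without counting codewords, since their number grows with $p$. Because $k<n/2$, choose two disjoint information sets $J_1,J_2$ inside $[n]$. By Cauchy--Schwarz and the bijectivity of projections onto information sets,
\[
 \sum_a\prod_{i\in J_1\cup J_2}\tilde w_i(a_i)\le\prod_{r=1,2}\Bigl(\sum_{u\in\F_p^{J_r}}\prod_{i\in J_r}\tilde w_i(u_i)^2\Bigr)^{1/2}\le(1+\Gamma)^{k}.
\]
Here $\tilde w_i$ is $w_i$ on nonzero values and $1$ at $0$. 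The coordinates outside $J_1\cup J_2$ then carry the light factors $\delta$ directly.

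\emph{Main obstacle.} The hard step is making the few-heavy estimate beat $c_*^{\nu}$. Its total is roughly $2^{O(n)}(1+\Gamma)^{k}\delta^{\Omega(n)}$, and the light positions inside $J_1\cup J_2$ should be charged $\delta$ rather than absorbed into $(1+\Gamma)$. I would first try to place the information sets inside $U(a)$, or restrict the $\ell_2$ sums to light values, where the mass is at most $\Gamma$ but the maximum is only $\delta$. This should yield a factor $\delta^{c n}$ with $c$ depending only on $\epsilon$. Taking $\delta$ small enough in terms of $c_*,\Gamma,\epsilon$ then makes this range at most $c_*^{\nu}$. The delicate bookkeeping is to choose $\theta$ and $\delta$ so that both ranges close simultaneously. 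This is exactly the balancing in \cite[Section~4.2]{SW}, and I would follow their parameter choice there.
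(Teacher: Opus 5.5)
\emph{Gap in the few-heavy case.} Your architecture is the paper's: bounded heavy lists, list recovery for codewords agreeing with the lists often, and two disjoint information sets with Cauchy--Schwarz for the rest. The paper obtains the lemma as the singleton-block case of Lemma~\ref{mb:weighted}, applied to $\mathcal G^\perp$ with $K=1+\Gamma$. Your first case is fine, for instance with agreement fraction $\theta=1/2-\epsilon/2$ and list-recovery slack $\epsilon/4$. The second case, which you flag as the main obstacle, is not proved, and your suggested remedies do not supply the missing factor. With $J_1,J_2$ fixed disjoint information sets, nothing stops every light position of $a$ from lying inside $J_1\cup J_2$. Those sets cover $2k$ coordinates, up to $(1-2\epsilon)n$, while you only know $|U(a)|>\nu+1-\theta n$, a small linear fraction. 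So the claim that the coordinates outside $J_1\cup J_2$ carry the factors $\delta$ yields no $\delta^{\Omega(n)}$. Placing the information sets inside $U(a)$ goes the wrong way, since light coordinates inside the Cauchy--Schwarz factors receive only the squared-mass bound. Restricting those $\ell_2$ sums to light values still gives mass at most $\Gamma$ per coordinate, not a power of $\delta$.

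\emph{The missing step.} Choose the information sets after, and disjoint from, a marked set of light coordinates.
\begin{itemize}
\item Put $r\coloneqq\lfloor\epsilon n\rfloor$. In your second case $|U(a)|\ge r$ for large $n$, so every such word has an $r$-subset $M\subseteq U(a)$. Sum over the at most $2^n$ such sets $M$; overcounting is harmless because all terms are nonnegative.
\item For fixed $M$, bound the factors on $M$ pointwise by $\delta^r$, then drop the constraint that $M$ be light.
\item Choose disjoint information sets $J_1,J_2\subseteq[n]\setminus M$. These exist because $n-r\ge(1-\epsilon)n\ge2k$.
\item Discard all remaining factors, each at most one since $w_i\le c_*<1$ and $\tilde w_i(0)=1$. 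Your Cauchy--Schwarz estimate then gives $\delta^r(1+\Gamma)^k$.
\end{itemize}
The second class therefore totals at most $2^n\delta^{\epsilon n/2}(1+\Gamma)^{n/2}=\bigl(2\sqrt{1+\Gamma}\,\delta^{\epsilon/2}\bigr)^n$. This is at most $c_*^n\le c_*^\nu$ once $\delta$ is fixed with $2\sqrt{1+\Gamma}\,\delta^{\epsilon/2}\le c_*$. No light coordinate inside $J_1\cup J_2$ ever needs to be charged. The slack $n-2k\ge2\epsilon n$ is what makes room for the marked coordinates outside the information sets, and this is exactly how the proof of Lemma~\ref{mb:weighted} closes, with its cutoff playing the role of your $\delta$.
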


In the cited proof, a fixed cutoff separates large weights from small ones.
The squared-mass bound makes the set of large weights at each coordinate a bounded list.
List recovery controls codewords meeting those lists often; two disjoint information sets and Cauchy--Schwarz control the remaining codewords.
This argument applies directly to the trace norms in Equation~\eqref{eq:fourier-security}.
The stated bound is also the case of singleton blocks in Lemma~\ref{mb:weighted}, proved in Supplementary Section~\ref{supp:blocks-weighted}: apply it to the generalized Reed--Solomon code $\mathcal G^\perp$ with $K=1+\Gamma$ and $w_i(0)\coloneqq1$, which gives the slightly stronger exponent $\nu+1$.

\begin{proof}[Proof of Theorem~\ref{thm:half}]
For each local output--reference state, Lemma~\ref{lem:reference} gives a common dominator with $K=d^2$.
Put
$w_i(z)\coloneqq h_i(z)/\lambda_p$ for $z\ne0$, with $\lambda_p$ from Lemma~\ref{lem:fourier}.
Lemma~\ref{lem:fourier} gives the bounds
\[
 w_i(z)\le\kappa(K)<1,
 \qquad \sum_{z\ne0}w_i(z)^2\le K-1.
\]
For $n\ge2/\eta$, the dimension $\nu=t-1$ of $\mathcal C_0$ satisfies
$\nu/n\ge1/2+\eta/2$.
Apply Lemma~\ref{lem:weighted} to $\mathcal G=\mathcal C_0$ with
$c_*=\kappa(K)$, $\Gamma=K-1$, and $\epsilon=\eta/2$.
Since $h_i(0)=1$, each term of the original sum is $\lambda_p^{|\supp(a)|}$ times its normalized term, with $\lambda_p^{|\supp(a)|}\le\lambda_p^n$.
Equation~\eqref{eq:fourier-security} consequently gives
\begin{equation}\label{eq:half-K}
 \Adv(\rho_s,\rho_{s'})
 \le A n^b\kappa(K)^{t-1}
       \exp\!\left(\frac{2\pi^2n}{3p^2}\right).
\end{equation}
Because $p>n$ and $\kappa(K)<1$ is fixed, the polynomial factor is absorbed by the exponential decay for sufficiently large $n$.
All constants used above depend only on $K,\eta$.
\end{proof}

More generally, Equation~\eqref{eq:half-K} holds for any product of local state families $\sigma_i(y)$ for which
$\sigma_i(y)\le K\gamma_i$ with density operators $\gamma_i$ independent of $y$ and a fixed $K\ge2$.
We shall use this form for restricted Schmidt number and classical outputs in Section~\ref{sec:sharp}.

The explicit finite estimates pair opposite Fourier frequencies; Supplementary Section~\ref{supp:fourier} develops this paired-moment argument and the elementary MDS estimates.
They give explicit finite bounds, sufficient reconstruction rates in the scalar case, and constant-prefactor bounds at full threshold.
The weighted estimate above supplies the wider asymptotic rate range.

\section{Exact exponents and growing leakage budgets}\label{sec:sharp}
We determine the optimal exponent of the distance for independent device--reference pairs at every fixed reconstruction rate above one half.
The full-threshold construction identifies how the exponential base depends on the entanglement within each pair; Supplementary Section~\ref{supp:rates-endpoint} gives its constant-prefactor bounds.
We then apply the lower construction to a set of $t$ shares and combine it with the weighted upper bound to obtain the exponent throughout the above-half range.
We first fix the output dimension $d\ge2$, as in Theorem~\ref{thm:half}; Section~\ref{sec:growing-output} then allows it to grow with $n$.

The Schmidt number of a mixed bipartite state is the least integer $r$ for which it is a convex combination of pure states of Schmidt rank at most $r$ \cite{TH}.
Local channels cannot increase Schmidt number: applying each Kraus operator to a vector cannot increase its Schmidt rank.

The input and reference dimensions remain arbitrary and finite.
The following bound for states of Schmidt number at most $r$ refines Lemma~\ref{lem:reference}; it also follows from the reduction map in \cite[arXiv version, Definition~2 and Lemma~2]{TH}.

\begin{lemma}\label{lem:schmidt}
If $\sigma_{BR}$ has Schmidt number at most $r$ and $\Tr_B\sigma\eqqcolon \tau_R$, then
\[
 \sigma_{BR}\le r I_B\otimes\tau_R
 =dr\left(\frac{I_B}{d}\otimes\tau_R\right),\qquad d\coloneqq \dim B.
\]
\end{lemma}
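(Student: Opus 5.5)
We reduce to pure states of Schmidt rank at most $r$ and then use the Cauchy--Schwarz bound for rank-one operators. Both sides of the inequality $\sigma_{BR}\le rI_B\otimes\tau_R$ are linear in $\sigma$, with $\tau_R=\Tr_B\sigma$. So if $\sigma=\sum_jq_j\psi_j$ is a decomposition into pure states $\psi_j=|\psi_j\rangle\langle\psi_j|$ of Schmidt rank at most $r$, it suffices to show $\psi_j\le rI_B\otimes\Tr_B\psi_j$ for each $j$. Summing with weights $q_j$ then gives the claim. The second equality in the statement is just the rewriting $rI_B=dr\,(I_B/d)$.

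For a single pure state, write the Schmidt decomposition $|\psi\rangle=\sum_{k=1}^{m}\sqrt{\mu_k}\,|e_k\rangle|f_k\rangle$ with $m\le r$ and orthonormal $\{e_k\}\subset B$, $\{f_k\}\subset R$. Then $\Tr_B\psi=\sum_k\mu_k|f_k\rangle\langle f_k|$. Let $P_B$ be the projector onto the span of the $e_k$. Since $P_B\le I_B$, it is enough to prove $\psi\le r\,P_B\otimes\Tr_B\psi$.

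Both operators are supported on $\operatorname{span}\{e_k\otimes f_l\}$. On that span, $P_B\otimes\Tr_B\psi$ acts as $\sum_{k,l}\mu_l\,|e_kf_l\rangle\langle e_kf_l|$. Take any vector $\phi=\sum_{k,l}c_{kl}\,e_k\otimes f_l$. Then
\[
\bigl|\langle\phi,\psi\rangle\bigr|^2=\Bigl|\sum_k\sqrt{\mu_k}\,\overline{c_{kk}}\Bigr|^2\le m\sum_k\mu_k|c_{kk}|^2\le m\sum_{k,l}\mu_l|c_{kl}|^2 .
\]
The first inequality is Cauchy--Schwarz over the $m$ indices $k$. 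The second adds nonnegative off-diagonal terms. The right-hand side is $m\,\langle\phi,(P_B\otimes\Tr_B\psi)\phi\rangle$, so $\psi\le m\,P_B\otimes\Tr_B\psi\le r\,I_B\otimes\Tr_B\psi$.

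The only step needing care is the identification of the rank-one inequality with this Cauchy--Schwarz estimate. An alternative is to note that $\psi\le r\,P_B\otimes\Tr_B\psi$ is equivalent to $\langle\psi,(P_B\otimes\Tr_B\psi)^{-1}\psi\rangle\le r$, with the inverse taken on the support. That quantity equals $\sum_k\mu_k\mu_k^{-1}=m$. Either route is elementary, and the mixed case follows by convexity as above. Taking $r=d$ recovers Lemma~\ref{lem:reference}, since every state on $B\otimes R$ has Schmidt number at most $d$.
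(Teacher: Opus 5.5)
Your proof is correct and follows the paper's argument: decompose $\sigma$ into pure states of Schmidt rank at most $r$, prove $|\psi\rangle\langle\psi|\le r I_B\otimes\tau_\psi$ for each, and sum. The paper checks the pure-state bound as in your ``alternative'' route, noting that $(I\otimes\tau_\psi^{-1/2})|\psi\rangle$ has squared norm equal to the Schmidt rank; your Cauchy--Schwarz computation verifies the same rank-one criterion explicitly.
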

\begin{proof}
For a pure state $|\psi\rangle$ of Schmidt rank $r_0\le r$, let $\tau_\psi$ be its reference marginal.
On the support of $\tau_\psi$, the vector $(I\otimes\tau_\psi^{-1/2})|\psi\rangle$ has squared norm $r_0$.
Thus $|\psi\rangle\langle\psi|\le r I_B\otimes\tau_\psi$.
Sum this inequality over a pure-state decomposition of $\sigma$ with Schmidt ranks at most $r$.
\end{proof}

The matching lower bound uses the output dimension to transmit orthogonal labels.
Choose a maximally entangled device--reference pair of Schmidt rank $1\le r\le d$. This pair admits $K=dr$ orthogonal dense-coding labels, each with the same reference marginal, while a classical $d$-symbol output supplies $K=d$ labels.
For a field value $z$, choose the label whose phase is nearest to $2\pi z/p$ after a common angular offset; the following lemma bounds the trace norm of the resulting first Fourier coefficient.

\begin{lemma}[Arc labels]\label{lem:arc-labels}
Let $|\beta_0\rangle,\ldots,|\beta_{K-1}\rangle$ be orthonormal, with $K\ge2$, and let $p$ be prime.
For an offset $\theta$, let $j_\theta(z)$ maximize $\cos(2\pi z/p-\theta-2\pi j/K)$ over $0\le j<K$, breaking ties deterministically, and put $\sigma_\theta(z)\coloneqq |\beta_{j_\theta(z)}\rangle\langle\beta_{j_\theta(z)}|$.
Then some offset $\theta$ satisfies
\[
 h_\theta(1)\coloneqq \norm{\E_z\omega^{-z}\sigma_\theta(z)}\ge\kappa(K),
\]
for every prime.
\end{lemma}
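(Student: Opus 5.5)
Since the labels $|\beta_j\rangle$ are orthonormal, $\E_z\omega^{-z}\sigma_\theta(z)$ is diagonal in that basis, with entries $c_j(\theta)\coloneqq \frac1p\sum_{z:\,j_\theta(z)=j}\omega^{-z}$. Its trace norm is therefore
\[
 h_\theta(1)=\sum_{j=0}^{K-1}\lvert c_j(\theta)\rvert\ \ge\ \Big\lvert\sum_j e^{-i(\theta+2\pi j/K)}c_j(\theta)\Big\rvert
 \ \ge\ \frac1p\sum_{z\in\F_p}\cos\!\Big(\frac{2\pi z}{p}-\theta-\frac{2\pi j_\theta(z)}K\Big).
\]
The middle step inserts unimodular phases, and the last takes the real part after conjugation. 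By the choice of $j_\theta(z)$, each summand equals $g(2\pi z/p-\theta)$, where $g(\phi)\coloneqq\max_j\cos(\phi-2\pi j/K)$. The function $g$ is $2\pi/K$-periodic, and on one period it is $\cos$ restricted to $[-\pi/K,\pi/K]$.

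The plan is to average over $\theta$ rather than find the optimal offset explicitly. Let $\theta$ be uniform on $[0,2\pi)$. For each fixed $z$, the quantity $\E_\theta g(2\pi z/p-\theta)$ equals the mean of $g$ over a full period:
\[
 \frac{K}{2\pi}\int_{-\pi/K}^{\pi/K}\cos\phi\,d\phi=\frac K\pi\sin\frac\pi K=\kappa(K).
\]
Hence $\E_\theta\big[\frac1p\sum_z g(2\pi z/p-\theta)\big]=\kappa(K)$, and some $\theta$ attains at least the average. For that $\theta$ we get $h_\theta(1)\ge\kappa(K)$, for every prime $p$ and with no quadrature error.

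Measurability, and the existence of a maximizing $\theta$ rather than a mere supremum, need a small check. The lower bound $\frac1p\sum_z g(2\pi z/p-\theta)$ is continuous in $\theta$ because $g$ is continuous, even though the tie-breaking in $j_\theta$ is not. So the continuous average bound gives a $\theta$ at which the continuous lower bound is at least $\kappa(K)$, and $h_\theta(1)$ dominates it at that $\theta$ whatever ties are broken. The only genuine point is recognizing that the diagonal structure makes the trace norm a sum of moduli, which dominates the real correlation with the phases $e^{-i(\theta+2\pi j/K)}$.
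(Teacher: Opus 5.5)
Your argument is correct and is essentially the paper's proof: exploiting the diagonal structure amounts to trace-norm duality against the unitary with eigenvalue $e^{i(\theta+2\pi j/K)}$ on $|\beta_j\rangle$, followed by the same average over $\theta$, which gives $\frac K{2\pi}\int_{-\pi/K}^{\pi/K}\cos v\,dv=\kappa(K)$. One small slip: for the real part to equal the displayed cosine sum, the phases in your middle expression should be $e^{+i(\theta+2\pi j/K)}$ (equivalently, keep $e^{-i(\theta+2\pi j/K)}$ but apply it to $\overline{c_j}$); this is harmless, since $\sum_j\lvert c_j\rvert\ge\lvert\sum_j u_jc_j\rvert$ for any unimodular $u_j$.
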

\begin{proof}
Let $U_\theta$ have eigenvalue $e^{i(\theta+2\pi j/K)}$ on $|\beta_j\rangle$.
Trace-norm duality gives
\[
 h_\theta(1)\ge
 \frac1p\sum_z\max_j\cos(2\pi z/p-\theta-2\pi j/K).
\]
Averaging the right side over $\theta$ gives exactly
\[
 \frac K{2\pi}\int_{-\pi/K}^{\pi/K}\cos v\,dv=\kappa(K),
\]
so some offset attains at least this value.
\end{proof}

On a reconstruction set of $t$ shares, scale each share by its reconstruction coefficient and use these local labels.
With a uniform secret, the scaled shares on that set are independent and uniform, so the first Fourier coefficient of the secret-indexed output is the tensor product of the local coefficients.
Trace-norm duality and averaging over pairs of secrets give a distinguishing distance of at least $\kappa(K)^t/2$ for some pair.
A common affine substitution in the shares transports the construction to every prescribed distinct pair.
Supplementary Section~\ref{supp:rates-endpoint} gives the complete label construction, its orthogonality proof, and the constant-prefactor endpoint estimates.
For classical outputs, the output--reference state is separable even when the initial pair is entangled; Lemma~\ref{lem:schmidt} then gives the upper domination constant $K=d$.

The matching construction also determines the exponent below full threshold.
Any selected $t$ shares already form a full-threshold instance on $t$ devices.
Using the construction there and emitting fixed states elsewhere gives a lower bound with power $t$; the upper bound in Equation~\eqref{eq:half-K} has the same power up to a polynomial factor.

\begin{corollary}[Exact exponent above one half]\label{cor:rate}
Fix $d\ge2$ and $\eta\in(0,1/2)$.
For independent device--reference pairs, assign $K\coloneqq d^2$ to $d$-dimensional outputs with unrestricted initial pairs, $K\coloneqq dr$ to $d$-dimensional outputs when every initial pair has Schmidt number at most $r$ with $1\le r\le d$, and $K\coloneqq d$ to a classical output alphabet of size $d$ with unrestricted initial pairs.
For fixed distinct secrets $s,s'$, let $L$ be the supremum of
$\Adv(\rho_s,\rho_{s'})$ over resources and channels in the indicated class, at parameters $n,t,p,x$.
There are constants $A,b,n_0>0$, depending only on $d,\eta$, such that for $n\ge n_0$ and $t/n\ge1/2+\eta$,
\begin{equation}\label{eq:matched-rate}
 \frac12\kappa(K)^t
 \le L
 \le A n^b\kappa(K)^{t-1}
       \exp\!\left(\frac{2\pi^2n}{3p^2}\right).
\end{equation}
Consequently,
\[
 -\frac1n\log L
 =\frac tn\log\frac1{\kappa(K)}
   +O_{d,\eta}\!\left(\frac{\log n}{n}\right),
\]
uniformly over every prime $p>n$, all evaluation points and all distinct secret pairs.
In particular, if $t/n\to R>1/2$, the limiting exponent is
$R\log(1/\kappa(K))$.
\end{corollary}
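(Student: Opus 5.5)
The proof splits into an upper bound and a lower bound, and the exponent statement then follows by taking logarithms.

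For the upper bound in Equation~\eqref{eq:matched-rate}, I will use the general form of Equation~\eqref{eq:half-K}. That form needs, for each device, a density operator $\gamma_i$ independent of the share with $\sigma_i(y)\le K\gamma_i$ for every $y$. The three cases are handled as follows.
\begin{itemize}
\item For unrestricted pairs, Lemma~\ref{lem:reference} gives $K=d^2$ with $\gamma_i=(I_B/d)\otimes\tau_{R_i}$. This works because a local trace-preserving channel leaves the reference marginal $\tau_{R_i}$ unchanged for every share.
\item For Schmidt number at most $r$, the local channel $\Phi_i^y\otimes\id$ does not increase Schmidt number. Lemma~\ref{lem:schmidt} then gives $K=dr$ with the same $\gamma_i$.
\item For a classical $d$-symbol output, the joint output--reference state is block diagonal, $\sum_j|j\rangle\langle j|\otimes\tau_j$. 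This state is separable, so Lemma~\ref{lem:schmidt} with $r=1$ gives $K=d$. Directly, each $|j\rangle\langle j|\otimes\tau_j$ is at most $I\otimes\tau_R$.
\end{itemize}
The case $r=1$ gives $K=d$, which is below the threshold $K\ge2$ only if $d=1$, so that threshold is harmless. The proof of Theorem~\ref{thm:half} goes through verbatim with this $K$. It yields constants depending only on $K$ and $\eta$, hence only on $d$ and $\eta$, uniformly over primes, evaluation points and secret pairs.

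For the lower bound $\tfrac12\kappa(K)^t$, I will build an explicit attack.
\begin{enumerate}
\item Fix a reconstruction set $T$ of size $t$ with coefficients $\lambda_i$, so that $s=\sum_{i\in T}\lambda_iY_i$. The devices outside $T$ emit a fixed state.
\item Each device $i\in T$ uses $K$ orthogonal labels. For $K=dr$ these are dense-coding labels on a maximally entangled pair of Schmidt rank $r$: apply $X^aZ^b$ on the $d$-dimensional side, restricted to an $r$-dimensional subspace, with $a$ ranging over $d$ shifts and $b$ over $r$ phases. These labels are orthogonal and all have reference marginal $I_r/r$. For $K=d$ the labels are the computational basis outputs.
\item Device $i$ emits the label $j_\theta(\lambda_iY_i)$ from Lemma~\ref{lem:arc-labels}, with that lemma's optimal offset. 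Call the resulting local state family $\sigma_\theta(\lambda_iY_i)$.
\item For a uniform secret, the vector $(\lambda_iY_i)_{i\in T}$ is uniform on $\F_p^T$, since any $t$ shares of a uniform degree-$<t$ polynomial are uniform. Its coordinate sum is $s$.
\item Compute $\E_s\omega^{-s}\rho_s$. Because the coordinates are independent and sum to $s$, this equals the tensor product of the local coefficients $\E_z\omega^{-z}\sigma_\theta(z)$. Trace norm is multiplicative on tensor products, so by Lemma~\ref{lem:arc-labels} its norm is at least $\kappa(K)^t$.
\item Since $\sum_s\omega^{-s}=0$, we have $\E_s\omega^{-s}\rho_s=\E_s\omega^{-s}(\rho_s-\rho_{s_0})$ for any fixed $s_0$. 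Averaging over $s$ then shows that some pair satisfies $\Adv(\rho_s,\rho_{s_0})\ge\kappa(K)^t/2$.
\end{enumerate}

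To reach every prescribed distinct pair $(s,s')$, I will use an affine map $u\mapsto\alpha u+\beta$ on $\F_p$ sending $(s,s_0)$ to $(s',s'')$. Sharing is affine, so if $f$ shares $u$ then $\alpha f+\beta$ shares $\alpha u+\beta$, and the devices can apply the inverse substitution to their share locally. This transports the attack to the desired pair. I expect this step to be the main obstacle, because the pair found by averaging is not under our control. I would first verify that the transported attack keeps the same $\kappa(K)^t$ Fourier norm.

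Finally, taking $-\frac1n\log$ of both sides of Equation~\eqref{eq:matched-rate} gives the exponent statement. The upper side contributes $\frac{t-1}{n}\log\frac1{\kappa(K)}-O(\frac{\log n}n)$, using $p>n$ to bound $\exp(2\pi^2n/(3p^2))\le e^{2\pi^2/(3n)}=O(1)$. The lower side contributes $\frac tn\log\frac1{\kappa(K)}+O(\frac1n)$.
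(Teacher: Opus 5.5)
Your proposal is correct and follows the paper's proof. The upper bound applies Equation~\eqref{eq:half-K} with the dominators from Lemmas~\ref{lem:reference} and~\ref{lem:schmidt}, with constants uniform because $K$ takes finitely many values for fixed $d$. The lower bound is the dense-coding and arc-label construction of Theorem~\ref{thm:sharp}, placed on a reconstruction set of size $t$ with fixed outputs elsewhere.

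The affine transport step you flag is not an obstacle, and there is no Fourier norm to re-verify. The substituted strategy for the prescribed pair reproduces exactly the two output states of the pair found by averaging, so the trace distance carries over unchanged, even though the frequency-one coefficient of the new family generally differs.
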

\begin{proof}
For the upper bound, the respective dominators are supplied by Lemmas~\ref{lem:reference} and~\ref{lem:schmidt}, and by the classical-output argument in Corollary~\ref{cor:classical}.
Apply Equation~\eqref{eq:half-K} in each case.
There are only finitely many admissible $r$ for a fixed $d$, so the constants can be chosen uniformly over $r$.

For the lower bound, choose any set $T\subseteq[n]$ of size $t$.
The original uniform polynomial of degree less than $t$ with fixed constant term induces exactly a $(t,t)$ Shamir sharing on $T$.
Theorem~\ref{thm:sharp}, or Corollary~\ref{cor:classical} for classical outputs (both in Supplementary Section~\ref{supp:rates-endpoint}), provides the required resource and channels on these $t$ devices for the prescribed secrets.
Give every other device a trivial independent resource and a constant-output channel.
Tensoring with their fixed output states leaves the trace distance unchanged and preserves all resource restrictions.
The selected construction has distance at least $\tfrac12\kappa(K)^t$.
Taking logarithms of the two bounds proves the exponent statement.
\end{proof}
The constant-prefactor bounds of Theorem~\ref{thm:sharp} and Corollary~\ref{cor:classical} remain sharper at $t=n$.
The exponent statement combines their lower constructions with the weighted upper estimate of Sun and Wootters; its quantum and reference dependence enters through the domination constant $K$.
For classical one-bit leakage with trivial references, the same exponent follows from Sun and Wootters \cite[Corollary~4.3]{SW} and the additive-sharing lower bound of Maji et al.\ \cite[Theorems~1 and~10]{Maji}, applied to a reconstruction set.
Corollary~\ref{cor:rate} permits arbitrary finite independent references and gives the exponent for quantum outputs as a function of Schmidt number.

\subsection{Growing output dimension}\label{sec:growing-output}
The constants in Corollary~\ref{cor:rate} were chosen with the output dimension fixed.
We now allow the number of emitted qubits to grow with the number of shares.
This requires an explicit bound on the lists used in the weighted argument.
Nguyen's quantitative list-recovery estimate supplies that bound, with explicit dependence on the list size.
The logarithmic-leakage theorem of Nguyen already treats classical outputs \cite[Theorems~2 and~6]{Nguyen}; here we apply the estimate to quantum outputs with retained references and determine their leading distinguishing exponent.

\begin{theorem}[Logarithmic quantum leakage]\label{lq:theorem}
Fix $0<\eta<1/2$.
There is $n_0(\eta)$ such that, for $n\ge n_0(\eta)$ and $t/n\ge1/2+\eta$, independent device--reference pairs remain secure when each device emits at most
\[
 q\le\left\lfloor\frac{\eta}{200}\log_2 n\right\rfloor
\]
qubits.
More precisely, writing $d\coloneqq 2^q$, every pair of secrets satisfies
\begin{equation}\label{lq:security}
 \Adv(\rho_s,\rho_{s'})
 \le\exp\!\left(-\frac{n}{8d^4}\right)
 \le\exp\!\left(-\frac18 n^{1-\eta/50}\right).
\end{equation}
The bound is uniform over primes $p>n$, evaluation points, all allowed local channels, and arbitrary finite input and reference dimensions, which may depend on $n$.
The initial resource is independent of the secret and sharing randomness, and the channels have the locality and no-feedback conditions of Section~\ref{sec:model}.
\end{theorem}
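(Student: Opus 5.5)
The proof runs the argument of Theorem~\ref{thm:half} again, this time tracking every constant as a function of $d$. Each local output--reference family is dominated by Lemma~\ref{lem:reference} with $K=d^2$. Lemma~\ref{lem:fourier} then gives, at every coordinate, the bounds $h_i(0)=1$, $\sum_{z\ne0}h_i(z)^2\le d^2-1$, and $h_i(z)\le c_{K,p}$ for $z\ne0$. For large $K$ we have $\kappa(K)=1-\pi^2/(6K^2)+O(K^{-4})$, so $c_{K,p}\le\exp(-\pi^2/(6d^4)+O(d^{-8})+2\pi^2/(3p^2))$. The quadrature correction is at most $\exp(2\pi^2/(3n^2))$ per coordinate, hence negligible after taking the $n$-th power. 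Equation~\eqref{eq:fourier-security} reduces the task to bounding the weighted sum over $\mathcal C\setminus\{0\}$, with per-coordinate cap $c\approx1-\pi^2/(6d^4)$ and squared mass $\Gamma=d^2-1$. Both now depend on $n$ through $d\le n^{\eta/200}$.

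Lemma~\ref{lem:weighted} has constants that depend on $(c_*,\Gamma)$, so I would reprove it quantitatively in the Sun--Wootters manner, using Nguyen's explicit list-recovery bound \cite{Nguyen} in place of Jeronimo's. Fix a threshold $\tau$ (for instance $\tau=1/2$) and call a frequency $z$ large at coordinate $i$ if $h_i(z)>\tau$. The squared-mass bound makes the large set at each coordinate a list $\mathcal L_i$ of size at most $\ell\le\Gamma/\tau^2\le4d^2$. Split the nonzero codewords $a$ by the number $m(a)$ of coordinates with $a_i\in\mathcal L_i\cup\{0\}$.

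The first class consists of codewords with $m(a)\ge(1/2+\eta/4)n$. Nguyen's list-recovery bound controls the number of codewords agreeing with the lists on that many positions by $n^{O(\ell)}$ or a similar explicit quantity. This is polynomial, or $\exp(O(d^2\log n))$, and the choice $q\le(\eta/200)\log_2n$ keeps it at most $\exp(n^{1-\Omega(1)})$. Each such codeword has support at least $t$, so it contributes at most $c^{t}\le\exp(-\pi^2t/(6d^4))$. Since $t\ge n/2$, this is well below $\exp(-n/(8d^4))$ with room to absorb the count, because $\pi^2/12>1/8$.

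The second class consists of the remaining codewords, which have at least $(1/2-\eta/4)n$ coordinates with weight at most $\tau$. For these I would pick two disjoint information sets of the dual code $\mathcal C$, of dimension $k=n-t+1\le(1/2-\eta)n+1$. I would then apply Cauchy--Schwarz over the bijective projections, so that the sum is bounded by products of per-coordinate factors of the form $1+\sum_{z\ne0}h_i(z)^2$ on the information sets, weighted by $\tau$ raised to the number of small coordinates outside them. The gap of order $\eta n$ between the number of small coordinates and $2k$ should produce a factor $\tau^{\Omega(\eta n)}$. This must beat $d^{2\cdot O(k)}=\exp(O(n\log d))$, which is the delicate point. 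To make it work I would raise the threshold exponent by applying the argument to $h^Q$, using the tail bound of Lemma~\ref{lem:fourier}, or choose $\tau=d^{-C}$ with $C=C(\eta)$. The list size then becomes $d^{2C+2}=n^{O(C\eta/200)}$, and the constant $200$ is chosen exactly so that the list-recovery count stays subexponential.

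The main obstacle is balancing these two regimes with explicit constants: the list size $\ell$ entering Nguyen's bound against the Cauchy--Schwarz loss $\Gamma^{O(k)}$ in the second class. Both must come out below $\exp(-n/(8d^4))$ uniformly in $p$. The final inequality $\exp(-n/(8d^4))\le\exp(-\frac18n^{1-\eta/50})$ is immediate from $d^4\le n^{4\eta/200}=n^{\eta/50}$.
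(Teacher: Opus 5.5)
Your outline follows the paper's route: dominate with $K=d^2$, then rerun the Sun--Wootters thresholding quantitatively with Nguyen's list recovery and a threshold polynomially small in $d$. However, the balancing you leave as ``the main obstacle'' is the actual content of the proof, and the parameters you do commit to would not close it.

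Nguyen's bound is not $n^{O(\ell)}$. For agreement on at least $k+\lceil\eta' n\rceil$ coordinates it gives at most $n^{C_{\eta'}L^{3/\eta'}}$ codewords, and only when $n\ge C_{\eta'}L^{9/\eta'}$. Cutting your first class at $(1/2+\eta/4)n$ agreements against $k\le(1/2-\eta)n+1$ forces $\eta'=\Theta(\eta)$.

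Your second class forces $\tau=d^{-\Theta(1/\eta)}$. At $t\approx(1/2+\eta)n$, only about $n-2k\approx2\eta n$ light coordinates can be set aside from two disjoint information sets. The factor $\tau$ raised to that power must beat $d^{2k}$, and also the $2^n$ choices of marked set, which you omit. Hence $L=d^{\Theta(1/\eta)}$. With $d$ as large as $n^{\eta/200}$, the quantity $L^{9/\eta'}=d^{\Theta(1/\eta^2)}$ is $n^{\Theta(1/\eta)}$. So for small $\eta$ the length condition fails and the list-recovery count is no longer subexponential.

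The missing idea is to make the list-recovery slack independent of $\eta$. Mark only $m=\lfloor\eta n\rfloor$ light coordinates; this suffices because $n-m-2k=2t-n-m-2\ge\eta n-2$. Put every word with fewer than $m$ light coordinates into the first class. Such a word has at least $n-m+1$ agreements, and $n-m+1-k=t-m\ge n/2$. So Nguyen's bound applies with the fixed slack $1/4$: at most $n^{C\ell^{12}}$ words once $n\ge C\ell^{36}$, with $C$ absolute.

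Take $\theta=(4d)^{-2/\eta}$. The second class then costs at most $2^n\theta^m d^{2k}\le2^{-n}$, and the list size is $\ell=\lceil1+d^2\theta^{-2}\rceil=O_\eta(d^{2+4/\eta})$. With $d^2\le n^{\eta/100}$ this gives $\ell^{36}=O_\eta(n^{9/10})$ and $\ell^{12}\log n=O_\eta(n^{3/10}\log n)=o(n/d^4)$. This arithmetic is where the constant $200$ comes from.

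Finally, the explicit estimate $\kappa(d^2)\le e^{-d^{-4}}$ and $t\ge n/2$ bound the first class by $\exp(O_\eta(n^{3/10}\log n)-n/(2d^4))$. After the factor $\lambda_p^n\le e^{2\pi^2/(3n)}$, the total is at most $\exp(-n/(8d^4))$ for large $n$. The case $q=0$ is trivial.
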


\begin{proof}[Proof outline of Theorem~\ref{lq:theorem}]
We use the quantitative weighted estimate in Lemma~\ref{lq:weighted}, whose full derivation from Nguyen's list-recovery bound is in Supplementary Section~\ref{supp:growing}.
For a local domination constant $K\ge2$ it gives Equation~\eqref{lq:finite}, whose dependence on $K$ is explicit.
The first term there counts codewords with many large weights; two disjoint information sets control the remaining term.
For $2\le K\le n^{\eta/100}$, the list parameter $\ell$ of that lemma satisfies $\ell^{36}=O_\eta(n^{9/10})$, so its length condition holds uniformly, and $\ell^{12}\log n=O_\eta(n^{3/10}\log n)$.
Since $-\log\kappa(K)\ge K^{-2}$, $n/K^2\ge n^{99/100}$ and $t\ge n/2$, both terms of that bound are at most $\exp(-n/(4K^2))$, and a further uniform increase of the length cutoff gives $\Adv\le\exp(-n/(8K^2))$.
For $q\ge1$, fixed-reference domination supplies $K=d^2=4^q\le n^{\eta/100}$, which gives the two inequalities in \eqref{lq:security}.
For $q=0$, the emitted systems are trivial and the reference marginal is unchanged, so the distance is zero.
Supplementary Section~\ref{supp:growing} gives the complete finite inequalities, including integer cutoffs and the elementary sine bound.
\end{proof}

The same estimate also determines the leading decay when the dimensions grow.
The lower constructions in Theorem~\ref{thm:sharp} are finite-dimensional identities, so they remain available for each value of the growing dimension.
At $t=n$, the finite upper bound in Theorem~\ref{thm:sharp} already gives growing-dimension consequences.
Lemma~\ref{lq:weighted} extends this comparison to every fixed rate above one half.

\begin{corollary}[Matching exponents for growing outputs]\label{lq:rate}
Fix $0<\eta<1/2$ and consider any of the three classes in Corollary~\ref{cor:rate}, with $d$ and, when applicable, $r$ allowed to depend on $n$.
Use $K\coloneqq d^2$, $dr$, or $d$ for the respective classes, and let $L$ be the supremum of the distance for a prescribed pair of distinct secrets.
For $n\ge n_0(\eta)$, $t/n\ge1/2+\eta$, and $2\le K\le n^{\eta/100}$,
\begin{align}
 -\log L
 &=t\log\frac1{\kappa(K)}
   +O_\eta\!\left(K^{12+24/\eta}\log n+1\right)\label{lq:additive-rate}\\
 &=\left(1+O_\eta(n^{-69/100}\log n)\right)
   t\log\frac1{\kappa(K)}.\label{lq:relative-rate}
\end{align}
All error bounds are uniform over the permitted fields, evaluation points, dimensions, references and distinct secrets.
If $K\to\infty$ in this range, then
\[
 -\log L=(1+o(1))\frac{\pi^2t}{6K^2}.
\]
In particular, the denominator is $d^4$ for unrestricted device--reference pairs with $d\to\infty$.
\end{corollary}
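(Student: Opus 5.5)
The proof sandwiches $L$ between the lower construction and the quantitative upper estimate, and then expands $\log(1/\kappa(K))$ for large $K$.

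\textbf{Lower bound.} Fix any reconstruction set $T$ of size $t$. Apply the finite-dimensional construction of Theorem~\ref{thm:sharp} on $T$, or Corollary~\ref{cor:classical} for classical outputs, exactly as in the proof of Corollary~\ref{cor:rate}, and emit fixed states elsewhere. This gives $L\ge\frac12\kappa(K)^t$, i.e.
\[
 -\log L\le t\log(1/\kappa(K))+\log 2 .
\]
The construction is an identity at each fixed dimension, so it does not care that $d$ or $r$ depends on $n$. The only check is that the arc-label lemma holds for every prime, which Lemma~\ref{lem:arc-labels} states.

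\textbf{Upper bound.} Feed the dominators from Lemmas~\ref{lem:reference} and~\ref{lem:schmidt} (and the classical-output dominator) into the quantitative weighted estimate Lemma~\ref{lq:weighted}, with $c_*=\kappa(K)\lambda_p$ and $\Gamma=K-1$. I expect that lemma's bound \eqref{lq:finite} to have two terms:
\begin{itemize}
\item a main term of the form $\exp(O(\ell^{12}\log n))\,\kappa(K)^{t-1}\lambda_p^n$, counting codewords that agree often with the large-weight lists through Nguyen's list-size bound;
\item a tail term from the two disjoint information sets and Cauchy--Schwarz, of the form $\Gamma^{O(1)}\theta^{\,\Omega(n)}$, where $\theta$ is the small-weight cutoff.
\end{itemize}
To reach an additive error $O_\eta(K^{12+24/\eta}\log n)$, choose the cutoff so that the list size satisfies $\ell\lesssim K^{1+2/\eta}$. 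The cutoff $\theta$ must be small enough, as a power of $K$ depending on $\eta$, that the tail is at most $\kappa(K)^{t}$. This works because the tail carries an exponent strictly above $t$ times $\log(1/\theta)$ in the rate gap, and by Squared mass with Markov the list size is at most $\Gamma/\theta^2$.

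Then
\[
 -\log L\ge t\log(1/\kappa(K))-O(\ell^{12}\log n)-\log(1/\kappa(K))-\tfrac{2\pi^2n}{3p^2}-O(1).
\]
The extra $\log(1/\kappa)$ is $O(1)$, and $n/p^2<1$ because $p>n$. Together with the lower bound this gives \eqref{lq:additive-rate}.

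\textbf{Relative form.} For \eqref{lq:relative-rate}, divide by $t\log(1/\kappa(K))\ge (n/2)K^{-2}$. With $K\le n^{\eta/100}$ we get
\[
 K^{12+24/\eta}\cdot K^2\le n^{14\eta/100+24/100}\le n^{31/100}.
\]
Hence the ratio is $O(n^{-69/100}\log n)$.

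\textbf{Large-$K$ asymptotics.} Expand
\[
 \kappa(K)=\frac{\sin(\pi/K)}{\pi/K}=1-\frac{\pi^2}{6K^2}+O(K^{-4}),
\]
so that $\log(1/\kappa(K))=\frac{\pi^2}{6K^2}(1+O(K^{-2}))$. Combined with the relative estimate, this gives $-\log L=(1+o(1))\pi^2t/(6K^2)$ as $K\to\infty$. For unrestricted pairs $K=d^2$, so the denominator is $d^4$.

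\textbf{Main obstacle.} The hard step is matching the list-size parameter to the claimed exponent $12+24/\eta$ while keeping the length condition $\ell^{36}=O(n^{9/10})$ of Lemma~\ref{lq:weighted} valid uniformly for $K\le n^{\eta/100}$. I would first choose $\theta=K^{-2/\eta}$ and then verify that the tail exponent dominates $t\log(1/\kappa(K))$.
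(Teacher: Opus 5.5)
Your overall route is the paper's. The selected-$t$ construction gives $L\ge\frac12\kappa(K)^t$ at every finite dimension, the quantitative estimate \eqref{lq:finite} gives the upper bound, and the relative form and large-$K$ asymptotics follow from $t\log(1/\kappa(K))\ge n/(2K^2)$ and $-\log\kappa(K)=\pi^2/(6K^2)+O(K^{-4})$. Those parts are fine.

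The step you single out as the main obstacle is resolved incorrectly. The tail in Lemma~\ref{lq:weighted} is not of the form $\Gamma^{O(1)}\theta^{\Omega(n)}$. After marking $m=\lfloor\eta n\rfloor$ light coordinates there are up to $2^n$ marked sets, and Cauchy--Schwarz over two disjoint information sets costs $K^{k}$, with $k=n-t+1$ as large as about $n/2$. So the tail is $2^n\theta^mK^k$, and the cutoff must cancel the exponential factor $2^nK^{n/2}$ using only $\theta^{\eta n/2}$. The lemma's choice $\theta=(4\sqrt K)^{-2/\eta}$ does exactly this, giving a tail of at most $2^{-n}$. It also gives $\ell\le(4^{4/\eta}+2)K^{1+2/\eta}$, hence $\ell^{36}=O_\eta(n^{9/10})$ and $\ell^{12}\log n=O_\eta(K^{12+24/\eta}\log n)$, uniformly for $K\le n^{\eta/100}$. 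The right order of the cutoff is thus $\theta\asymp K^{-1/\eta}$, not something larger than a fixed margin of $t\log(1/\theta)$.

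Your proposed $\theta=K^{-2/\eta}$ is the square of that order. By your own Markov count it gives $\ell\approx K\theta^{-2}=K^{1+4/\eta}$, which contradicts your target $\ell\lesssim K^{1+2/\eta}$. The length condition $n\ge C\ell^{36}$ then fails in the upper part of the permitted range, since $\ell^{36}\approx K^{36+144/\eta}$ reaches $n^{1.44+0.36\eta}$ at $K=n^{\eta/100}$. Even where the condition holds, the additive error would be $K^{12+48/\eta}\log n$, and the relative error only about $n^{-52/100+14\eta/100}\log n$ rather than $n^{-69/100}\log n$.

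The fix is not to re-choose the cutoff at all. Apply \eqref{lq:finite} as established in the proof of Theorem~\ref{lq:theorem}, whose uniform onset already covers every $2\le K\le n^{\eta/100}$. Then absorb the tail using $\kappa(K)\ge\kappa(2)=2/\pi$, which gives $2^{-n}\le(\pi/4)^n\kappa(K)^t$.

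One smaller point: Lemma~\ref{lq:weighted} requires the normalized weights $h_i/\lambda_p\le\kappa(K)$, not a cap $\kappa(K)\lambda_p$. The factor $\lambda_p^n\le e^{2\pi^2/(3n)}$ is paid separately, which is in effect what you do.
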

\begin{proof}[Proof outline]
The construction above gives $L\ge\kappa(K)^t/2$ at every finite allowed dimension.
Combining it with \eqref{lq:finite} and using $2^{-n}/\kappa(K)^t\le(\pi/4)^n$ bounds the additive logarithmic error; the leading term is at least $n/(2K^2)$, which gives the relative error, and $-\log\kappa(K)=\pi^2/(6K^2)+O(K^{-4})$ gives the growing-$K$ asymptotic.
Supplementary Section~\ref{supp:growing-rate-proof} gives the complete two-sided finite comparison and uniformity calculation.
\end{proof}

\begin{remark}[The prime-field restriction]\label{rem:prime}
Classical one-bit leakage can reveal information about a secret in linear masking over fields of characteristic two, including Shamir sharing; see \cite[Section~4.1, comparison with linear masking schemes]{BFGV}. The following trace formulation gives a full-threshold example.
Over $\F_{2^h}$ with $2^h>n$, full-threshold reconstruction has the form $s=\sum_i\lambda_iY_i$.
Each device can release the single bit $\operatorname{Tr}_{\F_{2^h}/\F_2}(a\lambda_iY_i)$, using no correlated resource.
The XOR of these bits is $\operatorname{Tr}_{\F_{2^h}/\F_2}(as)$.
For any prescribed distinct pair $s,s'$, nondegeneracy of the trace pairing supplies $a$ for which $\operatorname{Tr}_{\F_{2^h}/\F_2}(a(s-s'))=1$.
The two secrets are then perfectly distinguishable.
\end{remark}

\section{Independent qubit leakage below one half}\label{bh:section}
For independent one-qubit outputs, the real geometry of the Bloch ball gives an exponential security bound at a fixed rate below one half.
In the Shamir experiment, the resulting sufficient rate is $t/n\ge1/2-5\cdot10^{-10}$ for all sufficiently large $n$.
The bound also allows a resource separable across all devices and the full adversary reference, as specified in Corollary~\ref{bh:separable}.
We formulate the estimate first for affine generalized Reed--Solomon codes, so that the comparison state is independent of the chosen secret.
The argument combines the support decomposition and truncated Fourier estimate of Sun and Wootters \cite[Sections~3.2 and~4.2--4.3]{SW} with Nguyen's real-function and bias estimates \cite[Section~6]{Nguyen}.
On large supports, centered qubit tensors are expressed through real Bloch vectors, and their squared Euclidean norms become averages of real Gram kernels on a doubled input alphabet.

\begin{theorem}\label{bh:main}
Put $c_{\rm q}\coloneqq 1/2-10^{-9}$.
There are absolute constants $a>0$ and $n_0$ such that the following holds for every prime $p>n\ge n_0$.
Let $\mathcal G\subseteq\F_p^n$ be a generalized Reed--Solomon code of dimension $\nu$ with $\nu/n\ge c_{\rm q}$, and let
$\rho_i:\F_p\to\mathcal D(\mathbb C^2)$ be arbitrary state maps.
For $z\in\F_p^n$, define
\[
 \bar\rho_i\coloneqq \E_{x\in\F_p}\rho_i(x),\qquad
 R_z\coloneqq \E_{X\in z+\mathcal G}\bigotimes_{i=1}^n\rho_i(X_i),
 \qquad R_* \coloneqq \bigotimes_{i=1}^n\bar\rho_i.
\]
Then
\begin{equation}\label{bh:coset-bound}
 \Adv(R_z,R_*)\le e^{-an}.
\end{equation}
The constants are uniform over the field, evaluation points, nonzero coordinate multipliers, state maps, and translate $z$.
\end{theorem}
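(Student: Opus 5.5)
We plan to follow the Sun--Wootters support decomposition, with the Fourier sum replaced by a centered-tensor expansion in real Bloch coordinates. Write $\rho_i(x)=\bar\rho_i+\tfrac12 r_i(x)\cdot\vec\sigma$, where $r_i(x)\in\mathbb R^3$ is the centered Bloch vector, so $\E_x r_i(x)=0$ and $|r_i(x)+\bar r_i|\le1$. Expanding the tensor product over subsets $S\subseteq[n]$ gives
\[
 R_z-R_*=\sum_{\varnothing\ne S}\Bigl(\E_{X\in z+\mathcal G}\bigotimes_{i\in S}\tfrac12 r_i(X_i)\cdot\vec\sigma\Bigr)\otimes\bigotimes_{i\notin S}\bar\rho_i .
\]
For $|S|\le\nu$, the projection of $\mathcal G$ onto $S$ is surjective, because GRS codes are MDS. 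The coordinates $X_S$ are then uniform, and the term vanishes since each $r_i$ is centered. Only supports with $|S|>\nu$ survive. This is where the translate $z$ and the secret disappear from the leading structure, and it is why the comparison is with the secret-independent $R_*$.

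For the surviving supports, we split according to the sizes of the local Fourier weights, following \cite[Sections~4.2--4.3]{SW}. Fourier expand each $r_i$ over $\F_p$. The coset average restricts frequencies to $\mathcal G^\perp$, with a phase depending on $z$ that we bound by modulus one. A coordinate $i$ is \emph{large} at frequency $a_i$ when $\|\hat r_i(a_i)\|$ exceeds a cutoff $\tau$. Parseval bounds the number of such frequencies per coordinate by $O(\tau^{-2})$. List recovery, in the form of Nguyen's quantitative estimate used in Lemma~\ref{lq:weighted}, then bounds the number of dual codewords with many large coordinates polynomially. For the remaining codewords, we use two disjoint information sets and Cauchy--Schwarz, exactly as in Lemma~\ref{lem:weighted}. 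This yields an exponentially small contribution whenever every nonzero frequency weight is at most a constant $c<1$.

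The real difficulty is below one half. There the dual dimension $n-\nu$ exceeds $n/2$, and the crude bound $\kappa(4)^{\nu}$ times polynomial factors no longer beats the counting. We would therefore not bound the terms by trace norms. Instead, we would bound the Hilbert--Schmidt norm of the centered tensor on each large support $S$. Its square is $\E_{X,X'}\prod_{i\in S}\tfrac12\langle r_i(X_i),r_i(X'_i)\rangle$, with $X,X'$ drawn from the coset. This is an average of a product of real Gram kernels $k_i(x,x')=\langle r_i(x),r_i(x')\rangle$ on the doubled alphabet $\F_p^2$, over the code $\mathcal G\times\mathcal G$. We would then apply Nguyen's real-function and bias estimates \cite[Section~6]{Nguyen} to these kernels. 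They are real, bounded by one, and have mean zero in each argument, which is the setting of Nguyen's one-bit analysis; that analysis already crosses one half, to $1/2-10^{-14}$. We would then convert the Hilbert--Schmidt bound to trace norm at cost $2^{|S|/2}$, the qubit dimension factor on $S$. We would sum over $S$ using the surviving supports only, and absorb the entropy factor $\binom{n}{|S|}$.

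The main obstacle is this last conversion. We must show that the gain from the Gram-kernel bias exceeds the factor $2^{|S|/2}$ together with the combinatorial factors, and this is what forces the tiny margin $10^{-9}$. First we would try the cutoff bookkeeping: optimize $\tau$ and the list threshold so that the bias per coordinate is a fixed constant below $1/\sqrt2$. Then we would check that the resulting exponent stays positive at $\nu/n=c_{\rm q}$, with every constant independent of $p$, the evaluation points, the multipliers and $z$.
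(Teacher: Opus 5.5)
Your skeleton (centered Bloch expansion, vanishing of supports of size at most $\nu$, Gram kernels on the doubled alphabet, Nguyen-type bias estimates) matches the paper. The gap is in how the surviving supports are divided between the two tools. As written, below one half you drop the Fourier argument, bound every support $S$ with $|S|>\nu$ through the Gram average, and absorb $\binom{n}{|S|}$. That cannot work. The only contraction available for paired codewords is the $U^3$ bound $\gamma_*=(3/4)^{1/8}$ for bounded mean-zero real functions on $\F_p^2$, fed into an MDS von Neumann inequality on at most $(3\nu-n)/2$ coordinates. It gives at best $\|T_S\|_1\le\gamma_*^{j_0/2}$ with $j_0=\lfloor(3\nu-n)/2\rfloor$, which is about $e^{-n\log(4/3)/64}$ near $\nu=n/2$ and does not improve with $|S|$. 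Meanwhile $\binom{n}{|S|}$ is of order $2^n$ for $|S|$ near $n/2$. The Gram estimate can therefore only pay for supports whose complement is smaller than $\epsilon n$ with $H(\epsilon)$ tiny; the paper uses $\epsilon=10^{-8}$.

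The missing idea is that the Sun--Wootters Fourier argument survives below one half once it is truncated. The ``crude bound $\kappa(4)^\nu$'' is not what fails. The only failure is that a dual code of dimension $n-\nu>n/2$ has no two disjoint information sets. Restrict to dual words of support at most $m$, where $2\nu-m\ge\eta n$. Place each support in a container of size $m$ and mark $s=\lfloor\nu-m/2\rfloor$ light coordinates. The shortened dual code on the container has dimension $m-\nu$, and $s\le2\nu-m$ leaves room for two disjoint information sets outside the marked set. List recovery plus Cauchy--Schwarz then bounds all supports of size at most $m$ by $(A_0n^{b_0}+1)2^{-\nu/2}$, using the cap $2^{-1/2}$ and unit squared mass of the Bloch Fourier weights.

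Two points in your large-support step also need correcting. First, the trace-norm conversion is not an obstacle. Your own formula gives $\|T_S\|_2^2=2^{-|S|}\E_{X,X'}\prod_{i\in S}k_i(X_i,X'_i)$, so the factor $2^{|S|/2}$ cancels exactly and $\|T_S\|_1\le(\E\prod_{i\in S}k_i)^{1/2}$. Were that factor real, no per-coordinate gain this close to one could beat it.

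Second, the kernels are not bounded by one: $|k_i|\le(1+|\bar r_i|)^2$, where $\bar r_i$ is the mean Bloch vector. You must therefore split by $B_S=\sum_{i\in S}|\bar r_i|^2$. For $B_S\le\beta n$, normalize by $(1+|\bar r_i|)^2$ and pay $\prod_i(1+|\bar r_i|)\le e^{\sqrt{nB_S}}$. For $B_S>\beta n$, use two disjoint information sets of $\mathcal G$ itself, each of size $\nu$, together with $\E_x|r_i(x)|^2\le1-|\bar r_i|^2$; this gives $\exp((n-2\nu)(\log2+\tfrac12)-B_S/2)$.

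The margin $10^{-9}$ comes from fitting these regimes together, not from the dimension factor. The truncated Fourier estimate reaches only $m<2\nu$, which forces $\epsilon>2(1/2-\nu/n)$. The entropy $H(\epsilon)$ must then still lie below the weak exponent, of order $10^{-6}$, left by the bias split.

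Finally, your plan does not address $\nu>n/2$. The paper passes to a GRS subcode of dimension $\lceil c_{\rm q}n\rceil$: every coset is a uniform mixture of subcode cosets with the same $R_*$, and convexity finishes.
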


Each $\rho_i(x)$ is the entire local output in this theorem.
Thus an independent device can use a private ancilla and discard it before emitting its qubit.
The proof separates the centered tensor expansion by support size.
Moderate supports are controlled by a truncated Fourier sum, whereas the larger supports use the real Gram representation below.
The supporting estimates and their full proofs are in Supplementary Section~\ref{supp:bh-details}.

\subsection{Centered supports and the truncated Fourier estimate}
Write the qubit states in Bloch form,
\begin{equation}\label{bh:bloch}
 \rho_i(x)=\frac{I+v_i(x)\cdot\sigma}{2},\qquad
 v_i(x)\in\mathbb R^3,\quad |v_i(x)|\le1,
 \qquad b_i\coloneqq \E_xv_i(x),\quad u_i(x)\coloneqq v_i(x)-b_i.
\end{equation}
Here $\sigma\coloneqq (\sigma_1,\sigma_2,\sigma_3)$ is the vector of Pauli matrices, and $|\cdot|$ denotes the Euclidean norm, also on $\mathbb C^3$ when applied to Fourier coefficients.
For $A\subseteq[n]$, let
\[
 T_A(z)\coloneqq \E_{X\in z+\mathcal G}
       \bigotimes_{i\in A}\bigl(\rho_i(X_i)-\bar\rho_i\bigr).
\]
Expanding each state about its uniform-field mean gives
\begin{equation}\label{bh:support-expansion}
 R_z-R_*=
 \sum_{\varnothing\ne A\subseteq[n]}
 T_A(z)\otimes\bigotimes_{i\notin A}\bar\rho_i,
\end{equation}
with tensor factors restored to their coordinate order.
Every projection of the affine MDS code $z+\mathcal G$ onto at most $\nu$ coordinates is uniform.
Consequently $T_A(z)=0$ for $0<|A|\le\nu$.

Use the normalized Fourier transform from Section~\ref{sec:fourier}, entrywise for vector-valued functions.
For $a\ne0$, set $w_i(a)\coloneqq |\widehat v_i(a)|$.
Pauli orthogonality, $\Tr(\sigma_j^\dagger\sigma_l)=2\delta_{jl}$, gives
\[
 \|\widehat\rho_i(a)\|_1
 \le\sqrt2\|\widehat\rho_i(a)\|_2
 =w_i(a).
\]
This Hilbert--Schmidt identity also holds when the Fourier coefficient is non-Hermitian.
Parseval and the reality of the Bloch vectors give
\begin{equation}\label{bh:fourier-cap}
 \sum_{a\ne0}w_i(a)^2
 =\E_x|v_i(x)|^2-|b_i|^2\le1,
 \qquad w_i(a)=w_i(-a)\le2^{-1/2}.
\end{equation}
For the second inequality, $a$ and $-a$ are distinct because $p$ is odd, so their two equal squared magnitudes are bounded by the total nonzero mass.

Centering removes every zero Fourier frequency inside $A$.
Fourier inversion and orthogonality on the coset therefore give the exact formula
\begin{equation}\label{bh:exact-support}
 T_A(z)=
 \sum_{\substack{y\in\mathcal G^\perp\\\supp(y)=A}}
 \omega^{\langle y,z\rangle}
 \bigotimes_{i\in A}\widehat\rho_i(y_i).
\end{equation}
In particular, summing the trace norms over supports of size at most $m$ counts each nonzero dual word of weight at most $m$ once:
\begin{equation}\label{bh:low-reduction}
 \sum_{0<|A|\le m}\|T_A(z)\|_1
 \le
 \sum_{\substack{0\ne y\in\mathcal G^\perp\\|\supp(y)|\le m}}
       \prod_{i\in\supp(y)}w_i(y_i).
\end{equation}
The factors $\bar\rho_i$ outside a support have trace norm one.
For later use, the precise consequence of Lemma~\ref{bh:truncated} is as follows.
For fixed $c_0,\eta>0$, if $\nu\ge c_0n$ and the integer cutoff $\nu\le m\le n$ satisfies $2\nu-m\ge\eta n$, then for all sufficiently large $n$,
\begin{equation}\label{bh:main-low-contract}
 \sum_{0<|A|\le m}\|T_A(z)\|_1
 \le (A_0n^{b_0}+1)2^{-\nu/2}.
\end{equation}
Here $A_0,b_0$ and the length threshold depend only on $c_0,\eta$.
This follows from \eqref{bh:low-reduction} and \eqref{bh:fourier-cap} with the parameters $\gamma=2^{-1/2}$, $\Gamma=1$ of Lemma~\ref{bh:truncated}.

\subsection{Gram kernels and the bias estimate}
We now return to the support operators in \eqref{bh:support-expansion}.
Define the real Euclidean tensor
\[
 M_A(z)\coloneqq \E_{X\in z+\mathcal G}\bigotimes_{i\in A}u_i(X_i)
       \in(\mathbb R^3)^{\otimes |A|}.
\]
Write $\|\cdot\|_{\rm E}$ for the Euclidean norm on this tensor space, in its orthonormal product basis.
If $m_\alpha$ are the coefficients of $M_A(z)$, then
\[
 T_A(z)=2^{-|A|}\sum_{\alpha\in\{1,2,3\}^A}
              m_\alpha\bigotimes_{i\in A}\sigma_{\alpha_i},
 \qquad
 \|T_A(z)\|_2^2=2^{-|A|}\|M_A(z)\|_{\rm E}^2.
\]
The trace/Hilbert--Schmidt inequality on a $2^{|A|}$-dimensional space consequently gives
\begin{equation}\label{bh:dimension-cancellation}
 \|T_A(z)\|_1
 \le2^{|A|/2}\|T_A(z)\|_2
 =\|M_A(z)\|_{\rm E}.
\end{equation}
Thus Pauli orthogonality exactly cancels the dimension factor in the trace-norm bound.

Let $X,Y$ be independent uniform words in $z+\mathcal G$, and define
\[
 K_i(x,y)\coloneqq \langle u_i(x),u_i(y)\rangle,
 \qquad r_i\coloneqq |b_i|,\qquad B_A\coloneqq \sum_{i\in A}r_i^2.
\]
Taking the inner product of two copies of the tensor average gives the exact Gram identity
\begin{equation}\label{bh:gram}
 \|M_A(z)\|_{\rm E}^2
 =\E_{X,Y\in z+\mathcal G}\prod_{i\in A}K_i(X_i,Y_i).
\end{equation}
Each $K_i$ is real, has mean zero under uniform independent field inputs, and satisfies
$|K_i(x,y)|\le(1+r_i)^2$.
The scalar input is the following consequence of Lemmas~\ref{bh:real-gap} and~\ref{bh:product-mds}.
Put $\gamma_*\coloneqq (3/4)^{1/8}$.
If $L_i:\F_p^2\to\mathbb R$ have mean zero and $|L_i|\le1$ for $i\in A$, and $J\subseteq A$ satisfies $2|J|\le3\nu-n$, then
\begin{equation}\label{bh:main-scalar-contract}
 \left|\E_{X,Y\in z+\mathcal G}\prod_{i\in A}L_i(X_i,Y_i)\right|
 \le\gamma_*^{|J|}.
\end{equation}
Thus the centered Gram kernels meet exactly the hypotheses needed for a uniform contraction over paired codewords.

Assume $n/3<\nu\le n/2$ and put $j_0\coloneqq \lfloor(3\nu-n)/2\rfloor$.
For $|A|\ge\nu+1$, choose $J\subseteq A$ of size $j_0$.
Apply Lemma~\ref{bh:product-mds} to the paired codeword average in \eqref{bh:gram}, using the normalized kernels on $A$ and the constant function one outside $A$.
Taking square roots in \eqref{bh:gram} and using \eqref{bh:dimension-cancellation} gives
\begin{equation}\label{bh:small-bias}
 \|T_A(z)\|_1
 \le\gamma_*^{j_0/2}\prod_{i\in A}(1+r_i)
 \le\exp\!\left(\frac{j_0}{2}\log\gamma_*+\sqrt{nB_A}\right).
\end{equation}
Here $\log(1+r_i)\le r_i$ and Cauchy--Schwarz give
$\sum_{i\in A}r_i\le\sqrt{|A|B_A}\le\sqrt{nB_A}$.
This bound is strongest when the mean Bloch vectors are small.

For the complementary bias range, two disjoint information sets of size $\nu$, together with
$\E_x|u_i(x)|^2\le1-r_i^2$, give the following second bound; its full variance calculation is in Supplementary Section~\ref{supp:bh-bias}:

\begin{equation}\label{bh:large-bias}
 \|T_A(z)\|_1
 \le\exp\!\left((n-2\nu)(\log2+\tfrac12)-\tfrac12B_A\right).
\end{equation}
The inequalities include $r_i=1$: in that case the variance is zero and the centered map at that coordinate vanishes identically.
\begin{proof}[Proof of Theorem~\ref{bh:main}]
First suppose $c_{\rm q}\le\nu/n\le1/2$ and set $m\coloneqq \lfloor(1-10^{-8})n\rfloor$.
Then $2\nu-m\ge(10^{-8}-2\cdot10^{-9})n=8\cdot10^{-9}n$, so \eqref{bh:main-low-contract}, with $c_0=c_{\rm q}$ and $\eta=8\cdot10^{-9}$, bounds the smaller supports by $(A_0n^{b_0}+1)2^{-\nu/2}$, with absolute $A_0,b_0$.
For the larger supports, split according to $B_A\le2^{-18}n$ or $B_A>2^{-18}n$ and use \eqref{bh:small-bias} or \eqref{bh:large-bias}, respectively.
The entropy count of supports with complement smaller than $10^{-8}n$ and the numerical estimates in Lemma~\ref{bh:large-support-sum} give
\[
 \sum_{|A|>m}\|T_A(z)\|_1
 \le(n+1)e^{-10^{-6}n+\lambda_*/2},
 \qquad \lambda_*\coloneqq -\log\gamma_*.
\]
Equation~\eqref{bh:support-expansion} now yields
\[
 \Adv(R_z,R_*)\le\frac12\left[(A_0n^{b_0}+1)2^{-\nu/2}
                  +(n+1)e^{-10^{-6}n+\lambda_*/2}\right]
 \le e^{-an}
\]
for an absolute $a>0$ and all sufficiently large $n$.
All parameters of the truncated bound and of the large-support estimate were fixed independently of the code and state maps.
For larger $\nu$, choose the GRS subcode of dimension $\nu_0\coloneqq \lceil c_{\rm q}n\rceil$ obtained by restricting polynomial degree, with the same evaluation points and multipliers.
For $n\ge10^9$, it satisfies $c_{\rm q}\le\nu_0/n\le1/2$ and $\nu_0\le\nu$.
Every coset of $\mathcal G$ is a uniform mixture of cosets of this subcode, all with the same comparison state $R_*$, which uses uniform field inputs.
Convexity of trace distance proves the assertion, with constants uniform over all data in the theorem.
\end{proof}

\subsection{Chosen secrets and fully separable resources}
For Shamir sharing, the code dimension in Theorem~\ref{bh:main} is $t-1$.
The following rate leaves room for that integer offset.

\begin{corollary}\label{bh:shamir}
Suppose each device emits a qubit according to an independent state map $\rho_i(Y_i)$.
For every sufficiently large $n$, if
\[
 \frac tn\ge c_{\rm S}\coloneqq \frac12-5\cdot10^{-10},
\]
then, for all chosen secrets $s,s'$,
\begin{equation}\label{bh:chosen-bound}
 \Adv(\rho_s,\rho_{s'})\le2e^{-an},
\end{equation}
with the absolute exponent $a$ from Theorem~\ref{bh:main}.
The length threshold and bound are uniform over every prime $p>n$, evaluation points, and local state maps.
In particular, after increasing the length threshold, the distance is at most $e^{-an/2}$.
\end{corollary}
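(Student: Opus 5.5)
The plan is to reduce the Shamir experiment directly to Theorem~\ref{bh:main}, compare both secrets with one common product state $R_*$, and finish with the triangle inequality. Everything then rests on the uniformity of Theorem~\ref{bh:main} over translates of the code.

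\textbf{Identify the code.} The share vector for secret $s$ is uniform on the coset $s\mathbf 1+\mathcal C_0$, where $\mathcal C_0$ is the zero-secret space. As noted before Lemma~\ref{lem:vandermonde-kernel}, $\mathcal C_0$ is a generalized Reed--Solomon code with multipliers $v_i=x_i\ne0$ and dimension $\nu=t-1$. It is a GRS code on the same evaluation points, which is all Theorem~\ref{bh:main} requires.

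\textbf{Write the state as a coset average.} Conditioned on the share vector $y$, the emitted state is $\bigotimes_i\rho_i(y_i)$, so
\[
\rho_s=\E_{X\in s\mathbf 1+\mathcal C_0}\bigotimes_i\rho_i(X_i)=R_{s\mathbf 1}.
\]
The comparison state $R_*=\bigotimes_i\bar\rho_i$ uses uniform field inputs. It therefore does not depend on $s$, so the same $R_*$ serves both secrets.

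\textbf{Check the rate and conclude.} If $t/n\ge c_{\rm S}$, then
\[
\nu/n=(t-1)/n\ge \tfrac12-5\cdot10^{-10}-\tfrac1n\ge \tfrac12-10^{-9}=c_{\rm q}
\]
once $n\ge 2\cdot10^{9}$. We also need $n\ge n_0$ from Theorem~\ref{bh:main}. Applying that theorem with translate $z=s\mathbf 1$, and again with $z=s'\mathbf 1$, gives $\Adv(R_{s\mathbf 1},R_*)\le e^{-an}$ and $\Adv(R_{s'\mathbf 1},R_*)\le e^{-an}$. The triangle inequality for trace distance then yields $\Adv(\rho_s,\rho_{s'})\le 2e^{-an}$, which is \eqref{bh:chosen-bound}.

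\textbf{Final bound and uniformity.} Since $2e^{-an}\le e^{-an/2}$ once $n\ge (2\log 2)/a$, raising the length threshold gives the last claim. Uniformity over $p$, the evaluation points and the state maps is inherited from Theorem~\ref{bh:main}, including its uniformity in the translate $z$ and the multipliers.

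The only step that needs care is the integer offset $\nu=t-1$ versus $t$. This is exactly why $c_{\rm S}$ sits strictly above $c_{\rm q}$: the gap $5\cdot10^{-10}$ absorbs the $1/n$ loss.
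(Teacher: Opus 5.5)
Your proposal is correct and matches the paper's proof step for step. You identify the share vector as uniform on $s\mathbf 1+\mathcal C_0$, with $\mathcal C_0$ the GRS code of dimension $t-1$ and multipliers $x_i$, and use $n\ge 2\cdot10^{9}$ to get $(t-1)/n\ge c_{\rm q}$. You then apply Theorem~\ref{bh:main} to both translates against the common $R_*$ and finish with the triangle inequality and $n\ge 2\log 2/a$.
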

\begin{proof}
A fixed-secret polynomial has the form $s+XQ(X)$ with $\deg Q<t-1$, so the share vector is uniform on $s\mathbf1+\mathcal C_0$, where $\mathcal C_0$ is GRS of dimension $t-1$ and multipliers $x_i$.
For $n\ge2\cdot10^9$, the stated rate implies $(t-1)/n\ge c_{\rm q}$.
Theorem~\ref{bh:main} places both secret outputs within $e^{-an}$ of the same $R_*$, giving \eqref{bh:chosen-bound} by the triangle inequality.
The last assertion follows for $n\ge2\log2/a$.
\end{proof}

The next consequence permits a finite adversary reference when the whole resource is separable across the devices and that reference.
It also includes shared classical randomness, by giving the adversary the index of the separable decomposition.

\begin{corollary}\label{bh:separable}
In the local leakage model of Section~\ref{sec:model}, suppose the secret- and sharing-independent initial resource has a decomposition
\begin{equation}\label{bh:full-separability}
 \Omega_{A_1\cdots A_nE}
 =\sum_\ell q_\ell
       \left(\bigotimes_{i=1}^n\tau_{A_i}^{\ell}\right)
       \otimes\eta_E^{\ell},
\end{equation}
where $(q_\ell)$ is a probability distribution and every factor is a normalized density operator.
Each local channel emits one qubit.
Under the rate and length assumptions of Corollary~\ref{bh:shamir},
$\Adv(\rho_s,\rho_{s'})\le2e^{-an}$, uniformly in all finite input and reference dimensions and in the decomposition.
In particular, this holds for a product of device--reference pairs each of Schmidt number one.
\end{corollary}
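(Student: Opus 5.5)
The plan is to reduce to Corollary~\ref{bh:shamir} by conditioning on the separable decomposition index, with the reference state handled as an independent classical-plus-quantum side register. Give the adversary the label $\ell$ (a public seed in the sense of Section~\ref{sec:model}). By contractivity under discarding the label, it suffices to bound the distance of the labelled states $\sum_\ell q_\ell|\ell\rangle\langle\ell|\otimes\rho_s^\ell$, where $\rho_s^\ell$ is the attacker's state in branch $\ell$. This distance equals $\sum_\ell q_\ell\Adv(\rho_s^\ell,\rho_{s'}^\ell)$, so a bound of $2e^{-an}$ uniform in the branch gives the claim.

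In a fixed branch, the resource is $\bigotimes_i\tau_{A_i}^\ell\otimes\eta_E^\ell$, and $\eta_E^\ell$ is independent of the secret, the shares and the device outputs. Conditioned on the shares $y$, the attacker therefore receives $\bigotimes_i\Phi_i^{y_i}(\tau^\ell_{A_i})\otimes\eta_E^\ell$. Write $\rho_i(y)\coloneqq\Phi_i^{y}(\tau^\ell_{A_i})\in\mathcal D(\mathbb C^2)$. These are independent qubit state maps in exactly the sense of Corollary~\ref{bh:shamir}. Averaging over the Shamir shares gives $\rho_s^\ell=\tilde\rho_s^\ell\otimes\eta_E^\ell$, where $\tilde\rho_s^\ell$ is the output of the corresponding independent-map experiment. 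Since trace norm is multiplicative on tensor products and $\Tr\eta_E^\ell=1$, we get $\Adv(\rho_s^\ell,\rho_{s'}^\ell)=\Adv(\tilde\rho_s^\ell,\tilde\rho_{s'}^\ell)\le2e^{-an}$ by Corollary~\ref{bh:shamir}.

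That corollary is uniform over local state maps, primes and evaluation points. Its length threshold and constant $a$ are absolute, so the bound is independent of $\ell$, of the decomposition, and of the input and reference dimensions. Averaging over $\ell$ completes the proof.

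For the final sentence, take a product $\bigotimes_i\Omega_{A_iR_i}$ with each pair of Schmidt number one, hence separable. Expand each factor as $\sum_{\ell_i}q^{(i)}_{\ell_i}\tau^{\ell_i}_{A_i}\otimes\eta^{\ell_i}_{R_i}$. The tensor product over $i$ is then a decomposition of the form \eqref{bh:full-separability}, with $E=R_1\cdots R_n$, multi-index $\ell=(\ell_1,\ldots,\ell_n)$ and $\eta_E^\ell=\bigotimes_i\eta_{R_i}^{\ell_i}$.

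The only point needing care is that the decomposition may have arbitrarily many terms. Finite-dimensional separable states admit finite decompositions by Carathéodory, and in any case the averaging argument works for any probability measure. I do not expect any step to be a real obstacle: the substantive work is entirely in Theorem~\ref{bh:main}.
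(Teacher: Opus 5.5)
Your proposal is correct and follows the paper's proof. You condition on the separable-decomposition label, reduce each branch to independent qubit state maps tensored with a secret-independent reference, average uniformly in the label, and multiply the local separable decompositions for the Schmidt-number-one case. The only cosmetic difference is that the paper bounds each branch against the common comparison state $\bigotimes_i\E_y\Phi_i^y(\tau_{A_i}^\ell)$ and applies the triangle inequality once after averaging, which gives the same $2e^{-an}$.
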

\begin{proof}
Give the adversary an additional orthogonal classical label $\ell$.
Conditional on this label, define the qubit state maps
$\rho_i^\ell(y)\coloneqq \Phi_i^y(\tau_{A_i}^\ell)$, and let $R_s^\ell$ be their joint output for secret $s$.
The conditional reference $\eta_E^\ell$ is independent of the secret and sharing randomness.
Writing $B_\ell\coloneqq \bigotimes_i\E_y\rho_i^\ell(y)$, the reduction in the proof of Corollary~\ref{bh:shamir} and Theorem~\ref{bh:main} give
$\Adv(R_s^\ell,B_\ell)\le e^{-an}$, uniformly in $\ell$.
The labeled output and its secret-independent comparison state are
\[
 L_s\coloneqq \sum_\ell q_\ell|\ell\rangle\langle\ell|
                   \otimes R_s^\ell\otimes\eta_E^\ell,
 \qquad
 L_* \coloneqq \sum_\ell q_\ell|\ell\rangle\langle\ell|
                   \otimes B_\ell\otimes\eta_E^\ell.
\]
Tensoring with a density operator preserves trace norm, and orthogonal classical blocks add their trace norms.
Hence
\[
 \Adv(L_s,L_*)
 =\sum_\ell q_\ell\Adv(R_s^\ell,B_\ell)\le e^{-an},
 \qquad \Adv(L_s,L_{s'})\le2e^{-an}.
\]
Discarding the label recovers the original experiment, so data processing proves the desired bound.

For Schmidt-number-one pairs, expand each finite separable decomposition across $A_i:R_i$ and multiply the decompositions.
The tuple of local indices gives \eqref{bh:full-separability} with $E=R_1\cdots R_n$.
Finite-dimensional separability admits a finite convex decomposition, and the norm estimate is uniform in its length.
\end{proof}

The partition in \eqref{bh:full-separability} is $A_1|\cdots|A_n|E$, including the full adversary reference.
This identifies exactly the resource assumption used to reduce the experiment to independent qubit state maps.
The emitted registers of the sequential GHZ generator in Supplementary Section~\ref{supp:memory-examples} have a classical-mixture marginal. An adversary retaining the final memory obtains the known constant-advantage attack.

\subsection{A common dominator and PPT outputs}\label{ppt:section}
The same below-half estimate applies to larger local output spaces when
every state is bounded above by twice one fixed density operator.
Whitening such a state family replaces the Bloch ball by a unit ball in a real
Hilbert space, with the same real Gram bounds. The qubit reduction criterion supplies
the required dominator for joint PPT output--reference states. The resulting estimate is uniform in the finite dimension of the retained quantum reference.

\begin{theorem}[Domination by twice a density operator]\label{ppt:domination-two}
In Theorem~\ref{bh:main}, replace each qubit state map by
$\sigma_i:\F_p\to\mathcal D(H_i)$, where $H_i$ is any finite-dimensional
space. Suppose there is a density operator $\gamma_i$ on $H_i$ such that
\begin{equation}\label{ppt:domination}
 \sigma_i(y)\le2\gamma_i\qquad(y\in\F_p).
\end{equation}
Define
\[
 \bar\sigma_i\coloneqq\E_y\sigma_i(y),\qquad
 R_z\coloneqq\E_{X\in z+\mathcal G}\bigotimes_i\sigma_i(X_i),
 \qquad R_*\coloneqq\bigotimes_i\bar\sigma_i.
\]
The same absolute constants $a,n_0$ can be chosen so that
$\Adv(R_z,R_*)\le e^{-an}$ whenever
$p>n\ge n_0$ and $\nu/n\ge c_{\rm q}=1/2-10^{-9}$.
The estimate is uniform over the finite dimensions, dominators, state maps,
GRS evaluation points and multipliers, and translate $z$.
\end{theorem}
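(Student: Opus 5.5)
We will reduce Theorem~\ref{ppt:domination-two} to the proof of Theorem~\ref{bh:main} by whitening. Put $W_i\coloneqq\gamma_i^{-1/2}$ on $\supp\gamma_i$; every $\sigma_i(y)$ is supported there by \eqref{ppt:domination}. We work in the real Hilbert space $\mathcal H_i$ of Hermitian operators on $\supp\gamma_i$ with the weighted inner product $\langle X,Y\rangle_{\gamma}\coloneqq\Tr(W_iXW_iY)$. Write
\[
 \sigma_i(y)=\gamma_i^{1/2}\bigl(\mathbf 1+V_i(y)\bigr)\gamma_i^{1/2},
\]
so that $V_i(y)\coloneqq W_i\sigma_i(y)W_i-\mathbf 1$. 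The domination $\sigma_i(y)\le2\gamma_i$ gives $-\mathbf1\le V_i(y)\le\mathbf1$. The identity $\Tr(\gamma_i V_i(y))=0$ makes $V_i(y)$ orthogonal to $\mathbf 1$. Then $\Tr(\gamma_iV_i^2)\le\Tr\gamma_i=1$, which says that the Bloch-type vector $V_i(y)$ lies in the unit ball of the real space $\mathcal H_i^0\coloneqq\{X:\Tr(\gamma_iX)=0\}$ with the norm $|X|^2\coloneqq\Tr(\gamma_i X^2)$. This replaces $|v_i(x)|\le1$ in \eqref{bh:bloch}. We then define $b_i\coloneqq\E_yV_i(y)$, $u_i\coloneqq V_i-b_i$ and $r_i\coloneqq|b_i|$ exactly as before.

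Next we retrace the three ingredients of the proof of Theorem~\ref{bh:main}.

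\emph{Truncated Fourier bound.} We need $\|\widehat{\sigma_i}(a)\|_1\le w_i(a)\coloneqq|\widehat V_i(a)|$, where the norm is extended Hermitian-sesquilinearly to complex coefficients. This follows from Cauchy--Schwarz,
\[
 \|\gamma^{1/2}X\gamma^{1/2}\|_1\le\|\gamma^{1/2}\|_2\,\|X\gamma^{1/2}\|_2=\bigl(\Tr\gamma X^\dagger X\bigr)^{1/2}.
\]
Parseval in $\mathcal H_i$ gives $\sum_{a\ne0}w_i(a)^2=\E|V_i|^2-r_i^2\le1$. The symmetry $w_i(a)=w_i(-a)$ holds because $V_i$ is real-valued (Hermitian), so the cap $2^{-1/2}$ follows as in \eqref{bh:fourier-cap}. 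With these, \eqref{bh:main-low-contract} applies unchanged.

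\emph{Dimension cancellation.} For the large supports we replace \eqref{bh:dimension-cancellation} by the tensor version of the same Cauchy--Schwarz estimate, with $\Gamma_A\coloneqq\bigotimes_{i\in A}\gamma_i$:
\[
 \|T_A(z)\|_1=\Bigl\|\Gamma_A^{1/2}M_A(z)\Gamma_A^{1/2}\Bigr\|_1\le\|M_A(z)\|_{\rm E}.
\]
Here $M_A(z)\coloneqq\E\bigotimes_{i\in A}u_i(X_i)$, and the norm $\|\cdot\|_{\rm E}$ is now taken in $\bigotimes\mathcal H_i$, whose inner product is the product of the weighted ones. This is the step to check carefully, since the dimension factor must cancel exactly as the Pauli factor did. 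It does cancel, because $\Tr\Gamma_A=1$ and Cauchy--Schwarz is applied with $\Gamma_A^{1/2}$ on both sides.

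\emph{Gram identity and bias bounds.} The Gram identity \eqref{bh:gram} holds with real kernels $K_i(x,y)=\langle u_i(x),u_i(y)\rangle_\gamma$. These kernels have mean zero and satisfy $|K_i|\le(1+r_i)^2$, since $|u_i|\le1+r_i$. So \eqref{bh:main-scalar-contract} yields \eqref{bh:small-bias} verbatim. The large-bias estimate \eqref{bh:large-bias} used only $\E|u_i|^2\le1-r_i^2$, together with the information-set and variance argument of Supplementary Section~\ref{supp:bh-bias}, which only uses Hilbert-space Euclidean structure. That argument transfers once its trace-norm-to-Euclidean step is replaced by the Cauchy--Schwarz bound above.

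Finally, the case split at $B_A\le2^{-18}n$, the entropy count of Lemma~\ref{bh:large-support-sum}, and the subcode reduction for $\nu/n>1/2$ involve only these scalar quantities. They therefore give the same absolute $a,n_0$, and the expansion \eqref{bh:support-expansion} holds with $\bar\sigma_i$ in place of $\bar\rho_i$. The main obstacle I expect is verifying that every place in the supplementary large-bias argument which used the explicit Pauli basis needs only the inner-product structure. If some step used three-dimensionality, I would instead rerun it with an orthonormal basis of $\mathcal H_i^0$, noting that no bound depended on the number of basis elements.
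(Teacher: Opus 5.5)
Your plan is the paper's: whiten by $\gamma_i$, place the centred states in a real Hilbert unit ball, and rerun the Fourier, Gram and bias estimates. However, the step you flagged as delicate fails as you justify it.

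The one-sided Cauchy--Schwarz bound $\|\Gamma_A^{1/2}M\Gamma_A^{1/2}\|_1\le\|M\Gamma_A^{1/2}\|_2$ controls $\Tr(\Gamma_AM_A(z)^2)=\E_{X,Y}\prod_{i\in A}\Tr\bigl(\gamma_iu_i(X_i)u_i(Y_i)\bigr)$. These factors are complex in general, since $\operatorname{Im}\Tr(\gamma XY)=\tfrac1{2i}\Tr(\gamma[X,Y])$; for example, $\Tr(\gamma\sigma_1\sigma_2)=i(g_1-g_2)$ for diagonal $\gamma$ and Pauli matrices $\sigma_1,\sigma_2$. The real part of a product is not the product of real parts. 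So $\Tr(\Gamma_AM^2)$ is not the norm of $M$ in the tensor product of the real spaces normed by $\Tr(\gamma_iX^2)$.

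Concretely, in eigenbases of the $\gamma_i$, $\Tr(\Gamma_AM^2)$ weights the entry $M_{\mathbf j\mathbf k}$ by $\tfrac12(\prod_ig^{(i)}_{j_i}+\prod_ig^{(i)}_{k_i})$. The product real norm weights the same entry by $\prod_i\tfrac12(g^{(i)}_{j_i}+g^{(i)}_{k_i})$, and neither dominates the other. Your estimate therefore does not give the Gram identity with real kernels. It gives one with the complex Hermitian kernels $\Tr(\gamma_iu_i(x)u_i(y))$, for which Lemma~\ref{bh:real-gap} is unavailable, because its proof needs real-valued functions; hence \eqref{bh:main-scalar-contract} is unavailable too.

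The same asymmetry invalidates your claim $w_i(a)=w_i(-a)$. You have $\widehat V_i(-a)=\widehat V_i(a)^\dagger$, but $\Tr(\gamma YY^\dagger)\ne\Tr(\gamma Y^\dagger Y)$ in general. This one is repairable by taking the smaller of the two one-sided bounds.

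The missing idea is symmetric weighting. The trace norm needs only the weights $\prod_i(g^{(i)}_{j_i}g^{(i)}_{k_i})^{1/2}$, which are multiplicative over tensor factors and lie below both weightings above. By H\"older with exponents $4,2,4$,
$\|Z\|_1\le\|\gamma^{1/4}\|_4^2\|\gamma^{-1/4}Z\gamma^{-1/4}\|_2=\|\gamma^{-1/4}Z\gamma^{-1/4}\|_2$,
for non-Hermitian $Z$ and for tensor products alike.

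This is what the paper does. It sets $W(y)=\gamma^{-1/4}\sigma(y)\gamma^{-1/4}$, $G=\gamma^{1/2}$ and $v=W-G$ in the Hilbert--Schmidt space of Hermitian matrices. It then gets $\|v\|_2\le1$ from $\Tr W^2\le2\Tr(WG)=2$. In your variables $v=\gamma^{1/4}V\gamma^{1/4}$, so the inner product is $\Tr(\gamma^{1/2}V\gamma^{1/2}V')$. This is the weighted inner product you declared first, transported to $V$. It is real, multiplicative over tensor factors, and invariant under adjoints.

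Your unit-ball bound transfers, since $\Tr(\gamma^{1/2}V\gamma^{1/2}V)\le\Tr(\gamma V^2)\le1$ entrywise by $\sqrt{g_jg_k}\le(g_j+g_k)/2$. Alternatively, you can keep the real norm $\operatorname{Re}\Tr(\gamma XY)$: your displayed inequality $\|T_A(z)\|_1\le\|M_A(z)\|_{\rm E}$ is then true, but only via this H\"older step followed by the same entrywise inequality.

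With that replacement, the Fourier cap, the Gram identity, the small-bias bound and the subcode reduction go through as you outline. Your large-bias step was already fine, since it uses only single-site trace norms.
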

The choice $\gamma_i=I_2/2$ recovers Theorem~\ref{bh:main} exactly.
Supplementary Section~\ref{ppt:proof} proves the extension by checking the
Fourier, Gram-kernel and variance estimates used in that theorem.

For qubit outputs, the reduction criterion of Horodecki and Horodecki
\cite[arXiv version, Section~III]{HHReduction} identifies a useful physical instance of
\eqref{ppt:domination}.
A bipartite state is positive under partial transpose, or PPT, across
$B:R$ when its partial transpose on $B$ is positive semidefinite.

\begin{corollary}[PPT outputs and classical bits]\label{ppt:physical}
Suppose the full initial resource is a product
$\bigotimes_i\Omega_{A_iR_i}$ of finite device--reference states,
independent of the secret and sharing randomness.
Each device applies its local CPTP map $\Phi_i^y:A_i\to B_i$ and emits
at most one qubit, and the adversary retains every $R_i$.
Put
\[
 \sigma_i(y)\coloneqq(\Phi_i^y\otimes\id_{R_i})(\Omega_{A_iR_i}).
\]
If $\sigma_i(y)$ is PPT across $B_i:R_i$ for every $i,y$, then, for
all sufficiently large $n$ and $t/n\ge c_{\rm S}=1/2-5\cdot10^{-10}$,
\[
 \Adv(\rho_s,\rho_{s'})\le2e^{-an}
\]
for every pair of secrets.
The constants are uniform over every prime $p>n$, all evaluation points,
permitted channels, and finite input and reference dimensions.
In particular, the conclusion holds for every local one-qubit channel
when each initial pair is PPT, and for every classical-bit output channel
with arbitrary independent initial pairs.
\end{corollary}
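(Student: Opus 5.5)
The proof reduces Corollary~\ref{ppt:physical} to Theorem~\ref{ppt:domination-two}, applied to the joint output--reference families $\sigma_i(y)$ on $H_i\coloneqq B_i\otimes R_i$. We need a fixed density operator $\gamma_i$, independent of $y$, with $\sigma_i(y)\le2\gamma_i$.

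The natural candidate is $\gamma_i\coloneqq\frac{I_{B_i}}{2}\otimes\tau_{R_i}$, where $\tau_{R_i}\coloneqq\Tr_{A_i}\Omega_{A_iR_i}$. Since each $\Phi_i^y$ is trace preserving and acts only on $A_i$, we have $\Tr_{B_i}\sigma_i(y)=\tau_{R_i}$ for every $y$. So $\gamma_i$ is indeed independent of the share. If $\dim B_i=1$, the factor is a fixed state and can be dropped; we may therefore assume $\dim B_i=2$.

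\textbf{Domination.} The required inequality is $\sigma_i(y)\le I_{B_i}\otimes\Tr_{B_i}\sigma_i(y)$. This is the reduction criterion of Horodecki and Horodecki for a qubit on the $B$ side, and I would prove it directly. Let $\sigma$ be PPT across $B:R$ with $\dim B=2$. The map $X\mapsto\Tr(X)I-X$ on $2\times2$ matrices equals $Y\,X^{\mathsf T}Y^\dagger$, with $Y=\sigma_2$, since $\sigma_2 X^{\mathsf T}\sigma_2=\Tr(X)I-X$. Hence
\[
 I_B\otimes\tau_R-\sigma=(\sigma_2\otimes I)\,\sigma^{\mathsf T_B}\,(\sigma_2\otimes I)\ge0.
\]
This gives $\sigma_i(y)\le 2\gamma_i$ with $K=2$.

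\textbf{Chosen secrets.} As in Corollary~\ref{bh:shamir}, independence of the pairs makes the conditional state given the shares equal to $\bigotimes_i\sigma_i(Y_i)$. The share vector for secret $s$ is uniform on $s\mathbf1+\mathcal C_0$, a coset of a GRS code of dimension $t-1$. For large $n$, the rate $t/n\ge c_{\rm S}$ gives $(t-1)/n\ge c_{\rm q}$. Theorem~\ref{ppt:domination-two} then puts both $\rho_s$ and $\rho_{s'}$ within $e^{-an}$ of the same $R_*$, and the triangle inequality gives $2e^{-an}$. Uniformity in the reference dimensions is inherited from Theorem~\ref{ppt:domination-two}.

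\textbf{Special cases.} For PPT initial pairs, a local channel on $A_i$ preserves PPT: $(\Phi\otimes\id)(\Omega)^{\mathsf T_B}=(\mathsf T\circ\Phi\circ\mathsf T\otimes\id)(\Omega^{\mathsf T_A})$, and $\mathsf T\circ\Phi\circ\mathsf T$ is completely positive. For classical-bit outputs, $\sigma_i(y)=\sum_b|b\rangle\langle b|\otimes\omega_b$ is block diagonal. It is then separable, hence PPT; alternatively, the domination holds directly since $\omega_b\le\sum_{b'}\omega_{b'}=\tau_R$.

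\textbf{Main obstacle.} The only nontrivial step is the qubit reduction inequality, and the $\sigma_2$ identity settles it. One point to check is that Theorem~\ref{ppt:domination-two} allows $\gamma_i$ to be singular on the reference; this is consistent with the paper's convention that inverses are restricted to supports.
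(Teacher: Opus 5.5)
Your proposal is correct and matches the paper's proof step for step: the same share-independent dominator $(I_B/2)\otimes\tau_R$, the same conjugated partial-transpose identity (your $\sigma_2$ is the paper's $J$ up to a phase), the same application of Theorem~\ref{ppt:domination-two} to the coset $s\mathbf1+\mathcal C_0$ followed by the triangle inequality, and the same Kraus and block-diagonal arguments for the two special cases. The one point to tidy is the case of a one-dimensional output. Dropping that coordinate would shorten the code and weaken the bound to $e^{-an'}$ with fewer coordinates $n'$. Instead, embed the output into a qubit as the paper does, or note that $\sigma_i(y)=\tau_{R_i}\le2\tau_{R_i}$ already satisfies the hypothesis of Theorem~\ref{ppt:domination-two}.
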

\begin{proof}
Trace preservation fixes the reference marginal
$\tau_i\coloneqq\Tr_{B_i}\sigma_i(y)$ independently of $y$.
For a qubit $B$ and
$J\coloneqq\left(\begin{smallmatrix}0&-1\\1&0\end{smallmatrix}\right)$,
direct block multiplication gives
\begin{equation}\label{ppt:reduction}
 I_B\otimes\tau-\sigma
 =(J\otimes I_R)\sigma^{T_B}(J^\dagger\otimes I_R).
\end{equation}
Thus an output PPT state obeys
$\sigma\le I_B\otimes\tau=2[(I_B/2)\otimes\tau]$.
A one-dimensional output may be embedded into a qubit.
Theorem~\ref{ppt:domination-two} now applies to the full output--reference
families. The Shamir zero-secret code has dimension $t-1$, and
$n\ge2\cdot10^9$ makes the stated rate imply $(t-1)/n\ge c_{\rm q}$.
Both secret outputs are within $e^{-an}$ of the same product of local
means, giving the displayed bound by the triangle inequality.

For an initially PPT pair, a Kraus representation gives
\[
 \sigma_i(y)^{T_{B_i}}
 =\sum_\alpha(\overline{K_{i,\alpha}^y}\otimes I)
       \Omega_{A_iR_i}^{T_{A_i}}
       ((K_{i,\alpha}^y)^{\mathsf T}\otimes I)\ge0.
\]
For a classical bit the output has the form
\[
 |0\rangle\langle0|\otimes M_i(y)
 +|1\rangle\langle1|\otimes[\tau_i-M_i(y)],
 \qquad 0\le M_i(y)\le\tau_i,
\]
so each block is at most $\tau_i$, regardless of initial entanglement.
\end{proof}
The PPT premise concerns each emitted qubit together with its retained
reference. Independence is imposed on the full device--reference pairs.
Corollary~\ref{bh:separable} continues to cover the fully separable
multipartite resource model, including its shared classical mixtures.

\section{Selected marginals and entanglement support}\label{sec:correlation}
A correlated resource can still have a large marginal bounded by a scalar multiple of a product of single-device or block states.
Revealing the complementary shares leaves the exact smaller Shamir instance of Lemma~\ref{lem:conditioning} and preserves uniformity in the field size.
A filter on the adversary's reference, using the postselection mechanism of Christandl, K\"onig, and Renner \cite{CKR}, transfers the bounds for independent device--reference pairs or independent block--reference factors at the cost of that scalar multiple.
When an arbitrarily entangled support factors from the other devices, a direct argument later in the section gives a sharper bound by keeping that support in the Fourier sum.

The next theorem converts a marginal operator inequality into a security bound.
The comparison is made on a purification, which permits arbitrary correlations with the unselected devices and the adversary.

\begin{theorem}[Marginal domination transfer]\label{thm:domination}
Let $\Omega_{A_1\cdots A_nE}$ be any share-independent initial state.
Choose $I\subseteq[n]$, put $S\coloneqq [n]\setminus I$, $m\coloneqq |I|$, and $r\coloneqq n-m$, and assume $r<t$.
Suppose there exist local density matrices $\tau_i$ and $\Lambda\ge1$ such that
\[
 \Omega_{A_I}\le\Lambda\bigotimes_{i\in I}\tau_i.
\]
If $\varepsilon_{\rm ref}(m,t-r,p)$ is any bound for independent device--reference pairs applicable to $m$ shares, threshold $t-r$, and field $\F_p$, uniform over evaluation points, all distinct secret pairs, permitted local channels, and finite reference dimensions, then
\[
 \Adv(\rho_s,\rho_{s'})
 \le \min\{1,\Lambda\varepsilon_{\rm ref}(m,t-r,p)\}.
\]
\end{theorem}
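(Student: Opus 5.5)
The plan is to reveal the shares on $S$, reduce to independent device--reference pairs on $I$, and pay the factor $\Lambda$ through a postselection filter on a purification.

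\textbf{Step 1: reveal $S$.} Giving the adversary $Y_S$ only increases the distance, by data processing. By Lemma~\ref{lem:conditioning}, $Y_S$ is uniform and independent of the secret. The conditioned shares on $I$ are coordinatewise affine images of an $(m,t-r)$ Shamir instance, and the affine maps depend on $z=Y_S$. Averaging over $z$, it suffices to bound each branch uniformly. In a branch, the devices in $S$ apply fixed, share-independent channels, since their shares are now known constants. Their outputs and the affine substitutions can therefore be absorbed into the adversary's processing and into the channels on $I$. The branch then involves only the channels on $I$, acting on $\Omega_{A_IE'}$. Here $E'$ collects $E$, the registers $A_S$, and the $S$-devices' outputs; all of these are share-independent.

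\textbf{Step 2: purify and compare.} Let $|\psi\rangle_{A_IP}$ purify $\Omega_{A_I}$, so that every extension, including $A_SE$ together with the fixed $S$-channels, arises from a channel on $P$. By data processing it suffices to bound the distance with reference $P$.

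Next purify the product $\bigotimes_{i\in I}\tau_i$ as a product of purifications $|\phi_i\rangle_{A_iP_i}$, with $P'=\bigotimes_i P_i$. This is an independent device--reference resource.

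The operator inequality $\Omega_{A_I}\le\Lambda\bigotimes_i\tau_i$ gives the following. There is an operator $F$ on $P'$ with $\|F\|_\infty\le1$ (a filter, as in \cite{CKR}) and a map $V$ such that
\[
 (I\otimes F)\,\bigl|\textstyle\bigotimes_i\phi_i\bigr\rangle=\Lambda^{-1/2}(I\otimes V)|\psi\rangle .
\]
Concretely, take $X=\Omega_{A_I}^{1/2}\bigl(\bigotimes\tau_i\bigr)^{-1/2}$ on the support. The domination gives $\|X\|_\infty\le\Lambda^{1/2}$, and $X$ transposes onto the reference side via the standard $(M\otimes I)|\phi\rangle=(I\otimes M^{\mathsf T})|\phi\rangle$ identity. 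Unitary freedom of purifications aligns the result with $\psi$. Thus the instrument with Kraus operators $F$ and $\sqrt{I-F^\dagger F}$ on $P'$ produces, on its success outcome, the unnormalized state $\Lambda^{-1}\,\psi$, up to the isometry $V$ on the reference.

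\textbf{Step 3: transfer the bound.} The filter acts only on the reference, so it commutes with the device channels. Write $\rho^{\rm prod}_s$ for the outputs from the product resource. Then
\[
 \Lambda^{-1}(\rho^{\psi}_s-\rho^{\psi}_{s'})=\mathcal F\bigl(\rho^{\rm prod}_s-\rho^{\rm prod}_{s'}\bigr),
\]
where $\mathcal F$ is the trace-nonincreasing success branch. Trace-norm contractivity under trace-nonincreasing CP maps gives
\[
 \Adv(\rho^\psi_s,\rho^\psi_{s'})\le\Lambda\,\Adv(\rho^{\rm prod}_s,\rho^{\rm prod}_{s'})\le\Lambda\,\varepsilon_{\rm ref}(m,t-r,p).
\]
This bound is uniform over the branch-dependent evaluation points, the affine substitutions absorbed into the channels, and the secret pair. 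The latter is mapped to a distinct pair by the affine conditioning, and in fact $g(0)$ differs. Averaging over branches and taking the trivial bound $1$ completes the argument.

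The main obstacle is Step 2: making the filtering identity rigorous when the dominator is singular or has support larger than $\Omega_{A_I}$. This also requires checking that the resulting reference operator really has norm at most $1$, uniformly in the dimensions. I would handle it by restricting inverses to supports, as stipulated in the model section, and verifying $F^\dagger F\le I$ directly from $X^\dagger X\le\Lambda\,\Pi$, where $\Pi$ is the support projection.
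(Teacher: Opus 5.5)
Your proposal is correct and follows essentially the paper's proof: reveal $Y_S$ via Lemma~\ref{lem:conditioning} and let the adversary simulate the $S$-channels, then replace the resource by a product purification of $\bigotimes_{i\in I}\tau_i$ and pay $\Lambda$ through a reference-only filter of success probability $1/\Lambda$ that commutes with the $I$-channels, followed by trace-norm contraction and averaging over branches. The one correction concerns your ``concretely'' step: the transpose identity you cite holds for the maximally entangled vector, not for $\operatorname{vec}(\sqrt\sigma)$ with $\sigma\coloneqq\bigotimes_{i\in I}\tau_i$, so the working filter is $F\coloneqq\Lambda^{-1/2}(\sigma^{-1/2}\Omega_{A_I}^{1/2})^{\mathsf T}=\Lambda^{-1/2}\overline{X}$ rather than $\Lambda^{-1/2}X^{\mathsf T}$; this $F$ gives exactly $(I\otimes F)\operatorname{vec}(\sqrt\sigma)=\Lambda^{-1/2}\operatorname{vec}(\sqrt{\Omega_{A_I}})$ and $F^\dagger F=(\sigma^{-1/2}\Omega_{A_I}\sigma^{-1/2})^{\mathsf T}/\Lambda\le I$, as in the paper, so no separate unitary alignment is needed.
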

\begin{proof}
Set $\sigma\coloneqq \bigotimes_i\tau_i$ and $\eta_I\coloneqq \Omega_{A_I}$.
In a fixed product basis use $\operatorname{vec}(X)\coloneqq \sum_{a,b}X_{ab}|a\rangle|b\rangle$, so that
$(I\otimes F)\operatorname{vec}(X)=\operatorname{vec}(XF^{\mathsf T})$.
Take the canonical purifications
$|\psi_\sigma\rangle\coloneqq \operatorname{vec}(\sqrt{\sigma})$ and
$|\psi_{\eta_I}\rangle\coloneqq \operatorname{vec}(\sqrt{\eta_I})$.
The reference of the first purification factors into independent $R_i$.
All inverses below are on $\supp(\sigma)$.
Define a reference operator $F$ by
\[
 F^{\mathsf T}\coloneqq \frac{\sigma^{-1/2}\sqrt{\eta_I}}{\sqrt{\Lambda}}.
\]
Then
\[
 F^\dagger F=
 \left(\frac{\sigma^{-1/2}\eta_I\sigma^{-1/2}}{\Lambda}\right)^{\mathsf T}
 \le I,
 \qquad
 (I\otimes F)|\psi_\sigma\rangle
 =\Lambda^{-1/2}|\psi_{\eta_I}\rangle.
\]
Support inclusion justifies the last equality.
Any extension of $\eta_I$, including $\Omega_{A_I A_SE}$, is obtained from a purification by a channel on the purifying system.
Thus a single completely positive trace-nonincreasing reference map transforms the product purification into $\Omega/\Lambda$.

This map commutes with the leakage channels at $I$.
Reveal $Y_S$ and simulate the $S$ channels afterwards.
For every fixed revealed vector, the trace-nonincreasing map is independent of the residual secret and its success probability is exactly $1/\Lambda$.
A positive trace-nonincreasing map contracts trace norm on Hermitian differences.
Hence the original conditional distance is at most $\Lambda$ times the product-reference distance.
Lemma~\ref{lem:conditioning} and equal-weight averaging complete the proof.
\end{proof}
The same transfer holds when all leakage maps have classical-bit outputs and $\varepsilon_{\rm ref}$ is replaced by the bound for that class.
Indeed, the filter and extension map act only on the reference, and each selected device still applies its original classical-output channel to the product purification.
In general, $\varepsilon_{\rm ref}$ must hold for the original local channels, with the affine input changes of Lemma~\ref{lem:conditioning}, acting on the product purification; output-state restrictions must therefore be verified on that product purification.

The proof also applies when the comparison state is a product of block states.

\begin{corollary}[Block form of the transfer]\label{thm:domination-blocks}
In Theorem~\ref{thm:domination}, partition $I$ into nonempty blocks $I_1,\ldots,I_M$ and replace the comparison by
\[
 \Omega_{A_I}\le\Lambda\bigotimes_{j=1}^M\alpha_j,
\]
where each $\alpha_j$ is a density operator on $A_{I_j}$.
If $\varepsilon_{\rm block}(m,t-r,p)$ is any bound for independent blocks with their original local channels, applicable to $m$ shares in these blocks, threshold $t-r$, and field $\F_p$, uniform over evaluation points, all distinct secret pairs, permitted local channels, and finite reference dimensions, then
\[
 \Adv(\rho_s,\rho_{s'})
 \le \min\{1,\Lambda\varepsilon_{\rm block}(m,t-r,p)\}.
\]
Each device receives only its own share.
\end{corollary}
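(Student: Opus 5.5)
The proof repeats the argument of Theorem~\ref{thm:domination}, with the single-device product replaced by the block product $\sigma\coloneqq\bigotimes_{j=1}^M\alpha_j$. We set $\eta_I\coloneqq\Omega_{A_I}$ and take canonical purifications $|\psi_\sigma\rangle\coloneqq\operatorname{vec}(\sqrt\sigma)$ and $|\psi_{\eta_I}\rangle\coloneqq\operatorname{vec}(\sqrt{\eta_I})$ in a product basis adapted to the blocks. Because $\sqrt\sigma=\bigotimes_j\sqrt{\alpha_j}$, the purifying system of $|\psi_\sigma\rangle$ splits as a tensor product of block references $R_{I_1}\cdots R_{I_M}$. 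Thus the purified comparison resource is exactly a product of independent block--reference factors $\bigotimes_j\Omega_{A_{I_j}R_{I_j}}$. This is the situation in which $\varepsilon_{\rm block}$ is assumed to apply.

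\emph{Filter.} Define $F^{\mathsf T}\coloneqq\Lambda^{-1/2}\sigma^{-1/2}\sqrt{\eta_I}$, with inverses on $\supp(\sigma)$. The hypothesis $\eta_I\le\Lambda\sigma$ gives $\supp(\eta_I)\subseteq\supp(\sigma)$ and $F^\dagger F\le I$. The identity $(I\otimes F)\operatorname{vec}(X)=\operatorname{vec}(XF^{\mathsf T})$ then gives
\[
 (I\otimes F)|\psi_\sigma\rangle=\Lambda^{-1/2}|\psi_{\eta_I}\rangle .
\]
Composing with the channel on the purifying system that produces the given extension $\Omega_{A_IA_SE}$ from $|\psi_{\eta_I}\rangle$ yields a completely positive trace-nonincreasing map $\mathcal F$ on the reference alone. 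This map sends the product-block purification to $\Omega/\Lambda$. None of these steps uses the block structure beyond the tensor factorization of $\sqrt\sigma$.

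\emph{Transfer.} Reveal $Y_S$ and apply Lemma~\ref{lem:conditioning}. Conditioned on $Y_S=z$, the shares on $I$ are coordinatewise affine images of an $(m,t-r)$ Shamir instance. The affine substitutions are absorbed into the local channels, so each device in $I$ still applies a channel to its own share only, and this preserves the hypothesis that each device receives only its own share. The $S$ channels, which depend on $z$, act on registers produced by $\mathcal F$, so they are simulated after filtering as part of the reference processing. Since $\mathcal F$ acts on the reference and the leakage channels on $I$ act on device registers, the two commute. Consequently the conditional output for secret $s$ equals $\Lambda$ times a positive trace-nonincreasing map applied to the product-block output for the residual secret. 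Trace-norm contraction of positive trace-nonincreasing maps on Hermitian differences bounds the conditional distance by $\Lambda\,\varepsilon_{\rm block}(m,t-r,p)$. This uses the assumed uniformity of $\varepsilon_{\rm block}$ over evaluation points, secret pairs (the residual pair is distinct because $L_S(0)\neq0$), permitted local channels and finite reference dimensions. Averaging over $z$ and taking the minimum with the trivial bound $1$ proves the claim.

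\emph{Main obstacle.} The delicate point is verifying that the reduced experiment lies inside the class covered by $\varepsilon_{\rm block}$. The block references of the product purification must be exactly the block references allowed there. The channels must be the original per-device channels after the affine relabeling, with no joint access to shares within a block. I would check explicitly that $\operatorname{vec}(\bigotimes_j\sqrt{\alpha_j})$ factorizes over blocks after reordering tensor factors, and that the relabeling is coordinatewise.
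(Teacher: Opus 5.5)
Your proposal is correct and follows the paper's own proof. The paper likewise repeats the proof of Theorem~\ref{thm:domination} with $\sigma=\bigotimes_j\alpha_j$, notes that its canonical purification factors into independent block--reference purifications, and uses that the reference filter leaves each device's channel local, so the conditional distance is at most $\Lambda$ times the independent-block distance; your checks that the residual secrets stay distinct (via $L_S(0)\neq0$) and that the affine relabeling is coordinatewise are details the paper leaves implicit.
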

\begin{proof}
Repeat the proof of Theorem~\ref{thm:domination} with $\sigma\coloneqq \bigotimes_j\alpha_j$.
Its canonical purification is a product of independent purifications of the blocks, so the reference factors into independent block references.
The same reference filter changes only the reference, so the maps inside each retained block remain local, and the original conditional distance is at most $\Lambda$ times the independent-block distance.
\end{proof}

Supplementary Section~\ref{supp:marginal-consequences} states the resulting bounds for exact and approximate product marginals, including residual-threshold conditions, rates and smoothing error; Supplementary Section~\ref{sec:resources} applies them.

\subsection{The size of an entangled support}\label{sup:section}
How large can an arbitrarily entangled set of devices be while local leakage remains exponentially small?
This is the support question of Batra, Kitagawa, Nishimaki, and Yamakawa \cite[Section~7, question~2]{BKNY}.
When the support factors from the outside devices in the full device marginal, we retain it as one factor in the Fourier sum.
Throughout this subsection, $r\coloneqq |S|$ denotes the support size.

\begin{theorem}[An entangled support below the reconstruction threshold]\label{sup:upper}
Fix an integer $d\ge2$, $\eta\in(0,1/2)$, and $\xi\in(0,1]$.
There are constants $A,b,n_0>0$, depending only on $d,\eta,\xi$, with the following property.
Consider an $(n,t)$ Shamir instance with $n\ge n_0$, prime $p>n$, and arbitrary distinct nonzero evaluation points.
Let $S\subseteq[n]$, put $r\coloneqq |S|$, $O\coloneqq [n]\setminus S$, and $N\coloneqq n-r$, and suppose
\begin{equation}\label{sup:margins}
 \frac tn\ge\frac12+\eta,
 \qquad t-r\ge\xi n.
\end{equation}
Each device emits a system of dimension at most $d$ by a fixed local share-dependent CPTP map, with no communication or adversary feedback during leakage.
The initial state is independent of the secret and sharing randomness.
If its device marginal satisfies
\begin{equation}\label{sup:coarse-product}
 \Omega_{A_1\cdots A_n}
 =\Omega_{A_S}\otimes\bigotimes_{i\in O}\tau_i,
\end{equation}
then every finite-dimensional external extension is allowed, and every pair of secrets obeys
\begin{equation}\label{sup:bound}
 \Adv(\rho_s,\rho_{s'})
 \le A n^b\kappa(K)^{t-r}
       \exp\!\left(\frac{2\pi^2N}{3p^2}\right),
 \qquad K\coloneqq d^2.
\end{equation}
If the full initial resource instead has the form
\begin{equation}\label{sup:no-outside-reference}
 \Omega_{A_1\cdots A_nE}
 =\Omega_{A_SE}\otimes\bigotimes_{i\in O}\tau_i,
\end{equation}
the same bound holds with $K=d$.
All constants are uniform over support sets, evaluation points, resources, finite input and reference dimensions, and permitted channels.
\end{theorem}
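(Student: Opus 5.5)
The plan is to keep the entangled support $S$ as a single Fourier factor, apply the one-device estimates only on $O$, and reduce the resulting sum to a weighted sum over a generalized Reed--Solomon code on the $N$ coordinates of $O$. Lemma~\ref{lem:weighted}, or its block form Lemma~\ref{mb:weighted}, then bounds that sum.

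\textbf{Step 1 (normalizing the resource).} Under \eqref{sup:coarse-product}, I take a purification of the device marginal that is a product: a purification $|\psi\rangle_{A_SR_S}$ of $\Omega_{A_S}$ tensored with purifications $|\psi_i\rangle_{A_iR_i}$ of the $\tau_i$, $i\in O$. Any finite extension $\Omega_{A_1\cdots A_nE}$ arises from this purification by a channel acting only on $R_SR_O$. That channel commutes with the leakage maps and can be absorbed into the adversary's processing, so data processing lets me assume $E=R_S\otimes\bigotimes_{i\in O}R_i$.

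Conditioned on the shares, the attacker's state is then
\[
 \sigma_S(y_S)\otimes\bigotimes_{i\in O}\sigma_i(y_i),
\]
where $\sigma_S(y_S)$ is the joint output of $S$ together with $R_S$, and $\sigma_i(y_i)$ lives on $B_iR_i$. Lemma~\ref{lem:reference} gives $\sigma_i(y)\le d^2\bigl(I/d\otimes\tau_{R_i}\bigr)$, so $K=d^2$. Under \eqref{sup:no-outside-reference}, $E$ is already grouped with $S$ and the devices in $O$ have trivial references. Each $\sigma_i(y)$ is then a $d$-dimensional state with $\sigma_i(y)\le I=d\,(I/d)$, which gives $K=d$.

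\textbf{Step 2 (Fourier sum with the support kept).} Expanding on the coset $s\mathbf1+\mathcal C_0$ gives
\[
 \rho_s=\sum_{a\in\mathcal C}\omega^{s\sum_ia_i}\,\widehat{\sigma_S}(a_S)\otimes\bigotimes_{i\in O}\widehat{\sigma_i}(a_i).
\]
The zero frequency cancels between the two secrets. I will use $\|\widehat{\sigma_S}(a_S)\|_1\le1$, since it is an average of states times phases, together with multiplicativity of the trace norm. This yields
\[
 \Adv(\rho_s,\rho_{s'})\le\sum_{0\ne a\in\mathcal C}\prod_{i\in O}h_i(a_i).
\]
The projection $a\mapsto a_O$ is injective on $\mathcal C$: a nonzero codeword vanishing on $O$ would have weight at most $r<t$, contradicting minimum distance $t$. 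Its image therefore has dimension $k=n-t+1=N-(t-1-r)$.

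I will identify this image as the dual of the shortened code $\{c|_O: c\in\mathcal C_0,\ c_S=0\}$. That shortened code consists of the evaluations of $X\,L_S(X)\,g(X)$ with $\deg g<t-1-r$, so it is GRS of dimension $\nu=t-1-r$, with multipliers $x_iL_S(x_i)\ne0$. The dimensions match, and orthogonality is immediate, so the two codes coincide. The bound becomes a weighted dual sum of exactly the form in Lemma~\ref{lem:weighted}, with length $N$ and dimension $\nu$.

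\textbf{Step 3 (applying the weighted estimate).} As in the proof of Theorem~\ref{thm:half}, I set $w_i\coloneqq h_i/\lambda_p$ on nonzero frequencies. Lemma~\ref{lem:fourier} then gives $w_i\le\kappa(K)$ and $\sum_{z\ne0}w_i(z)^2\le K-1$. The rate condition transfers to the smaller instance: because $r\ge0$ and $t\le n$,
\[
 \frac{t-r}{N}=\frac{t-r}{n-r}\ge\frac tn\ge\frac12+\eta,
\]
so $\nu/N\ge\frac12+\eta/2$ once $\xi n\ge2/\eta$. The length requirement $N\ge t-r\ge\xi n\ge n_0$ is exactly where $\xi$ enters the constants. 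Restoring the factor $\lambda_p^{|\supp|}\le\exp(2\pi^2N/(3p^2))$ and replacing $N^b$ by $n^b$ then gives \eqref{sup:bound}.

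\textbf{Main obstacle.} Lemma~\ref{lem:weighted} as stated yields only $\kappa(K)^{\nu}=\kappa(K)^{t-1-r}$. To reach the stated power $t-r$, I would invoke the singleton-block case of Lemma~\ref{mb:weighted}, which the paper notes gives exponent $\nu+1$. Failing that, the missing factor $\kappa(K)^{-1}$ is a constant depending only on $d$ and can be absorbed into $A$.

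The other delicate point is Step 1 in the first model. There the extension $E$ may be correlated with both $S$ and $O$, and the argument needs a single post-processing channel on the references that works uniformly over all extensions. Once the product purification is in place, the remainder follows the proof of Theorem~\ref{thm:half}.
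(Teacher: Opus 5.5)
\textbf{The gap is in Step 3.} Your inequality $(t-r)/(n-r)\ge t/n$ is reversed. Indeed $n(t-r)-t(n-r)=-r(n-t)\le0$, so discarding the support \emph{lowers} the rate.

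The hypotheses \eqref{sup:margins} permit, for example, $t=0.7n$ and $r=0.5n$ with $\xi=0.2$. Then $\nu/N=(t-1-r)/(n-r)<0.4$. The hypothesis $\nu/N\ge1/2+\epsilon$ of Lemma~\ref{lem:weighted} (equivalently $k\le(1/2-\epsilon)N$ in Lemma~\ref{mb:weighted}) forces $r<2t-n-2$, i.e.\ support fractions below $2R-1$. That is the range already reachable by revealing $Y_S$, whereas the theorem allows every support fraction below $R$.

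No sharper scalar lemma can rescue this. Once you bound $\|\widehat{\sigma_S}(a_S)\|_1\le1$, your sum is exactly the Fourier sum of the residual $(N,t-r)$ instance obtained by revealing $Y_S$. The cap and squared-mass hypotheses cannot control a weighted sum over a GRS code of rate $k/N>1/2$. For instance, $w_i\equiv p^{-1/2}$ on nonzero symbols satisfies both, yet the full-weight codewords alone contribute about $p^{k-N/2}$, which is unbounded as $p$ grows. The $t-1-r$ versus $t-r$ issue you flag costs only a constant factor; the rate is the real obstacle.

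\textbf{The missing idea} is to keep the support factor $H(v)\coloneqq\|\widehat{\sigma_S}(v)\|_1$ and use its squared mass $\sum_vH(v)^2\le d^{2r}$. This follows from Lemmas~\ref{lem:reference} and~\ref{lem:fourier} with domination constant $d^{2r}$. Threshold only the outside weights, at a small fixed $\theta$.

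\emph{Words with few light outside coordinates.} Take $q=\lfloor\zeta n\rfloor$ for a fixed small $\zeta>0$. Words with fewer than $q$ light outside coordinates are counted by fixed-list recovery on the punctured code $\mathcal C|_O$, of length $N$ and dimension $k$. This needs only the margin $N-k=t-r-1\ge\xi n-1$. Each such word has at least $t-r$ nonzero outside coordinates, so it contributes at most $\kappa(K)^{t-r}$.

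\emph{The remaining words.} Mark $q$ light outside coordinates and apply Cauchy--Schwarz to two disjoint injective projections built from whole factors, one of which contains the support block. If $r\ge k$, use the support alone; if $r<k$, use the support plus $k-r$ unmarked outside coordinates. The second projection is a disjoint set of $k$ unmarked outside coordinates. This needs only $q\le\min\{n-2k,N-k\}$, where $n-2k=2t-n-2$; this is where $t/n\ge1/2+\eta$ enters. The cost is at most $(d^{2r}d^{2N})^{1/2}=d^n$. Choosing $\theta$ with $2d\,\theta^{\zeta/2}\le\kappa(K)$ absorbs it, since then $2^N\theta^qd^n\le\kappa(K)^n\le\kappa(K)^{t-r}$.

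Letting the support coordinates serve inside an information set is what replaces your requirement $N\ge2k$ by $n\ge2k$.
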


For $d=2$, Equation~\eqref{sup:no-outside-reference} is the support model of \cite[Definition~9]{BKNY}: outside devices have no retained quantum references.
Equation~\eqref{sup:coarse-product} permits such references and their joint processing.
It requires the support to factor from the outside devices, whereas Corollary~\ref{cor:product} requires only a product marginal on the outside devices and permits arbitrary correlations with the support.

\begin{proof}[Proof outline]
For \eqref{sup:coarse-product}, a product of purifications of the support state and outside states purifies the full device marginal.
Every allowed extension is obtained by a common channel on that reference, so trace-distance contraction reduces the analysis to independent support and outside factors.
In the model \eqref{sup:no-outside-reference}, the outside factors are the original single-device states.

Let $H$ be the multivariate Fourier norm of the support output and its reference, and let $h_i$ be the outside Fourier norms.
Set $w_i(0)\coloneqq 1$ and $w_i(a)\coloneqq h_i(a)/\lambda_p$ for $a\ne0$.
With $c\coloneqq \kappa(K)$ and $k\coloneqq n-t+1$, the Fourier and domination bounds give
\[
 0\le H\le1,\quad \sum_v H(v)^2\le d^{2r},\quad
 w_i(a)\le c\ (a\ne0),\quad \sum_a w_i(a)^2\le d^2,
\]
and
\[
 \Adv(\rho_s,\rho_{s'})
 \le\lambda_p^N\sum_{a\in\mathcal C\setminus\{0\}}
           H(a_S)\prod_{i\notin S}w_i(a_i).
\]
The support is retained as one factor throughout this sum.
The two dimension margins are
\[
 n-2k=2t-n-2,\qquad N-k=t-r-1.
\]
Both have a fixed positive linear margin by \eqref{sup:margins}.
Choose a fixed cutoff for the outside weights.
Words with few outside weights below that cutoff are counted by fixed-list recovery on the restriction (puncturing) of $\mathcal C$ to the outside coordinates; puncturing is injective because $N\ge k$.
There are polynomially many such words, and each has at least $t-r$ nonzero outside coordinates, contributing at most $c^{t-r}$.

For the remaining words, mark a fixed positive fraction of light outside coordinates.
Two disjoint injective coordinate projections can be chosen as unions of whole factors: use the support alone when $r\ge k$, or add $k-r$ outside coordinates when $r<k$, and take a disjoint outside information set for the second projection.
Cauchy--Schwarz then costs at most $d^n$ in total squared mass.
A sufficiently small fixed cutoff $\theta$ absorbs this cost and the choices of marked positions: for their number $q$, $2^N\theta^q d^n\le c^n\le c^{t-r}$.
Combining both classes and restoring $\lambda_p^N$ proves \eqref{sup:bound}.
Supplementary Section~\ref{supp:support-upper-proof} gives the complete puncturing, floor and two-projection calculations.
\end{proof}

The proof uses the two exact dimension margins $2t-n-2$ and $t-r-1$.
By comparison, applying the above-half independent-leakage bound after revealing the support requires $N\ge2k$ even before reserving a fixed margin.
For fixed limiting fractions this revelation argument gives $\alpha<2R-1$, whereas Theorem~\ref{sup:upper} permits $\alpha<R$ for $R>1/2$.
The support estimate in \cite[Theorem~17, Corollary~2 and Equation~(25)]{BKNY} uses its independent-qubit threshold $\tau_*\approx0.73339$ and gives $\alpha<(R-\tau_*)/(1-\tau_*)$ when $R>\tau_*$.

Remark~\ref{sup:extensions} gives the extension to joint access within the support and to classical mixtures satisfying the full conditional-resource premise.

Batra et al.\ already describe a partial-GHZ attack: for $1\le\ell<t$, a GHZ state on $t-\ell$ devices and randomized bits from the other $\ell$ reconstruction devices give advantage at least $2^{-\ell}\cos^2(\pi/(2p))$ \cite[Section~6.3, final paragraph]{BKNY}.
The full-support endpoint is their Theorem~18.
We combine this phase construction with the arc-label channels of Lemma~\ref{lem:arc-labels} to match the exponent in Equation~\eqref{sup:bound}.
The support devices emit quantum states in this lower construction.

\begin{proposition}[A matching support-dependent lower bound]\label{sup:lower}
Let $d\ge2$ be an integer, let $p>n\ge2$ be prime, and fix $1\le r\le t$, a support $S$ of size $r$, and distinct secrets $s,s'$.
There is a local strategy satisfying Equation~\eqref{sup:coarse-product}, with output dimension at most $d$, such that
\begin{equation}\label{sup:lower-bound}
 \Adv(\rho_s,\rho_{s'})
 \ge\cos\!\left(\frac\pi{2p}\right)\kappa(d^2)^{t-r}.
\end{equation}
Under the more restrictive resource premise in Equation~\eqref{sup:no-outside-reference}, the same assertion holds with $\kappa(d)$.
The latter outside outputs can be classical labels in $d$ orthogonal states.
\end{proposition}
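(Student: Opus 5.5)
We work on a reconstruction set $T\supseteq S$ of size $t$; this is possible because $r\le t\le n$. All other devices emit a fixed state, so they do not change the trace distance. By Lemma~\ref{lem:conditioning}, or directly, the shares on $T$ form a $(t,t)$ instance. Write $s=\sum_{i\in T}\lambda_iY_i$ with nonzero $\lambda_i$ and put $Z_i\coloneqq\lambda_iY_i$. For a uniform secret, the $Z_i$ are independent and uniform on $\F_p$. For a fixed secret, they are uniform on the affine hyperplane $\sum_i Z_i=s$. After a common affine substitution, we may assume that the prescribed pair is $s,s'=s+\delta$. It then suffices to produce a first Fourier coefficient of large trace norm and apply the phase-averaging argument used for Corollary~\ref{cor:rate}.

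The support construction follows the partial-GHZ idea of Batra et al. We place a GHZ-type state $\frac1{\sqrt p}\ldots$ on the support. Concretely, each support device $i\in S$ holds one qubit of an $r$-qubit GHZ state $(|0\cdots0\rangle+|1\cdots1\rangle)/\sqrt2$. It applies the phase $\operatorname{diag}(1,\omega^{Z_i})$, or its real-rotation version, and emits its qubit. The output is then the GHZ state with relative phase $\omega^{\sum_{i\in S}Z_i}$, and its coefficient in $\omega^{-\sum_S Z_i}$ has trace norm $1/2$ off-diagonal. For the outside devices $i\in T\setminus S$, we use the arc-label channels of Lemma~\ref{lem:arc-labels}. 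Under \eqref{sup:coarse-product}, each outside device holds a maximally entangled $d\times d$ pair with its own reference, which gives $K=d^2$ dense-coding labels. Under \eqref{sup:no-outside-reference}, it emits classical labels, giving $K=d$. Lemma~\ref{lem:arc-labels} supplies offsets with $h_i(1)\ge\kappa(K)$.

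The joint state for uniform $Z$ is a product. Its Fourier coefficient at the all-ones frequency is the tensor product of the GHZ coherence term and the outside coefficients, and by multiplicativity its trace norm is at least $\tfrac12\kappa(K)^{t-r}$ with the appropriate normalization. A secret-indexed state $\rho_s$ has $\widehat\rho(1)=\E_s\omega^{-s}\rho_s$. Trace-norm duality and averaging over pairs of secrets then convert this into a distance at least a constant times $\kappa(K)^{t-r}$.

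The step I expect to be the main obstacle is obtaining the precise constant $\cos(\pi/(2p))$ for the prescribed pair $s,s'$, rather than for some pair. My first attempt is to rotate the GHZ phase by a common offset $e^{-i\pi\delta/p}$-type shift. The two conditional states then differ in the GHZ phase by $\omega^{\delta}$, and the Helstrom measurement aligned with the outside label estimates recovers the Batra et al.\ factor. Specifically, measure the GHZ coherence in a basis rotated by the outside labels' estimated phase. The residual error from discretizing phases onto $\F_p$ contributes the $\cos(\pi/(2p))$ factor, exactly as in their final paragraph of Section~6.3. The outside product of coefficients contributes $\kappa(K)^{t-r}$ through the same duality bound. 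If exact alignment for an arbitrary $\delta$ fails, I would instead apply a multiplicative rescaling of all $Z_i$ by $\delta^{-1}$. This is absorbable into the local maps and reduces to $\delta=1$.
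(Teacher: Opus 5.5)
\textbf{There is a gap: your argument does not reach the stated constant.} Your ingredients are the right ones: a qubit GHZ state on $S$ with share-dependent phases, and arc-label channels on $Q\coloneqq T\setminus S$ with coefficients $H_j\coloneqq\E_z\omega^{-z}\sigma_j(z)$ and $K=d^2$ or $K=d$.

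Routing the estimate through $\widehat\rho(1)=\E_s\omega^{-s}\rho_s$ and averaging over pairs, as in Theorem~\ref{thm:sharp}, gives $\|\widehat\rho(1)\|_1=\tfrac12\prod_{j\in Q}\|H_j\|_1$. Hence it yields only $\max_{s,s'}\Adv(\rho_s,\rho_{s'})\ge\tfrac14\kappa(K)^{t-r}$. The affine substitution moves this to the prescribed pair, but $\tfrac14<\cos(\pi/(2p))$.

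The proposed fix does not close the gap. A common phase offset leaves $|\omega^{s}-\omega^{s'}|$ unchanged, and no Helstrom measurement is specified or analyzed. The fallback also goes the wrong way. After rescaling to $s'-s=1$, your two GHZ phases differ by $2\pi/p$, so the secret-dependent factor becomes $|1-\omega|/2=\sin(\pi/p)$, which tends to zero. The factor $\cos(\pi/(2p))$ does not come from discretizing outside phase estimates.

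\textbf{The missing idea} is to compute $\rho_s$ exactly for each fixed secret, with a scaled phase. Choose $u\in\F_p^\times$ with $u(s'-s)=(p-1)/2$. Let device $i\in S$ apply $\omega^{u\lambda_iY_i}$, and put $Z_j\coloneqq u\lambda_jY_j$ for $j\in Q$. Since $|Q|=t-r<t$, the $Z_j$ are independent and uniform \emph{for every fixed secret}, while the total support phase is $us-\sum_{j\in Q}Z_j$. Averaging over $Z_Q$ gives, in the basis $|0\rangle^{\otimes r},|1\rangle^{\otimes r}$,
\[
 \rho_s=\frac12\begin{pmatrix}D&\omega^{-us}G^\dagger\\ \omega^{us}G&D\end{pmatrix},
 \qquad G\coloneqq\bigotimes_{j\in Q}H_j,
\]
with $D$ independent of the secret. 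Thus $\rho_s-\rho_{s'}$ is purely off-diagonal. The identity $\bigl\|\bigl(\begin{smallmatrix}0&X^\dagger\\X&0\end{smallmatrix}\bigr)\bigr\|_1=2\|X\|_1$ then gives exactly
\[
 \Adv(\rho_s,\rho_{s'})=\tfrac12|\omega^{us}-\omega^{us'}|\,\|G\|_1=\cos(\pi/(2p))\prod_{j\in Q}\|H_j\|_1\ge\cos(\pi/(2p))\,\kappa(K)^{t-r}.
\]
The constant reflects the near-antipodality of the two secret phases, the best possible for odd $p$. No measurement design or averaging over pairs is needed.
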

\begin{proof}[Proof outline]
Choose a reconstruction set $T\supseteq S$ of size $t$.
Use a GHZ state on the nonempty support and the arc-label channels on the $t-r$ outside reconstruction devices.
The support phases combine to the secret phase minus the sum of the scaled outside shares.
Those outside shares are independent and uniform for each fixed secret, since their number is less than $t$.
The averaged output is a two-by-two block matrix with secret-independent diagonal blocks; its off-diagonal block is the secret phase times a tensor product of outside first Fourier coefficients, divided by two.
If their norms are $h_j(1)$, the trace distance for two secrets is therefore
\[
 \frac{|\omega^{us}-\omega^{us'}|}{2}
       \prod_{j\in T\setminus S}h_j(1).
\]
Choose $u$ so that $u(s'-s)=(p-1)/2$.
The phase factor is $\cos(\pi/(2p))$, and Lemma~\ref{lem:arc-labels} gives $h_j(1)\ge\kappa(K)$ with $K=d^2$ or $d$ in the respective resource classes.
This proves the displayed lower bound, including the empty outside product when $r=t$.
The complete local construction and off-diagonal trace-norm calculation are in Supplementary Section~\ref{supp:support-lower-proof}.
\end{proof}

\begin{corollary}[The exponent for a linear support deficit]\label{sup:rate}
Fix an integer $d\ge2$, $\eta\in(0,1/2)$ and $\xi\in(0,1]$.
For fixed distinct secrets and parameters $n,t,p,x$, let $L_K(n,t,r,p,x;s,s')$ be the supremum of the leakage distance over local strategies with support size at most $r$, in the model of Equation~\eqref{sup:coarse-product} for $K=d^2$ or Equation~\eqref{sup:no-outside-reference} for $K=d$.
Under Equation~\eqref{sup:margins}, for all sufficiently large $n$,
\begin{equation}\label{sup:two-sided}
 \beta_{r,p}\kappa(K)^{t-r}
 \le L_K
 \le A n^b\kappa(K)^{t-r}
          \exp\!\left(\frac{2\pi^2(n-r)}{3p^2}\right),
 \qquad
 \beta_{r,p}\coloneqq \begin{cases}
  \tfrac12,&r=0,\\
  \cos(\pi/(2p)),&r\ge1.
 \end{cases}
\end{equation}
In particular,
\begin{equation}\label{sup:exponent}
 -\frac1n\log L_K
 =\frac{t-r}{n}\log\frac1{\kappa(K)}
   +O_{d,\eta,\xi}\!\left(\frac{\log n}{n}\right),
\end{equation}
uniformly over the field, evaluation points and chosen distinct secrets.
If $t/n\to R>1/2$ and $r/n\to\alpha<R$, the limiting exponent is
$(R-\alpha)\log(1/\kappa(K))$.
\end{corollary}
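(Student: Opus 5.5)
The two-sided bound \eqref{sup:two-sided} comes from combining Theorem~\ref{sup:upper} with Proposition~\ref{sup:lower}, with one adjustment: $L_K$ is a supremum over support size \emph{at most} $r$. The exponent statement then follows by taking logarithms.

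For the upper bound, I fix a strategy whose support $S'$ has $|S'|=r'\le r$. The margins \eqref{sup:margins} give $t-r'\ge t-r\ge\xi n$, so Theorem~\ref{sup:upper} applies to $S'$. It yields $A n^b\kappa(K)^{t-r'}\exp(2\pi^2(n-r')/(3p^2))$. I will use the theorem's own constants, which are uniform over support sets. Since $\kappa(K)<1$, we have $\kappa(K)^{t-r'}\le\kappa(K)^{t-r}$. The correction factor $\exp(2\pi^2(n-r')/(3p^2))$ grows as $r'$ decreases, so it is not monotone in the favourable direction. It is at most $\exp(2\pi^2n/(3p^2))\le e^{2\pi^2/(3n)}\le e^{7}$ because $p>n$. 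I therefore enlarge $A$ by this absolute constant, and restate the right side with $n-r$ by noting that both exponential factors lie in $[1,e^{7}]$. The case $r'=0$ (no support) is also covered by Theorem~\ref{sup:upper} with $S=\varnothing$, or alternatively by Theorem~\ref{thm:half}. In the $K=d$ model that case needs the classical-output or trivial-reference form; there Equation~\eqref{sup:no-outside-reference} with $S=\varnothing$ is exactly a product of single-device states with the reference $E$ independent, and the theorem's $K=d$ statement covers it.

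For the lower bound with $r\ge1$, I apply Proposition~\ref{sup:lower} with a support of size exactly $r$. This requires $r\le t$, which holds because $t-r\ge\xi n>0$. It gives $\cos(\pi/(2p))\kappa(K)^{t-r}$, and since such strategies have support size at most $r$, this bounds $L_K$ from below. For $r=0$ the model is the independent-pair model with product device marginal. Here I use the full-threshold construction from Corollary~\ref{cor:rate}: the arc-label strategy on a reconstruction set of $t$ devices, with fixed outputs elsewhere, gives $\tfrac12\kappa(K)^t$. I must check that it fits each resource class. For $K=d^2$, the maximally entangled device--reference pairs have a product device marginal, so \eqref{sup:coarse-product} holds with $S=\varnothing$. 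For $K=d$, classical labels on unentangled devices satisfy \eqref{sup:no-outside-reference}. This is the main point needing care.

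Finally, I take $-\frac1n\log$ of both sides. The prefactors $\beta_{r,p}\ge\cos(\pi/6)\cdot\tfrac12$ (using $p\ge3$), $An^b$, and the bounded exponential each contribute $O(\log n/n)$ uniformly in the field, evaluation points and chosen secrets. This gives \eqref{sup:exponent}. The limiting statement then follows from $(t-r)/n\to R-\alpha$: when $\alpha<R$, the condition $t-r\ge\xi n$ holds eventually for any $\xi<R-\alpha$, and $t/n\ge 1/2+\eta$ holds for any $\eta<R-1/2$. I expect no real obstacle beyond the monotonicity bookkeeping over $r'\le r$ and verifying the $r=0$ lower construction in each model.
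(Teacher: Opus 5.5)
Your proposal is correct and matches the paper's proof: the upper bound comes from Theorem~\ref{sup:upper}, the lower bound from Proposition~\ref{sup:lower} when $r\ge1$ and from the selected-$t$ construction of Corollary~\ref{cor:rate} when $r=0$, and the exponent follows by taking logarithms. The only difference is how supports smaller than $r$ are handled: the paper enlarges them to size $r$ by adding independent outside factors (Remark~\ref{sup:extensions}), which yields the stated $\kappa(K)^{t-r}$ and $n-r$ factors with the theorem's constants unchanged, whereas you apply the theorem directly to the smaller support and absorb the field correction into $A$, which is equally valid.
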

\begin{proof}
The upper bound is Theorem~\ref{sup:upper}; supports smaller than $r$ can be enlarged as in Remark~\ref{sup:extensions}.
For $r\ge1$, Proposition~\ref{sup:lower} gives the lower bound using any support of size $r$.
For $r=0$, use the selected-$t$ construction in Corollary~\ref{cor:rate}, with unrestricted pairs when $K=d^2$ and orthogonal output labels when $K=d$.
The lower prefactor is bounded away from zero for $p\ge3$, and the field-dependent term in the upper logarithm is $O(1/n)$ because $p>n$.
Taking logarithms proves Equation~\eqref{sup:exponent}.
\end{proof}

For the qubit support model of \cite[Definition~9]{BKNY}, the exponent is therefore $(R-\alpha)\log(\pi/2)$.
Every fixed support fraction $\alpha<R$ is exponentially secure at every fixed rate $R>1/2$; allowing a support of size $r=t$ includes the GHZ attack described in \cite[Theorem~18]{BKNY}.
This answers the strict fixed-fraction form of \cite[Section~7, question~2]{BKNY} in the above-half regime, with supremal support fraction $R$.
The partial-GHZ bound of Batra et al.\ already gives constant leakage for a fixed deficit $t-r$ and rules out a positive exponential decay rate when $t-r=o(n)$.
For $t-r=O(\log n)$ it also rules out negligible leakage.
For larger sublinear deficits, the lower bound may itself be negligible.

Under the no-outside-reference premise \eqref{sup:no-outside-reference}, revealing the support also permits the below-half qubit estimate to be applied to the remaining shares.
Corollary~\ref{sup:below-half} gives the resulting finite condition and small positive support fractions at and slightly below rate one half.

\section{Mixing chains and finite quantum memory}\label{sec:mixing}
We compare separated blocks of emitted registers with a product of block states, using upper factorization of quantum-chain states and completely positive comparison with a reset channel $X\mapsto\Tr(X)\sigma$, for a fixed state $\sigma$ \cite{Hiai,HMO}, and convergence of primitive channels to that reset \cite{Sanz}.
A one-coordinate slice bounds every nonzero block Fourier coefficient independently of block length.
Separated intervals can therefore be long enough to retain nearly all devices while their comparison cost is absorbed by the leakage exponent, giving exponential security at every fixed reconstruction rate above one half.
For the positive results in this section, all local dimensions and the underlying chain or memory process are fixed as $n$ grows.

\subsection{Fixed mixing resources}
A stationary quantum chain assigns to each finite interval of consecutive integer sites a state on the tensor product of copies of one fixed one-site space; these states are consistent under partial trace and unchanged by translating the interval.
The condition below compares separated intervals in operator order.
We will iterate it to control the marginal on separated intervals; Supplementary Section~\ref{app:sparse-memory} gives the singleton estimates.

\begin{definition}\label{def:upper-mixing}
A stationary quantum-chain state satisfies the \emph{upper operator mixing condition} if, for every $\beta>1$, there is an integer $g\ge0$ such that its marginals on any two finite intervals $J_1,J_2$ separated by at least $g$ omitted sites satisfy the full operator inequality
\[
 \Omega_{J_1J_2}\le\beta\,\Omega_{J_1}\otimes\Omega_{J_2}.
\]
The inequality applies to every positive joint observable.
Compare the two-sided operator bounds for algebraic states in \cite[Proposition~2.3]{Hiai}; here we require only the upper comparison for finite intervals.
\end{definition}

A sequential preparation gives a concrete way to verify the product comparison.
The relevant property is loss of memory under the channel obtained by tracing out each emission.
A channel $T$ is primitive if some power maps every nonzero positive operator to a positive definite operator.

For a fixed mixing process, retaining consecutive blocks permits a larger density of selected devices than retaining isolated sites.
The local leakage maps still act separately inside each retained block.
This locality keeps the relevant Fourier cap independent of the block length and gives the full range of fixed reconstruction rates above one half.

\begin{theorem}[Fixed mixing resources above one half]\label{mb:memory}
Fix an integer $d\ge2$ and $\eta\in(0,1/2)$.
Consider any one of the following fixed families of device marginals.
\begin{enumerate}
\item Fix finite-dimensional spaces $R,A$ and a CPTP emission channel
$\Psi:R\to A\otimes R$ whose transfer channel $T\coloneqq \Tr_A\circ\Psi$ is primitive.
The device marginal consists of $n$ successive emissions, starting from any state of $R$.
\item Fix a stationary quantum chain with a finite-dimensional one-site space and satisfying the upper operator mixing condition of Definition~\ref{def:upper-mixing}.
The device marginal is its restriction to $n$ consecutive sites.
\item Fix positive integers $D,v$ and matrices $A_1,\ldots,A_v\in\mathbb C^{D\times D}$ such that
\[
 \sum_{j=1}^v A_j^\dagger A_j=I_D,
 \qquad T(X)\coloneqq \sum_{j=1}^v A_jXA_j^\dagger
\]
is primitive.
The device state is the normalized pure matrix product state (MPS) with amplitudes
\[
 \langle u_R|A_{j_n}\cdots A_{j_1}|u_L\rangle,
 \qquad j_i\in\{1,\ldots,v\},
\]
where $u_L,u_R\in\mathbb C^D$ are arbitrary unit vectors.
\end{enumerate}
There are constants $c>0$ and $n_0$, depending only on the fixed process, chain, or matrices and on $d,\eta$, with the following property.
For $n\ge n_0$ and $t/n\ge1/2+\eta$, every share-independent initial resource extending the stated device marginal, and every permitted family of local CPTP leakage maps with output dimension at most $d$, satisfy
\[
 \Adv(\rho_s,\rho_{s'})\le2^{-cn}.
\]
The constants are uniform over all primes $p>n$, distinct nonzero evaluation points, chosen secrets, local maps, and finite adversary references.
They are also uniform over the initial memory state in the first case and both boundary vectors in the third case.
In the third case the normalization is well defined for all $n\ge n_0$.
\end{theorem}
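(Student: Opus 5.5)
Reduce all three cases to a single operator comparison: the device marginal restricted to a union $I$ of long separated intervals is bounded by $\Lambda\bigotimes_j\alpha_j$, a product of interval (block) states. Then apply the block form of the transfer, Corollary~\ref{thm:domination-blocks}, revealing the few shares in the gaps $S=[n]\setminus I$. Concretely, fix a gap length $g$ and a block length $\ell$, both independent of $n$ and chosen later. Take $I$ to be the union of consecutive blocks of length $\ell$ separated by gaps of length $g$, so $|S|\le ng/(\ell+g)+O(\ell)$. Since $t/n\ge1/2+\eta$, choosing $\ell/g$ large keeps the residual instance at rate $(t-r)/m\ge1/2+\eta/2$ on $m=|I|$ shares. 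The number of blocks is $M\approx n/(\ell+g)$.

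The first step is the marginal comparison in each case. In case 2, iterate Definition~\ref{def:upper-mixing}: split off the last block from the union of earlier ones, which are separated from it by at least $g$ sites. This gives $\Omega_{A_I}\le\beta^{M-1}\bigotimes_j\Omega_{J_j}$, so $\Lambda=\beta^{M}=e^{Mn^{0}\log\beta}$, that is, $\log\Lambda\le n\log\beta/(\ell+g)$. In case 1, write the emitted state sequentially. Tracing the emissions in a gap applies $T^g$, and primitivity gives $T^g\le\beta\,\mathcal R$ in completely positive order for large $g$, where $\mathcal R(X)=\Tr(X)\sigma$ is the reset to the fixed point. Each block then starts afresh from the state $\sigma$ independently, uniformly over the initial memory; the first block can be handled by including it in $S$ or by the same bound. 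Case 3 reduces to case 1: the canonical MPS with primitive $T$ is the output of a sequential isometric emission. The right boundary vector $u_R$ is a projection on the final memory. Its effect on the normalization is controlled by primitivity, $\langle u_R|T^{n}(\cdot)|u_R\rangle$ being bounded away from $0$ for large $n$. This loses only a constant factor in $\Lambda$ and also gives well-definedness of the normalization. I expect this boundary handling and the uniform CP comparison $T^g\le\beta\mathcal R$ to be routine given \cite{Sanz,HMO}.

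The second step is the independent-block bound $\varepsilon_{\rm block}$. Each block of $\ell$ devices has an arbitrary state $\alpha_j$ with a reference, but every device applies its own local channel with output dimension $d$. The key claim is that every nonzero block Fourier coefficient has trace norm at most $\lambda_p\kappa(d^2)$, independently of $\ell$. To see this, take a frequency $a\in\F_p^{\ell}\setminus\{0\}$ and pick a coordinate $i$ with $a_i\ne0$. Conditioning on the other shares of the block, the output is a one-device family at $i$ jointly with everything else as reference, which is dominated by $d^2$ by Lemma~\ref{lem:reference}. Lemma~\ref{lem:fourier} then bounds the partial coefficient, and averaging the other coordinates preserves the bound by the triangle inequality. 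This is the "one-coordinate slice". The total squared mass is at most $d^{2\ell}$ by the same lemma applied to the whole block. Plug these into the blocked weighted dual-sum estimate Lemma~\ref{mb:weighted}, the block version of Lemma~\ref{lem:weighted} referenced earlier. The resulting bound has the form $A n^{b}\,(\lambda_p\kappa(d^2))^{\,c'n}$ with exponent rate bounded below in terms of $\eta$ only. The list sizes depend on $d^{2\ell}$ but are fixed once $\ell$ is, so the polynomial factor depends only on $\ell,d,\eta$.

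Finally, combine: $\Adv\le\Lambda\,\varepsilon_{\rm block}\le\exp\bigl(n\log\beta/(\ell+g)\bigr)\,An^b\kappa(d^2)^{c'n}$. Choose $\beta$ close to $1$ (this determines $g$), then $\ell$ large enough that both $\log\beta/(\ell+g)\ll c'\log(1/\kappa(d^2))$ and the rate margin hold. In case 2, $\log\beta/(\ell+g)$ already tends to zero with $\ell$, so any $\beta$ works. The main obstacle is making the blocked weighted estimate genuinely give an exponent that does not degrade with $\ell$. The cap is $\ell$-independent, but the exponent should count blocks with a nonzero frequency, roughly $t$ coordinates divided into $\ell$-blocks. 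If Lemma~\ref{mb:weighted} only yields $c_*^{\#\text{blocks}}$, the decay rate is about $n/\ell$. I would try first to get an exponent proportional to the number of nonzero blocks times a block-independent constant. This yields $\kappa^{\Theta(n/\ell)}$ against a cost $\beta^{n/(\ell+g)}$, which is then won by taking $\beta$ close enough to $1$ rather than $\ell$ large. Because $\log\beta$ can be made arbitrarily small by fixing $g$, this ordering of choices ($\ell$ first for the rate margin, then $\beta$, then $g$) should close the argument in all three cases.
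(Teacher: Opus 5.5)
Your architecture matches the paper's: separated blocks of length $\ell$ with gaps $g$, and the block comparison from iterated upper mixing or the CP reset bound $T^g\le_{\rm CP}\beta\mathcal R_\sigma$, with postselection on $u_R$ for the MPS. It also uses the block transfer with the gap shares revealed, the one-coordinate slice for an $\ell$-independent cap, and Lemma~\ref{mb:weighted}.

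The gap is in the final balancing of constants, which as written does not close. In your second step you assert that the block bound decays like $\kappa(d^2)^{c'n}$ with $c'$ depending only on $\eta$. This is false. Lemma~\ref{mb:weighted} gives $\kappa(d^2)^{\lceil\Delta/\ell\rceil}$, where $\Delta=m-u$ is the residual threshold, $u=n-t$ and $m=\ell M$. The exponent is therefore about $M-u/\ell=M(1-u/m)$, which is of order $n/(\ell+g)$. That is the same order as the cost $(M-1)\log\beta$. Taking $\ell$ large thus cannot absorb $\beta^{M-1}$, and your remark that in case~2 ``any $\beta$ works'' is wrong: for $\beta\ge1/\kappa(d^2)$ no choice of $\ell$ helps.

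You notice this at the end, but the repaired ordering you propose ($\ell$ first, then $\beta$, then $g$) fails. The residual margin $u/m<1/2$ requires $\ell/(\ell+g)>1-2\eta$, so $\ell$ must be large \emph{relative to} $g$. But $g$ is forced by $\beta$, and for a primitive channel it grows as $\beta\downarrow1$. With $\ell$ fixed first, the residual rate can fall below one half, or even $|S|\ge t$. Then neither Lemma~\ref{mb:weighted} nor the conditioning lemma applies.

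The missing observation is that the required closeness of $\beta$ to $1$ does not depend on $\ell$ or $g$. Using $\lceil(m-u)/\ell\rceil\ge M-u/\ell$ and $u/\ell=(u/m)M$, the combined bound is
\[
 \log\Adv\le M\log\bigl(\beta\kappa(d^2)\bigr)+\frac u\ell\log\frac1{\kappa(d^2)}+O(\log n)
 =-M\Bigl[\bigl(1-\tfrac um\bigr)\log\frac1{\kappa(d^2)}-\log\beta\Bigr]+O(\log n).
\]
A residual rate above one half by a fixed margin means $1-u/m>1/2$. Hence every fixed $\beta\in\bigl(1,\kappa(d^2)^{-1/2}\bigr)$ makes the bracket positive, whatever $\ell$ and $g$ are.

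The order of choices must therefore be:
\begin{enumerate}
\item fix such a $\beta$;
\item obtain $g=g(\beta)$ from mixing or primitivity;
\item only then choose $\ell$ with $1/2-\eta<\ell/(2(\ell+g))$.
\end{enumerate}
This is exactly how the paper applies Proposition~\ref{mb:criterion}: once $\beta<\kappa(d^2)^{-1/2}$, the minimum in \eqref{mb:strict-criterion} equals $1/2$, and the positive exponent \eqref{mb:exponent} follows.
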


We first bound leakage when independence holds between bounded groups of devices.
We then combine that bound with a product comparison for separated blocks.
Lemma~\ref{mb:comparison} below obtains the required comparison from reset-channel and boundary-normalization arguments, while the block length controls the fraction of shares retained.

\subsection{Independent entangled blocks}
The block model of Batra et al.\ permits entanglement and joint access to shares within each independent block, without outside quantum references \cite[Definition~10, Lemmas~2--3 and Theorems~15--16]{BKNY}.
Here the reference attached to each block may have arbitrary finite dimension, and each device applies its own share-dependent channel.
The latter restriction gives a Fourier cap determined by one output system even when the block contains many devices.
Recall the finite-field factor $\lambda_p=1+2\pi^2/(3p^2)$ from Lemma~\ref{lem:fourier}.

\begin{lemma}[A local slice of a block Fourier coefficient]\label{mb:slice}
Let $m\ge1$ and let $\Omega_{A_1\cdots A_mR}$ be a state independent of $z\in\F_p^m$.
For each $i$, let $\Phi_i^{z_i}:A_i\to B_i$ be CPTP with $\dim B_i\le d$, where $d\ge2$.
Set
\[
 \sigma_z\coloneqq \left(\bigotimes_{i=1}^m\Phi_i^{z_i}\otimes\id_R\right)(\Omega),
 \qquad
 h(a)\coloneqq \left\|\E_z\omega^{-\langle a,z\rangle}\sigma_z\right\|_1.
\]
Then
\begin{equation}\label{mb:slice-bound}
 h(0)=1,\qquad
 \sum_{a\in\F_p^m}h(a)^2\le d^{2m},\qquad
 h(a)\le\min\{1,\kappa(d^2)\lambda_p\}\quad(a\ne0).
\end{equation}
The bounds are independent of the input and reference dimensions.
\end{lemma}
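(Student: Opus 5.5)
The plan has three parts: $h(0)$, the squared mass, and the nonzero cap. For $h(0)$, note that $\widehat\sigma(0)=\E_z\sigma_z$ is a density operator, so its trace norm is $1$.

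For the squared mass, I would apply Lemma~\ref{lem:fourier} to the whole family $z\mapsto\sigma_z$ on $B_1\cdots B_mR$, viewed as an indexed family on $\F_p^m$. This needs a common dominator with constant $d^{2m}$. Trace preservation of each $\Phi_i^{z_i}$ makes $\Tr_{B_1\cdots B_m}\sigma_z=\Omega_R\eqqcolon\tau_R$, independent of $z$. Lemma~\ref{lem:reference}, applied with the composite output $B=B_1\cdots B_m$ of dimension $D\le d^m$, gives
\[
 \sigma_z\le D^2\bigl(I_B/D\otimes\tau_R\bigr).
\]
We may pad each $B_i$ to dimension exactly $d$ if needed, so the constant is $K=d^{2m}$. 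Lemma~\ref{lem:fourier}, which is stated for arbitrary $m$, then gives $\sum_a h(a)^2\le d^{2m}$.

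Using this same dominator for the cap would only give $\kappa(d^{2m})$, so the cap instead uses a one-coordinate slice. Fix $a\ne0$ and pick a coordinate $j$ with $a_j\ne0$. Write $z=(z_j,z')$ and
\[
 \widehat\sigma(a)=\E_{z'}\omega^{-\langle a',z'\rangle}\,\E_{z_j}\omega^{-a_jz_j}\sigma_{(z_j,z')}.
\]
By the triangle inequality, $h(a)\le\max_{z'}\|\E_{z_j}\omega^{-a_jz_j}\sigma_{(z_j,z')}\|_1$.

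For fixed $z'$, the channels $\Phi_i^{z_i}$ with $i\ne j$ are fixed. They act on registers other than $A_j$, so they commute with $\Phi_j^{z_j}$. Absorb them together with $R$ into a new reference: set
\[
 \Omega'_{A_jR'}\coloneqq\Bigl(\bigotimes_{i\ne j}\Phi_i^{z_i}\otimes\id_R\Bigr)(\Omega),\qquad R'=B_{[m]\setminus j}R.
\]
Then $\sigma_{(z_j,z')}=(\Phi_j^{z_j}\otimes\id_{R'})(\Omega'_{A_jR'})$. This is a single-device family with output dimension at most $d$, and its reference marginal $\Tr_{B_j}$ is independent of $z_j$ by trace preservation.

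Lemma~\ref{lem:reference} gives the dominator $d^2(I_{B_j}/d\otimes\tau_{R'})$. Lemma~\ref{lem:fourier} with $m=1$ at frequency $a_j\ne0$ then bounds the slice trace norm by $\min\{1,\kappa(d^2)\lambda_p\}$, uniformly in $z'$, which gives the cap.

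I expect the only delicate point to be in this slicing step: the absorbed channels depend on $z'$ and not on $z_j$, which is what makes the conditioned family a legitimate single-device family. This is exactly where the hypothesis that each device uses only its own share enters. All bounds are independent of the input and reference dimensions because Lemmas~\ref{lem:fourier} and~\ref{lem:reference} are.
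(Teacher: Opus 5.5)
Your proposal is correct and follows essentially the same route as the paper. The squared mass comes from one composite-output dominator with constant $d^{2m}$. The nonzero cap comes from slicing at a coordinate $j$ with $a_j\ne0$, absorbing the other devices' outputs into the reference, applying Lemmas~\ref{lem:reference} and~\ref{lem:fourier} with the single-output constant $d^2$, and then averaging over the remaining shares with the triangle inequality.
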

\begin{proof}
The reference marginal is the same for all $z$.
Treating $B_1\cdots B_m$ as one output of dimension at most $d^m$, Lemma~\ref{lem:reference} gives a common dominator with constant $d^{2m}$.
Lemma~\ref{lem:fourier} gives the squared-mass bound, and the zero Fourier coefficient is a density operator.

Fix $a\ne0$ and choose $i$ with $a_i\ne0$.
Hold $z_{-i}$ fixed and regard $B_{-i}R$ as the reference for $B_i$.
Its marginal is independent of $z_i$, because $\Phi_i^{z_i}$ is trace preserving and acts only on $A_i$.
Lemmas~\ref{lem:reference} and~\ref{lem:fourier}, now with the single-output constant $d^2$, yield
\[
 \left\|\E_{z_i}\omega^{-a_i z_i}\sigma_{z_i,z_{-i}}\right\|_1
 \le\min\{1,\kappa(d^2)\lambda_p\}.
\]
Averaging these operators over $z_{-i}$ with the remaining Fourier phases and applying the triangle inequality gives the same bound for $h(a)$.
The reference state can vary with the held-fixed shares; the upper bound is uniform over those states.
\end{proof}

The scalar input for these blocks is Lemma~\ref{mb:weighted}, proved in Supplementary Section~\ref{supp:blocks-weighted}.
For a fixed maximum block length $\ell$, take weights with $w_j(0)=1$ and $0\le w_j\le1$, a fixed bound $c_*\in(0,1)$ on every nonzero block weight, and a fixed squared-mass bound $K$ on each block. For fixed $\epsilon\in(0,1/2)$ and all sufficiently large $n$, the lemma bounds the nonzero weighted sum on a GRS code of dimension $k\le(1/2-\epsilon)n$ and minimum distance $\Delta\coloneqq n-k+1$ by
\[
 A n^b c_*^{\lceil\Delta/\ell\rceil}.
\]
Its constants depend only on the fixed block parameters and $\epsilon$, uniformly over the field, partition and weights.
The proof projects large block weights to bounded coordinate lists and uses two disjoint collections of whole blocks for the light weights.
The block length is fixed before the number of shares grows.

\begin{proposition}[Independent blocks above one half]\label{mb:independent}
Fix integers $d\ge2$, $\ell\ge1$, and $\eta\in(0,1/2)$.
Suppose the devices are partitioned into nonempty blocks $I_1,\ldots,I_M$ of size at most $\ell$, and the initial resource has the form
\[
 \bigotimes_{j=1}^M\Omega_{A_{I_j}R_j},
\]
with all $R_j$ held by the adversary.
Each device applies its own share-dependent CPTP map and emits a system of dimension at most $d$.
There are constants $A,b,n_0>0$, depending only on $d,\ell,\eta$, such that, for $n\ge n_0$ and $t/n\ge1/2+\eta$,
\begin{equation}\label{mb:independent-bound}
 \Adv(\rho_s,\rho_{s'})
 \le A n^b\kappa(d^2)^{\lceil t/\ell\rceil}
       \exp\!\left(\frac{2\pi^2M}{3p^2}\right).
\end{equation}
The bound is uniform over all primes $p>n$, evaluation points, partitions, chosen secrets, resources, local maps, and finite input and reference dimensions.
\end{proposition}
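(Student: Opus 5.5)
The plan is to run the same Fourier reduction as in Theorem~\ref{thm:half}, with blocks in place of single devices. Conditioned on the share vector, the joint state of the outputs and all references is $\bigotimes_j\sigma_j(y_{I_j})$, where $\sigma_j$ is the output of block $j$ together with $R_j$. Fourier inversion on the coset $s\mathbf1+\mathcal C_0$ gives
\[
 \rho_s=\sum_{a\in\mathcal C}\omega^{s\sum_ia_i}\bigotimes_j\widehat{\sigma_j}(a_{I_j}).
\]
The zero frequency cancels between $s$ and $s'$, and multiplicativity of the trace norm then yields
\[
 \Adv(\rho_s,\rho_{s'})\le\sum_{a\in\mathcal C\setminus\{0\}}\prod_j h_j(a_{I_j}),
\]
where $h_j$ is the block Fourier norm. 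This is the block analogue of Equation~\eqref{eq:fourier-security}.

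Next I would apply Lemma~\ref{mb:slice} to each block, treating $R_j$ as the reference. It gives
\[
 h_j(0)=1,\qquad \sum_a h_j(a)^2\le d^{2|I_j|}\le d^{2\ell},\qquad h_j(a)\le\kappa(d^2)\lambda_p\quad(a\ne0).
\]
These bounds are independent of the reference dimension and of the share-dependent channels. I then normalize by setting $w_j(0)\coloneqq1$ and $w_j(a)\coloneqq h_j(a)/\lambda_p$ for $a\ne0$, with $\lambda_p$ as in Lemma~\ref{lem:fourier}. This gives $w_j\le1$, a nonzero cap $c_*=\kappa(d^2)<1$, and block squared mass at most $K=d^{2\ell}$. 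Each word has at most $M$ blocks in its support, so undoing the normalization costs at most $\lambda_p^M\le\exp(2\pi^2M/(3p^2))$.

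The code $\mathcal C=\mathcal C_0^\perp$ is GRS of dimension $k=n-t+1$ and minimum distance $\Delta=t$; this follows from Lemma~\ref{lem:vandermonde-kernel} and the discussion after it. For $n\ge2/\eta$, the bound $t/n\ge1/2+\eta$ gives $k/n\le1/2-\eta+1/n\le1/2-\eta/2$. Lemma~\ref{mb:weighted} therefore applies with $\epsilon=\eta/2$ and bounds the normalized sum by $An^bc_*^{\lceil t/\ell\rceil}$. The exponent $\lceil t/\ell\rceil$ is the right one: a nonzero word has weight at least $t$, and blocks have size at most $\ell$, so at least $\lceil t/\ell\rceil$ blocks carry a nonzero frequency. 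The lemma's constants depend only on $\ell$, $K$, $c_*$ and $\epsilon$, hence only on $d,\ell,\eta$. They are also uniform over the field, the partition and the weights, and this gives the claimed uniformity.

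The substantive step is Lemma~\ref{mb:weighted}, which the paper proves separately; given it, the main point here is checking its hypotheses. The cap $c_*$ must not grow with block size, and Lemma~\ref{mb:slice} supplies this precisely because each device acts locally on its own share. The dimension must sit in the $\le(1/2-\epsilon)n$ regime, which is where the margin $\eta/2$ enters. Finally, I would confirm that arbitrary partitions, including blocks of unequal size and in any coordinate order, are covered by the lemma as stated.
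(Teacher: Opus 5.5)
Your proposal is correct and follows the paper's proof essentially step for step. Both use the block Fourier reduction, Lemma~\ref{mb:slice} for the block-length-independent cap and the $d^{2\ell}$ squared mass, and normalization by $\lambda_p$ at a cost of at most $\lambda_p^M$. Both then apply Lemma~\ref{mb:weighted} with $\mathcal G=\mathcal C$, $\Delta=t$, $c_*=\kappa(d^2)$, $K=d^{2\ell}$ and $\epsilon=\eta/2$ (valid for $n\ge 2/\eta$).
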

\begin{proof}
For each block, include its reference in the output state family and let $h_j$ be its multivariate Fourier norm.
Independence between the block--reference factors gives the block version of the expansion leading to Equation~\eqref{eq:fourier-security}:
\[
 \Adv(\rho_s,\rho_{s'})
 \le\sum_{a\in\mathcal C\setminus\{0\}}\prod_{j=1}^M h_j(a_{I_j}),
 \qquad \mathcal C=\mathcal C_0^\perp.
\]
This dual GRS code has dimension $k\coloneqq n-t+1$ and minimum distance $t$.
Set $w_j(0)\coloneqq 1$ and $w_j(v)\coloneqq h_j(v)/\lambda_p$ for $v\ne0$.
Lemma~\ref{mb:slice} gives
\[
 w_j(v)\le\kappa(d^2)\quad(v\ne0),\qquad
 \sum_vw_j(v)^2\le d^{2|I_j|}\le d^{2\ell}.
\]
For $n\ge2/\eta$, we have $k\le(1/2-\eta/2)n$.
Apply Lemma~\ref{mb:weighted} with $\mathcal G=\mathcal C$, $\Delta=t$, $c_*=\kappa(d^2)$, $K=d^{2\ell}$, and $\epsilon=\eta/2$.
Undoing the normalization costs at most one factor $\lambda_p$ per nonzero block, hence at most
$\lambda_p^M\le\exp(2\pi^2M/(3p^2))$.
This proves Equation~\eqref{mb:independent-bound}.
\end{proof}

\subsection{Separated blocks and the memory bound}
Retain blocks of length $\ell$ separated by $g$ omitted devices.
If their marginal obeys a comparison with block states, Corollary~\ref{thm:domination-blocks} applies to independent purifications of these blocks and leaves the individual share-dependent device maps local.
Revealing the omitted shares leaves a residual instance on the retained devices, to which Proposition~\ref{mb:independent} applies when its reconstruction rate exceeds one half by a fixed margin.

\begin{proposition}[A rate from block marginal domination]\label{mb:criterion}
Fix integers $d\ge2$, $\ell\ge1$, $g\ge0$, and constants $C_0\ge1$, $\beta\ge1$.
For $n\ge\ell+g$, put
\[
 M\coloneqq \left\lfloor\frac n{\ell+g}\right\rfloor,\qquad
 I_j\coloneqq \{1+(j-1)(\ell+g),\ldots,\ell+(j-1)(\ell+g)\},
 \qquad I\coloneqq \bigcup_{j=1}^M I_j.
\]
Suppose the share-independent device marginal satisfies
\begin{equation}\label{mb:block-domination}
 \Omega_{A_I}\le C_0\beta^{M-1}\bigotimes_{j=1}^M\alpha_j,
\end{equation}
where each $\alpha_j$ is a density operator on $A_{I_j}$.
Any finite external extension of the full device marginal is permitted.
If a fixed $\delta\ge0$ satisfies
\begin{equation}\label{mb:strict-criterion}
 \delta<\frac{\ell}{\ell+g}
   \min\left\{\frac12,
      1-\frac{\log\beta}{\log(1/\kappa(d^2))}\right\},
\end{equation}
then every permitted local leakage strategy with output dimension at most $d$ has exponentially small distance whenever $t/n\ge1-\delta$.
More precisely, for every
\begin{equation}\label{mb:exponent}
 0<c<\frac1{\log2}
 \left[\frac{-\log\bigl(\beta\kappa(d^2)\bigr)}{\ell+g}
       -\frac{\delta}{\ell}\log\frac1{\kappa(d^2)}\right],
\end{equation}
the bound $\Adv(\rho_s,\rho_{s'})\le2^{-cn}$ holds for all sufficiently large $n$.
The length cutoff depends only on the displayed fixed parameters and $c$, and is uniform over the field, evaluation points, states, local maps, and finite input and reference dimensions.
\end{proposition}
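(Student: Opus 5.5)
The plan is to combine the block form of the marginal transfer (Corollary~\ref{thm:domination-blocks}) with the independent-block bound (Proposition~\ref{mb:independent}), after revealing the omitted shares through Lemma~\ref{lem:conditioning}.

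\emph{Step 1: set up the residual instance.} Let $S\coloneqq[n]\setminus I$, so $r\coloneqq|S|=n-M\ell$ and $m\coloneqq|I|=M\ell$. Since $M\ge n/(\ell+g)-1$, we have
\[
 r\le\frac{gn}{\ell+g}+\ell+g.
\]
With $t\ge(1-\delta)n$, the residual threshold is $t'\coloneqq t-r$, and
\[
 \frac{t'}{m}\ge 1-\frac{\delta n}{m}-o(1)\ge 1-\frac{\delta(\ell+g)}{\ell}-o(1).
\]
The first branch of \eqref{mb:strict-criterion} gives $\delta(\ell+g)/\ell<1/2$. Hence $t'/m\ge1/2+\eta'$ for a fixed $\eta'>0$ and all large $n$; in particular $r<t$.

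\emph{Step 2: transfer to independent blocks.} Apply Corollary~\ref{thm:domination-blocks} with $\Lambda\coloneqq C_0\beta^{M-1}$ and the blocks $I_1,\ldots,I_M$. For $\varepsilon_{\rm block}$, use Proposition~\ref{mb:independent} on the residual $(m,t')$ instance. Its hypotheses hold here:
\begin{itemize}
\item each block has size exactly $\ell$;
\item each device keeps its own share-dependent map, with the affine changes of Lemma~\ref{lem:conditioning} absorbed;
\item the block references from the canonical product purification are independent and of arbitrary finite dimension.
\end{itemize}
The constants of Proposition~\ref{mb:independent} are uniform in $m$ once $m\ge n_0(d,\ell,\eta')$. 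Because $p>n\ge m$ and $M\le n$, we have $\exp(2\pi^2M/(3p^2))\le e^{2\pi^2/3}$. This gives
\[
 \Adv(\rho_s,\rho_{s'})\le C_0\beta^{M-1}A\,m^b\,\kappa(d^2)^{\lceil t'/\ell\rceil}e^{2\pi^2/3}.
\]

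\emph{Step 3: the exponent bookkeeping.} This is the only delicate step. I need a lower bound on $t'/\ell$ in terms of $M$:
\[
 \frac{t'}{\ell}\ge\frac{(1-\delta)n-n+M\ell}{\ell}=M-\frac{\delta n}{\ell}.
\]
So the logarithm of the bound is at most
\[
 M\log\beta-\Bigl(M-\frac{\delta n}{\ell}\Bigr)\log\frac1{\kappa(d^2)}+O(\log n).
\]
Using $M\ge n/(\ell+g)-1$ and the fact that $\log(\beta\kappa(d^2))<0$, this becomes
\[
 n\left[\frac{\log(\beta\kappa(d^2))}{\ell+g}+\frac{\delta}{\ell}\log\frac1{\kappa(d^2)}\right]+O(\log n).
\]
For the strict inequality in \eqref{mb:exponent} to give a positive window for $c$, the bracket must be negative. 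That negativity is exactly the second branch of \eqref{mb:strict-criterion}: it gives $\beta\kappa(d^2)<1$ together with
\[
 \frac{\delta(\ell+g)}{\ell}<1-\frac{\log\beta}{\log(1/\kappa(d^2))}.
\]
For any $c$ strictly below the displayed value, the polynomial and constant factors are absorbed for large $n$, giving $\Adv\le 2^{-cn}$. The cutoff depends only on $d,\ell,g,C_0,\beta,\delta,c$.

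The hard part is confirming that no hidden $\ell$-dependence enters. The sign bookkeeping depends on $\lceil t'/\ell\rceil$ losing at most a constant, and $\kappa(d^2)$ must be the per-block cap; the slice lemma guarantees the latter. I must also check that the residual rate margin $\eta'$ stays fixed, so that Proposition~\ref{mb:independent}'s constants are genuinely uniform.
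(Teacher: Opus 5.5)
Your proposal is correct and follows the paper's proof essentially step for step: reveal the omitted shares via Lemma~\ref{lem:conditioning}, use the first branch of \eqref{mb:strict-criterion} to obtain a residual threshold $m-u$ exceeding $m/2$ by a fixed margin, and transfer Proposition~\ref{mb:independent} through Corollary~\ref{thm:domination-blocks} at cost $\Lambda=C_0\beta^{M-1}$. The exponent bookkeeping is also the paper's: you use $\lceil (m-u)/\ell\rceil\ge M-u/\ell$ and $M=n/(\ell+g)+O(1)$, and the second branch of \eqref{mb:strict-criterion} is exactly positivity of the bracket in \eqref{mb:exponent}.
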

In the strict criterion, the first term ensures a fixed residual above-half margin; the second makes the final exponent positive after paying the comparison factor.
Supplementary Section~\ref{supp:blocks-transfer} gives the proof.
For a fixed mixing process, we choose a gap first and then lengthen the retained blocks; the local slice cap remains unchanged.

The block comparisons for the three resource classes of Theorem~\ref{mb:memory} come from one lemma, which the sparse estimates of Supplementary Section~\ref{app:sparse-memory} also use.
Write $\mathcal S_1\le_{\rm CP}\mathcal S_2$ when $\mathcal S_2-\mathcal S_1$ is completely positive.

\begin{lemma}[Block comparisons for fixed processes]\label{mb:comparison}
Let $\ell\ge1$, $g\ge0$ and $M\ge1$ be integers, and let $I_1,\ldots,I_M\subseteq[n]$ be intervals of $\ell$ consecutive sites, in order, with $I_1$ starting at site one and exactly $g$ omitted sites between consecutive intervals.
Put $I\coloneqq \bigcup_jI_j$ and fix $\beta>1$.
\begin{enumerate}
\item For a stationary chain satisfying Definition~\ref{def:upper-mixing} with gap $g$ for this $\beta$, let $\alpha$ be its marginal on $\ell$ consecutive sites.
Then $\Omega_{A_I}\le\beta^{M-1}\alpha^{\otimes M}$.
\item Let $T$ be a primitive channel on a finite-dimensional memory, with full-rank stationary state $\sigma$ and reset $\mathcal R_\sigma(X)\coloneqq \Tr(X)\sigma$.
Then there is a fixed integer $g_0\ge0$ such that, for $g=g_0$,
\begin{equation}\label{mb:reset}
 T^g\le_{\rm CP}\beta\mathcal R_\sigma.
\end{equation}
\item In the first case of Theorem~\ref{mb:memory}, suppose \eqref{mb:reset} holds for its transfer channel and the spacing $g$, and let $\alpha$ be the joint state of $\ell$ consecutive emissions starting from $\sigma$.
For every initial memory state $\tau$, with $\Lambda_0\coloneqq \|\sigma^{-1/2}\tau\sigma^{-1/2}\|_\infty\le1/\lambda_{\min}(\sigma)$, where $\|\cdot\|_\infty$ is the operator norm,
\begin{equation}\label{mb:memory-domination}
 \Omega_{A_I}
 \le\Lambda_0\beta^{M-1}\alpha^{\otimes M}
 \le\frac{1}{\lambda_{\min}(\sigma)}
       \beta^{M-1}\alpha^{\otimes M}.
\end{equation}
\item In the third case of Theorem~\ref{mb:memory}, suppose \eqref{mb:reset} holds for its transfer channel and the spacing $g$, and let $\alpha$ be as in the preceding item for the emission isometry $V\coloneqq \sum_j|j\rangle\otimes A_j$.
For all sufficiently large $n$, uniformly over the two unit boundary vectors,
\[
 \Omega_{A_I}\le\frac{2}{\lambda_{\min}(\sigma)^2}\beta^{M-1}\alpha^{\otimes M}.
\]
\end{enumerate}
\end{lemma}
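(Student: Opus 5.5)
The four items need different tools, so I would prove them separately: operator iteration for the chain, a Choi-matrix comparison for the reset bound, a Heisenberg-picture sandwich for the memory, and a normalization estimate for the MPS.

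\emph{Item 1.} I will induct on $M$. Set $J_1\coloneqq I_1\cup\cdots\cup I_{M-1}$ and $J_2\coloneqq I_M$. The argument needs the mixing inequality for a union of intervals against a single interval. Definition~\ref{def:upper-mixing} is stated for two intervals, so I first extend it. Take the full interval $\hat J_1$ spanning $I_1,\ldots,I_{M-1}$. The definition gives $\Omega_{\hat J_1 J_2}\le\beta\,\Omega_{\hat J_1}\otimes\Omega_{J_2}$. Partial trace over the gap sites inside $\hat J_1$ is completely positive and preserves operator order, so $\Omega_{J_1J_2}\le\beta\,\Omega_{J_1}\otimes\Omega_{I_M}$. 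The induction hypothesis bounds $\Omega_{J_1}$, and stationarity gives $\Omega_{I_j}=\alpha$ for every $j$. Tensoring preserves the order, which yields $\beta^{M-1}\alpha^{\otimes M}$.

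\emph{Item 2.} A primitive $T$ has a unique full-rank fixed state $\sigma$, and $T^g\to\mathcal R_\sigma$ exponentially fast. Pass to Choi matrices. The Choi matrix of $\mathcal R_\sigma$ is $I\otimes\sigma$, which is positive definite with smallest eigenvalue $\lambda_{\min}(\sigma)>0$. The Choi matrices $C_g$ of $T^g$ converge to it in norm. Once $\|C_g-I\otimes\sigma\|_\infty\le(\beta-1)\lambda_{\min}(\sigma)$, we get $C_g\le\beta\, I\otimes\sigma$, which is exactly \eqref{mb:reset}. Take $g_0$ to be such a $g$.

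\emph{Item 3.} Write the emissions in Heisenberg form. For positive block observables $X_1,\ldots,X_M$, define $\mathcal E_X(Y)\coloneqq V_\ell^\dagger(X\otimes Y)V_\ell$ on the memory, where $V_\ell$ is the $\ell$-fold emission isometry. The expectation is then $\Tr\bigl(\tau\,\mathcal E_{X_1}T^{g}\mathcal E_{X_2}T^g\cdots\mathcal E_{X_M}(I)\bigr)$, with the maps in emission order. Each $T^g$ between blocks, in the Heisenberg picture, is CP-dominated by $\beta$ times the dual reset $Y\mapsto\Tr(\sigma Y)I$. All intermediate operators are positive, so each replacement preserves order. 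The product therefore splits into $\beta^{M-1}\Tr(\tau\,\mathcal E_{X_1}(I))\prod_{j\ge2}\Tr(\sigma\,\mathcal E_{X_j}(I))$. The first factor is at most $\Lambda_0\Tr(\sigma\mathcal E_{X_1}(I))$ because $\tau\le\Lambda_0\sigma$.

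To obtain the full operator inequality rather than one on product observables, I run the same argument with amplified maps $\id\otimes T^g$. The reset bound is complete positivity, so it survives amplification, and this gives the inequality for general positive joint observables. I expect this upgrade to be the main technical point. Trailing sites after $I_M$ and the gap sites are traced out and cause no problem.

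\emph{Item 4.} The unnormalized MPS is a sequence of successive emissions from $|u_L\rangle$ followed by projection of the final memory onto $|u_R\rangle$. The norm squared is $N_n=\langle u_R|T^n(|u_L\rangle\langle u_L|)|u_R\rangle$, which tends to $\langle u_R|\sigma|u_R\rangle\ge\lambda_{\min}(\sigma)$ uniformly in the unit vectors. So for large $n$, $N_n\ge\lambda_{\min}(\sigma)/2$.

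The unnormalized device marginal on $I$ is bounded as in item 3, with the final observable $|u_R\rangle\langle u_R|\le I$ in place of $I$ and $\tau=|u_L\rangle\langle u_L|$, $\Lambda_0\le1/\lambda_{\min}(\sigma)$. Dividing by $N_n$ gives the factor $2/\lambda_{\min}(\sigma)^2$.
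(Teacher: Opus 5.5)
Your proposal is correct and follows the paper's proof item by item. The four steps are the same: iterated mixing with partial trace over the gap sites, a Choi-norm comparison of $T^{g}$ against $I\otimes\sigma$, replacement of each gap map by $\beta$ times the reset using complete positivity, and postselection onto $|u_R\rangle$ with success probability at least $\lambda_{\min}(\sigma)/2$.

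The only difference is presentational and lies in item 3. The paper works directly in the Schr\"odinger picture, applying $\id\otimes T^g$ to the joint state of the earlier outputs and the memory. That is exactly your amplified upgrade, and it makes the product-observable step unnecessary. For a general CPTP emission $\Psi$, write $\mathcal E_X(Y)$ through the dual map $(\Psi^{(\ell)})^*(X\otimes Y)$ rather than as conjugation by an isometry.
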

\begin{proof}
For the first item, apply upper operator mixing to the interval containing the earlier retained blocks and the next retained block, which are separated by $g$ omitted sites.
Trace out the omitted sites inside the left interval and iterate the resulting operator inequality $M-1$ times.
Stationarity gives the stated bound.

For the second item, primitivity gives convergence of $T^g$ to $\mathcal R_\sigma$ \cite[arXiv version, Proposition~3 and Eq.~(2)]{Sanz}.
For linear maps on the memory, use the unnormalized Choi matrix
\[
 J(\mathcal S)\coloneqq \sum_{a,b}|a\rangle\langle b|
                         \otimes\mathcal S(|a\rangle\langle b|)
\]
in a fixed orthonormal basis.
Since $J(\mathcal R_\sigma)=I\otimes\sigma>0$, choose a fixed $g_0$ with
\[
 \|J(T^{g_0}-\mathcal R_\sigma)\|_\infty
 \le(\beta-1)\lambda_{\min}(\sigma).
\]
Then \eqref{mb:reset} holds for $g=g_0$.

For the third item, every initial memory state satisfies $\tau\le\Lambda_0\sigma$, and $\tau\le I\le\sigma/\lambda_{\min}(\sigma)$ gives $\Lambda_0\le1/\lambda_{\min}(\sigma)$.
Between successive retained blocks, the $g$ omitted emissions induce the transfer map $T^g$.
Replace each of the $M-1$ gap maps using Equation~\eqref{mb:reset}.
Complete positivity keeps the order valid while the memory is entangled with earlier retained outputs, and each reset removes those correlations.
Partial trace over the final memory and any trailing emissions preserves this domination bound.
The initial cost is uniform over $\tau$.

For the fourth item, the map $V$ is an isometry by the canonical normalization.
Start its memory in $|u_L\rangle$, so that the preceding item applies with $\tau=|u_L\rangle\langle u_L|$, and postselect the final memory onto $|u_R\rangle$.
The success probability is
\[
 q_n\coloneqq \langle u_R|T^n(|u_L\rangle\langle u_L|)|u_R\rangle
 \ge\tfrac12\lambda_{\min}(\sigma)
\]
for all sufficiently large $n$, uniformly over the two unit boundary vectors, by uniform finite-dimensional convergence to the full-rank reset.
The unnormalized marginal for this measurement outcome is bounded above by the unconditional marginal, since the complementary outcome is positive.
Normalizing therefore multiplies Equation~\eqref{mb:memory-domination} by at most $2/\lambda_{\min}(\sigma)$.
\end{proof}

\begin{proof}[Proof of Theorem~\ref{mb:memory}]
Fix the process, chain, or matrices, and fix $d,\eta$.
Choose
\begin{equation}\label{mb:beta-choice}
 1<\beta<\kappa(d^2)^{-1/2}.
\end{equation}
For a stationary upper-mixing chain, Definition~\ref{def:upper-mixing} supplies a fixed gap $g$ for this $\beta$.
For a primitive emission process, let $\sigma>0$ be the stationary memory state.
For this process, and for the primitive transfer channel in the MPS case, the second item of Lemma~\ref{mb:comparison} supplies a fixed gap $g$ with \eqref{mb:reset}.

Having fixed the gap, put $\delta\coloneqq 1/2-\eta$ and choose a fixed integer $\ell\ge1$ such that
\begin{equation}\label{mb:length-choice}
 \delta<\frac{\ell}{2(\ell+g)}.
\end{equation}
For example, any $\ell>g(1-2\eta)/(2\eta)$ suffices.
Choose the retained blocks as in Proposition~\ref{mb:criterion}.
By Equation~\eqref{mb:beta-choice}, the minimum in Equation~\eqref{mb:strict-criterion} equals $1/2$, so Equation~\eqref{mb:length-choice} is exactly the required strict criterion.
It remains to obtain the block marginal comparison with a constant initial cost.

Lemma~\ref{mb:comparison} gives the block marginal comparison \eqref{mb:block-domination}, with every $\alpha_j$ equal to the $\alpha$ defined there.
Its first item gives $C_0=1$ for the chain; its third item gives $C_0=1/\lambda_{\min}(\sigma)$ for the emission process, uniformly over the initial memory state; and its fourth item gives $C_0=2/\lambda_{\min}(\sigma)^2$ for the MPS, uniformly over both boundary vectors for all sufficiently large $n$.

Apply Proposition~\ref{mb:criterion} in each case.
Only after $\beta,g,\ell$ have been fixed do we choose $n_0$ for the list-recovery bound, residual rate margin, boundary normalization where needed, and fixed-prefactor absorption.
The resulting constants have exactly the dependence stated in the theorem.
The reference filter in that proposition permits every share-independent extension of the device marginal, including preparation environments and the final memory.
\end{proof}

The sparse-site estimates in Lemma~\ref{lem:mix-finite} and Proposition~\ref{prop:sparse-rate} retain explicit finite-field costs and sufficient exponents for a prescribed gap.
The block argument instead establishes every fixed reconstruction rate above one half for a fixed process; its constants also depend on the chosen block length.

\paragraph{An example and the fixed-process premise.}
The two-dimensional memory constructed in Supplementary Section~\ref{supp:memory-examples} has transfer channel
$T=(1-\mu)\id+\mu\mathcal R_{I/2}$ and emits a four-dimensional device register at each step.
Its correlations are nonzero at every separation.
For $\mu=1/2$ and a gap $g=6$, its reset comparison has $\beta=67/64$ and $\beta\sqrt{\kappa(4)}<1$ for qubit leakage.
Sampling isolated sites gives exponential security at every fixed rate above $13/14$; blocks of length $54$ give exponential security at every fixed rate $t/n\ge11/20+\eta$, $\eta>0$.
Longer fixed blocks reach every fixed rate above half as in Theorem~\ref{mb:memory}.
These are sufficient bounds, with onset depending on the fixed parameters.
The same supplementary section gives the complete numerical examples and the contrasting nonprimitive GHZ and varying primitive memories.
The two-dimensional primitive-memory family that varies with $n$ has constant leakage at full threshold.

\paragraph{Further structured resources and preparation.}
Supplementary Sections~\ref{sec:resources} and~\ref{sec:depth} apply the marginal transfer to source networks with a large independent vertex set, to graph states through the cut rank, and to Haar-random resources, where one high-probability event controls all strategies; the graph-state and Haar examples give security above rate $3/4$.
For preparation from a product device marginal by disjoint fixed-arity gates on complete device groups, with all causal influence carried by those gates, constant leakage at full threshold requires depth $\Omega(\log n)$ and the GHZ construction attains depth $O(\log n)$.
The complete models, finite bounds, graph constructions and proofs are given there.

\section{Conclusion}\label{sec:conclusion}
The security of Shamir sharing against local quantum leakage depends quantitatively on the correlations available to the devices and the adversary.
For independent device--reference pairs, every fixed reconstruction rate above one half permits exponentially small chosen-secret distance at fixed output dimension.
Its exact exponent records both the emitted dimension and the Schmidt number of each pair, and matching attacks apply to every prescribed pair of distinct secrets.
For unrestricted independent device--reference pairs, the growing-output bounds permit logarithmically many emitted qubits per device and determine the leading exponent when the output dimension tends to infinity within the proved range.

Above one half, the support theorem determines the sharp fixed-fraction frontier when an arbitrary entangled support is tensored with single-device states elsewhere; its normalized exponent measures the gap between the reconstruction rate and the support fraction.
The selected-marginal criterion also gives security at every fixed rate $R>1-b_0/2$ when at least a fixed fraction $b_0$ of the devices has a marginal bounded above by a fixed multiple of a product, for every fixed output dimension and arbitrary finite adversary extensions.
The source-network and graph-state applications meet such marginal conditions, while Haar-random pure device states satisfy an approximate version with exponentially small error.
At each fixed output dimension, fixed stationary chains satisfying upper operator mixing, fixed finite-memory processes with primitive transfer channels, and canonical primitive matrix product states have exponential security at every fixed rate above one half, again allowing arbitrary finite adversary extensions.
Their decay exponents and length cutoffs may depend on the fixed process, output dimension and rate margin.
The varying two-dimensional primitive-memory examples admit attacks with chosen-secret distance bounded below by a positive constant at full threshold and distinguish this hypothesis from primitivity at each length.
In the preparation model with fixed-arity gates acting on complete device registers, logarithmic depth is necessary and sufficient in order of growth for such constant distance at full threshold with one-qubit outputs.

For one-qubit outputs, the sufficient rate extends to $1/2-5\cdot10^{-10}$ when the entire resource, including the reference, is fully separable.
The same rate holds for independent families of joint output--reference states satisfying $\sigma_i(y)\le2\gamma_i$ for every share, with each $\gamma_i$ a fixed density operator.
This includes joint qubit--reference outputs that are PPT for every share and classical-bit outputs with arbitrary initial pairs.

Several questions remain open.
The first is whether arbitrary share-dependent qubit channels are secure at every positive reconstruction rate, both for independent devices with trivial references and for arbitrary independent entangled references.
The sharp entangled-support tradeoff is open at or below one half, as is the range $2R-1\le\alpha<R$ above one half when only the outside devices have a product marginal.
In the two factored support models of Section~\ref{sup:section}, fix the output dimension and let $t/n\to R$ with $1/2<R<1$. If $\ell=t-|S|$ grows faster than $\log n$ and satisfies $\ell=o(n)$, does the worst-case distinguishing distance decay faster than every inverse polynomial in $n$?
For a specified fixed memory, sharper exponents and explicit numerical length cutoffs would make the fixed-process theorem more quantitative.
A criterion that finds, from a concise resource description, a large marginal bounded above in operator order by a small multiple of a product would extend the results to further explicit families.
Determining the optimal coefficient in the logarithmic preparation-depth requirement is another question.

\section*{Statements and Declarations}
\paragraph{Funding.} The authors received no funding for this work.
\paragraph{Competing interests.} The authors declare no competing interests.
\paragraph{Author contributions.} The authors contributed equally to this work. Aryeh Lev Zabokritskiy (Yohananov) supervised the research, and Isaac Barouch Essayag carried it out.
\paragraph{Data availability.} This theoretical work did not generate or analyse datasets.
\begin{samepage}
\paragraph{Use of AI-assisted tools.} The authors used OpenAI ChatGPT and Codex and Anthropic Claude to assist with developing and checking arguments and with drafting and editing. The authors take responsibility for the mathematical claims, references, and final text.
\end{samepage}

\clearpage
\appendix
\renewcommand{\thesection}{S\arabic{section}}
\renewcommand{\theHsection}{supp.\arabic{section}}
\section*{Supplementary Information}
\addcontentsline{toc}{section}{Supplementary Information}
\label{supp:start}
The following sections give the detailed estimates, constructions and additional resource applications supporting the main text.
All notation and the leakage model are those of the main article; theorem and equation references remain internal to this document.
Sections~\ref{supp:fourier}--\ref{supp:bh-details} give the independent-output estimates, and Sections~\ref{supp:support}--\ref{sec:depth} treat correlated resources and preparation depth.
Section~\ref{ppt:proof} proves the domination-two and PPT extension from Section~\ref{bh:section}.

\section{Fourier moments and explicit finite estimates}\label{supp:fourier}
The Fourier estimate separates reference dimension from emitted dimension. We give its full proof and then the finite-moment consequences.

\subsection{The dominated Fourier estimate}\label{supp:fourier-proof}
The following is the complete proof of Lemma~\ref{lem:fourier}.
\begin{proof}
Positivity and domination imply $\supp(\sigma_z)\subseteq\supp(\gamma)$.
Work on this support and use inverse powers only there.
For any supported operator $X$, Schatten H\"older gives
\[
 \norm{X}^2
 \le \|X\gamma^{-1/2}\|_2^2\|\gamma^{1/2}\|_2^2
 =\Tr(X\gamma^{-1}X^\dagger).
\]
Parseval applied entrywise to $\sigma_z\gamma^{-1/2}$ yields
\[
 \sum_a h(a)^2\le
 \E_z\Tr(\sigma_z\gamma^{-1}\sigma_z).
\]
Writing $A_z\coloneqq \gamma^{-1/2}\sigma_z\gamma^{-1/2}$, we have
$0\le A_z\le KI$, $\Tr(\gamma A_z)=1$, and
\[
 \Tr(\sigma_z\gamma^{-1}\sigma_z)=\Tr(\gamma A_z^2)
 \le K\Tr(\gamma A_z)=K.
\]
The zero coefficient is a density operator, so $h(0)=1$.

Fix $a\ne0$ and choose a unitary $U$ with
$h(a)=\operatorname{Re}\Tr(U\widehat{\sigma}(a))$.
Such a unitary exists by the polar decomposition, including for singular coefficients.
Write $U=\sum_j e^{i\phi_j}P_j$, and let
\[
 q_j\coloneqq \Tr(P_j\gamma),\qquad w_j(z)\coloneqq \Tr(P_j\sigma_z).
\]
Both sets of weights sum to one, and $0\le w_j(z)\le Kq_j$.
With $b_0\coloneqq \pi/K$ and $\ell\coloneqq \cos b_0$,
\[
 \cos\theta\,w\le \ell w+Kq(\cos\theta-\ell)_+
 \quad(0\le w\le Kq).
\]
The nonzero functional $\langle a,z\rangle$ is uniform on $\F_p$.
We need a bound uniform over the spectral phases $\phi_j$.
For the periodic function $v(\theta)\coloneqq (\cos\theta-\ell)_+$, the distributional second derivative consists of the density $-\cos\theta$ on $(-b_0,b_0)$ and two atoms of mass $\sin b_0$ at the endpoints.
Its total variation is $4\sin b_0$, since $b_0\le\pi/2$.
Consequently, its Fourier coefficients satisfy
\[
 |\widehat v(m)|\le\frac{4\sin b_0}{2\pi m^2}\quad(m\ne0),\qquad
 \widehat v(m)\coloneqq \frac1{2\pi}\int_0^{2\pi}v(\theta)e^{-im\theta}\,d\theta.
\]
The Fourier series converges absolutely.
Averaging over $p$ equally spaced points keeps only frequencies divisible by $p$, so, for every phase $\phi$,
\[
 \left|\frac1p\sum_{z=0}^{p-1}v(\phi+2\pi z/p)-\widehat v(0)\right|
 \le\sum_{m\ne0}|\widehat v(mp)|
 \le\frac{2\pi\sin b_0}{3p^2}.
\]
Moreover,
\[
 \frac1{2\pi}\int_0^{2\pi}v(\theta)\,d\theta
 =\frac{\sin b_0-b_0\cos b_0}{\pi}.
\]
Averaging the scalar inequality and summing over $j$ gives
\[
 h(a)\le \ell+
 K\left(\frac{\sin b_0-b_0\cos b_0}{\pi}+\frac{2\pi\sin b_0}{3p^2}\right)
 =\kappa(K)\left(1+\frac{2\pi^2}{3p^2}\right).
\]
The triangle inequality also gives $h(a)\le1$.
The higher moments follow by multiplying $h(a)^2$ by the pointwise bound to power $Q-2$.
\end{proof}

\subsection{Opposite frequencies and finite MDS sums}\label{supp:paired-mds}
Opposite frequencies have equal norms because the states are Hermitian, a symmetry also used in \cite[Section~3]{BKNY}.
Keeping this symmetry gives a sharper moment estimate than the last bound of Lemma~\ref{lem:fourier}.

\begin{lemma}[Moments of paired frequencies]\label{lem:paired}
Under the hypotheses of Lemma~\ref{lem:fourier}, assume $p$ is odd and put
\[
 c\coloneqq c_{K,p},\qquad B_K\coloneqq \frac{K-1}{2},\qquad
 j\coloneqq \left\lfloor\frac{B_K}{c^2}\right\rfloor,\qquad r\coloneqq B_K-jc^2.
\]
For every real $Q\ge2$,
\[
 \sum_{a\ne0}h(a)^Q\le M_{K,p}(Q)\coloneqq 2\bigl(jc^Q+r^{Q/2}\bigr)
 \le (K-1)c_{K,p}^{Q-2}.
\]
For $K=2,4$ and fixed $Q\ge2$, uniformly over primes $p>n$ as $n\to\infty$,
\[
 M_{K,p}(Q)\longrightarrow M_K(Q)\coloneqq 
 2\kappa(K)^Q+2\left(\frac{K-1}{2}-\kappa(K)^2\right)^{Q/2}.
\]
For these two values of $K$, $M_K(Q)<(K-1)\kappa(K)^{Q-2}$ whenever $Q>2$.
\end{lemma}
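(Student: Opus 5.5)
The plan is to treat the nonzero frequencies as pairs $\{a,-a\}$, which is possible because $p$ is odd. The pairing $a\leftrightarrow -a$ has no fixed points on $\F_p^m\setminus\{0\}$ (view the frequencies through the functional $\langle a,\cdot\rangle$). Hermiticity of the states gives $\widehat\sigma(-a)=\widehat\sigma(a)^\dagger$, hence $h(-a)=h(a)$. Lemma~\ref{lem:fourier} then says that the numbers $x_a\coloneqq h(a)^2$ satisfy $x_a\le c^2$ and $\sum_{a\ne0}x_a\le K-1$. Pick one representative from each pair. The representative values $y$ satisfy $0\le y\le c^2$ and $\sum y\le B_K$, and
\[
 \sum_{a\ne0}h(a)^Q=2\sum y^{Q/2}.
\]

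The key step is to maximize the convex function $\sum y^{Q/2}$ over this polytope; convexity holds since $Q/2\ge1$. A maximum of a convex function is attained at a vertex. At a vertex, all but at most one coordinate lies in $\{0,c^2\}$, and the total constraint is tight whenever it is active. I would rather give a direct exchange argument. If two coordinates are strictly between $0$ and $c^2$, move mass from the smaller to the larger; by convexity this does not decrease the sum. If the total is below $B_K$, increase some coordinate; by monotonicity this does not decrease the sum either. Either way we reach at most $j=\lfloor B_K/c^2\rfloor$ full coordinates equal to $c^2$ and one remainder coordinate $r=B_K-jc^2\in[0,c^2)$. This gives $M_{K,p}(Q)=2(jc^Q+r^{Q/2})$. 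For the comparison with $(K-1)c^{Q-2}$, use $r^{Q/2}=r\,r^{Q/2-1}\le r c^{Q-2}$. Then
\[
 M_{K,p}(Q)\le 2c^{Q-2}(jc^2+r)=(K-1)c^{Q-2}.
\]

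For the limit, fix $K\in\{2,4\}$. Then $B_K=1/2$ or $3/2$, and $\kappa(2)=2/\pi$, $\kappa(4)=2\sqrt2/\pi$. Thus $\kappa(K)^2$ equals $0.405$ and $0.811$ respectively. Since $c_{K,p}=\kappa(K)\lambda_p$ for large $p$ and $\lambda_p\to1$ uniformly as $p>n\to\infty$, we get $c^2\to\kappa(K)^2$. Because $B_K/\kappa(K)^2$ equals $1.23$ and $1.85$, both strictly between $1$ and $2$, we have $j=1$ eventually. Hence $r\to B_K-\kappa(K)^2$, and continuity gives $M_K(Q)$.

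For the strict inequality, it remains to see where equality in $r^{Q/2}\le rc^{Q-2}$ can hold: only if $r=0$ or $r=c^2$. In the limit, $0<r=B_K-\kappa^2<\kappa^2$ (the values are $0.095$ versus $0.405$, and $0.689$ versus $0.811$). So $r^{Q/2}<r\kappa^{Q-2}$ for $Q>2$, which yields $M_K(Q)<(K-1)\kappa(K)^{Q-2}$. The main obstacle, such as it is, lies in the careful handling of the extremal argument for real $Q$ and in the finite-$p$ uniformity of $j$. The latter needs $\lambda_p$ close enough to $1$ that $B_K/c^2$ stays strictly inside $(1,2)$. This holds for all $p>n\ge n_0$, which is why the statement is phrased asymptotically.
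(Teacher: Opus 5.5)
Your proposal is correct and takes essentially the same route as the paper. Both arguments pair $a$ with $-a$ using Hermiticity, use a convexity exchange to concentrate the squared masses into $j$ saturated coordinates plus one residual $r$, and bound $r^{Q/2}\le rc^{Q-2}$; for $K=2,4$, the inequality $\kappa(K)^2<B_K<2\kappa(K)^2$ then forces $j=1$ eventually, with $0<r<\kappa(K)^2$ in the limit. The only detail the paper makes explicit that you leave implicit is appending zero coordinates so the total can always be raised to $B_K$. Equivalently, if every coordinate is already saturated below total $B_K$, the sum is at most $jc^Q$, so the bound holds anyway.
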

\begin{proof}
Hermiticity gives $\widehat\sigma(-a)=\widehat\sigma(a)^\dagger$, so $h(-a)=h(a)$.
Because $p$ is odd, the nonzero frequencies partition into pairs $\{a,-a\}$.
Choose one representative per pair and write $u_a\coloneqq h(a)^2$.
Then $0\le u_a\le c^2$ and $\sum u_a\le B_K$.
For $q\coloneqq Q/2\ge1$, moving mass from a smaller positive coordinate to a larger unsaturated coordinate cannot decrease $\sum u_a^q$, by convexity.
For an upper bound we may append zero coordinates and increase the total mass to $B_K$.
Repeatedly concentrating mass leaves $j$ coordinates equal to $c^2$ and one residual coordinate $r$.
Doubling their $q$-moment gives the first bound; $r^{Q/2}\le r c^{Q-2}$ gives the second.
This maximizes over a relaxed set of lists, giving an upper bound for the physically realizable lists.

For $K=2,4$, one has $\kappa(K)^2<B_K<2\kappa(K)^2$.
Since $0\le c_{K,p}-\kappa(K)\le2\pi^2\kappa(K)/(3n^2)$ for $p>n$, eventually $j=1$, uniformly over those primes.
The displayed limit follows, and its residual lies strictly between zero and $\kappa(K)^2$, which gives the strict inequality for $Q>2$.
For finite primes the floor and residual in $M_{K,p}$ remain necessary; in particular $j$ can be zero when $K=2$.
\end{proof}

The scalar information-set estimate needed for the finite sum is the following.
We use the MDS Brascamp--Lieb inequality of Horinaga and Yamakawa \cite[Theorem~3.1]{HY}, also obtained as the singleton case of \cite[Lemma~4]{BKNY}.
Its proof needs only injective coordinate projections, so we state it in that form.
\begin{lemma}[MDS Brascamp--Lieb inequality]\label{lem:mds}
Let $\mathcal V\subseteq\Sigma_1\times\cdots\times\Sigma_n$ be a set of words over finite alphabets whose projection onto every $k$-element coordinate set is injective, where $1\le k\le n$; for example, $\mathcal V=\mathcal C$ with alphabets $\F_p$.
For $q\coloneqq n/k$ and nonnegative $g_i:\Sigma_i\to\mathbb R$,
\[
 \sum_{a\in\mathcal V}\prod_i g_i(a_i)
 \le \prod_i\left(\sum_zg_i(z)^q\right)^{1/q}.
\]
\end{lemma}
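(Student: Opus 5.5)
The plan is to prove Lemma~\ref{lem:mds} by induction on $n$ via Hölder's inequality, in the style of the Brascamp--Lieb argument. First reduce to the extremal case. Suppose we can show that each coordinate can be ``peeled off'' with exponent $q$. Note that the hypothesis only uses injectivity of projections onto $k$-sets, so it passes to subsets and punctured restrictions.

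The approach I would actually carry out is a fractional covering argument. Consider all $\binom nk$ information sets $T$ of size $k$. For each $T$, injectivity of the projection onto $T$ gives
\[
 \sum_{a\in\mathcal V}\prod_{i\in T}G_i(a_i)\le\prod_{i\in T}\sum_z G_i(z)
\]
for nonnegative $G_i$, since distinct words have distinct $T$-projections and the sum is dominated by the full product sum over $\prod_{i\in T}\Sigma_i$. Now write
\[
 \prod_ig_i(a_i)=\prod_T\prod_{i\in T}g_i(a_i)^{q/\binom{n-1}{k-1}\cdot(1/q)\cdot ?}.
\]
More precisely, each coordinate $i$ lies in $\binom{n-1}{k-1}$ sets $T$. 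Set $N\coloneqq\binom nk$ and note $\binom{n-1}{k-1}/N=k/n=1/q$. Then
\[
 \prod_i g_i(a_i)=\prod_T\Bigl(\prod_{i\in T}g_i(a_i)^{q}\Bigr)^{1/N}.
\]
Apply Hölder's inequality for $N$ functions on $\mathcal V$ with equal exponents $N$:
\[
 \sum_a\prod_T F_T(a)^{1/N}\le\prod_T\Bigl(\sum_aF_T(a)\Bigr)^{1/N},\qquad F_T(a)\coloneqq\prod_{i\in T}g_i(a_i)^q.
\]
Then use the injectivity bound with $G_i=g_i^q$ to get $\sum_aF_T(a)\le\prod_{i\in T}\|g_i\|_q^q$. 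Multiplying over $T$ with exponent $1/N$, each $i$ appears $\binom{n-1}{k-1}$ times, which yields $\prod_i(\|g_i\|_q^q)^{1/q}=\prod_i(\sum_zg_i(z)^q)^{1/q}$, as required.

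Two routine points remain. First, zero values cause no trouble in Hölder's inequality. Second, for the example $\mathcal V=\mathcal C$, injectivity on every $k$-set was established before Lemma~\ref{lem:vandermonde-kernel}: $\mathcal C$ is MDS of dimension $k$. The only step needing care is the counting identity $\binom{n-1}{k-1}N^{-1}q=1$, which makes the exponents exact; I expect no real obstacle beyond bookkeeping.
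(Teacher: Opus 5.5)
Your proposal is correct and follows the paper's own proof: generalized H\"older over all $N=\binom nk$ coordinate sets $J$ with $F_J(a)=\prod_{i\in J}g_i(a_i)^q$, then the injectivity bound on each $J$, then the count $\binom{n-1}{k-1}/N=k/n=1/q$. In a write-up, drop the opening induction sketch and the placeholder exponent, since the ``More precisely'' identity already gives the correct exponents.
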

\begin{proof}
We recall the proof of this specialization by generalized H\"older applied to all $k$-element coordinate subsets.
Put $N\coloneqq \binom nk$ and, for every $k$-element subset $J\subseteq[n]$, set
$F_J(a)\coloneqq \prod_{i\in J}g_i(a_i)^q$.
Each coordinate belongs to $Nk/n$ such subsets, so generalized H\"older gives
\[
 \sum_{a\in\mathcal V}\prod_i g_i(a_i)
 =\sum_{a\in\mathcal V}\prod_{|J|=k}F_J(a)^{1/N}
 \le\prod_{|J|=k}\left(\sum_{a\in\mathcal V}F_J(a)\right)^{1/N}.
\]
The projection of $\mathcal V$ onto $J$ is injective, hence its factor is
$\sum_{a\in\mathcal V}F_J(a)\le\prod_{i\in J}\sum_zg_i(z)^q$, with equality for $\mathcal V=\mathcal C$, whose projection is bijective.
Collecting the $Nk/n$ occurrences of each coordinate gives the claimed exponent $k/n=1/q$.
This also covers $k=n$, when there is only one subset; for $\mathcal V=\mathcal C$ equality then holds.
\end{proof}
\subsection{Explicit finite bounds}
Theorem~\ref{thm:half} gives the strongest asymptotic rate range used here.
The following elementary estimates have explicit finite parameters and give a constant prefactor at full threshold.
Each nonzero dual codeword has at least $t$ nonzero coordinates.
A weight parameter exploits this restriction while the MDS inequality bounds the sum.

\begin{theorem}[Finite bound for independent device--reference pairs]\label{thm:finite}
Suppose the initial resource is $\bigotimes_i\Omega_{A_iR_i}$ and each device emits one qubit.
Put
\[
 c_p\coloneqq c_{4,p}=\min\{1,\kappa(4)+2\pi^2\kappa(4)/(3p^2)\},\quad
 q\coloneqq \frac n{n-t+1}.
\]
If $q\ge2$, then for every $\lambda\ge1$,
\begin{equation}\label{eq:finite}
 \begin{aligned}
 \Adv(\rho_s,\rho_{s'})
 &\le \lambda^{-t}
 \left[\left(1+M_{4,p}(q)\lambda^q\right)^{n/q}-1\right]\\
 &\le \lambda^{-t}
 \left[\left(1+3c_p^{q-2}\lambda^q\right)^{n/q}-1\right].
 \end{aligned}
\end{equation}
\end{theorem}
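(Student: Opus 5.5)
Starting from Equation~\eqref{eq:fourier-security}, I would write the leakage as a sum over nonzero dual words $a\in\mathcal C$, each term being $\prod_i h_i(a_i)$. For the one-qubit case, Lemma~\ref{lem:reference} gives $K=4$. Lemma~\ref{lem:fourier} then gives $h_i(0)=1$, and Lemma~\ref{lem:paired} gives $\sum_{a\ne0}h_i(a)^q\le M_{4,p}(q)\le 3c_p^{q-2}$ for $q\ge2$.

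The weight parameter is inserted as follows. Every nonzero codeword of $\mathcal C$ has at least $t$ nonzero coordinates, so for $\lambda\ge1$ we have $1\le\lambda^{|\supp(a)|-t}$. Multiplying each term by this factor gives
\[
 \Adv(\rho_s,\rho_{s'})\le\lambda^{-t}\sum_{a\in\mathcal C\setminus\{0\}}\prod_i g_i(a_i),
 \qquad g_i(0)\coloneqq1,\quad g_i(z)\coloneqq\lambda h_i(z)\ (z\ne0).
\]

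Next I bound the sum over all of $\mathcal C$, including $a=0$, by Lemma~\ref{lem:mds}. The code $\mathcal C$ has dimension $k=n-t+1$ and every $k$-coordinate projection is bijective, so the lemma applies with $q=n/k$:
\[
 \sum_{a\in\mathcal C}\prod_i g_i(a_i)\le\prod_i\Bigl(1+\lambda^q\sum_{z\ne0}h_i(z)^q\Bigr)^{1/q}\le\bigl(1+M_{4,p}(q)\lambda^q\bigr)^{n/q}.
\]
The zero word contributes exactly $1$, since all its factors are $g_i(0)=1$. Subtracting it gives the first line of \eqref{eq:finite}. Applying $M_{4,p}(q)\le 3c_p^{q-2}$, which is the $K=4$ case of Lemma~\ref{lem:paired}, then gives the second line.

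The only delicate point is the requirement $q\ge2$. It is needed to invoke the moment bound of Lemma~\ref{lem:paired}, and it is exactly the stated hypothesis. The rest is bookkeeping: the zero frequency cancels between the two secrets, and the phase difference is bounded by $2$, as in \eqref{eq:fourier-security}.
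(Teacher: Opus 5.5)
Your proposal is correct and follows the paper's proof essentially step for step. You use the tilt $g_i(0)=1$, $g_i(a)=\lambda h_i(a)$, the weight-$t$ lower bound on nonzero dual words, Lemma~\ref{lem:mds} with $q=n/k$, subtraction of the zero word, and Lemma~\ref{lem:paired} with $K=4$ for both lines. The only detail the paper records that you omit is that Lemma~\ref{lem:paired} also requires $p$ odd; this holds because $q\ge2$ forces $n\ge2$, hence $p>n\ge2$.
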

\begin{proof}
In Lemma~\ref{lem:mds}, take $g_i(0)\coloneqq 1$ and
$g_i(a)\coloneqq \lambda h_i(a)$ for $a\ne0$.
Every nonzero dual codeword has weight at least $t$.
Its tilted product is therefore at least $\lambda^t$ times its original product.
Subtract the zero-codeword contribution, which is exactly one, and apply Lemma~\ref{lem:paired} with $K=4$; here $q\ge2$ forces $n\ge2$ and hence $p$ is odd.
Its comparison with the simpler moment bound gives the second inequality.
\end{proof}

Optimizing a fixed moment of the local Fourier coefficients gives a sufficient rate independent of both the field and the reference dimensions.
Define the rate by the following equation; its decimal value is used only for interpretation.

\begin{theorem}[An explicit scalar sufficient rate]\label{thm:ea}
Let $Q_{\rm ref}>2$ be the unique solution of
\begin{equation}\label{eq:root}
 M_4(Q_{\rm ref})=(Q_{\rm ref}-1)Q_{\rm ref}^{-Q_{\rm ref}/(Q_{\rm ref}-1)},
 \qquad \tau_{\rm ref}\coloneqq 1-\frac1{Q_{\rm ref}}.
\end{equation}
In the model with independent device--reference pairs and one emitted qubit per device, for every fixed $\eta>0$ there are $c>0,n_0$ such that
$\Adv(\rho_s,\rho_{s'})\le2^{-cn}$ whenever
$n\ge n_0$ and $t/n\ge\tau_{\rm ref}+\eta$.
The constants are uniform over every prime $p>n$, all evaluation points, all local maps, and all finite reference dimensions.
\end{theorem}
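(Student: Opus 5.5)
The plan is to derive the rate directly from the finite bound of Theorem~\ref{thm:finite}, choosing $q$ and $\lambda$ as functions of the rate and optimizing. Write $t/n=\tau$ and $k=n-t+1$, so that $q=n/k$ satisfies $q\ge 1/(1-\tau)$ up to an $O(1/n)$ correction. For $\tau\ge\tau_{\rm ref}+\eta$ we get $q\ge Q_{\rm ref}+\eta'$ for some fixed $\eta'>0$ and all large $n$. To let the moment be evaluated at a fixed exponent, I will first reduce to a fixed $Q$. Lemma~\ref{lem:mds} needs $q=n/k$ exactly. However, if the true $q$ exceeds a target $Q$, the sum over the dual code can be bounded using the monotonicity of $\ell^Q$ norms: since $h_i\le 1$, we have $(\sum g^q)^{1/q}\le(\sum g^Q)^{1/Q}$ when all $g\le1$. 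That fails here because $g_i(a)=\lambda h_i(a)$ may exceed one. It is simpler to work with the exact $q$ and use continuity of $M_4$, keeping $q$ in a compact interval $[Q_{\rm ref}+\eta'/2,\,Q_{\max}]$. If $t/n$ is close to $1$, so that $q$ is large, I reduce to a smaller rate: by Lemma~\ref{lem:conditioning} applied to an empty set this does not help directly. Instead I will simply note that the bound improves with $t$, and handle large $q$ by the same optimization since $M_4(q)\to0$.

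The bound is $\lambda^{-t}[(1+M\lambda^q)^{n/q}-1]\le \lambda^{-t}\exp(n M\lambda^q/q)$, where $M=M_{4,p}(q)$. Taking logarithms and normalizing by $n$ gives the exponent
\[
 -\tau\log\lambda+\frac{M\lambda^q}{q}.
\]
Minimizing over $\lambda$ gives $\lambda^q=\tau/M$, and using $\tau\approx1-1/q$, the value is $\frac{\tau}{q}\bigl(1-\log(\tau/M)\bigr)$. This is negative iff $\log(\tau/M)>1$, i.e.\ $M<\tau/e$. The factor $e$ here may not match the root equation, so I will instead choose $\lambda$ to optimize the untruncated expression $(1+M\lambda^q)^{n/q}\lambda^{-t}$ exactly. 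Setting its derivative in $\lambda$ to zero gives $M\lambda^q=(q-1)^{-1}$ when $t/n=1-1/q$. Substituting, the normalized exponent is
\[
 \tfrac1q\log\!\bigl(q/(q-1)\bigr)-\tfrac{q-1}{q}\log\lambda
 =\tfrac1q\Bigl[\log\tfrac{q}{q-1}+\tfrac{q-1}{q}\log\bigl(M(q-1)\bigr)\Bigr].
\]
This is negative exactly when $M(q)<(q-1)q^{-q/(q-1)}$, which is the defining equation \eqref{eq:root}. Thus at $q=Q_{\rm ref}$ the exponent vanishes, and for $q>Q_{\rm ref}$ it should be strictly negative. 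I also need $\lambda\ge1$, which holds because $M(q)(q-1)<1$ in the relevant range.

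The key steps in order are as follows. First, show that $F(q)\coloneqq M_4(q)-(q-1)q^{-q/(q-1)}$ has a unique root $Q_{\rm ref}>2$ and is negative beyond it. Here $M_4$ decays geometrically, like $2\kappa(4)^q$, whereas the right side decays only like $1/(eq)$. Second, use the uniform convergence $M_{4,p}(q)\to M_4(q)$ from Lemma~\ref{lem:paired}, uniform over $p>n$ and over $q$ in compact sets, to keep a fixed negative margin. Third, absorb the $O(1/n)$ discrepancy between $t/n$ and $1-1/q$ into that margin via the slack $\eta$. Fourth, treat large $q$ by a direct estimate, using $\lambda=2$ and $M_{4,p}(q)\le 3c_p^{q-2}$.

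I expect the main obstacle to be the first step. The uniqueness of the root and the sign of $F$ throughout $(Q_{\rm ref},\infty)$ require care, because $M_4$ has the two-term form of Lemma~\ref{lem:paired}. I would try to prove that $\log M_4(q)-\log\bigl((q-1)q^{-q/(q-1)}\bigr)$ is strictly decreasing. The first term is convex in $q$ with slope at most $\log\kappa(4)^{-1}$ in absolute value, and the second has derivative tending to zero. If monotonicity proves hard, it suffices for the theorem to show negativity on $(Q_{\rm ref},\infty)$ by a numerical bound on a compact interval, plus the asymptotic estimate beyond it.
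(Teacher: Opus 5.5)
Your overall plan is the paper's: Theorem~\ref{thm:finite} with the paired moment $M_{4,p}$, a tilt $\lambda$ chosen at the stationary point, and the uniform convergence $M_{4,p}\to M_4$ from Lemma~\ref{lem:paired}. However, the step that is supposed to produce \eqref{eq:root} is miscomputed. Differentiating $\frac nq\log(1+M\lambda^q)-t\log\lambda$ in $\log\lambda$ gives $M\lambda^q/(1+M\lambda^q)=t/n$, i.e.\ $M\lambda^q=\tau/(1-\tau)=q-1$, not $(q-1)^{-1}$.

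With the value you wrote, your own displayed exponent is negative only when $M<(q-1)^{1/(q-1)}q^{-q/(q-1)}$. This is smaller than the right side of \eqref{eq:root} by the factor $(q-1)^{(q-2)/(q-1)}$, roughly $9$ near $Q_{\rm ref}\approx12.5$. So as written the argument yields only a worse rate, and your assertion that the condition coincides with \eqref{eq:root} does not follow from your formula. With $\lambda^q=(q-1)/M$ the normalized exponent is exactly
\[
 \frac{q-1}{q^2}\log\frac{M}{(q-1)q^{-q/(q-1)}},
\]
and $\lambda\ge1$ only needs $M\le q-1$. This is the paper's choice $x=\lambda^Q=\alpha/((1-\alpha)A)=(Q-1)/A$.

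Your reason for not reducing to a fixed exponent is also mistaken. For nonnegative sequences, $\|g\|_q\le\|g\|_Q$ whenever $q\ge Q$, with no need for $g\le1$: after normalizing $\|g\|_Q=1$, every entry is at most one. The paper uses exactly this. It fixes $\alpha\coloneqq\tau_{\rm ref}+\eta/2$ and $Q\coloneqq1/(1-\alpha)$, and notes $q=n/(n-t+1)\ge Q$ for $n\ge2/\eta$. It then evaluates Lemma~\ref{lem:mds} at $Q$ with one fixed $\lambda$ and obtains the limiting exponent $\frac{\alpha}{Q}\log\bigl(M_4(Q)/B(\alpha)\bigr)<0$, where $B(\alpha)\coloneqq\alpha(1-\alpha)^{(1-\alpha)/\alpha}$. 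This holds for every $t/n\ge\tau_{\rm ref}+\eta$ at once, including $t$ near $n$, and removes your compact-interval and large-$q$ split. Your $\lambda=2$ tail can be made to work, since $2c_p>1$ gives roughly $\tfrac12c_p^n(12c_p^{-2})^{n/q}$, but it is unnecessary.

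Finally, what you call the main obstacle is immediate. Both terms of $M_4$ have bases below one, so $M_4$ decreases strictly from $3$ to $0$. Meanwhile $(Q-1)Q^{-Q/(Q-1)}=B(1-1/Q)$ increases strictly from $1/4$ to $1$, because $\frac{d}{d\alpha}\log B(\alpha)=-\log(1-\alpha)/\alpha^2>0$. In particular, the right side of \eqref{eq:root} does not decay like $1/(eq)$; it tends to one.
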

\begin{proof}
Write $B(\alpha)\coloneqq \alpha(1-\alpha)^{(1-\alpha)/\alpha}$.
Its logarithmic derivative is
$-\log(1-\alpha)/\alpha^2>0$.
As $Q$ increases from two to infinity, both exponential terms in $M_4(Q)$ strictly decrease, and their sum decreases from three to zero, while
$B(1-1/Q)=(Q-1)Q^{-Q/(Q-1)}$ increases from $1/4$ to one.
This proves existence and uniqueness.

If the asserted rate range is nonempty, set
\[
 \alpha\coloneqq \tau_{\rm ref}+\eta/2<1,\quad Q\coloneqq (1-\alpha)^{-1},\quad
 A\coloneqq M_4(Q)<B(\alpha),\quad
 x\coloneqq \frac{\alpha}{(1-\alpha)A}>1.
\]
For $n\ge2/\eta$, $q=n/(n-t+1)\ge Q$.
As in the finite bound, take $g_i(0)\coloneqq 1$ and $g_i(a)\coloneqq \lambda h_i(a)$ for $a\ne0$.
Monotonicity of the unnormalized sequence norms gives
$\|g_i\|_q\le\|g_i\|_Q$ in Lemma~\ref{lem:mds}.
Taking $\lambda\coloneqq x^{1/Q}$ yields
\[
 \frac1n\log\Adv(\rho_s,\rho_{s'})
 \le \frac1Q\left[\log(1+A_p(Q)x)-\alpha\log x\right],
 \quad A_p(Q)\coloneqq M_{4,p}(Q).
\]
By the uniform convergence in Lemma~\ref{lem:paired}, the right side tends uniformly over $p>n$ to
\[
 \frac{\alpha}{Q}\log\frac{A}{B(\alpha)}<0.
\]
This supplies $c,n_0$ independent of the strategy.
The fixed-$Q$ proof also covers $t=n$.
\end{proof}

Numerically, Equation~\eqref{eq:root} gives $\tau_{\rm ref}\approx0.9197212757$.
Using the simpler moment bound instead gives $0.9331803707$, the scalar rate of the estimate for independent blocks of two devices in \cite{BKNY}.
The improvement uses the paired moment estimate, while the common dominator supplies uniformity over arbitrary finite adversary references.

\section{Endpoint constructions and growing-output estimates}\label{supp:rates}

\subsection{Constant-prefactor endpoint bounds}\label{supp:rates-endpoint}
We use the output and Schmidt-number conventions of Section~\ref{sec:sharp}.
For $K\ge2$, put
\[
 C_K\coloneqq (K-1)\kappa(K)^{-2}e^{\pi^2/18}.
\]
Let $L_d(n,p,x)$ be the supremum of the chosen-secret distance over leakage strategies using independent device--reference pairs and over pairs of distinct secrets.

Let $L_{d,r}(n,p,x)$ denote the same supremum as $L_d(n,p,x)$, restricted to initial pairs of Schmidt number at most $r$.
\begin{theorem}\label{thm:sharp}
Fix integers $d\ge2$ and $1\le r\le d$.
For $t=n\ge2$ and every prime $p>n$,
\[
 \frac12\kappa(dr)^n
 \le L_{d,r}(n,p,x)\le\min\{1,C_{dr}\kappa(dr)^n\}.
\]
For unrestricted initial pairs, the corresponding bounds are
\[
 \frac12\kappa(d^2)^n
 \le L_d(n,p,x)\le\min\{1,C_{d^2}\kappa(d^2)^n\}.
\]
In both cases, for every prescribed pair of distinct secrets there is an allowed strategy attaining the lower bound.
The optimal exponential bases are therefore $\kappa(dr)$ and $\kappa(d^2)$, respectively, uniformly in the field and evaluation points.
\end{theorem}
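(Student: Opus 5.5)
The upper and lower bounds are handled separately. Both work with $t=n$, where the dual code $\mathcal C$ has dimension $k=1$.

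\emph{Upper bound.} Since $k=1$, the code $\mathcal C$ is the line spanned by a single codeword $c$ with all coordinates nonzero; it is the GRS code with multipliers $(x_i\prod_{j\ne i}(x_i-x_j))^{-1}$. The nonzero codewords are therefore $uc$ for $u\in\F_p^\times$. Equation~\eqref{eq:fourier-security} gives
\[
 \Adv\le\sum_{u\ne0}\prod_{i=1}^n h_i(uc_i).
\]
The common dominators come from Lemma~\ref{lem:schmidt} (with $K=dr$) or Lemma~\ref{lem:reference} (with $K=d^2$). Lemma~\ref{lem:fourier} then bounds two factors of each product by Cauchy--Schwarz over $u$ and the remaining $n-2$ factors by the cap $c_{K,p}$. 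Because $u\mapsto uc_i$ is a bijection of $\F_p^\times$,
\[
 \sum_u h_1(uc_1)h_2(uc_2)\le\Bigl(\sum_{a\ne0}h_1(a)^2\Bigr)^{1/2}\Bigl(\sum_{a\ne0}h_2(a)^2\Bigr)^{1/2}\le K-1.
\]
This yields $\Adv\le(K-1)c_{K,p}^{n-2}\le(K-1)\kappa(K)^{n-2}\lambda_p^{n}$. Using $\lambda_p^n\le\exp(2\pi^2n/(3p^2))$ with $p>n\ge2$, the exponent is at most $2\pi^2/(3n)$. The factor $e^{\pi^2/18}$ in $C_K$ then suggests bounding $\lambda_p^{n-2}$ rather than $\lambda_p^n$, since only $n-2$ factors carry the cap. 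With $p\ge n+1$ and $n\ge2$, $p\ge3$ gives $(n-2)\cdot2\pi^2/(3p^2)\le2\pi^2/(3\cdot 12)=\pi^2/18$, after checking that $(n-2)/p^2\le1/12$ for $p\ge n+1$, $n\ge2$; its maximum over $n$ is at $n=4,p=5$ giving $2/25<1/12$. The bound $\min\{1,\cdot\}$ holds trivially.

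\emph{Lower bound.} Following the sketch before Corollary~\ref{cor:rate}, write $s=\sum_i\lambda_iY_i$ for full-threshold reconstruction with $\lambda_i\ne0$. Device $i$ computes $z_i=\lambda_iY_i$ and emits the arc label of Lemma~\ref{lem:arc-labels} for $z_i$, using $K$ orthonormal states $|\beta_j\rangle$.
\begin{itemize}
\item For the Schmidt-number-$r$ class, each pair starts in a maximally entangled state of Schmidt rank $r$ on $\mathbb C^d\otimes\mathbb C^r\subseteq A_i\otimes R_i$. The device applies one of $dr$ dense-coding unitaries $U_j$ on $A_i$, chosen so that the states $(U_j\otimes I)|\Phi_r\rangle$ are orthonormal. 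Such unitaries exist: use $V_{ab}=X^aZ_r^b$ restricted appropriately, i.e.\ pick $d\times d$ unitaries whose $r\times d$ compressions' Hilbert--Schmidt inner products vanish. The clean choice is generalized Paulis on $\mathbb C^d$, with $|\Phi_r\rangle$ supported on the first $r$ levels. Orthogonality of $X^aZ^b$ restricted to the top-left $r$ block needs care, and I would verify it directly. Local unitaries preserve Schmidt number, and the reference marginal is unchanged.
\item The unrestricted class is the case $r=d$.
\end{itemize}
Choose the offsets $\theta_i$ from Lemma~\ref{lem:arc-labels}, so that $h_i(1)\ge\kappa(K)$.

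With a uniform secret, $(z_1,\dots,z_n)$ is uniform on $\F_p^n$ because $t=n$. The secret-indexed outputs $\rho_s$ satisfy
\[
 \E_s\omega^{-s}\rho_s=\bigotimes_i\E_{z}\omega^{-z}\sigma_{\theta_i}(z),
\]
whose trace norm is at least $\kappa(K)^n$. Since $\sum_s\omega^{-s}=0$, this equals $\E_s\omega^{-s}(\rho_s-\rho_{s_0})$, so some $s$ satisfies $\|\rho_s-\rho_{s_0}\|_1\ge\kappa(K)^n$, i.e.\ $\Adv\ge\kappa(K)^n/2$. To reach every prescribed pair $(s,s')$, I would redo this with the frequency $u$ instead of $1$: use labels for $uz_i$ and choose $u$ appropriately. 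Alternatively, apply an affine substitution $Y_i\mapsto\alpha Y_i+\beta_i$ absorbed into the channels; this maps Shamir sharings of $a$ to sharings of $\alpha a+\beta$, where $\sum_i\lambda_i\beta_i=\beta$. This carries a good pair to any prescribed pair.

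The main obstacle is this last step. Averaging produces a good pair $(s_0,s)$ only for some $s$, so the reduction must go through the affine transport argument. It also requires the constructed class to be closed under it, which holds because the changes are local relabelings of shares.
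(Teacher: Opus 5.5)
Your proposal follows the paper's proof and is correct except for one unsettled step, the labels for $r<d$, plus a small arithmetic slip.

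\textbf{What matches the paper.} Your Cauchy--Schwarz on two coordinates, with the cap $c_{K,p}$ on the other $n-2$, gives exactly $(K-1)c_{K,p}^{n-2}$. This is what the paper obtains by specializing the simple moment bound of Theorem~\ref{thm:finite} to $q=n$. Your lower bound is the same argument as the paper's: arc labels, tensorization of the first Fourier coefficient over the uniform scaled shares, and affine transport to a prescribed pair. Your fixed-$s_0$ averaging and your offsets $\beta_i$ with $\sum_i\lambda_i\beta_i=\beta$ are both valid at $t=n$.

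\textbf{The label construction.} You must commit to your first option. Take $|\Phi_r\rangle=r^{-1/2}\sum_{j<r}|j\rangle|j\rangle$. For equal shifts, the generalized Paulis $X^aZ^b$ built from the $d$-level clock give inner product $\frac1r\sum_{j<r}e^{2\pi i(b'-b)j/d}$. This is nonzero in general; for $d=3$, $r=2$ it equals $(1+e^{2\pi i/3})/2$. So that ``clean choice'' fails.

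What you need is the phase $\mathrm{diag}(e^{2\pi ibj/r})$ on the first $r$ levels, followed by the cyclic shift mod $d$. These are the paper's Yan--Wang states $|\beta_{a,b}\rangle$. Their orthogonality is immediate. The reference index forces equal $j$, so distinct shifts give orthogonal device vectors. Equal shifts leave a full geometric sum of $r$th roots of unity, which vanishes for $b\ne b'$. Each state has reference marginal $I_r/r$ and Schmidt rank $r$, so all $dr$ labels are allowed in the restricted class.

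\textbf{The arithmetic.} The function $(n-2)/(n+1)^2$ is maximized at $n=5$, not $n=4$, where it equals exactly $1/12$. Even among primes, $(n,p)=(6,7)$ gives $4/49>2/25$. Justify the step instead by $(n+1)^2-12(n-2)=(n-5)^2\ge0$; the inequality you need still holds.
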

\begin{proof}
A local channel preserves the reference marginal and cannot increase Schmidt number.
Lemma~\ref{lem:schmidt} supplies a share-independent dominator with $K=dr$ for the restricted class.
Lemma~\ref{lem:reference} supplies $K=d^2$ for unrestricted pairs, regardless of their input dimensions.
For either value of $K$, the simpler moment bound used in Theorem~\ref{thm:finite}, with $3$ replaced by $K-1$, specializes at $t=n$ to
\[
 \Adv\le(K-1)c_{K,p}^{n-2}
 \le(K-1)\kappa(K)^{n-2}
       \exp\!\left(\frac{2\pi^2(n-2)}{3p^2}\right)
 \le C_K\kappa(K)^n.
\]
The last step uses $p\ge n+1$ and $(n-2)/(n+1)^2\le1/12$, since
$(n+1)^2-12(n-2)=(n-5)^2$.

\medskip\noindent\emph{A matching construction.}
Give each device an independent pair $r^{-1/2}\sum_{j=0}^{r-1}|j\rangle_B|j\rangle_R$, with device dimension $d$ and reference dimension $r$.
The dense-coding construction of Yan and Wang \cite[arXiv version, Section~II]{YW} produces the $K=dr$ states
\[
 |\beta_{a,b}\rangle\coloneqq \frac1{\sqrt r}\sum_{j=0}^{r-1}
 e^{2\pi i bj/r}|j+a\bmod d\rangle_B|j\rangle_R,
 \qquad 0\le a<d,\quad0\le b<r.
\]
They are orthonormal: the reference inner product forces equal $j$, the device inner product then forces equal $a$, and the geometric sum forces equal $b$.
Each state has Schmidt rank $r$ and reference marginal $I_r/r$.
The required device unitary is a diagonal phase on the first $r$ basis vectors, extended to the whole space, followed by a cyclic shift in dimension $d$.
Enumerate these states as $|\beta_j\rangle$, $0\le j<K$.
The device can select a label from its share and send its half.

Lemma~\ref{lem:arc-labels} gives an offset $\theta$ and a state map $\sigma_\theta$ with $h_\theta(1)\ge\kappa(K)$.

Let $\lambda_i\ne0$ be the reconstruction coefficients, so
$s=\sum_i\lambda_iY_i$, and let device $i$ use $\sigma_\theta(\lambda_iY_i)$.
For a uniform secret at full threshold, the variables $Z_i\coloneqq \lambda_iY_i$ are independent and uniform.
Writing $\widehat\sigma_\theta(1)=\E_z\omega^{-z}\sigma_\theta(z)$, we therefore have exactly
\[
 \E_s\omega^{-s}\rho_s=\widehat\sigma_\theta(1)^{\otimes n}.
\]
With $\bar\rho\coloneqq \E_s\rho_s$,
\[
 \norm{\widehat\sigma_\theta(1)^{\otimes n}}
 \le\E_s\norm{\rho_s-\bar\rho}
 \le\E_{s,s'}\norm{\rho_s-\rho_{s'}}
 \le2\max_{s,s'}\Adv(\rho_s,\rho_{s'}).
\]
The left side is $h_\theta(1)^n\ge\kappa(K)^n$.

Finally, if this strategy attains the bound for $u\ne u'$, then for any prescribed $s\ne s'$ set
$a\coloneqq (u'-u)/(s'-s)$ and $b\coloneqq u-as$.
The local substitution $y_i\mapsto ay_i+b$ transports uniform Shamir sharing of $s,s'$ to that of $u,u'$, respectively, since $f\mapsto af+b$ is a bijection of the polynomial spaces.
Composing the maps with this substitution proves the final assertion.
\end{proof}
The rank-$d$ construction also supplies the unrestricted lower bound.
The upper and lower bounds have the same exponential base; their prefactors may differ.
For qubit outputs, $C_4\approx6.4041221842$.
The sharper field-dependent bound $3c_{4,p}^{n-2}$ is below $0.567$ at $(n,p)=(20,23)$ and below $0.022$ at $(50,53)$.

\begin{corollary}[Classical outputs with arbitrary references]\label{cor:classical}
For independent device--reference pairs and a classical output alphabet of size $d\ge2$, the optimal full-threshold exponential base is $\kappa(d)$, with arbitrary Schmidt number in the initial pairs.
More precisely, the supremum of the leakage distance lies between $\tfrac12\kappa(d)^n$ and $\min\{1,C_d\kappa(d)^n\}$.
The lower bound holds for every prescribed pair of distinct secrets.

For classical-bit outputs, let $Q_{\rm cl}>2$ be the unique solution of
\[
 M_2(Q_{\rm cl})
 =(Q_{\rm cl}-1)Q_{\rm cl}^{-Q_{\rm cl}/(Q_{\rm cl}-1)},
 \qquad \tau_{\rm cl}\coloneqq 1-\frac1{Q_{\rm cl}}.
\]
For every fixed $\eta>0$, there are $c>0,n_0$ such that
$\Adv(\rho_s,\rho_{s'})\le2^{-cn}$ whenever
$n\ge n_0$ and $t/n\ge\tau_{\rm cl}+\eta$.
The constants are uniform over all primes $p>n$, evaluation points, local maps, and arbitrary finite input and reference dimensions.
\end{corollary}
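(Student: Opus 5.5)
The plan is to reduce both assertions to the machinery already developed for quantum outputs, with $K=d$ replacing $d^2$. Everything hinges on one domination step, which I would do first. A classical-output channel produces, for every share $y$, a joint output--reference state that is block diagonal in the designated basis of $B_i$:
\[
 \sigma_i(y)=\sum_{b=0}^{d-1}|b\rangle\langle b|\otimes M_{i,b}(y),\qquad M_{i,b}(y)\ge0,\qquad \sum_bM_{i,b}(y)=\tau_i .
\]
Here $\tau_i$ is the reference marginal, which is independent of $y$ by trace preservation. Each block then satisfies $M_{i,b}(y)\le\tau_i$, so
\[
 \sigma_i(y)\le I_B\otimes\tau_i=d\bigl((I_B/d)\otimes\tau_i\bigr).
\]
This gives a common dominator with constant $K=d$, whatever the Schmidt number of $\Omega_{A_iR_i}$. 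Equivalently, the output is separable, so Lemma~\ref{lem:schmidt} applies with $r=1$.

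For the full-threshold upper bound, I would take this dominator and repeat the proof of Theorem~\ref{thm:sharp} verbatim with $K=d$. Lemma~\ref{lem:fourier} together with the MDS estimate of Lemma~\ref{lem:mds}, in its simpler moment form at $t=n$, gives
\[
 \Adv\le(K-1)c_{K,p}^{n-2}\le C_d\kappa(d)^n,
\]
where the field factor is absorbed using $(n-2)/(n+1)^2\le1/12$. The trivial bound $1$ supplies the minimum.

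For the lower bound, I would use $d$ orthogonal basis states $|0\rangle,\dots,|d-1\rangle$ as classical labels, with a trivial reference. Lemma~\ref{lem:arc-labels} with $K=d$ gives an offset $\theta$ with $h_\theta(1)\ge\kappa(d)$. Each device emits $\sigma_\theta(\lambda_iY_i)$. The argument in Theorem~\ref{thm:sharp} then yields $\norm{\widehat\sigma_\theta(1)^{\otimes n}}\le2\max\Adv$, hence a pair with distance at least $\tfrac12\kappa(d)^n$. The affine substitution $y\mapsto ay+b$ transports this to any prescribed pair of distinct secrets, and it preserves classicality of the output.

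For the classical-bit rate, I would follow Theorem~\ref{thm:ea} with $K=2$. Lemma~\ref{lem:paired} gives the paired moment bound $M_{2,p}(Q)\to M_2(Q)$ uniformly over primes $p>n$; this step uses the dominator with $K=2$ from the first paragraph. Existence and uniqueness of $Q_{\rm cl}$ follow by the same monotonicity argument: $M_2(Q)$ decreases from $1$ to $0$ while $B(1-1/Q)$ increases from $1/4$ to $1$. The endpoint values must be checked, namely $M_2(2)=2\kappa(2)^2+2(1/2-\kappa(2)^2)=1>1/4$.

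Given $\eta$, set $\alpha=\tau_{\rm cl}+\eta/2$ and $Q=(1-\alpha)^{-1}$, so that $M_2(Q)<B(\alpha)$. The tilted MDS bound with $\lambda=x^{1/Q}$ then gives a negative normalized exponent, uniformly in $p$. The main obstacle I anticipate is small bookkeeping rather than ideas. At $K=2$, Lemma~\ref{lem:paired} notes that $j$ may vanish for finite primes, so I must verify that the limiting regime $j=1$ is reached uniformly for $p>n$. I must also confirm that $\kappa(2)^2<1/2<2\kappa(2)^2$, which holds since $\kappa(2)=2/\pi$.
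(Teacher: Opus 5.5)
Your proposal is correct and follows the paper's proof essentially step for step. The steps are the common dominator $\sigma_i(y)\le I_B\otimes\tau_i$ with $K=d$, the $t=n$ simpler moment bound and rank-one arc-label construction of Theorem~\ref{thm:sharp}, and the fixed-$Q$ argument of Theorem~\ref{thm:ea} with $M_{2,p}(Q)$ in place of $M_{4,p}(Q)$. The only difference is that you check the dominator directly from block diagonality, whereas the paper cites separability and Lemma~\ref{lem:schmidt} with $r=1$; both give the same $K=d$. The bookkeeping points you flag, namely eventual $j=1$ uniformly over $p>n$ and $\kappa(2)^2<1/2<2\kappa(2)^2$, are already handled in Lemma~\ref{lem:paired}.
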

\begin{proof}
Each local output--reference state is classical--quantum and hence separable, so Lemma~\ref{lem:schmidt} supplies $K=d$ with $r=1$ at the output.
The upper proof above applies to these states, even when the initial pair is entangled.
For the lower bound, use the $d$ orthogonal classical labels in the rank-one construction.

For the rate statement, the same share-independent dominator has $K=2$.
Lemma~\ref{lem:fourier} therefore gives $\sum_{a\ne0}h(a)^2\le1$ and the cap $c_{2,p}$.
Lemma~\ref{lem:paired} sharpens the local moment bound to $M_{2,p}(Q)$.
Apply the fixed-$Q$ proof of Theorem~\ref{thm:ea} with $M_{4,p}(Q)$ replaced by $M_{2,p}(Q)$ and $M_4(Q)$ by $M_2(Q)$.
Now $M_2(Q)$ strictly decreases from one to zero, so the same monotonicity argument gives the unique root $Q_{\rm cl}>2$ and the negative limiting exponent proves the uniform bound.
\end{proof}
In particular, a classical bit has base $2/\pi$, while an entanglement-assisted qubit has base $2\sqrt2/\pi$.
The paired estimate gives $\tau_{\rm cl}\approx0.7172542188$.
The simpler moment bound gives $0.7333927776$, agreeing with the one-qubit leakage bound without prior entanglement in \cite[Section~5.4, Theorem~16 and Corollary~1]{BKNY}; Corollary~\ref{cor:classical} asserts the paired rate $\tau_{\rm cl}$ only for classical-bit outputs.
The present classical-output statements permit arbitrary finite references and arbitrary entanglement within each initial pair.
These scalar estimates provide explicit sufficient rates; the weighted bound gives the wider above-half range in Corollary~\ref{cor:rate}.

\subsection{Quantitative list recovery and the qubit budget}\label{supp:growing}
We supply the full finite-parameter argument for Theorem~\ref{lq:theorem}.
To prove the theorem, we make the cap and squared-mass dependence in Lemma~\ref{lem:weighted} explicit.
The coding input is list recovery for GRS codes, in a fixed-list form and a growing-list form; later sections use the same lemma.

\begin{lemma}[GRS list recovery]\label{gv:quantitative-list}
\begin{enumerate}
\item For fixed $\xi>0$ and a fixed list-size bound $\ell$, there are $A,b,n_1$ such that, for $n\ge n_1$, a prime-field GRS code of length $n$ and dimension $k\le(1-\xi)n$ has at most $A n^b$ codewords agreeing with lists of size at most $\ell$ on at least $k+\lceil\xi n\rceil$ coordinates.
\item Fix $0<\eta<1$.
There is $C_\eta\ge1$, depending only on $\eta$, such that for every
integer $L\ge1$, every $n\ge C_\eta L^{9/\eta}$, every prime $p>n$,
and every length-$n$ GRS code of dimension
$1\le k\le(1-\eta)n$, the number of codewords agreeing with prescribed
coordinate lists of size at most $L$ in at least
$k+\lceil\eta n\rceil$ coordinates is at most
\[
 n^{C_\eta L^{3/\eta}}.
\]
\end{enumerate}
The constants are independent of the prime, dimension, evaluation points,
nonzero coordinate multipliers, and list contents.
\end{lemma}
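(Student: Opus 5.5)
Both parts are list-recovery bounds for Reed--Solomon codes, so the plan is to reduce the GRS case to ordinary Reed--Solomon list recovery. Then I would cite the field-independent bound of Jeronimo \cite[Theorem~1.1 and Corollary~8.1]{Jeronimo} for the fixed-list part, and Nguyen's quantitative bound \cite{Nguyen} for the growing-list part.

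The reduction uses the nonzero multipliers. A codeword $(v_iP(x_i))_i$ agrees with a list $\mathcal L_i$ at coordinate $i$ exactly when $P(x_i)\in v_i^{-1}\mathcal L_i$. The rescaled lists have the same sizes, and $P\mapsto(v_iP(x_i))_i$ is injective because $k\le n$. So the count for the GRS code equals the count for the Reed--Solomon code of degree less than $k$ on the same evaluation points, with rescaled lists. The multipliers therefore disappear, and all constants are automatically uniform in them and in the list contents.

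For part~1, I would use Jeronimo's result as the statement that degree-$<k$ Reed--Solomon codes over any field are list recoverable. Concretely, with agreement $k+\lceil\xi n\rceil\ge(R+\xi)n$ for rate $R=k/n\le1-\xi$ and input lists of fixed size $\ell$, the output list size is polynomial in $n$, with exponent and constant depending only on $\xi,\ell$ and not on the field. The step I expect to be most delicate is matching the parameter regime. Jeronimo's theorem is phrased with agreement fraction $R+\epsilon$ and gives an output size of the form $(n)^{O_{\ell,\epsilon}(1)}$, possibly only for a fixed rate. To get uniformity over all $k\le(1-\xi)n$, I would handle the rates by a covering argument. If agreement $k+\lceil\xi n\rceil$ holds, then for any $k'\ge k$ the codewords of degree $<k$ lie in the degree-$<k'$ code. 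I would round $k$ up to one of finitely many rates $k'$ on a grid of mesh $\xi/2$, keep agreement at least $k'+\xi n/2$, and take the maximum constant over the grid. Small $n$ is absorbed into $n_1$.

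For part~2, I would apply Nguyen's explicit list-recovery theorem for Reed--Solomon codes, which keeps the dependence on the input list size $L$. Its hypotheses have the form $n\ge C L^{O(1/\eta)}$, and its output bound has the form $n^{O(L^{O(1/\eta)})}$. I would verify that with agreement slack $\eta$ the exponents are at most $9/\eta$ in the length condition and $3/\eta$ in the output exponent. If the cited form uses slightly different exponents, I would use the same rate-grid trick with mesh $\eta/2$ to convert them, enlarging $C_\eta$. The degenerate cases $L=1$ and $k$ near $1$ are covered by the same bound, since the count never exceeds the trivial bound $p^k$ and the hypothesis $p>n$ is then not needed. Checking the exact exponents against Nguyen's statement is the main obstacle; the reduction itself is routine.
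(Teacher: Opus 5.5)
Your proposal follows the paper's own route: rescale each list by the inverse nonzero multiplier to reduce the GRS count to an ordinary Reed--Solomon count, then cite Jeronimo's field-independent list-cardinality bound for the fixed-list part and Nguyen's quantitative form (Theorem~4 and Appendix~A, error radius $\lfloor n-k-\eta n\rfloor$) for the growing-list part; the paper takes the exponents $9/\eta$ and $3/\eta$ directly from that citation. Two of your side remarks should be dropped: the trivial count $p^k$ cannot settle degenerate cases, because $p$ is unbounded relative to $n$, and a rate grid of mesh $\eta/2$ halves the slack, so it can only worsen the exponents in $L$ rather than convert them; neither is needed, since the cited theorems already cover all $L\ge1$ and $1\le k\le(1-\eta)n$.
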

The first part is the list-cardinality consequence of Jeronimo \cite[Theorem~1.1 and Corollary~8.1]{Jeronimo}.
The second part is Nguyen's quantitative form \cite[Theorem~4 and Appendix~A]{Nguyen}, derived from Jeronimo's interpolation and counting argument \cite[Proposition~3.7]{Jeronimo}, with error radius $\lfloor n-k-\eta n\rfloor$.
For GRS coordinate multipliers, divide the lists by the corresponding nonzero multipliers to obtain the statement for ordinary Reed--Solomon codes (all multipliers one); cardinalities and agreement counts are unchanged.
The single constant $C_\eta$ applies to all $L$ subject to the displayed length condition.

Taking $\eta=1/4$ in the second part gives an absolute constant $C\ge1$ such that, for every integer $\ell\ge1$, every $n\ge C\ell^{36}$, and every prime $p>n$, a length-$n$, dimension-$k$ GRS code with $1\le k\le3n/4$ has at most
\begin{equation}\label{lq:list}
 n^{C\ell^{12}}
\end{equation}
codewords agreeing with prescribed coordinate lists of size at most $\ell$ on at least $k+\lceil n/4\rceil$ coordinates.
The constant $C$ is independent of $\ell$ and $p$.
This explicit dependence is needed when the Fourier mass increases with $n$.

\begin{lemma}[A quantitative weighted sum]\label{lq:weighted}
Let $0<\eta<1/2$, $n\ge\max\{4,2/\eta\}$, and $(1/2+\eta)n\le t\le n$.
Let $\mathcal G$ be a length-$n$ GRS code over $\F_p$ of dimension $k\coloneqq n-t+1$, where $p>n$.
Suppose $K\ge2$ and $w_i:\F_p\to[0,1]$ satisfy
\[
 w_i(0)=1,\qquad
 \sum_{a\in\F_p}w_i(a)^2\le K,\qquad
 w_i(a)\le\kappa(K)\quad(a\ne0).
\]
Set
\[
 \theta\coloneqq (4\sqrt K)^{-2/\eta},\qquad
 \ell\coloneqq \left\lceil1+K\theta^{-2}\right\rceil.
\]
If $n\ge C\ell^{36}$, with $C$ as in Equation~\eqref{lq:list}, then
\begin{equation}\label{lq:weighted-bound}
 \sum_{a\in\mathcal G\setminus\{0\}}\prod_{i=1}^n w_i(a_i)
 \le n^{C\ell^{12}}\kappa(K)^t+2^{-n}.
\end{equation}
\end{lemma}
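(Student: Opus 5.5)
The plan follows the Sun--Wootters thresholding argument sketched after Lemma~\ref{lem:weighted}, keeping every constant explicit. For each nonzero codeword $a\in\mathcal G$, call a coordinate $i\in\supp(a)$ \emph{heavy} if $w_i(a_i)>\theta$ and \emph{light} otherwise. Let $H_i\coloneqq\{0\}\cup\{z\ne0:w_i(z)>\theta\}$. The squared-mass bound gives $|H_i|\le1+K\theta^{-2}\le\ell$, so these are bounded lists.

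The codewords then split into two classes, according to whether they agree with the lists $(H_i)$ on at least $k+\lceil n/4\rceil$ coordinates. Here agreement with a list means $a_i=0$ or $i$ is heavy.

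For the first class, the hypotheses $k=n-t+1\le(1/2-\eta)n+1\le3n/4$ and $n\ge C\ell^{36}$ are exactly those of Equation~\eqref{lq:list}. So there are at most $n^{C\ell^{12}}$ such codewords. Each nonzero codeword of $\mathcal G$, the dual of the zero-secret code, has weight at least $t$, and each nonzero coordinate contributes a factor at most $\kappa(K)$. Hence the first class contributes at most $n^{C\ell^{12}}\kappa(K)^t$.

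For the second class, at least $n-k-\lceil n/4\rceil+1\ge q\coloneqq\lceil(\eta/2)n\rceil$ coordinates are light, say; the exact count $q\ge c\eta n$ will need care with floors. Choose a set $P$ of $q$ light positions, which costs a union bound of at most $2^n$ choices. Bound the product by $\theta^{q}$ times the product of $\tilde w_i$, where $\tilde w_i\coloneqq w_i$ off $P$ and $\tilde w_i\coloneqq1$ on $P$. The remaining sum is over codewords whose coordinates in $P$ are light, but we may relax it to all codewords.

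Next I control $\sum_{a}\prod_{i\notin P}w_i(a_i)$ with two disjoint information sets $J_1,J_2$ of size $k$, avoiding $P$ if possible. The room for this is $n-q\ge2k$, which holds since $n-2k=2t-n-2\ge2\eta n-2$; choosing the constant in $q$ below $\eta n$ secures it. Drop the other factors, which are at most $1$. Cauchy--Schwarz applied through the bijections onto $J_1$ and $J_2$ bounds the sum by $\prod_{i\in J_1\cup J_2}\|w_i\|_2\le K^{k}$.

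The second class is therefore at most $2^n\theta^{q}K^{k}$. With $\theta=(4\sqrt K)^{-2/\eta}$ and $q\approx\eta n/2$, we get $\theta^q\le(4\sqrt K)^{-n}=4^{-n}K^{-n/2}$. Since $k\le n/2$, this gives $2^n\theta^qK^k\le2^{-n}$.

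The main obstacle is the exact bookkeeping: making $q$, the agreement threshold $k+\lceil n/4\rceil$, and the disjoint-information-set condition $n-q\ge2k$ simultaneously consistent with the integer floors and with $n\ge\max\{4,2/\eta\}$. If $\lceil n/4\rceil$ leaves too few light coordinates when $\eta$ is small, I would instead mark light coordinates among $J_1\cup J_2$ only. The count is $|J_1\cup J_2|=2k$, and at most $k+\lceil n/4\rceil$ coordinates overall agree with the lists. I would then use the slack from the minimum weight $t$ to adjust $\theta$'s exponent $2/\eta$ accordingly. This is where I would first check the arithmetic.
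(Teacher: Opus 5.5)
Your proposal is correct and is essentially the paper's argument: the same lists $H_i$ of size at most $\ell$, the list-recovery count \eqref{lq:list} times $\kappa(K)^t$ for words with few light coordinates, a union over at most $2^n$ marked light sets, and two disjoint information sets with Cauchy--Schwarz giving $2^n\theta^qK^k\le2^{-n}$. The only difference is cosmetic: the paper splits at fewer than $\lfloor\eta n\rfloor$ light coordinates and marks that many, whereas you split at the agreement threshold itself. Your bookkeeping worry is unfounded and the fallback in your last paragraph is unnecessary: with $q=\lceil\eta n/2\rceil$, the bounds $t-\lceil n/4\rceil\ge q$, $2t-n-2\ge q$, $\theta^q\le(4\sqrt K)^{-n}$ and $k\le n/2$ all follow from $n\ge4$ and $\eta n\ge2$.
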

\begin{proof}
We track the parameters in the weighted Fourier thresholding method of Sun and Wootters \cite[Lemma~1.15 and Section~4.2]{SW}.
Put $m\coloneqq \lfloor\eta n\rfloor$ and
\[
 S_i\coloneqq \{0\}\cup\{a\ne0:w_i(a)>\theta\}.
\]
The squared-mass bound gives $|S_i|\le\ell$.
Call a coordinate light for a word $a$ when $a_i\notin S_i$.
Such a coordinate has weight at most $\theta$.

A word with fewer than $m$ light coordinates agrees with these lists at least $n-m+1$ times.
Since
\[
 n-m+1-k-\lceil n/4\rceil
 =t-m-\lceil n/4\rceil
 \ge n/2-\lceil n/4\rceil\ge0,
\]
and $k\le n/2$, Equation~\eqref{lq:list} bounds the number of these words by $n^{C\ell^{12}}$.
Every nonzero word has at least $t$ nonzero coordinates, so its product of weights is at most $\kappa(K)^t$.

For each remaining word, mark $m$ light coordinates.
There are at most $2^n$ choices of the marked set $M$.
Two disjoint sets $A,B$ of size $k$ can be chosen outside $M$, because
\[
 n-m-2k=2t-n-m-2\ge\eta n-2\ge0.
\]
Extract the factor $\theta^m$ and discard the other unassigned factors, all of which are at most one.
Cauchy--Schwarz gives
\begin{align*}
 \sum_{\substack{a\in\mathcal G\\M\text{ is light for }a}}
       \prod_iw_i(a_i)
 &\le\theta^m
 \left(\sum_{a\in\mathcal G}\prod_{i\in A}w_i(a_i)^2\right)^{1/2}
 \left(\sum_{a\in\mathcal G}\prod_{i\in B}w_i(a_i)^2\right)^{1/2}\\
 &\le\theta^m K^k.
\end{align*}
The last step uses the bijection from an MDS code to its coordinates on any information set of size $k$; each sum is a product of $k$ coordinate sums.
Since $m\ge\eta n/2$ and $k\le n/2$, summing over the marked sets costs at most
\[
 2^n\theta^m K^k
 \le2^n(4\sqrt K)^{-n}K^{n/2}=2^{-n}.
\]
Every word in this class is counted at least once.
Adding the two contributions proves the lemma.
\end{proof}

\begin{proof}[Proof of Theorem~\ref{lq:theorem}]
We first give a bound for any product of local state families satisfying $\sigma_i(y)\le K\gamma_i$, where $K\ge2$ and each $\gamma_i$ is a fixed density operator.
Lemma~\ref{lem:fourier} gives the local Fourier norms $h_i(a)$ with
\[
 h_i(0)=1,\qquad \sum_a h_i(a)^2\le K,
 \qquad h_i(a)\le\min\{1,\kappa(K)\lambda_p\}\quad(a\ne0).
\]
These are finite inequalities for every $K$.
Define $w_i(0)\coloneqq 1$ and $w_i(a)\coloneqq h_i(a)/\lambda_p$ for $a\ne0$.
The normalized weights satisfy Lemma~\ref{lq:weighted}, including
$\sum_a w_i(a)^2\le1+(K-1)/\lambda_p^2\le K$.
The Shamir dual has dimension $n-t+1$, so the Fourier reduction and Equation~\eqref{lq:weighted-bound} give
\begin{equation}\label{lq:finite}
 \Adv(\rho_s,\rho_{s'})
 \le\lambda_p^n
 \left[n^{C\ell^{12}}\kappa(K)^t+2^{-n}\right],
 \qquad
 \lambda_p^n\le\exp\!\left(\frac{2\pi^2}{3n}\right),
\end{equation}
whenever the lemma's length conditions hold.

We verify those conditions uniformly for $2\le K\le n^{\eta/100}$.
Put $B_\eta\coloneqq 4^{4/\eta}+2$.
The definition of $\ell$ implies
\[
 \ell\le B_\eta K^{1+2/\eta},\qquad
 \ell^{36}\le B_\eta^{36}n^{9/10},\qquad
 \ell^{12}\log n\le B_\eta^{12}n^{3/10}\log n.
\]
Thus $n\ge C\ell^{36}$ eventually holds uniformly throughout this range.
For $K\ge2$, the alternating sine estimate at $x\coloneqq \pi/K\le\pi/2$ yields
\[
 \kappa(K)\le1-\frac{x^2}{6}+\frac{x^4}{120}
 \le1-\frac{x^2}{8}
 \le1-K^{-2}\le e^{-K^{-2}}.
\]
Here the middle inequalities use $x^2<5$ and $\pi^2>8$.
Consequently, with $E\coloneqq C\ell^{12}\log n$, the first term in brackets in Equation~\eqref{lq:finite} is at most
\[
 \exp\!\left(E-\frac{n}{2K^2}\right).
\]
The bound $n/K^2\ge n^{1-\eta/50}\ge n^{99/100}$ makes $E\le n/(4K^2)$ for all sufficiently large $n$, uniformly in $K$.
Both bracketed terms are then at most $\exp(-n/(4K^2))$.
Increasing the same uniform length cutoff absorbs $2\lambda_p^n$ and gives
\[
 \Adv(\rho_s,\rho_{s'})\le\exp\!\left(-\frac{n}{8K^2}\right).
\]
For $q\ge1$, Lemma~\ref{lem:reference} supplies $K=d^2=4^q\le n^{\eta/100}$, proving Equation~\eqref{lq:security}.
For $q=0$ the emitted systems are trivial and the reference marginal is unchanged, so the distance is zero.
\end{proof}

\subsection{Uniform comparison with the lower bound}\label{supp:growing-rate-proof}
The following is the complete proof of Corollary~\ref{lq:rate}.
\begin{proof}
The three dominators used in Corollary~\ref{cor:rate} are valid at every finite dimension.
Thus Equation~\eqref{lq:finite} applies with the corresponding $K$.
The uniform onset in the preceding proof ensures its finite length conditions for every permitted $K$.
The selected-$t$ construction in Corollary~\ref{cor:rate} gives $L\ge\tfrac12\kappa(K)^t$ for every permitted prime.
With $E\coloneqq C\ell^{12}\log n$, we therefore have
\[
 \frac12\kappa(K)^t\le L
 \le\lambda_p^n\left[e^E\kappa(K)^t+2^{-n}\right].
\]
Since $\kappa(K)\ge\kappa(2)=2/\pi$ and $t\le n$,
$2^{-n}/\kappa(K)^t\le(\pi/4)^n$.
Taking logarithms shows that the additive error is bounded by
$E+n\log\lambda_p+O(1)$, which proves Equation~\eqref{lq:additive-rate}.
Also $t\log(1/\kappa(K))\ge n/(2K^2)$.
The ratio of the error to this quantity is
\[
 O_\eta\!\left(K^{14+24/\eta}\frac{\log n}{n}\right)
 =O_\eta\!\left(n^{-76/100+14\eta/100}\log n\right)
 =O_\eta(n^{-69/100}\log n),
\]
proving Equation~\eqref{lq:relative-rate}.
Finally, $-\log\kappa(K)=\pi^2/(6K^2)+O(K^{-4})$ gives the last assertion.
\end{proof}
\section{Supporting estimates for qubit leakage below one half}\label{supp:bh-details}
This section supplies the truncated Fourier estimate, the real scalar contraction, and the bias calculation used in Section~\ref{bh:section}.
The centered support operators $T_A(z)$, Bloch vectors, and Gram kernels retain the definitions given there.

\subsection{The truncated Fourier estimate}
The Fourier weights in \eqref{bh:fourier-cap} and the exact-support formula \eqref{bh:exact-support} reduce the moderate supports to the following weighted code sum.

We need the truncated version of the weighted estimate, in which the code dimension may be below $n/2$ but the support cutoff is below $2\nu$.
The following form follows from Sun and Wootters \cite[Lemma~1.15 and Section~4.2]{SW}; we give the weighted Fourier thresholding argument to specify its uniformity in the field.

\begin{lemma}[Truncated weighted dual sum]\label{bh:truncated}
Fix $0<\gamma<1$, $\Gamma>0$, $c_0>0$, and $\eta>0$.
There are constants $A,b,n_1>0$, depending only on these four parameters, such that the following holds.
Let $p>n\ge n_1$ be prime, let $\mathcal G$ be a GRS code of dimension $\nu\ge c_0n$, and let the integer cutoff $\nu\le m\le n$ satisfy $2\nu-m\ge\eta n$.
If $w_i:\F_p^\times\to[0,\gamma]$ satisfy $\sum_{a\ne0}w_i(a)^2\le\Gamma$, then
\begin{equation}\label{bh:truncated-bound}
 \sum_{\substack{0\ne y\in\mathcal G^\perp\\|\supp(y)|\le m}}
       \prod_{i\in\supp(y)}w_i(y_i)
 \le (A n^b+1)\gamma^\nu.
\end{equation}
The constants are independent of the field, code parameters, and weights.
\end{lemma}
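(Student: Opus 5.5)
The plan is to adapt the weighted Fourier thresholding argument already used in Lemma~\ref{lq:weighted}, with the fixed-list form of Lemma~\ref{gv:quantitative-list}, to the truncated setting. First I would fix the threshold and the lists. Choose a cutoff $\theta\in(0,\gamma)$, to be fixed later depending only on $\gamma,\Gamma,c_0,\eta$. Put $S_i\coloneqq\{0\}\cup\{a\ne0:w_i(a)>\theta\}$. The squared-mass bound gives $|S_i|\le1+\Gamma\theta^{-2}\eqqcolon\ell$, a fixed list size.

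The dual $\mathcal G^\perp$ is GRS of dimension $k\coloneqq n-\nu$. Every nonzero dual word has weight at least $\nu+1$, since $\mathcal G$ is MDS. Hence each term in \eqref{bh:truncated-bound} has at least $\nu$ nonzero factors, each at most $\gamma$, and is bounded by $\gamma^{\nu}$; the extra factor $\gamma\le 1$ can be discarded. Let $m'\coloneqq\lfloor\eta n/4\rfloor$, and split the words of support size at most $m$ according to whether they have fewer than $m'$ light coordinates, that is, coordinates with $y_i\notin S_i$.

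\emph{Heavy words.} A word with fewer than $m'$ light coordinates agrees with the lists on at least $n-m'+1$ coordinates. For the first part of Lemma~\ref{gv:quantitative-list} I need this to be at least $k+\lceil\xi n\rceil$ for a fixed $\xi>0$, and also $k\le(1-\xi)n$. Here $n-m'-k=\nu-m'\ge c_0n-\eta n/4$, which is a positive linear margin if $\eta<2c_0$; otherwise I replace $m'$ by $\lfloor\min\{\eta,c_0\}n/4\rfloor$. Also $k=n-\nu\le(1-c_0)n$. So with $\xi\coloneqq\min\{c_0,\eta\}/8$, the lemma bounds the count by $An^b$. Each such word contributes at most $\gamma^\nu$, giving the $An^b\gamma^\nu$ term.

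\emph{Light-rich words.} I expect this to be the main obstacle, because $k=n-\nu$ may exceed $n/2$, so two disjoint information sets of the dual need not exist. The truncation is what rescues the argument: a word of support at most $m$ vanishes on at least $n-m$ coordinates. I would mark a set $M$ of $m'$ light coordinates and restrict to words supported inside a set $U\supseteq M$ with $|U|=m$. There are at most $4^n$ choices of $(M,U)$. The dual words supported in $U$ form a punctured code, namely the dual of $\mathcal G$ shortened. This code is MDS of length $m$ and dimension $m-\nu$, since any $\nu$ columns of the parity matrix restricted to $U$ are independent. The condition $2\nu-m\ge\eta n$ gives $m-|M|-2(m-\nu)=2\nu-m-m'\ge3\eta n/4$. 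So two disjoint information sets $A,B\subseteq U\setminus M$ of size $m-\nu$ exist. Cauchy--Schwarz as in Lemma~\ref{lq:weighted} then bounds this class by $4^n\theta^{m'}(1+\Gamma)^{m-\nu}$, using $w_i(0)\coloneqq1$ so that each coordinate squared sum is at most $1+\Gamma$. Choosing $\theta$ small enough that $4\,\theta^{\eta/8}(1+\Gamma)\le\gamma$ makes this at most $\gamma^{n}\le\gamma^\nu$ for large $n$. This gives the "$+1$" term, and adding the two classes proves \eqref{bh:truncated-bound}, with uniformity inherited from Lemma~\ref{gv:quantitative-list}.
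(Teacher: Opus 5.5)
Your proposal is correct and follows essentially the same route as the paper. Both arguments use a fixed threshold with bounded lists, the fixed-list GRS recovery bound for words with few light coordinates, and, for the remaining words, a size-$m$ container with a marked light set. On that container one uses the shortened $[m,m-\nu]$ MDS code, two disjoint information sets outside the marked set, and Cauchy--Schwarz; the paper differs only in marking $\lfloor\nu-m/2\rfloor$ light coordinates and counting $3^n$ pairs. Your fallback case is unnecessary, since $\nu\ge 2\nu-m\ge\eta n$ already gives $\nu-\lfloor\eta n/4\rfloor\ge 3\nu/4$.
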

\begin{proof}
The coding input is the first part of Lemma~\ref{gv:quantitative-list}.

The case $m=\nu$ is empty by the dual minimum distance $\nu+1$.
Otherwise put $s\coloneqq \lfloor\nu-m/2\rfloor$.
After increasing $n_1$, we have $s\ge\eta n/4>0$ and $s\le2\nu-m$.
Choose a fixed $\tau\in(0,\gamma)$ so small that
\[
 3(1+\Gamma)\tau^{\eta/4}\le\gamma,
\]
and form the coordinate lists
$H_i\coloneqq \{0\}\cup\{a\ne0:w_i(a)>\tau\}$.
They have size at most $1+\Gamma/\tau^2$, independently of $p$.
Call a nonzero coordinate light when its weight is at most $\tau$.

A word with fewer than $s$ light coordinates agrees with these lists on at least $n-s+1$ coordinates.
The dual GRS code has dimension $k\coloneqq n-\nu$, and
$\nu-s\ge m/2\ge c_0n/2$.
Thus the quoted list bound applies with fixed slack $\xi=c_0/4$, for sufficiently large $n$.
There are at most $A n^b$ such words.
Each nonzero dual word has weight at least $\nu+1$, so their total contribution is at most $A n^b\gamma^{\nu+1}$.

For every remaining word of weight at most $m$, select a container $S\subseteq[n]$ of size $m$ containing its support and a set $L\subseteq S$ of $s$ light nonzero coordinates.
There are at most $3^n$ possible pairs $(S,L)$.
Restrict to $S$ the subcode of $\mathcal G^\perp$ that vanishes outside $S$. This shortening has dimension $m-\nu$ and is MDS.
Because $s\le2\nu-m$, the set $S\setminus L$ contains disjoint sets $J_1,J_2$, each of size $m-\nu$; each is an information set of this shortened code.
Extend the weights by $w_i(0)\coloneqq 1$.
For a fixed $(S,L)$, bound the $L$ factors by $\tau^s$, discard the other factors outside $J_1\cup J_2$, which are at most one, and use Cauchy--Schwarz.
Bijectivity of the two information projections gives
\begin{align*}
 &\sum_{\substack{y\in\mathcal G^\perp,\ \supp(y)\subseteq S\\
                   y_i\ne0,\ w_i(y_i)\le\tau\ (i\in L)}}
        \prod_{i\in S}w_i(y_i)\\
 &\qquad\le\tau^s
   \prod_{j=1}^2\left(
      \sum_{\substack{y\in\mathcal G^\perp\\\supp(y)\subseteq S}}
      \prod_{i\in J_j}w_i(y_i)^2\right)^{1/2}
 \le\tau^s(1+\Gamma)^{m-\nu}.
\end{align*}
The sum over all containers and selected light sets is at most
$3^n\tau^s(1+\Gamma)^{m-\nu}\le\gamma^n\le\gamma^\nu$.
Combining the two classes proves \eqref{bh:truncated-bound}.
\end{proof}

In our application, \eqref{bh:fourier-cap} supplies $\gamma=2^{-1/2}$ and $\Gamma=1$.
Thus any fixed positive gap $(2\nu-m)/n$ gives an exponentially small contribution from the supports in \eqref{bh:low-reduction}.
The next step keeps the full centered tensor on each of the larger supports.

\subsection{Real functions on the product group}
The squared norm of a real tensor moment will introduce two independent codewords.
Its coordinate functions consequently live on $\F_p^2$.
We establish the scalar estimates on $G\coloneqq \F_p^r$, where $r\ge1$, with normalized averages throughout.
For $f:G\to\mathbb C$, write
\[
 \|f\|_{L^2(G)}^2\coloneqq \E_x|f(x)|^2,\qquad
 \|f\|_{U^j(G)}^{2^j}
 \coloneqq \E_{x,h_1,\ldots,h_j}
   \prod_{\epsilon\in\{0,1\}^j}
   \mathfrak C^{|\epsilon|}
   f(x+\epsilon_1h_1+\cdots+\epsilon_jh_j),
\]
where $\mathfrak C$ is complex conjugation; for $j\ge2$, $\|f\|_{U^j(G)}$ are the Gowers uniformity norms.
In particular, $\|f\|_{U^2}^4=\sum_{\xi\in G}|\widehat f(\xi)|^4$ for the normalized character transform on $G$.

The real Fourier-pairing argument in Nguyen \cite[Lemma~16]{Nguyen} gives the following form for bounded mean-zero functions.
We record the proof because its product-group domain is needed below.

\begin{lemma}\label{bh:real-gap}
If $p$ is odd, $f:\F_p^r\to\mathbb R$, $\E f=0$, and $|f|\le1$, then
\[
 \|f\|_{U^3(\F_p^r)}\le\gamma_*,
 \qquad \gamma_*=(3/4)^{1/8}<1.
\]
\end{lemma}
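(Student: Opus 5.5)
We will prove $\|f\|_{U^3}^8\le 3/4$ by writing the $U^3$ norm as an average of $U^2$ norms of multiplicative derivatives. Each $U^2$ norm is then bounded by a Fourier-pairing argument that uses only reality and oddness of $p$.

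\emph{Step 1 (reduction to derivatives).} For $h\in G=\F_p^r$ put $g_h(x)\coloneqq f(x)f(x+h)$, which is real with $|g_h|\le1$. Since $f$ is real, the definition of the Gowers norm gives
\[
 \|f\|_{U^3}^8=\E_h\|g_h\|_{U^2}^4=\E_h\sum_{\xi\in G}|\widehat{g_h}(\xi)|^4 .
\]

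\emph{Step 2 (pairing bound for a bounded real function).} Let $g$ be real with $|g|\le1$. Put $a\coloneqq\widehat g(0)^2\in[0,1]$ and $S\coloneqq\E g^2\le1$. Reality gives $\widehat g(-\xi)=\overline{\widehat g(\xi)}$. Because $p$ is odd, $\xi\ne-\xi$ for every $\xi\ne0$ in $\F_p^r$, so nonzero frequencies come in pairs of equal modulus. Parseval then gives $\max_{\xi\ne0}|\widehat g(\xi)|^2\le(S-a)/2$. Hence
\[
 \sum_\xi|\widehat g(\xi)|^4\le a^2+\max_{\xi\ne0}|\widehat g(\xi)|^2\,(S-a)\le a^2+\tfrac12(1-a)^2\eqqcolon\varphi(a).
\]

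\emph{Step 3 (averaging over $h$).} Put $a_h\coloneqq\widehat{g_h}(0)^2=(\E_xf(x)f(x+h))^2$. Then $\E_ha_h=\|f\|_{U^2}^4=\sum_\xi|\widehat f(\xi)|^4$. Now apply the same pairing to $f$ itself. Since $\E f=0$, we have $\widehat f(0)=0$, so every frequency in the support is nonzero and paired. Thus $\max_\xi|\widehat f(\xi)|^2\le\E f^2/2\le1/2$, and $\E_ha_h\le\tfrac12\sum_\xi|\widehat f(\xi)|^2\le\tfrac12$. The function $\varphi$ is convex on $[0,1]$ with $\varphi(0)=1/2$ and $\varphi(1)=1$, so it lies below its chord: $\varphi(a)\le\tfrac12+\tfrac a2$. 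Therefore
\[
 \|f\|_{U^3}^8\le\E_h\varphi(a_h)\le\tfrac12+\tfrac12\E_ha_h\le\tfrac34 ,
\]
which gives $\|f\|_{U^3}\le(3/4)^{1/8}=\gamma_*$.

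The only delicate point is Step 2: we must use reality and odd $p$ to halve the largest nonzero Fourier square, rather than settle for the trivial bound $\sum|\widehat g|^4\le1$. Combined with convexity in Step 3, this halving is what makes the gap strictly below one. The remaining steps are routine identities, valid for every $r\ge1$.
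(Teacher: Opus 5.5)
Your proof is correct and follows essentially the same route as the paper: the Gowers recursion to the derivatives $f(x)f(x+h)$, the odd-$p$ Fourier-pairing bound $\|q\|_{U^2}^4\le M^4+\tfrac12(S-M^2)^2$ for bounded real $q$, and $\E_h M_h^2=\|f\|_{U^2}^4\le\tfrac12$. The only cosmetic difference is the last step. You use $S\le1$ and the convex chord $\varphi(a)\le\tfrac12+\tfrac a2$, whereas the paper uses $M^2\le S$ to get $\tfrac12S^2+\tfrac12M^4$ and then $M_h^4\le M_h^2$; both give the identical bound $\tfrac12+\tfrac12\E_hM_h^2\le\tfrac34$.
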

\begin{proof}
For a real function $q$ on $G$, set $M\coloneqq \E q$ and $S\coloneqq \E q^2$.
Every nonzero frequency occurs in a distinct pair $\{\xi,-\xi\}$ of equal magnitude.
Parseval therefore gives
\[
 \|q\|_{U^2}^4
 \le M^4+\frac12(S-M^2)^2
 \le\frac12 S^2+\frac12 M^4,
\]
where the last inequality uses $M^2\le S$.
Apply this to $q_h(x)\coloneqq f(x+h)f(x)$ and put $M_h\coloneqq \E_x q_h(x)$.
We have $S_h\le1$, $|M_h|\le1$, and
\[
 \E_h M_h^2=\|f\|_{U^2}^4
 \le\tfrac12\|f\|_{L^2}^4\le\tfrac12.
\]
The Gowers recursion now yields
\[
 \|f\|_{U^3}^8
 =\E_h\|q_h\|_{U^2}^4
 \le\tfrac12+\tfrac12\E_h M_h^4
 \le\tfrac12+\tfrac12\E_h M_h^2
 \le\tfrac34.
\]
\end{proof}

We also need a product-group form of the MDS von Neumann inequality of Sun and Wootters \cite[Lemmas~2.8--2.9]{SW}, which they state over $\F_p$; we give the proof.
Let $F_i:G^\nu\to G$ apply the $i$th row of an $n\times\nu$ scalar MDS generator matrix componentwise.
For $J\subseteq[n]$, write $F_J\coloneqq (F_j)_{j\in J}$.
The MDS condition says that every at-most-$\nu$ collection of scalar rows is independent.
It therefore makes the corresponding map onto $G^J$ surjective, component by component.

\begin{lemma}\label{bh:product-mds}
Let $L_i:G\to\mathbb C$ and $c_i\in G$.
If $2|J|\le3\nu-n$, then
\begin{equation}\label{bh:product-mds-bound}
 \left|\E_{x\in G^\nu}\prod_{i=1}^n L_i(F_i(x)+c_i)\right|
 \le\prod_{j\in J}\|L_j\|_{U^3(G)}
       \prod_{i\notin J}\|L_i\|_\infty.
\end{equation}
\end{lemma}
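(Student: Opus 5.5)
The plan is to follow the Sun--Wootters proof of the MDS von Neumann inequality. The argument is a finite sequence of Cauchy--Schwarz steps, each of which eliminates one direction of variation. The MDS surjectivity onto $G^J$, applied componentwise, plays the role of a change of variables.

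First, reduce the statement. Bound every factor $L_i$ with $i\notin J$ by its sup norm after the Cauchy--Schwarz steps. The shifts $c_i$ can be absorbed into $L_i$, since $U^3$ and sup norms are translation invariant. We may also assume $|J|=j\ge1$. Since $2j\le 3\nu-n$ and $j\le n$, we get $3\nu\ge n+2j$, so in particular $\nu\ge j$.

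Next, choose an information structure. Because every collection of at most $\nu$ rows is independent, we can find coordinates on $G^\nu$ adapted to $J$. Let $W=[n]\setminus J$, of size $n-j$. The goal is to write $x=(u,y)$ so that, for each $j_0\in J$, the factor $L_{j_0}(F_{j_0}(x))$ is the only factor depending on all three of three distinguished directions. Every factor with index $i\ne j_0$ should be independent of at least one of those directions.

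Concretely, the plan mirrors the proof of the standard generalized von Neumann inequality for $U^3$. Handle one $j_0\in J$ at a time. Partition the other $n-1$ indices into three groups $P_1,P_2,P_3$, and find three vectors $h_1,h_2,h_3\in\F_p^\nu$ such that $F_i(h_k)=0$ for all $i\in P_k$ and $F_{j_0}(h_k)\ne0$. By the MDS property, a nonzero vector annihilated by a set $P$ of rows exists as soon as $|P|\le\nu-1$, and it is then nonzero on any further row, because $P$ together with that row has size at most $\nu$ and is independent. Such a partition requires $n-1\le 3(\nu-1)$.

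Apply Cauchy--Schwarz three times, in the directions $h_1,h_2,h_3$ with $G$-valued multipliers. This removes the factors in $P_1,P_2,P_3$ in turn. The outcome is the bound $\|L_{j_0}\|_{U^3}\prod_{i\ne j_0}\|L_i\|_\infty$ for one index. The rescaling by the nonzero scalar $F_{j_0}(h_k)$ is harmless because $\F_p$-dilation preserves $U^3$ on $G$.

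To obtain the product over all of $J$ simultaneously, iterate. After the three Cauchy--Schwarz steps for $j_0$, the remaining expression is an average, over extra variables, of products of the other $L_j$ for $j\in J\setminus\{j_0\}$ evaluated along a shortened MDS structure. The dimension drops by a bounded amount each time: each $j$ costs a loss of $3/2$ in $3\nu-n$, which is consistent with the hypothesis $2|J|\le 3\nu-n$. Alternatively, apply Cauchy--Schwarz jointly: choose for each $j\in J$ its three directions inside a common subspace, so that the $3j$ directions are independent. Then use the Gowers--Cauchy--Schwarz inequality on the resulting box average. In this version, each $L_j$ appears on its own $3$-dimensional parallelepiped, and the counting is $n-j\le 3(\nu-j)$, which is exactly $2j\le3\nu-n$.

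The main obstacle is this bookkeeping. One must verify that the $3j$ annihilating directions can be chosen using only MDS independence of at most $\nu$ rows, that the factors outside $J$ are each killed by some Cauchy--Schwarz step, and that the resulting nested averages factor into $\prod_{j\in J}\|L_j\|_{U^3}$ rather than a mixed Gowers inner product. I would try the joint version first, proving the required independence by a Vandermonde-free rank count: each direction lies in the kernel of at most $\nu-j$ rows, which is a subspace of dimension $\ge j$.
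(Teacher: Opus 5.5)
Your one-index argument is fine: it is the classical MDS von Neumann bound and gives $|\Lambda|\le\|L_{j_0}\|_{U^3}\prod_{i\ne j_0}\|L_i\|_\infty$. One small correction: a nonzero $h\in\ker F_P$ need not be nonzero on a further row unless $|P|=\nu-1$. Pick instead $h\in\ker F_P$ with $F_{j_0}(h)\ne0$, which exists because the rows $P\cup\{j_0\}$ are independent.

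The content of the lemma, however, is the product over all of $J$, and neither of your routes reaches it. The iteration route fails at once. In the one-index scheme every $L_j$ with $j\in J\setminus\{j_0\}$ lies in $P_1\cup P_2\cup P_3$ and is absorbed into a sup-norm bound. After the three Cauchy--Schwarz steps only the cube of $L_{j_0}$ is left, so there is no ``shortened MDS structure'' carrying the other factors, and the ``loss of $3/2$ per index'' is not derived from anything.

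The joint route has the right counting, $n-|J|\le3(\nu-|J|)$, but the condition you propose to verify is not the right one. Read literally, $3|J|$ linearly independent directions in $\F_p^\nu$ force $\nu\ge3|J|$. The hypothesis does not give this: in the application in Section~\ref{bh:section}, $\nu\approx n/2$ while $|J|=j_0\approx n/4$. Nor does the dimension count $\dim\ker F_{P_k}\ge|J|$ by itself make the final average factor.

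The missing idea is to keep $J$ together as one function $H(u)\coloneqq\prod_{j\in J}L_j(u_j)$ on $G^J$ and to use whole subgroups as directions.

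\begin{itemize}
\item Partition $[n]\setminus J$ into $P_1,P_2,P_3$ with $|P_k|\le\nu-|J|$; this is possible exactly when $2|J|\le3\nu-n$. Put $K_k\coloneqq\ker F_{P_k}\le G^\nu$.
\item Make one Cauchy--Schwarz step per $k$, averaging over all of $K_k$. After normalizing the functions outside $J$, $\prod_{i\in P_k}L_i\circ F_i$ and all its iterated differences are $K_k$-invariant and bounded by one. So $|\E_xQ(x)P(x)|^2\le\E_{x\in G^\nu,\,h\in K_k}P(x)\overline{P(x+h)}$ removes that group, and after three steps only $\Delta_{h_3}\Delta_{h_2}\Delta_{h_1}(H\circ F_J)$ survives.
\item The decisive fact is that the at most $\nu$ rows indexed by $J\cup P_k$ are independent, so $F_J\colon K_k\to G^J$ is onto. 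Hence, with $x$ uniform on $G^\nu$ and $h_k$ uniform on $K_k$, the variables $F_Jx,F_Jh_1,F_Jh_2,F_Jh_3$ are independent and uniform on $G^J$.
\item The last average is therefore exactly $\|H\|_{U^3(G^J)}^8$. Because $H$ is a tensor product, this equals $\prod_{j\in J}\|L_j\|_{U^3(G)}^8$.
\end{itemize}

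What the argument needs is independence of random variables obtained through these surjections, not linear independence of direction vectors. No Gowers--Cauchy--Schwarz step or per-index dilation enters. This is precisely the bookkeeping you left open, and it carries the whole multi-index statement.
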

\begin{proof}
Translate the functions to absorb $c_i$; both norms are translation invariant.
Divide the functions outside $J$ by their supremum norms, the case of a zero norm being immediate.
Put $V\coloneqq G^\nu$ and $H(u)\coloneqq \prod_{j\in J}L_j(u_j)$ on $G^J$.
The condition $2|J|\le3\nu-n$ is equivalent to
$n-|J|\le3(\nu-|J|)$.
Partition $[n]\setminus J$ into three sets $I_1,I_2,I_3$ of size at most $\nu-|J|$, and set
\[
 K_a\coloneqq \ker F_{I_a},\qquad
 Q_a(x)\coloneqq \prod_{i\in I_a}L_i(F_i(x))\quad(1\le a\le3).
\]
Each $Q_a$ is $K_a$-invariant and bounded in magnitude by one.
The map $F_J:K_a\to G^J$ is onto: for prescribed values on $J$ and zero values on $I_a$, the independent scalar rows indexed by $J\cup I_a$ admit a solution in each component of $G$.

For a subgroup $K\le V$ and a $K$-invariant function $Q$ with $|Q|\le1$, averaging on cosets of $K$ and applying Cauchy--Schwarz give
\begin{equation}\label{bh:invariant-cs}
 |\E_x Q(x)P(x)|^2
 \le\E_x|\E_{h\in K}P(x+h)|^2
 =\E_{x\in V,\,h\in K}P(x)\overline{P(x+h)}.
\end{equation}
Define $\Delta_hP(x)\coloneqq P(x)\overline{P(x+h)}$.
Starting from
$\Lambda\coloneqq \E_xH(F_Jx)Q_1(x)Q_2(x)Q_3(x)$,
apply \eqref{bh:invariant-cs} first with $K_1$.
For each subsequent application, first use Cauchy--Schwarz in the increments already introduced, then use \eqref{bh:invariant-cs} with $K_a$ in the $x$ variable.
Indeed, every iterated derivative of $Q_a$ remains $K_a$-invariant and bounded by one.
Inductively, for $a=1,2,3$, this gives
\[
 |\Lambda|^{2^a}
 \le\E_{\substack{h_1\in K_1,\ldots,h_a\in K_a\\x\in V}}
   \Delta_{h_a}\cdots\Delta_{h_1}(H\circ F_J)(x)
   \prod_{b=a+1}^3\Delta_{h_a}\cdots\Delta_{h_1}Q_b(x).
\]
At each induction step, the right side after the new application is a nonnegative average of squared coset means, as in \eqref{bh:invariant-cs}.
At $a=3$, only the eight-vertex cube of $H\circ F_J$ remains.
The four variables $F_Jx,F_Jh_1,F_Jh_2,F_Jh_3$ are independent and uniform on $G^J$, by the surjections proved above.
Normalized averages thus identify the last bound with $\|H\|_{U^3(G^J)}^8$.
Its cube average factors over $j\in J$, giving
$\|H\|_{U^3(G^J)}^8=\prod_{j\in J}\|L_j\|_{U^3(G)}^8$.
Taking eighth roots and restoring the supremum norms proves the claim.
For $J=\varnothing$, the same conclusion is the elementary bound by the product of supremum norms.
\end{proof}

In particular, the proof applies to two independently chosen codewords by taking $r\coloneqq 2$.
The underlying scalar code retains dimension $\nu$; the two copies are encoded by the componentwise maps into $\F_p^2$.

\subsection{The bias estimate and the fixed constants}\label{supp:bh-bias}
Lemma~\ref{bh:real-gap} applies to $K_i/(1+r_i)^2$ on $\F_p^2$.
Thus the scalar estimates above apply to the Gram kernels of \eqref{bh:gram} and give \eqref{bh:small-bias}.

For the complementary bias range, the two-information-set argument of Nguyen \cite[Lemma~17]{Nguyen} has a direct qubit variance form.
Partition $[n]$ into $I_1,I_2,E_0$ with
$|I_1|=|I_2|=\nu$ and $|E_0|=n-2\nu$.
The eigenvalues of $(u_i(x)\cdot\sigma)/2$ are $\pm|u_i(x)|/2$, so
$\|\rho_i(x)-\bar\rho_i\|_1=|u_i(x)|\le2$.
The triangle inequality and Cauchy--Schwarz yield
\begin{align*}
 \|T_A(z)\|_1
 &\le\E_X\prod_{i\in A}|u_i(X_i)|\\
 &\le2^{n-2\nu}\prod_{j=1}^2
   \left(\E_X\prod_{i\in A\cap I_j}|u_i(X_i)|^2\right)^{1/2}
 \le2^{n-2\nu}\prod_{i\in A\setminus E_0}\sqrt{1-r_i^2}.
\end{align*}
The last step uses independence within each MDS marginal $X_{I_j}$ and the variance identity
$\E_x|u_i(x)|^2=\E_x|v_i(x)|^2-r_i^2\le1-r_i^2$.
Since $\sqrt{1-r_i^2}\le e^{-r_i^2/2}$ and
$\sum_{i\in A\cap E_0}r_i^2\le n-2\nu$, we obtain Equation~\eqref{bh:large-bias}.

We combine these two estimates with the support cutoff, following Nguyen \cite[Proposition~4 and Section~6.3]{Nguyen} and the low/high support division of Sun and Wootters \cite[Section~4.3]{SW}.
For $c\coloneqq \nu/n$ and $\beta>0$, split according to whether $B_A\le\beta n$ or $B_A>\beta n$.
Equations~\eqref{bh:small-bias} and~\eqref{bh:large-bias} then give
\begin{equation}\label{bh:combined-bias}
 \begin{gathered}
 \|T_A(z)\|_1\le\exp\!\left(nF(c,\beta)+\lambda_*/2\right),
 \\[1mm]
 F(c,\beta)\coloneqq \max\left\{
  \frac{3c-1}{4}\log\gamma_*+\sqrt\beta,
  (1-2c)(\log2+\tfrac12)-\frac\beta2
 \right\},
 \end{gathered}
\end{equation}
where $\lambda_*\coloneqq -\log\gamma_*=\log(4/3)/8$.
The term $\lambda_*/2$ bounds the floor error in $j_0$ uniformly.

\begin{lemma}[The large supports]\label{bh:large-support-sum}
Let $c_{\rm q}\le c=\nu/n\le1/2$, and fix
\[
 \beta\coloneqq 2^{-18},\qquad \epsilon\coloneqq 10^{-8},\qquad
 m\coloneqq \lfloor(1-\epsilon)n\rfloor.
\]
If Equation~\eqref{bh:combined-bias} holds for every support $A$ with $|A|>m$, then
\[
 \sum_{|A|>m}\|T_A(z)\|_1\le(n+1)\exp(-10^{-6}n+\lambda_*/2).
\]
\end{lemma}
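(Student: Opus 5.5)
The estimate is a counting argument: group the large supports by the size of their complement, and bound each group by the number of its supports times the uniform per-support bound \eqref{bh:combined-bias}. The counting step is elementary. The real content is checking numerically that $F(c,\beta)$ stays below $-10^{-6}$ minus the entropy cost throughout $c_{\rm q}\le c\le1/2$.

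\emph{Step 1: counting.} Every $A$ with $|A|>m$ has complement of size $j=n-|A|<n-m\le\epsilon n+1$. For each such $j$, the number of supports is $\binom nj$. Since $j\le\epsilon n+1$, the standard binary-entropy estimate gives
\[
 \binom nj\le\exp\bigl(nH(\epsilon)+O(\log n)\bigr),\qquad H(x)\coloneqq-x\log x-(1-x)\log(1-x),
\]
or more crudely $\binom nj\le(en/j)^j$. For $\epsilon=10^{-8}$ one has $H(\epsilon)\approx\epsilon(\log(1/\epsilon)+1)\approx1.94\cdot10^{-7}$. There are at most $n+1$ values of $j$, which accounts for the prefactor $n+1$. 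Using \eqref{bh:combined-bias},
\[
 \sum_{|A|>m}\|T_A(z)\|_1\le(n+1)\max_j\binom nj\exp\bigl(nF(c,\beta)+\lambda_*/2\bigr).
\]
It therefore suffices to show $F(c,\beta)+H(\epsilon)\le-10^{-6}$, with a little slack to absorb the rounding of $m$.

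\emph{Step 2: the numerical check.} Both branches of $F$ are monotone in $c$, so I would evaluate each at the endpoint where it is largest.

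For the first branch, $\frac{3c-1}{4}\log\gamma_*$ is most negative at large $c$ and largest at $c=c_{\rm q}$. With $\lambda_*=\log(4/3)/8\approx0.035960$ and $\frac{3c_{\rm q}-1}{4}\approx0.125$, this term is about $-0.0044950$. Adding $\sqrt\beta=2^{-9}\approx0.0019531$ gives about $-0.0025419$.

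For the second branch, $(1-2c)(\log2+\frac12)$ is at most $2\cdot10^{-9}\cdot1.1932\approx2.4\cdot10^{-9}$. Subtracting $\beta/2=2^{-19}\approx1.9073\cdot10^{-6}$ gives about $-1.905\cdot10^{-6}$.

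Hence $F\le-1.905\cdot10^{-6}$ on the whole range. Adding $H(\epsilon)\approx1.94\cdot10^{-7}$ gives about $-1.71\cdot10^{-6}<-10^{-6}$.

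\emph{Main obstacle.} The delicate point is that the margin is of order $10^{-6}$, so the entropy term must be bounded cleanly and non-asymptotically. I would use $\binom nj\le(en/j)^j$ for $1\le j\le\epsilon n+1$, together with the fact that $(en/j)^j$ is increasing on this range. This gives
\[
 \log\binom nj\le(\epsilon n+1)\log\frac{e}{\epsilon}=n\epsilon(1+\log10^{8})+O(\log n)\approx1.94\cdot10^{-7}n+O(\log n).
\]
The $O(\log n)$ overhead, together with the $+1$ from the floor in $m$, is absorbed by the roughly $7\cdot10^{-7}n$ of spare exponent once $n$ is large. The statement carries no length threshold, so I would either take the lemma in its intended large-$n$ regime or spend part of the slack explicitly on this overhead. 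The term $\lambda_*/2$ is carried through unchanged from \eqref{bh:combined-bias}.
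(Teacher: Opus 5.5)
Your route is the same as the paper's. You count the large supports by complement size with a binary-entropy bound, use that both branches of $F(c,\beta)$ decrease in $c$ so that $c=c_{\rm q}$ is the worst case, and verify $H(\epsilon)+F(c_{\rm q},\beta)<-10^{-6}$. Your values for the two branches ($\approx-2.54\cdot10^{-3}$ and $\approx-1.905\cdot10^{-6}$) and for $H(\epsilon)\approx1.94\cdot10^{-7}$ are correct. They are consistent with the paper's certified bounds $-1/1024$, $3\delta-2^{-19}$ and $H(\epsilon)<25\cdot10^{-8}$.

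The one defect is in Step~1, and it matters because the lemma has no length threshold. You loosen the complement size to $j<n-m\le\epsilon n+1$, which creates an additive overhead. In your $(en/j)^j$ computation this overhead is the constant $\log(e/\epsilon)\approx19.4$, not really $O(\log n)$. The spare exponent of about $7\cdot10^{-7}n$ absorbs it only for $n\gtrsim2.7\cdot10^{7}$. Slack proportional to $n$ cannot pay a fixed constant uniformly, and restricting to large $n$ proves a weaker statement than the one asserted.

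There is in fact no floor loss to absorb. The condition $|A|>m=\lfloor(1-\epsilon)n\rfloor$ gives $|A|\ge m+1>(1-\epsilon)n$, so $j=n-|A|<\epsilon n$ strictly. Since $\epsilon\le1/2$ and $H$ increases on $[0,1/2]$, each such $j$ satisfies $\binom nj\le e^{nH(j/n)}\le e^{nH(\epsilon)}$ with no additive error. Summing over at most $n+1$ values of $j$ gives the count $(n+1)e^{nH(\epsilon)}$ for every $n$, which is how the paper proceeds; your numerical check then finishes the proof.

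Alternatively, you could keep your estimate and add the case $n<10^8$ separately. There $m=n-1$, so only $A=[n]$ occurs, and the claim follows directly from $F<-10^{-6}$. This range overlaps the one your overhead bound already covers.
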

\begin{proof}
Put $\delta\coloneqq 10^{-9}$, so that $c_{\rm q}=1/2-\delta$.
Let $H(\epsilon)\coloneqq -\epsilon\log\epsilon-(1-\epsilon)\log(1-\epsilon)$.
An integer support size larger than $m$ has complement size less than $\epsilon n$.
Hence there are at most $(n+1)e^{nH(\epsilon)}$ such supports, and \eqref{bh:combined-bias} bounds their total trace norm by
\begin{equation}\label{bh:high-sum}
 (n+1)\exp\!\left(n[H(\epsilon)+F(c,\beta)]+\lambda_*/2\right).
\end{equation}

We verify a fixed negative exponent using elementary inequalities.
The integral bounds $1/4<\log(4/3)<1/3$ give $1/32<\lambda_*<1/24$.
Both entries of $F(c,\beta)$ decrease with $c$, so it suffices to use $c=1/2-\delta$.
Their respective upper bounds are
\[
 -\frac1{256}+\frac\delta{32}+\frac1{512}
 <-\frac1{1024},\qquad
 3\delta-2^{-19}.
\]
Also $H(\epsilon)\le\epsilon\log(e/\epsilon)<25\cdot10^{-8}$, using $\log10<3$.
Since $2^{-19}>1.9\cdot10^{-6}$, it follows that
\[
 H(\epsilon)+F(c,\beta)<-10^{-6}
 \quad(c_{\rm q}\le c\le1/2).
\]
Thus \eqref{bh:high-sum} is at most
$(n+1)\exp(-10^{-6}n+\lambda_*/2)$.
\end{proof}

\section{Marginal consequences and support estimates}\label{supp:support}

\subsection{Product and approximate product marginals}\label{supp:marginal-consequences}
The following consequences use the exact-conditioning and reference-filter arguments of Section~\ref{sec:correlation}.
\begin{corollary}[Product marginal]\label{cor:product}
Under the residual-threshold condition $n-|I|<t$, the same security bound as for independent device--reference pairs holds whenever
$\Omega_{A_I}=\bigotimes_{i\in I}\tau_i$, after revealing the complement shares.
The full initial state may be arbitrary and multipartite.
If $|I|\ge b_0n$ for fixed $b_0\in(0,1]$, this gives exponential security whenever
$t/n\ge1-b_0/2+\eta$ for fixed $\eta>0$.
The same rate holds for every fixed output dimension, including classical outputs with arbitrary finite references.
\end{corollary}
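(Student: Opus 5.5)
The plan is to apply Theorem~\ref{thm:domination} with $\Lambda=1$, since an exact product marginal $\Omega_{A_I}=\bigotimes_{i\in I}\tau_i$ is the equality case of the marginal domination hypothesis. Put $S\coloneqq[n]\setminus I$, $m\coloneqq|I|$ and $r\coloneqq n-m<t$. Theorem~\ref{thm:domination} then bounds $\Adv(\rho_s,\rho_{s'})$ by $\varepsilon_{\rm ref}(m,t-r,p)$, the bound for independent device--reference pairs on an $(m,t-r)$ Shamir instance over $\F_p$. That bound is valid because the reference filter acts trivially (its success probability is one) and transforms the product purification into an arbitrary extension of $\Omega_{A_I}$, including the shares revealed on $S$ by Lemma~\ref{lem:conditioning}. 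This already gives the first assertion: the security statement for independent pairs transfers verbatim, and the full multipartite state is unrestricted.

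For the rate, we need the residual instance to lie in the range of Theorem~\ref{thm:half}. The residual length is $m$ and the residual threshold is $t-r=t-n+m$. It suffices to take $m\ge b_0n$, since we may shrink $I$ to any subset of size $\lceil b_0 n\rceil$: a partial trace of a product is still a product, and the complement stays revealed. Then
\[
 \frac{t-r}{m}=1-\frac{n-t}{m}\ge 1-\frac{(b_0/2-\eta)n}{b_0n}=\frac12+\frac{\eta}{b_0}\ge\frac12+\eta .
\]
Also $r<t$ holds, because $n-t\le(b_0/2-\eta)n<b_0n\le m$. Theorem~\ref{thm:half} then applies with margin $\min\{\eta,1/2-\epsilon\}$ and length $m\ge b_0n\to\infty$. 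Its constants are uniform over primes $p>n\ge m$, evaluation points (the affine coordinate changes are absorbed into the channels), secrets and reference dimensions. It gives $\varepsilon_{\rm ref}\le 2^{-c m}\le 2^{-cb_0 n}$.

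For classical outputs or general fixed output dimension $d$, the same route uses Corollary~\ref{cor:rate} (respectively Corollary~\ref{cor:classical}) in place of Theorem~\ref{thm:half}, with domination constant $K=d$ or $d^2$. This is justified by the remark after Theorem~\ref{thm:domination}: the classical-output restriction is a property of each channel's output register and is checked on the product purification, where the outputs remain classical.

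The only delicate point I expect is bookkeeping of uniformity. Theorem~\ref{thm:half} requires $p>m$, which holds since $p>n$. Its $n_0$ must be met by $m$, so we take $n\ge n_0/b_0$. The factor $\exp(2\pi^2 m/(3p^2))$ is at most $e^{2\pi^2/3}$. None of the constants depend on the revealed branch $z$, so averaging over $z$ preserves the bound.
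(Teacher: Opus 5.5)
Your proposal is correct and follows the paper's route. You apply Theorem~\ref{thm:domination} with $\Lambda=1$, then Theorem~\ref{thm:half} on the residual $(m,t-r)$ instance, and your margin computation $(t-r)/m\ge1/2+\eta/b_0$ does the job of the paper's $t-r-m/2\ge\eta n\ge\eta m$. The separate classical-output step is unnecessary, since classical $d$-symbol outputs are already $d$-dimensional outputs covered by Theorem~\ref{thm:half}; if you do want the sharper $K=d$ bound, the statement to cite is the classical class of Corollary~\ref{cor:rate}, because the rate statement of Corollary~\ref{cor:classical} alone only reaches $\tau_{\rm cl}$.
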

For the rate statement, put $m\coloneqq |I|$ and $r\coloneqq n-m$.
Then $t-r-m/2\ge\eta n\ge\eta m$.
Theorem~\ref{thm:half} applies to the residual instance, uniformly over its field and evaluation points.
The elementary rates $1-b_0(1-\tau_{\rm ref})+\eta$ and $1-b_0(1-\tau_{\rm cl})+\eta$ also follow directly from Theorem~\ref{thm:ea} and Corollary~\ref{cor:classical}, respectively.

Approximate product marginals can be treated with an additive error.
For normalized states define the root fidelity
$F(\rho,\sigma)\coloneqq \|\sqrt\rho\sqrt\sigma\|_1$ and the purified distance
$P(\rho,\sigma)\coloneqq \sqrt{1-F(\rho,\sigma)^2}$.
If a normalized $\bar\eta_{A_I}$ satisfies
$P(\Omega_{A_I},\bar\eta_{A_I})\le\delta$ and
$\bar\eta_{A_I}\le\Lambda\bigotimes_i\tau_i$, the extension property of purified distance gives
\begin{equation}\label{eq:smooth}
 \Adv(\rho_s,\rho_{s'})
 \le \min\{1,2\delta+\Lambda\varepsilon_{\rm ref}(m,t-r,p)\}.
\end{equation}
Indeed, lift $\bar\eta_{A_I}$ to a $\delta$-close extension of the full resource, apply Theorem~\ref{thm:domination}, and use two triangle-inequality terms after the secret-dependent channels.
The marginal comparison uses the selected marginal of the full initial state.

\subsection{The direct support upper bound}\label{supp:support-upper-proof}
We give the complete proof of Theorem~\ref{sup:upper}, with its two resource premises kept separate.
\begin{proof}
We adapt the weighted Fourier thresholding argument of Sun and Wootters \cite[Lemma~1.15 and Section~4.2]{SW}, applying its cutoff only to the outside coordinates.
The support is one block whose squared Fourier mass may grow exponentially with $r$.

First consider Equation~\eqref{sup:coarse-product}.
A product of purifications of $\Omega_{A_S}$ and the states $\tau_i$, $i\in O$, purifies the entire device marginal.
Every external extension is obtained from this purification by a CPTP map on its reference.
That map is independent of the shares, commutes with the device channels, and contracts trace distance.
It therefore suffices to prove the bound for a product of one support--reference resource and the outside device--reference resources.
Under Equation~\eqref{sup:no-outside-reference}, the outside state families have trivial reference registers.

Let $\sigma_S(z_S)$ contain the support outputs and their reference, and set
\[
 H(v)\coloneqq \norm{\E_{z_S}\omega^{-\langle v,z_S\rangle}\sigma_S(z_S)},
 \qquad v\in\F_p^S.
\]
Write $h_i(a)$ for the outside Fourier norms.
The support output has dimension at most $d^r$ and its reference marginal is fixed.
Lemmas~\ref{lem:reference} and~\ref{lem:fourier}, applied with domination constant $d^{2r}$, give
\begin{equation}\label{sup:block-mass}
 H(0)=1,\qquad 0\le H(v)\le1,
 \qquad\sum_v H(v)^2\le d^{2r}.
\end{equation}
For $r=0$, use $H=1$ and squared mass one.
For outside devices the domination constant is $K=d^2$ in the first model and $K=d$ in the second, since a state on a $d$-dimensional output satisfies $\sigma_i(y)\le I_d$.
With $\lambda_p$ from Lemma~\ref{lem:fourier} and
\[
 c\coloneqq \kappa(K),
 \qquad w_i(0)\coloneqq 1,\quad w_i(a)\coloneqq \frac{h_i(a)}{\lambda_p}\quad(a\ne0),
\]
the same lemmas give
\begin{equation}\label{sup:outside-weights}
 0\le w_i\le1,\qquad w_i(a)\le c\quad(a\ne0),
 \qquad\sum_a w_i(a)^2\le d^2.
\end{equation}
Only the outside coefficients have been normalized.

Let $\mathcal C=\mathcal C_0^\perp$ be the dual of the zero-secret sharing space.
By Lemma~\ref{lem:vandermonde-kernel}, it is a generalized Reed--Solomon code of dimension $k\coloneqq n-t+1$ and minimum nonzero weight $t$, whose elements are $(v_iP(x_i))_i$ with
\[
 v_i\coloneqq \left(x_i\prod_{j\ne i}(x_i-x_j)\right)^{-1},
 \qquad \deg P<k.
\]
Fourier inversion across the independent blocks, as in Equation~\eqref{eq:fourier-security}, gives
\begin{equation}\label{sup:fourier-sum}
 \Adv(\rho_s,\rho_{s'})
 \le\lambda_p^N
 \sum_{a\in\mathcal C\setminus\{0\}}
       H(a_S)\prod_{i\in O}w_i(a_i).
\end{equation}
Indeed, the zero frequency cancels and the phase difference has modulus at most two, canceling the factor $1/2$ in trace distance.

Put
\[
 \zeta\coloneqq \frac{\min\{\eta,\xi\}}4,
 \qquad\gamma\coloneqq \frac\xi4,
 \qquad q\coloneqq \lfloor\zeta n\rfloor.
\]
Increase $n_0$, depending only on the fixed parameters, so that
\begin{equation}\label{sup:integer-budgets}
 q\ge\frac{\zeta n}2,\qquad
 q\le\min\{n-2k,N-k\},\qquad
 N-q\ge k+\lceil\gamma N\rceil.
\end{equation}
These choices are uniform: the exact dimension margins are
\[
 n-2k=2t-n-2\ge2\eta n-2,\qquad
 N-k=t-r-1\ge\xi n-1,\qquad N\ge\xi n,
\]
and
\[
 N-q-k-\lceil\gamma N\rceil\ge\frac{\xi n}2-2.
\]
Choose a fixed $0<\theta<c$ so small that
\begin{equation}\label{sup:cutoff}
 (2d)\theta^{\zeta/2}\le c.
\end{equation}
At each outside coordinate use the list
\[
 L_i\coloneqq \{0\}\cup\{a\ne0:w_i(a)>\theta\}.
\]
Its size is at most the fixed integer $\lceil1+d^2/\theta^2\rceil$.
Call an outside coordinate light for a codeword if its symbol is not in this list.

For words with fewer than $q$ light coordinates, there are at least $N-q+1$ agreements with the outside lists.
Since $N\ge k$, puncturing $\mathcal C$ to $O$ is injective and gives a length-$N$, dimension-$k$ generalized Reed--Solomon code.
We use the first part of Lemma~\ref{gv:quantitative-list} at length $N$ with slack $\gamma$: the number of words agreeing on at least $k+\lceil\gamma N\rceil$ coordinates is polynomial in $N$.
Both the dimension and agreement conditions follow from Equation~\eqref{sup:integer-budgets}; $N\ge\xi n$ supplies the uniform length cutoff.
Hence at most $A_0n^{b_0}$ original words occur in this part of the sum.
Every nonzero word of $\mathcal C$ has at least $t-r$ nonzero outside coordinates, so each contributes at most $c^{t-r}$.
The contribution of these words is at most $A_0n^{b_0}c^{t-r}$.

For the other words, mark a set $M\subseteq O$ of $q$ light coordinates and sum over all such sets.
There are at most $2^N$ choices.
For each fixed $M$, extract the factor $\theta^q$ and choose two disjoint collections of whole blocks with injective coordinate projections on $\mathcal C$.
If $r\ge k$, use the support block as the first collection and any $k$ unmarked outside coordinates as the second; the latter exist because $q\le N-k$.
If $r<k$, use the support block together with $k-r$ unmarked outside coordinates for the first collection and another $k$ outside coordinates for the second.
This is possible because $q\le n-2k$.
When $r=0$, this simply gives two disjoint information sets.

Each collection contains at least $k$ scalar coordinates, so the MDS property makes its projection injective.
Let $F_1(a)$ and $F_2(a)$ be the products of the weights in the two collections, including $H(a_S)$ in the collection containing the support.
Discarding the remaining factors, which are at most one, and applying Cauchy--Schwarz gives
\[
 \sum_{\substack{a\in\mathcal C\\M\text{ light for }a}}
       H(a_S)\prod_{i\in O}w_i(a_i)
 \le\theta^q
       \left(\sum_{a\in\mathcal C}F_1(a)^2\right)^{1/2}
       \left(\sum_{a\in\mathcal C}F_2(a)^2\right)^{1/2}
 \le\theta^q d^n.
\]
For the last step, injectivity bounds each squared sum by the unrestricted sum over its block coordinates.
Those sums factor into the squared masses in Equations~\eqref{sup:block-mass} and~\eqref{sup:outside-weights}.
The collections are disjoint, so their combined square-root cost is at most
$(d^{2r}(d^2)^N)^{1/2}=d^n$.
The full contribution of this part is therefore at most
\[
 2^N\theta^qd^n
 \le\bigl[(2d)\theta^{\zeta/2}\bigr]^n
 \le c^n\le c^{t-r}.
\]
Combine the two parts, absorb the additional constant into $A$, and use
$\lambda_p^N\le\exp(2\pi^2N/(3p^2))$ in Equation~\eqref{sup:fourier-sum}.
\end{proof}

\begin{remark}[Joint support access and classical mixtures]\label{sup:extensions}
Theorem~\ref{sup:upper} also permits a single share-dependent CPTP map on all of $A_S$, with access to $Y_S$, provided its total output dimension is at most $d^r$.
Communication within $S$ is thus allowed in this enlarged model; every emitted register counts toward that dimension.
Outside devices still act locally, and communication involving them or adversary feedback remains excluded.
The proof uses only the block output dimension and its fixed reference marginal.

The bound also holds for finite share-independent mixtures whose \emph{whole conditional resources} satisfy the stated premise, even with different support sets of size at most $r$.
Enlarge each branch's support to size $r$ by including independent outside factors, grant the mixture label to the adversary, and average the uniform bound.
\end{remark}

\subsection{The support-dependent lower construction}\label{supp:support-lower-proof}
The following is the complete proof of Proposition~\ref{sup:lower}.
\begin{proof}
Choose a reconstruction set $T\supseteq S$ of size $t$ and let $Q\coloneqq T\setminus S$.
Its nonzero reconstruction coefficients satisfy $\sum_{i\in T}\lambda_iY_i=s$.
Choose $u\in\F_p^\times$ with $u(s'-s)=(p-1)/2$.
The support shares the state
\[
 \frac{|0\rangle^{\otimes r}+|1\rangle^{\otimes r}}{\sqrt2}.
\]
Device $i\in S$ applies the phase $\omega^{u\lambda_iY_i}$ to its $|1\rangle$ component and emits the resulting qubit.
Put $Z_j\coloneqq u\lambda_jY_j$ for $j\in Q$.
The support phase is exactly $us-\sum_{j\in Q}Z_j$.
Since $|Q|=t-r<t$, the variables $Z_j$ are mutually independent and uniform for each fixed secret.

For each $j\in Q$, use the arc labels of Lemma~\ref{lem:arc-labels}.
For $K=d^2$, the $K$ labels are the orthogonal Bell states obtained locally from an independent maximally entangled $d$-dimensional pair, as in Yan and Wang \cite[arXiv version, Section~II]{YW}.
For $K=d$, use $d$ orthogonal output states without a reference.
The lemma gives a physically allowed state map $\sigma_j$ with
\[
 H_j\coloneqq \E_z\omega^{-z}\sigma_j(z),\qquad\norm{H_j}\ge\kappa(K),
\]
for every permitted prime.
Device $j$ applies this map to $Z_j$.
Devices outside $T$ emit fixed states, which can be omitted from the distance calculation.

Write
\[
 G\coloneqq \bigotimes_{j\in Q}H_j,
 \qquad D\coloneqq \bigotimes_{j\in Q}\E_z\sigma_j(z).
\]
In the basis $|0\rangle^{\otimes r},|1\rangle^{\otimes r}$ of the support subspace, uniform averaging over $Z_Q$ gives
\[
 \rho_s=\frac12
 \begin{pmatrix}
 D&\omega^{-us}G^\dagger\\
 \omega^{us}G&D
 \end{pmatrix}.
\]
For $Q=\varnothing$, take $G=D=1$.
The diagonal blocks are independent of the secret, and
\[
 \norm{\begin{pmatrix}0&X^\dagger\\X&0\end{pmatrix}}=2\norm{X}.
\]
Consequently,
\[
 \Adv(\rho_s,\rho_{s'})
 =\frac{|\omega^{us}-\omega^{us'}|}{2}\norm{G}
 =\cos\!\left(\frac\pi{2p}\right)
       \prod_{j\in Q}\norm{H_j}
 \ge\cos\!\left(\frac\pi{2p}\right)\kappa(K)^{t-r}.
\]
The initial GHZ state and outside resources satisfy the corresponding resource premise, and every channel used is local.
\end{proof}

\subsection{A revelation consequence near one half}\label{supp:support-near-half}
\begin{corollary}[A revelation bound near rate one half]\label{sup:below-half}
Let $c_{\rm q}=1/2-10^{-9}$, and let $a,n_{\rm q}>0$ be the constants in Theorem~\ref{bh:main}.
Suppose the full resource satisfies Equation~\eqref{sup:no-outside-reference}, every device emits at most one qubit, and
\begin{equation}\label{sup:below-half-condition}
 N\coloneqq n-r\ge n_{\rm q},\qquad
 t-r-1\ge c_{\rm q}(n-r).
\end{equation}
Then every chosen-secret distance is at most $2e^{-a(n-r)}$.
The assertion also holds for share-independent classical mixtures of such whole resources, with a common residual length cutoff.
For fixed $R>c_{\rm q}$, it gives exponential security whenever $t/n\ge R$ and $r/n\le\alpha$ with
\[
 \alpha<\frac{R-c_{\rm q}}{1-c_{\rm q}}.
\]
In particular, at $R=1/2$ it permits every fixed
$\alpha<2\cdot10^{-9}/(1+2\cdot10^{-9})$ for sufficiently large $n$.
\end{corollary}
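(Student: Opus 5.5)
We plan to reveal the support shares and then apply the below-half qubit estimate to the remaining devices. Condition on $Y_S=z$. By Lemma~\ref{lem:conditioning}, $Y_S$ is uniform for each secret. Given $Y_S=z$, the shares on $O=[n]\setminus S$ are coordinatewise affine images of an $(N,t-r)$ Shamir instance, and the affine maps are absorbed into the outside channels. Under Equation~\eqref{sup:no-outside-reference}, the full resource is $\Omega_{A_SE}\otimes\bigotimes_{i\in O}\tau_i$. With $z$ fixed, the support outputs together with $E$ form a state $\omega_z$ on $B_SE$ that no longer depends on the residual secret, because the support maps use only $Y_S$. The outside devices act on independent product states with no reference. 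Each therefore emits a qubit according to an independent state map $\rho_i(y)=\Phi_i^{y}(\tau_i)$. The conditional output for secret $s$ is $\omega_z\otimes R^{(z)}_s$, where $R^{(z)}_s$ is the output of the residual instance.

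Next we identify the residual secret. Following Lemma~\ref{lem:conditioning}, the residual polynomial $g$ has degree below $t-r$ and constant term $(s-h_z(0))/L_S(0)$. So the outside share vector is uniform on a translate of a GRS code of dimension $t-r-1$, with evaluation points $x_i$, $i\in O$, and multipliers $L_S(x_i)x_i\ne0$; the translate depends on $s$ and $z$. Hypothesis~\eqref{sup:below-half-condition} gives $\nu/N=(t-r-1)/(n-r)\ge c_{\rm q}$ and $N\ge n_{\rm q}$. Theorem~\ref{bh:main} then places $R_s^{(z)}$ within $e^{-aN}$ of the common product of means $R_*$. The comparison state is independent of both the translate and $s$, since the state maps and their uniform means are fixed once $z$ is fixed. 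The triangle inequality gives $\Adv\le2e^{-aN}$ conditionally. Tensoring with $\omega_z$ preserves trace norm, and averaging over $z$ with equal weights $p^{-r}$ by convexity gives the bound. A possible issue is that $p>N$ is needed, but this holds because $p>n\ge N$.

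For classical mixtures, we give the adversary the branch label and apply the bound in each branch. The common cutoff $n_{\rm q}$ applies uniformly because the constants of Theorem~\ref{bh:main} are absolute. We then average and discard the label.

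For the rate statement, let $t\ge Rn$ and $r\le\alpha n$. We need $t-r-1\ge c_{\rm q}(n-r)$, which follows from $Rn-1-c_{\rm q}n\ge(1-c_{\rm q})r$. Since $r\le\alpha n$ and $\alpha(1-c_{\rm q})<R-c_{\rm q}$ strictly, this holds for large $n$. We also have $N\ge(1-\alpha)n\to\infty$. The exponent is $a(1-\alpha)n$. At $R=1/2$, the condition becomes $\alpha<10^{-9}/(1/2+10^{-9})=2\cdot10^{-9}/(1+2\cdot10^{-9})$. The main step needing care is the second: Theorem~\ref{bh:main} must be applied to the shifted GRS coset with multipliers coming from the conditioning. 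This works because the theorem is uniform over multipliers and translates.
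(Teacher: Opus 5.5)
Your proposal is correct and follows the paper's proof. You reveal $Y_S$, use Lemma~\ref{lem:conditioning} to view the outside shares as a uniform coset of a GRS code of dimension $t-r-1$ acted on by independent qubit state maps, and apply Theorem~\ref{bh:main} with the triangle inequality against the common product of means. You then average over the secret-independent revealed shares and mixture labels, and reduce the rate claim to $(1-c_{\rm q})r\le t-1-c_{\rm q}n$. Your explicit multipliers $x_iL_S(x_i)$ and conditional state $\omega_z\otimes R_s^{(z)}$ are just a more detailed form of the paper's remark that the support and reference outputs are common postprocessing of $Y_S$.
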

\begin{proof}
Reveal $Y_S$ and apply Lemma~\ref{lem:conditioning}.
The outside shares become affine images of an $(N,t-r)$ Shamir instance; its zero-secret code has dimension $t-r-1$.
Theorem~\ref{bh:main} bounds each secret output by $e^{-aN}$ from the product of the local average states.
That product is the same for both secrets, since uniform averaging is unchanged by the nonzero affine coordinate transformations.
The triangle inequality therefore gives $2e^{-aN}$.
The support and reference output can be simulated from $Y_S$ by common CPTP postprocessing, and the revealed-share distribution is secret independent.
Averaging the conditional bounds proves the claim, also for labeled whole-resource mixtures.
The finite condition is exactly
\[
 r\le\left\lfloor\frac{t-1-c_{\rm q}n}{1-c_{\rm q}}\right\rfloor,
\]
besides the residual length cutoff.
The fixed strict margin in the displayed fraction absorbs the $-1$ and keeps $N$ linear in $n$.
\end{proof}
This last corollary composes the revelation reduction of \cite[Theorem~17]{BKNY} with Theorem~\ref{bh:main}.
It gives a support guarantee at and slightly below rate one half under Equation~\eqref{sup:no-outside-reference}.
For fixed $R>1/2$, its support region is contained in that of Theorem~\ref{sup:upper}.

\section{Weighted block estimates and marginal transfer}\label{supp:blocks}

\subsection{The scalar estimate for bounded blocks}\label{supp:blocks-weighted}
Equation~\eqref{mb:slice-bound} gives a squared-mass bound that grows with the block length and a nonzero-frequency cap independent of that length.
To use these two estimates, we extend the weighted Fourier thresholding argument of Sun and Wootters \cite[Lemma~1.15 and Section~4.2]{SW} to weights depending jointly on bounded groups of coordinates.
The large weights in a group give bounded scalar lists after coordinate projection; the remaining weights are controlled by two disjoint collections of whole groups.

\begin{lemma}[Weighted sums over bounded blocks]\label{mb:weighted}
Fix an integer $\ell\ge1$, constants $c_*\in(0,1)$ and $K\ge1$, and $\epsilon\in(0,1/2)$.
There are constants $A,b,n_0>0$, depending only on these parameters, such that the following holds.
Let $n\ge n_0$, let $p\ge n$ be prime, and let $\mathcal G$ be a length-$n$ generalized Reed--Solomon code of dimension
\[
 1\le k\le(1/2-\epsilon)n,
 \qquad \Delta\coloneqq n-k+1.
\]
Partition its coordinates into nonempty sets $I_1,\ldots,I_M$ of size at most $\ell$.
Suppose $w_j:\F_p^{I_j}\to[0,1]$ satisfy
\[
 w_j(0)=1,\qquad
 w_j(v)\le c_*\quad(v\ne0),\qquad
 \sum_v w_j(v)^2\le K.
\]
Then
\begin{equation}\label{mb:weighted-bound}
 \sum_{a\in\mathcal G\setminus\{0\}}
       \prod_{j=1}^M w_j(a_{I_j})
 \le A n^b c_*^{\lceil\Delta/\ell\rceil}.
\end{equation}
The constants are uniform over the prime field, evaluation points, nonzero coordinate multipliers, partition, and weights.
\end{lemma}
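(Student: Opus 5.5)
We adapt the thresholding argument already used in Lemma~\ref{lq:weighted} and in the proof of Theorem~\ref{sup:upper}, with whole blocks in place of single coordinates. First we fix a small cutoff $\theta\in(0,c_*)$, to be chosen at the end, and a fraction $\zeta\coloneqq\epsilon/(4\ell)$. For each block we put
\[
 L_j\coloneqq\{0\}\cup\{v\ne0:w_j(v)>\theta\}\subseteq\F_p^{I_j}.
\]
The squared-mass bound gives $|L_j|\le1+K/\theta^2$, a constant. Projecting $L_j$ to each coordinate $i\in I_j$ gives scalar lists $L_i'$ of size at most $1+K/\theta^2$. A block is \emph{heavy} for a word $a$ when $a_{I_j}\in L_j$, and \emph{light} otherwise. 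A light block has weight at most $\theta$.

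\textbf{Step 1 (few light blocks).} Consider words with fewer than $q\coloneqq\lfloor\zeta n\rfloor$ light blocks. Every coordinate in a heavy block agrees with its scalar list. Hence such a word agrees with the lists $L_i'$ on at least $n-\ell q\ge n-\epsilon n/4$ coordinates. Since $k\le(1/2-\epsilon)n$, this exceeds $k+\lceil\xi n\rceil$ with fixed slack $\xi=\epsilon/2$. The first part of Lemma~\ref{gv:quantitative-list} therefore bounds the number of such words by $A_0n^{b_0}$.

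For each such nonzero word, its support has size at least $\Delta$ by the MDS property. The support therefore meets at least $\lceil\Delta/\ell\rceil$ blocks, and each of these blocks carries a nonzero block vector with weight at most $c_*$. Blocks with zero vector contribute $w_j(0)=1$. So this part of the sum contributes at most $A_0n^{b_0}c_*^{\lceil\Delta/\ell\rceil}$.

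\textbf{Step 2 (many light blocks).} For the remaining words, mark a set $\mathcal M$ of exactly $q$ light blocks; there are at most $2^M\le2^n$ choices. For each fixed $\mathcal M$, extract the factor $\theta^q$. The marked blocks cover at most $\ell q\le\epsilon n/4$ coordinates, so the unmarked blocks cover at least $n-\epsilon n/4\ge 2k+\ell+\ell$ coordinates for large $n$. Greedily choose two disjoint collections $\mathcal B_1,\mathcal B_2$ of whole unmarked blocks, each covering at least $k$ coordinates. Each collection then contains an information set, so the projection of $\mathcal G$ onto its coordinates is injective.

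Discard the other factors, which are at most one, and apply Cauchy--Schwarz. Injectivity bounds each squared sum by the full product sum $\prod_{j\in\mathcal B_r}\sum_vw_j(v)^2\le K^{|\mathcal B_r|}$. The total number of blocks is at most $n$, so this part contributes at most $2^n\theta^qK^{n}$. We choose $\theta$ with
\[
 2K\theta^{\zeta/2}\le c_*^{1/\ell}.
\]
Since $q\ge\zeta n/2$ for large $n$, the total is at most $c_*^{n/\ell}\le c_*^{\lceil\Delta/\ell\rceil}$, up to at most one extra factor absorbed into $A$.

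\textbf{Main obstacle.} The delicate step is Step 1. Block-level list membership must translate into scalar agreement in a form Jeronimo's bound accepts. The lists are products of coordinate projections, not the original block lists, so the polynomial count must be taken over the coarser scalar lists. It must also be checked that $n-\ell q$ retains a fixed linear slack uniformly over all partitions; this is why $q$ is scaled by $1/\ell$.

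Separately, the exponent $\lceil\Delta/\ell\rceil$ relies only on counting blocks that meet the support, and does not use that $k\le n/2$. That margin is used only in Step 2, to fit two disjoint information collections after removing the marked blocks.
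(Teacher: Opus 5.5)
Your proposal is correct and follows the paper's proof: threshold each block at a fixed $\theta$, project the heavy-block lists to bounded scalar lists so that fixed-list GRS recovery counts the words with few light blocks, and otherwise mark a linear number of light blocks, pick two disjoint collections of whole unmarked blocks each covering at least $k$ coordinates, and apply Cauchy--Schwarz with injective projections. The only difference is bookkeeping: the paper marks $\lfloor\epsilon n/\ell\rfloor$ blocks and uses $r\ge\epsilon M/(2\ell)$ to obtain $(2\sqrt K\theta^{\epsilon/(2\ell)})^M\le c_*^M$ directly, whereas you count per coordinate and absorb the single factor $c_*^{-1}$ coming from $\lceil\Delta/\ell\rceil\le n/\ell+1$ into $A$, which is equally valid.
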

\begin{proof}
Put $r\coloneqq \lfloor\epsilon n/\ell\rfloor$ and take $n$ sufficiently large that
\begin{equation}\label{mb:information-budget}
 r\ge\frac{\epsilon n}{2\ell},\qquad
 n-\ell r\ge(1-\epsilon)n\ge2k+\ell-1.
\end{equation}
For example, the displayed inequalities hold for $n\ge\max\{4,2\ell/\epsilon\}$.
Choose a fixed $\theta\in(0,c_*)$ so small that
\[
 2\sqrt K\,\theta^{\epsilon/(2\ell)}\le c_*.
\]
For each block define
\[
 H_j\coloneqq \{0\}\cup\{v\ne0:w_j(v)>\theta\}.
\]
Its size is at most $1+K/\theta^2$.
Projecting $H_j$ onto each coordinate of $I_j$ therefore gives scalar lists of size at most the fixed integer $L\coloneqq \lceil1+K/\theta^2\rceil$.
Call a block light for a codeword $a$ if $a_{I_j}\notin H_j$.

If fewer than $r$ blocks are light, the word agrees with these scalar lists on at least
\[
 n-\ell(r-1)\ge(1-\epsilon)n
 \ge k+\lceil n/4\rceil
\]
coordinates, after increasing the length cutoff if necessary.
The first part of Lemma~\ref{gv:quantitative-list}, with agreement slack $1/4$, bounds the number of such words by $A_0n^{b_0}$ for constants depending only on $L$.
The dimension condition $k\le3n/4$ holds here.
Every nonzero codeword has at least $\Delta$ nonzero coordinates and thus meets at least $\lceil\Delta/\ell\rceil$ nonzero blocks.
The words in this first class contribute at most
$A_0n^{b_0}c_*^{\lceil\Delta/\ell\rceil}$.

For the other words, mark any $r$ light blocks and sum over all possible marked sets $\mathcal M$; there are at most $2^M$ such sets.
Fix $\mathcal M$.
By Equation~\eqref{mb:information-budget}, the unmarked blocks cover at least $2k+\ell-1$ coordinates.
Greedily choose a collection $\mathcal U$ of whole unmarked blocks covering between $k$ and $k+\ell-1$ coordinates.
There remain at least $k$ coordinates, so choose a disjoint collection $\mathcal V$ of whole unmarked blocks covering at least $k$ coordinates.
Both coordinate projections are injective on $\mathcal G$, because any $k$ coordinates form an information set.
Extract $\theta^r$ from the marked blocks and discard all other factors outside $\mathcal U\cup\mathcal V$, which are at most one.
Cauchy--Schwarz gives
\begin{align*}
 \sum_{\substack{a\in\mathcal G\\\mathcal M\text{ light for }a}}
       \prod_jw_j(a_{I_j})
 &\le\theta^r
   \left(\sum_{a\in\mathcal G}\prod_{j\in\mathcal U}w_j(a_{I_j})^2\right)^{1/2}
   \left(\sum_{a\in\mathcal G}\prod_{j\in\mathcal V}w_j(a_{I_j})^2\right)^{1/2}\\
 &\le\theta^r K^{(|\mathcal U|+|\mathcal V|)/2}
 \le\theta^r K^{M/2}.
\end{align*}
For the second inequality, injectivity bounds each sum by the unrestricted sum over all tuples on its blocks; that sum factors into their squared masses.
Since $M\le n$, Equation~\eqref{mb:information-budget} also gives $r\ge\epsilon M/(2\ell)$.
The total contribution of the second class is therefore at most
\[
 2^M\theta^rK^{M/2}
 \le\bigl(2\sqrt K\,\theta^{\epsilon/(2\ell)}\bigr)^M
 \le c_*^M
 \le c_*^{\lceil\Delta/\ell\rceil}.
\]
The last inequality uses $M\ge\lceil n/\ell\rceil\ge\lceil\Delta/\ell\rceil$.
Combining the two classes and enlarging $A,b,n_0$ proves the claim.
All list and geometric cutoffs depend only on the fixed parameters in the statement.
\end{proof}

\subsection{The block marginal criterion}\label{supp:blocks-transfer}
To apply Proposition~\ref{mb:independent}, we compare a retained marginal with a product over blocks and reveal only the omitted shares.
Corollary~\ref{thm:domination-blocks} applies the reference filter of Theorem~\ref{thm:domination} to this product exactly as to a product over individual devices.
Proposition~\ref{mb:criterion} records both the residual threshold and the cost of the block correlations.

\begin{proof}[Proof of Proposition~\ref{mb:criterion}]
Write $m\coloneqq \ell M$, $u\coloneqq n-t$, and $\Lambda\coloneqq C_0\beta^{M-1}$.

Reveal the omitted shares.
The first strict inequality in Equation~\eqref{mb:strict-criterion} gives
\[
 \limsup_{n\to\infty}\frac um
 \le\frac{\delta(\ell+g)}\ell<\frac12.
\]
Thus $n-m<t$ eventually, and Lemma~\ref{lem:conditioning} gives a residual Shamir instance on $m$ devices with threshold $m-u$ above $m/2$ by a fixed margin.
Its coordinatewise affine substitutions preserve the separate local channels.
Corollary~\ref{thm:domination-blocks}, with Proposition~\ref{mb:independent} supplying the independent-block bound at length $m$ and threshold $m-u$, consequently gives
\begin{equation}\label{mb:transferred-bound}
 \Adv(\rho_s,\rho_{s'})
 \le C_0\beta^{M-1}A m^b
       \kappa(d^2)^{\lceil(m-u)/\ell\rceil}
       \exp\!\left(\frac{2\pi^2M}{3p^2}\right).
\end{equation}
Here the residual dual dimension is $u+1$, and the field remains the original $\F_p$.

Since $M=n/(\ell+g)+O(1)$ and
$\lceil(m-u)/\ell\rceil\ge M-u/\ell$, Equation~\eqref{mb:transferred-bound} implies
\[
 \log\Adv(\rho_s,\rho_{s'})
 \le-\left[\frac{-\log\bigl(\beta\kappa(d^2)\bigr)}{\ell+g}
       -\frac{\delta}{\ell}\log\frac1{\kappa(d^2)}\right]n
       +O(\log n).
\]
The second strict inequality in Equation~\eqref{mb:strict-criterion} makes the coefficient in brackets positive.
The finite-field term is uniform because $p>n$.
Every exponent in Equation~\eqref{mb:exponent} follows by absorbing the logarithmic term into the strict exponent margin.
\end{proof}

\section{Explicit sparse estimates and memory examples}\label{supp:memory}
\subsection{Explicit estimates from sparse marginals}\label{app:sparse-memory}
The block argument in Section~\ref{sec:mixing} proves security at every fixed rate above one half.
Here we give the singleton comparisons and their explicit finite estimates.
They give the prefactors and the singleton sufficient rates used in the memory example, whose block rates come from Proposition~\ref{mb:criterion}, and they apply the reset-channel and boundary-normalization comparisons of Lemma~\ref{mb:comparison} to these sparse selections.
Lemma~\ref{lem:mix-finite} supplies the finite estimates for Theorems~\ref{thm:mix} and~\ref{thm:memory} and Corollary~\ref{cor:mps}; Proposition~\ref{prop:sparse-rate} uses Theorem~\ref{thm:half}.
All resource definitions are those of Section~\ref{sec:mixing}.

\paragraph{Sparse sites in a mixing chain.}
The mixing hypothesis applies to intervals, whereas the reduction uses a sparse set of devices.
Tracing out the intervening sites gives the needed comparison.

\begin{lemma}\label{lem:mix-product}
Assume the upper operator mixing condition, and fix $\beta>1$ and a corresponding gap $g$.
Let $\alpha$ be the stationary one-site marginal, and choose sites
$I\coloneqq \{1,1+h,1+2h,\ldots\}\cap[n]$, where $h\coloneqq g+1$.
Then $m\coloneqq |I|=\lceil n/h\rceil$ and
\[
 \Omega_{A_I}\le\beta^{m-1}\alpha^{\otimes m}.
\]
\end{lemma}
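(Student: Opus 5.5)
The plan is to imitate the first item of Lemma~\ref{mb:comparison}, with singleton intervals in place of blocks of length $\ell$. The selected sites $1,1+h,1+2h,\ldots$ are singleton intervals, and consecutive ones are separated by exactly $g$ omitted sites. The count $m=\lceil n/h\rceil$ is immediate: the sites $1+jh$ with $0\le j$ and $1+jh\le n$ correspond to $j\le(n-1)/h$, giving $\lfloor(n-1)/h\rfloor+1=\lceil n/h\rceil$ sites.

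The inequality is proved by induction on the number $q$ of selected sites, $1\le q\le m$. Write $I_q\coloneqq\{1,1+h,\ldots,1+(q-1)h\}$. The claim to establish is
\[
 \Omega_{A_{I_q}}\le\beta^{q-1}\alpha^{\otimes q}.
\]
For $q=1$ this is an equality, by stationarity. For the inductive step, set
\[
 J_1\coloneqq\{1,\ldots,1+(q-1)h\},\qquad J_2\coloneqq\{1+qh\}.
\]
These are finite intervals separated by $qh-(q-1)h-1=g$ omitted sites. Definition~\ref{def:upper-mixing} therefore gives
\[
 \Omega_{J_1J_2}\le\beta\,\Omega_{J_1}\otimes\Omega_{J_2}.
\]

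Next take the partial trace over the sites of $J_1$ not in $I_q$. Partial trace is a positive map, so it preserves the operator order, and it acts only on the first tensor factor. The result is
\[
 \Omega_{A_{I_{q+1}}}\le\beta\,\Omega_{A_{I_q}}\otimes\alpha.
\]
Here consistency of the chain marginals identifies the traced joint marginal, and stationarity identifies $\Omega_{J_2}=\alpha$. Now apply the induction hypothesis to the first factor, tensored with the fixed positive operator $\alpha$. Tensoring with a positive operator preserves the order, since $(Y-X)\otimes\alpha\ge0$ whenever $Y-X\ge0$. This yields $\Omega_{A_{I_{q+1}}}\le\beta^{q}\alpha^{\otimes(q+1)}$, and taking $q=m$ gives the lemma.

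No step is a real obstacle. The only point needing care is that the upper mixing inequality is a full operator inequality on the joint space $J_1J_2$, rather than a statement about product observables. That is exactly what allows partial traces on one side and tensoring with $\alpha$ on the other without any loss.
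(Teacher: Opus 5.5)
Your proof is correct and follows the paper's argument. The paper proves this lemma as the case $\ell=1$, $M=m$ of the first item of Lemma~\ref{mb:comparison}: it applies upper operator mixing to the interval containing the earlier selected sites together with the next selected site, traces out the omitted sites, and iterates $m-1$ times, which is exactly your induction.
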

\begin{proof}
This is the first item of Lemma~\ref{mb:comparison} with $\ell=1$ and $M=m$: apply the mixing inequality to the interval containing the previously selected sites and the next selected site, trace out the omitted sites in the first interval, and iterate $m-1$ times.
\end{proof}

An exponentially growing domination factor requires a quantitative version of the estimate for independent device--reference pairs.
The next bound retains a constant finite-field cost even when the selected-set size grows with $n$.

\begin{lemma}[Uniform finite-field cost]\label{lem:mix-finite}
Write $u\coloneqq n-t$ and $H\coloneqq 1+3\kappa(4)^{-2}$.
Let $I\subseteq[n]$ have size $m\ge2(u+1)$, let $C\ge1$ and $\beta>1$, and let $\alpha_1,\ldots,\alpha_m$ be density matrices on its device registers.
If $\Omega_{A_I}\le C\beta^{m-1}\alpha_1\otimes\cdots\otimes\alpha_m$, then every permitted one-qubit leakage strategy obeys
\[
 \Adv\le C e^{\pi^2/18}
 \beta^{m-1}\kappa(4)^{m-u}\bigl[H^{u+1}-1\bigr].
\]
This bound is uniform for every prime $p>n$.
\end{lemma}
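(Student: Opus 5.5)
The proof combines the marginal transfer of Theorem~\ref{thm:domination} with an explicit full-residual Fourier bound built on the MDS inequality of Lemma~\ref{lem:mds}, tracking the finite-field factor so that it stays a constant.

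\emph{Reduction.} Put $S\coloneqq[n]\setminus I$ and $r\coloneqq n-m$. The condition $m\ge2(u+1)$ gives $r=n-m<t$. Theorem~\ref{thm:domination} then applies with $\Lambda=C\beta^{m-1}$ and reveals $Y_S$. Lemma~\ref{lem:conditioning} leaves a residual $(m,t-r)$ Shamir instance, and its threshold is $t-r=m-u$. It therefore suffices to show that independent device--reference pairs on this residual instance satisfy
\[
 \varepsilon_{\rm ref}\le e^{\pi^2/18}\kappa(4)^{m-u}\bigl[H^{u+1}-1\bigr].
\]
This bound must hold uniformly over $p>n\ge m$ and over the affine substitutions.

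\emph{Residual Fourier bound.} By Equation~\eqref{eq:fourier-security}, the residual distance is at most $\sum_{a\in\mathcal C\setminus\{0\}}\prod_i h_i(a_i)$. Here $\mathcal C$ is the residual dual code, which has dimension $k=u+1$ and minimum weight $m-u$. Lemma~\ref{lem:reference} gives domination constant $K=4$. Lemma~\ref{lem:fourier} then gives $h_i(a)\le c_{4,p}\le\kappa(4)\lambda_p$ for $a\ne0$ and $\sum_{a\ne0}h_i(a)^2\le3$.

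\emph{Normalization and the MDS inequality.} Normalize the nonzero weights by $\kappa(4)\lambda_p$, and set $g_i(0)=1$ and $g_i(a)=h_i(a)/(\kappa(4)\lambda_p)\le1$. Every nonzero codeword has weight at least $m-u$, so its term is at most $(\kappa(4)\lambda_p)^{m-u}\prod_i g_i(a_i)$. The factor $(\kappa(4)\lambda_p)^{|\supp a|}$ is not constant across codewords, but for $|\supp a|\ge m-u$ and $\kappa(4)\lambda_p$ possibly exceeding one it is handled using $c_{4,p}\le1$, with $\lambda_p^{m}$ bounded separately.

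Apply Lemma~\ref{lem:mds} with $q=m/(u+1)\ge2$. Since $0\le g_i\le1$, we have $\sum_z g_i(z)^q\le\sum_z g_i(z)^2\le1+3/(\kappa(4)^2\lambda_p^2)\le H$. Subtracting the zero codeword gives
\[
 \sum_{a\ne0}\prod_i g_i(a_i)\le H^{m/q}-1=H^{u+1}-1 .
\]

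\emph{Main obstacle: the finite-field factor.} The remaining cost is $\lambda_p^{m}$, and it must be a constant uniform in $m\le n$. From $p\ge n+1$ we get $\lambda_p^m\le\exp(2\pi^2n/(3(n+1)^2))$. The goal is to bring this down to $e^{\pi^2/18}$ via the $(n-5)^2$ identity used in Theorem~\ref{thm:sharp}, namely $(n-2)/(n+1)^2\le1/12$.

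The exponent count is the delicate point. Counting the normalization factor only on the nonzero coordinates of each codeword, that is on $|\supp a|$ coordinates rather than all $m$, should yield the $n-2$ form of that identity after using $c_{4,p}\le1$ to absorb two coordinates. If the count does not close there, a slightly worse absolute constant would follow instead. Multiplying the resulting residual bound by $\Lambda=C\beta^{m-1}$ gives the claimed estimate.
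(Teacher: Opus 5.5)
Your reduction is sound and is essentially the paper's route: marginal transfer with $\Lambda=C\beta^{m-1}$, the residual $(m,m-u)$ instance with dual dimension $u+1$, and Lemma~\ref{lem:mds} with $q=m/(u+1)\ge2$. The paper simply applies the second bound of Equation~\eqref{eq:finite} to the residual sharing with tilt $\lambda=1/\kappa(4)$.

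\textbf{The gap is the constant $e^{\pi^2/18}$.} As written, you charge every nonzero codeword $\lambda_p^{|\supp a|}\le\lambda_p^{m}$, and then relax $1+3\kappa(4)^{-2}\lambda_p^{-2}$ to $H$. Codewords of full weight $m$ exist, so this proves only $\lambda_p^m\le\exp(2\pi^2m/(3(m+1)^2))\le e^{\pi^2/6}$.

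Your proposed repair does not supply the two missing powers of $\lambda_p$. If $\kappa(4)\lambda_p\le1$, then $c_{4,p}=\kappa(4)\lambda_p$, so the cap $h\le1$ adds nothing. Bounding two factors per codeword by one would cost $\kappa(4)^{-2}$ against the required $\kappa(4)^{m-u}$. It would also break the product form that Lemma~\ref{lem:mds} needs.

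The missing observation is that you had already gained the needed $\lambda_p^{-2}$ and then discarded it. Keep
\[
\sum_z g_i(z)^q\le1+x\lambda_p^{-2},\qquad x\coloneqq3\kappa(4)^{-2}.
\]
Lemma~\ref{lem:mds} and removal of the zero codeword then give
\[
 \lambda_p^m\bigl[(1+x\lambda_p^{-2})^{u+1}-1\bigr]
 =\sum_{j=1}^{u+1}\binom{u+1}{j}x^j\lambda_p^{m-2j}
 \le\lambda_p^{m-2}\bigl[H^{u+1}-1\bigr],
\]
because every surviving term has $j\ge1$ and $\lambda_p\ge1$. Finally, $p\ge m+1$ and $(m+1)^2-12(m-2)=(m-5)^2\ge0$ give $\lambda_p^{m-2}\le e^{\pi^2/18}$.

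The paper makes the same point in different bookkeeping. In the moment bound $\sum_{a\ne0}h(a)^q\le3c_p^{q-2}$, only $q-2$ powers use the cap. Hence term $j$ of the expansion carries $\lambda_p^{j(q-2)}$, with $j(q-2)\le m-2(u+1)\le m-2$. With this change your argument proves the lemma as stated.
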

\begin{proof}
The residual threshold is $m-u>0$ and $q\coloneqq m/(u+1)\ge2$.
Use the second, simpler bound in Equation~\eqref{eq:finite} on the residual sharing with $\lambda=1/\kappa(4)$ and apply marginal domination.
Expand its bracket into $u+1$ terms.
In term $j$, replacing $c_p$ by $\kappa(4)$ costs a relative factor at most
\[
 \left(1+\frac{2\pi^2}{3p^2}\right)^{j(q-2)}
 \le\exp\!\left(\frac{2\pi^2(m-2)}{3p^2}\right)
 \le e^{\pi^2/18}.
\]
Indeed, $0\le j(q-2)\le m-2(u+1)\le m-2$ and $p\ge m+1$.
The last inequality follows from $(m-2)/(m+1)^2\le1/12$, as in Theorem~\ref{thm:sharp}.
The remaining terms sum to $H^{u+1}-1$.
\end{proof}

Choose the spacing so that the domination factor per selected site is smaller than $1/\kappa(4)$.
The following theorem shows that a positive leakage exponent then survives a fixed threshold deficit.

\begin{theorem}[Security under upper operator mixing]\label{thm:mix}
Suppose the $n$-device marginal is the restriction of one fixed stationary state satisfying Definition~\ref{def:upper-mixing}.
Any external extension of that marginal is permitted.
There are constants $\delta,c>0$ and $n_0$ such that, for $n\ge n_0$ and
$t/n\ge1-\delta$, every permitted one-qubit leakage strategy has distance at most $2^{-cn}$, uniformly in all primes $p>n$ and evaluation points.
\end{theorem}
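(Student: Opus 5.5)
The plan is to combine the sparse comparison of Lemma~\ref{lem:mix-product} with the finite estimate of Lemma~\ref{lem:mix-finite}, and then tune the parameters so that the per-site domination cost is beaten by the per-site Fourier decay.

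\textbf{Fixing the constants.} First choose $\beta$ with $1<\beta<\kappa(4)^{-1}$. Concretely, take $\beta\coloneqq\kappa(4)^{-1/2}$, so that $\beta\kappa(4)=\kappa(4)^{1/2}<1$. Definition~\ref{def:upper-mixing} then supplies a fixed gap $g$; put $h\coloneqq g+1$. Take the sparse set $I=\{1,1+h,\ldots\}\cap[n]$ of size $m=\lceil n/h\rceil$. Lemma~\ref{lem:mix-product} gives $\Omega_{A_I}\le\beta^{m-1}\alpha^{\otimes m}$, so Lemma~\ref{lem:mix-finite} applies with $C=1$ and all $\alpha_i=\alpha$. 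The only requirement of that lemma is $m\ge2(u+1)$ with $u=n-t$. If $t/n\ge1-\delta$, then $u\le\delta n$, and this requirement holds for large $n$ as soon as $\delta<1/(2h)$.

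\textbf{The exponent.} Lemma~\ref{lem:mix-finite} gives
\[
 \Adv\le e^{\pi^2/18}\beta^{m-1}\kappa(4)^{m-u}H^{u+1}
 \le e^{\pi^2/18}H\,\bigl(\beta\kappa(4)\bigr)^{m}\bigl(H/\kappa(4)\bigr)^{u}.
\]
The constant $H=1+3\kappa(4)^{-2}$ is absolute. Taking logarithms and using $m\ge n/h$ and $u\le\delta n$,
\[
 \log\Adv\le O(1)-\frac{n}{h}\log\frac1{\beta\kappa(4)}+\delta n\log\frac{H}{\kappa(4)}.
\]
Next choose $\delta>0$ below both $1/(2h)$ and
\[
 \frac{\log(1/(\beta\kappa(4)))}{2h\log(H/\kappa(4))}.
\]
This makes the linear coefficient at most $-\log(1/(\beta\kappa(4)))/(2h)<0$. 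One can then take, for instance, $c\coloneqq\log(1/(\beta\kappa(4)))/(4h\log2)$ and choose $n_0$ large enough to absorb the constant prefactor. All constants depend only on the chain, and the bound is uniform in $p>n$ and in the evaluation points by Lemma~\ref{lem:mix-finite}. Arbitrary external extensions are already covered, because Lemma~\ref{lem:mix-finite} rests on the marginal transfer Theorem~\ref{thm:domination}, which allows any extension.

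\textbf{Main obstacle.} The delicate point is ordering the choices. Fix $\beta$ first, then $g$ (hence $h$), then $\delta$. The residual condition $m\ge2(u+1)$ and the exponent inequality must both hold with the same $\delta$, and the growing factor $\beta^{m-1}$ must be paid at a rate strictly below the gain $\kappa(4)^{-1}$ per selected site. Once $\beta\kappa(4)<1$ is secured before $g$ is chosen, everything else is routine bookkeeping.
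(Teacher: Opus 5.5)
Your proof is correct and follows the paper's argument essentially step for step: fix $\beta$ with $\beta\kappa(4)<1$, take the gap and sparse set from Lemma~\ref{lem:mix-product}, apply Lemma~\ref{lem:mix-finite} with $C=1$, and take $\delta$ of order $\log(1/(\beta\kappa(4)))/(2h\log(H/\kappa(4)))$. The only cosmetic difference is that the paper observes $\log(1/(\beta\kappa(4)))<\log(H/\kappa(4))$, so its single choice of $\delta$ automatically gives $\delta<1/(2h)$, whereas you impose that condition separately.
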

\begin{proof}
Choose $1<\beta<1/\kappa(4)$, its gap $g$, and $h\coloneqq g+1$.
Set
\[
 A\coloneqq -\log(\beta\kappa(4))>0,\qquad B\coloneqq \log(H/\kappa(4))>0,
 \qquad \delta\coloneqq \frac{A}{2hB}.
\]
Here $A<B$, so for sufficiently large $n$, $u\le\delta n$ implies
$m=\lceil n/h\rceil\ge2(u+1)$.
Lemmas~\ref{lem:mix-product} and \ref{lem:mix-finite} yield
\[
 \Adv\le\frac{e^{\pi^2/18}H}{\beta}
             e^{-Am+Bu}
 \le\frac{e^{\pi^2/18}H}{\beta}e^{-An/(2h)}.
\]
Absorb the fixed prefactor by increasing $n_0$.
\end{proof}

The stronger independent-pair estimate gives a second quantitative criterion for the same sparse marginal.
It separates the loss from revealing shares from the cost of the remaining correlations.

\begin{proposition}[A rate from sparse marginal domination]\label{prop:sparse-rate}
Fix an integer $h\ge1$, constants $C\ge1$ and $1<\beta<1/\kappa(4)$, and put
\[
 L_\kappa\coloneqq \log(1/\kappa(4)),\qquad
 \delta_*(h,\beta)\coloneqq \frac1h
   \min\left\{\frac12,1-\frac{\log\beta}{L_\kappa}\right\}.
\]
Suppose a selected marginal on $m\coloneqq \lceil n/h\rceil$ devices obeys
$\Omega_{A_I}\le C\beta^{m-1}\bigotimes_{i\in I}\alpha_i$.
For any fixed $0\le\delta<\delta_*(h,\beta)$, every permitted one-qubit strategy has exponentially small leakage whenever $t/n\ge1-\delta$.
More precisely, for every
\[
 0<c<\frac{-\log(\beta\kappa(4))/h-\delta L_\kappa}{\log2},
\]
the bound $\Adv\le2^{-cn}$ holds for all sufficiently large $n$, uniformly over primes $p>n$, evaluation points, and finite adversary references.
\end{proposition}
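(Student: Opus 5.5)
The plan is to mirror the proof of Proposition~\ref{mb:criterion}, with singleton blocks ($\ell=1$) and sparse sites in place of blocks. I would then apply Theorem~\ref{thm:half} in its general form, Equation~\eqref{eq:half-K} with $K=4$, to the residual instance obtained by revealing the complement shares.

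Write $u\coloneqq n-t\le\delta n$. The first step is the residual threshold. Since $\delta<1/(2h)$, we get $u/m\le\delta h+O(1/n)<1/2$ by a fixed margin. Hence $n-m<t$ for large $n$, and Lemma~\ref{lem:conditioning} gives, for each revealed $Y_{[n]\setminus I}$, an $(m,m-u)$ Shamir instance. Its coordinates are affine images, absorbed into the local maps. Its rate $(m-u)/m$ is at least $1/2+\eta'$, where $\eta'\coloneqq 1/2-\delta h-o(1)>0$ is fixed.

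The second step is the transfer. Theorem~\ref{thm:domination} with $\Lambda=C\beta^{m-1}$ gives
\[
 \Adv(\rho_s,\rho_{s'})\le C\beta^{m-1}\varepsilon_{\rm ref}(m,m-u,p).
\]
For $\varepsilon_{\rm ref}$ I would take Theorem~\ref{thm:half} at length $m$ over the same field $\F_p$ with $p>n\ge m$, uniformly in evaluation points and references:
\[
 \varepsilon_{\rm ref}\le A m^b\kappa(4)^{m-u-1}\exp\!\left(\frac{2\pi^2m}{3p^2}\right).
\]
Its constants depend only on $\eta'$, and the field factor is $O(1)$ because $p>n$.

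The third step is the exponent bookkeeping. Taking logarithms gives
\[
 \log\Adv\le m\log(\beta\kappa(4))+uL_\kappa+O(\log n).
\]
Using $m\ge n/h$ and $u\le\delta n$, this is at most
\[
 -n\left[\frac{-\log(\beta\kappa(4))}{h}-\delta L_\kappa\right]+O(\log n).
\]
The bracket is positive exactly when $\delta<(1-\log\beta/L_\kappa)/h$, which is the second entry in $\delta_*$. Any $c$ strictly below the bracket divided by $\log 2$ then works after absorbing the logarithmic term and the constant $C/\beta$ into the length cutoff.

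I expect the main obstacle to be the uniformity of Theorem~\ref{thm:half} in the residual instance. It requires $n$-independent constants when it is applied at length $m$ with a rate margin that only tends to $\eta'$, so I would fix $\eta'/2$ as the margin. It also needs the prime condition $p>m$, which follows from $p>n$. Finally, the affine substitutions of Lemma~\ref{lem:conditioning} must stay within the permitted channel class, and they do, since Theorem~\ref{thm:half} holds for all local channels.
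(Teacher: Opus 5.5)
Your proposal is correct and is essentially the paper's own proof. You reveal the shares outside $I$, use $u/m\le h\delta<1/2$ to get a residual $(m,m-u)$ instance with a fixed above-half margin, and transfer Theorem~\ref{thm:half} with $d=2$ through Theorem~\ref{thm:domination} at cost $C\beta^{m-1}$; you then conclude from $\log\Adv\le m\log(\beta\kappa(4))+uL_\kappa+O(\log n)$ using $m\ge n/h$ and $u\le\delta n$. (The bound $u/m\le h\delta$ holds exactly, so the margin $1/2-h\delta$ is genuinely fixed, and halving it also covers $\delta=0$, where Theorem~\ref{thm:half} needs $\eta<1/2$.)
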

\begin{proof}
Put $u\coloneqq n-t$ and $t_I\coloneqq m-u$.
Since $u/m\le h\delta<1/2$, the residual threshold ratio $t_I/m$ exceeds $1/2$ by a fixed margin.
Apply Theorem~\ref{thm:half} with $d=2$ to the residual instance and transfer it by Theorem~\ref{thm:domination}.
The finite-field exponential factor is uniformly bounded, and all polynomial factors contribute $O(\log n)$ to the logarithm.
Thus
\[
 \log\Adv
 \le (m-1)\log\beta+(m-u-1)\log\kappa(4)+O(\log n)
 \le-\left[\frac{-\log(\beta\kappa(4))}h-\delta L_\kappa\right]n
       +O(\log n).
\]
The coefficient in square brackets is positive by the definition of $\delta_*$.
\end{proof}
For a mixing chain, Lemma~\ref{lem:mix-product} supplies $C=1$ and $h=g+1$.
Thus every fixed choice of gap and domination factor gives an explicit sufficient threshold through Proposition~\ref{prop:sparse-rate}.

\paragraph{Primitive memories and matrix product states.}
\begin{theorem}[Primitive-memory resources]\label{thm:memory}
Fix a finite-dimensional memory $R$, a finite-dimensional device register $A$, and a CPTP emission channel $\Psi:R\to A\otimes R$.
Suppose $T\coloneqq \Tr_A\circ\Psi$ is primitive.
There are $\delta,c>0$ and $n_0$, depending only on $\Psi$, such that for every extension of the joint marginal of the $n$ emitted registers, one-qubit local leakage has distance at most $2^{-cn}$ whenever $n\ge n_0$ and $t/n\ge1-\delta$.
The constants are uniform over the initial memory state, all primes $p>n$, evaluation points, and permitted leakage channels.
\end{theorem}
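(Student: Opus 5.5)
The plan is to mirror the proof of Theorem~\ref{thm:mix}, replacing the stationary mixing comparison by the reset-channel comparison for the primitive memory. The argument has two steps: first obtain a sparse-site operator domination of the emitted marginal, then feed it into Lemma~\ref{lem:mix-finite}.

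\emph{Step 1: sparse domination.} Let $\sigma>0$ be the full-rank stationary state of the primitive channel $T$. Choose $1<\beta<1/\kappa(4)$. By the second item of Lemma~\ref{mb:comparison}, there is a fixed gap $g$ with $T^g\le_{\rm CP}\beta\mathcal R_\sigma$. Put $h\coloneqq g+1$ and select the sites $I=\{1,1+h,1+2h,\ldots\}\cap[n]$, so that $m=\lceil n/h\rceil$. The third item of Lemma~\ref{mb:comparison}, with $\ell=1$ and $M=m$, gives
\[
 \Omega_{A_I}\le\frac{1}{\lambda_{\min}(\sigma)}\beta^{m-1}\alpha^{\otimes m},
\]
where $\alpha\coloneqq\Tr_R\Psi(\sigma)$ is the one-emission marginal started from $\sigma$. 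This bound is uniform over the initial memory state $\tau$, since $\Lambda_0\le1/\lambda_{\min}(\sigma)$. The inequality concerns only the device marginal, so it is unaffected by any extension, including the final memory and preparation environments. This matters because Theorem~\ref{thm:domination} allows an arbitrary extension once the marginal is dominated.

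\emph{Step 2: finite estimate.} Apply Lemma~\ref{lem:mix-finite} with $C=1/\lambda_{\min}(\sigma)$. Set $A\coloneqq-\log(\beta\kappa(4))>0$ and $B\coloneqq\log(H/\kappa(4))$, and take $\delta\coloneqq A/(2hB)$. Since $A<B$, the constraint $u=n-t\le\delta n$ gives $m\ge2(u+1)$ for all large $n$. The lemma then yields
\[
 \Adv\le C e^{\pi^2/18}H\beta^{-1}e^{-Am+Bu}\le C' e^{-An/(2h)}.
\]
The fixed prefactor is absorbed by enlarging $n_0$. All constants depend only on $\Psi$, through $\sigma$, $g$ and $\beta$, and are uniform in $p>n$, the evaluation points and the channels, because Lemma~\ref{lem:mix-finite} is uniform in these.

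\emph{Main obstacle.} The delicate point is the CP-order iteration in the third item of Lemma~\ref{mb:comparison} when the memory is entangled with earlier outputs. At each gap one must apply $\id\otimes T^g\le\beta\,\id\otimes\mathcal R_\sigma$ to a positive operator on (earlier outputs)$\otimes R$. Complete positivity is exactly what guarantees that this preserves the order and factorizes off a fresh $\sigma$. After that, the next emission produces $\alpha$ from $\sigma$, and the trailing memory and emissions are traced out. Since that lemma is already proved, the remaining work is the parameter bookkeeping in Step 2.
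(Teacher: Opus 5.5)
Your proposal is correct and follows essentially the same route as the paper: you pick $\beta<1/\kappa(4)$, take the gap from the second item of Lemma~\ref{mb:comparison}, obtain the sparse domination with the uniform constant $\Lambda_0\le1/\lambda_{\min}(\sigma)$ from its third item, and finish with Lemma~\ref{lem:mix-finite}. The only difference is that you write out the parameter bookkeeping ($A$, $B$, $\delta=A/(2hB)$, and $m\ge2(u+1)$) that the paper delegates by reference to the proof of Theorem~\ref{thm:mix}.
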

\begin{proof}
Let $\sigma$ be the full-rank stationary state of $T$.
For each $\epsilon>0$, the second item of Lemma~\ref{mb:comparison}, with $\beta=1+\epsilon$, gives a fixed $g$ with
\[
 T^g\le_{\rm CP}(1+\epsilon)\mathcal R_\sigma,
 \qquad \mathcal R_\sigma(X)\coloneqq \Tr(X)\sigma;
\]
its proof chooses $g$ so that the Choi operator-norm error is at most $\epsilon\lambda_{\min}(\sigma)$.

Start the sequential emission process from any state $\tau$ and put
$\alpha\coloneqq \Tr_R\Psi(\sigma)$.
Take $I\coloneqq \{1,1+h,1+2h,\ldots\}\cap[n]$, where $h\coloneqq g+1$ and $m\coloneqq \lceil n/h\rceil$.
Between successive selected sites there are exactly $g$ omitted emissions.
Put $\Lambda_0\coloneqq \|\sigma^{-1/2}\tau\sigma^{-1/2}\|_\infty\le1/\lambda_{\min}(\sigma)$, where $\|\cdot\|_\infty$ is the operator norm.
The third item of Lemma~\ref{mb:comparison}, with $\ell=1$, $M=m$ and $\beta=1+\epsilon$, replaces the $m-1$ gap channels by the reset bound and gives
\[
 \Omega_{A_I}\le\Lambda_0(1+\epsilon)^{m-1}\alpha^{\otimes m}.
\]
Tracing the final memory and any emissions after the last selected site preserves the comparison constant.
Choosing $1+\epsilon<1/\kappa(4)$ and applying Lemma~\ref{lem:mix-finite} proves the same near-full-threshold conclusion as Theorem~\ref{thm:mix}, uniformly in $\tau$.
The full initial resource may be any extension, including all preparation environments and the final memory.
The device register $A$ can have any fixed finite dimension; its emitted leakage is still one qubit.
\end{proof}
More quantitatively, any verified comparison
$T^g\le_{\rm CP}\beta\mathcal R_\sigma$ with $1<\beta<1/\kappa(4)$ gives the sufficient deficits
$\delta<\delta_*(g+1,\beta)$ of Proposition~\ref{prop:sparse-rate}.
The factor $\Lambda_0$ is uniformly bounded by $1/\lambda_{\min}(\sigma)$ and affects only the length threshold.
For a fixed finite-dimensional channel, the CP comparison can be checked from its Choi matrix.

The same estimate covers pure matrix product states when the final boundary conditioning has probability bounded below.
Primitivity supplies that lower bound uniformly in the two boundary vectors.

\begin{corollary}[Pure open-boundary matrix product states]\label{cor:mps}
Fix positive integers $D,v$ and $D\times D$ matrices $A_1,\ldots,A_v$ satisfying
$\sum_jA_j^\dagger A_j=I_D$, with primitive transfer channel
$T(X)\coloneqq \sum_jA_jXA_j^\dagger$.
For sufficiently large $n$, normalize the state whose amplitudes are
$\langle u_R|A_{j_n}\cdots A_{j_1}|u_L\rangle$, where $u_L,u_R\in\mathbb C^D$ are unit vectors and each $j_i$ ranges over $\{1,\ldots,v\}$.
There are $\delta,c>0$ and $n_0$, depending only on the matrices, for which every allowed one-qubit leakage strategy has distance at most $2^{-cn}$ when $n\ge n_0$ and $t/n\ge1-\delta$.
This holds uniformly over both boundary vectors, primes $p>n$, and evaluation points.
\end{corollary}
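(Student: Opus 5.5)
The plan is to follow the template of Theorem~\ref{thm:memory}. The MPS is realized as a sequential emission process conditioned on a boundary outcome, and the boundary conditioning is absorbed into the domination constant.

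First, I write the state as a sequential emission. Put $V\coloneqq\sum_j|j\rangle\otimes A_j$, an isometry $\mathbb C^D\to\mathbb C^v\otimes\mathbb C^D$ by the normalization $\sum_jA_j^\dagger A_j=I_D$, and let $\Psi(X)\coloneqq VXV^\dagger$. Starting the memory in $|u_L\rangle$ and emitting $n$ times produces a pure state on $A^{\otimes n}\otimes R$. Projecting the final memory onto $|u_R\rangle$ leaves the unnormalized MPS vector. The transfer channel of $\Psi$ is the given primitive $T$, with full-rank stationary state $\sigma$.

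Second, I fix the spacing. Choose $1<\beta<1/\kappa(4)$ and take the gap $g$ supplied by the second item of Lemma~\ref{mb:comparison}, so that $T^g\le_{\rm CP}\beta\mathcal R_\sigma$. Let $h\coloneqq g+1$ and take the sparse sites $I=\{1,1+h,\ldots\}\cap[n]$, with $m=\lceil n/h\rceil$ and $\alpha\coloneqq\Tr_R\Psi(\sigma)$. The fourth item of Lemma~\ref{mb:comparison}, applied with $\ell=1$ and $M=m$, then gives
\[
 \Omega_{A_I}\le\frac{2}{\lambda_{\min}(\sigma)^2}\beta^{m-1}\alpha^{\otimes m}
\]
for all sufficiently large $n$, uniformly over the unit vectors $u_L,u_R$.

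The step I expect to be the main obstacle sits inside this item: bounding the postselection probability $q_n=\langle u_R|T^n(|u_L\rangle\langle u_L|)|u_R\rangle$ from below. Primitivity gives $T^n\to\mathcal R_\sigma$ in a finite-dimensional norm, so $q_n\to\langle u_R|\sigma|u_R\rangle\ge\lambda_{\min}(\sigma)$ uniformly over unit vectors. Hence $q_n\ge\lambda_{\min}(\sigma)/2$ beyond a cutoff depending only on the matrices. This bound also makes the normalization well defined. The rest of the comparison is routine: the unnormalized conditional marginal is dominated by the unconditional one because the complementary measurement outcome is positive.

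Finally, I transfer the bound. With $C\coloneqq2/\lambda_{\min}(\sigma)^2$ and $h=g+1$, I apply Lemma~\ref{lem:mix-finite} exactly as in the proof of Theorem~\ref{thm:mix}. Set $A\coloneqq-\log(\beta\kappa(4))$, $B\coloneqq\log(H/\kappa(4))$ and $\delta\coloneqq A/(2hB)$. Then $u\le\delta n$ ensures $m\ge2(u+1)$ for large $n$, and the leakage is at most $Ce^{\pi^2/18}H\beta^{-1}e^{-An/(2h)}$. Alternatively, Proposition~\ref{prop:sparse-rate} can be used here. The fixed prefactor $C$ is absorbed by enlarging $n_0$, which yields $2^{-cn}$. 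The resulting constants are uniform in the boundary vectors, the prime $p>n$, the evaluation points and the finite extensions, because the reference filter of Theorem~\ref{thm:domination} handles arbitrary extensions.
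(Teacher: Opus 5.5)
Your proposal is correct and follows the paper's proof. You realize the MPS through the emission isometry $V$ with postselection on $|u_R\rangle$, obtain the sparse domination with constant $2/\lambda_{\min}(\sigma)^2$ from the fourth item of Lemma~\ref{mb:comparison} (including the uniform bound $q_n\ge\lambda_{\min}(\sigma)/2$), and conclude with Lemma~\ref{lem:mix-finite} as in Theorem~\ref{thm:mix}. Your explicit choice $\delta=A/(2hB)$ and the resulting bound $Ce^{\pi^2/18}H\beta^{-1}e^{-An/(2h)}$ spell out the final step that the paper leaves implicit.
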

\begin{proof}
Let $\sigma$ be the full-rank stationary state of $T$.
Choose $1+\epsilon<1/\kappa(4)$, a corresponding $g$, and the same sparse set $I$ as in the proof of Theorem~\ref{thm:memory}.
The fourth item of Lemma~\ref{mb:comparison}, with $\ell=1$, $M=m$ and $\beta=1+\epsilon$, gives
\[
 \Omega_{A_I}\le\frac{2}{\lambda_{\min}(\sigma)^2}(1+\epsilon)^{m-1}\alpha^{\otimes m}
\]
for all sufficiently large $n$, uniformly over both boundary vectors, where $\alpha$ is the one-site emission state of $V\coloneqq \sum_j|j\rangle\otimes A_j$ started from $\sigma$.
Its proof postselects the final memory onto $|u_R\rangle$ with success probability $q_n\ge\tfrac12\lambda_{\min}(\sigma)$, so normalizing costs at most $2/\lambda_{\min}(\sigma)$ beyond the initial factor.
Lemma~\ref{lem:mix-finite} then gives the asserted leakage bound after increasing $n_0$ to absorb the fixed prefactor.
\end{proof}
The primitive transfer condition is the usual injectivity condition after blocking sufficiently many sites \cite[arXiv version, Section~V, Eq.~(10)]{Sanz}.
The boundary normalization in Lemma~\ref{mb:comparison} is part of the open-boundary formulation.
\subsection{An explicit mixing resource}\label{supp:memory-examples}
A two-dimensional memory already produces a resource whose two-point correlations remain nonzero at every distance.
It also gives explicit constants in the security estimate.
For $0<\mu<1$, let
\[
 A_0\coloneqq \sqrt{1-3\mu/4}\,I,\qquad
 A_j\coloneqq \sqrt{\mu/4}\,\sigma_j\quad(j=1,2,3).
\]
Here $\sigma_1,\sigma_2,\sigma_3$ are the Pauli matrices.
Use the emission isometry $V\coloneqq \sum_{j=0}^3|j\rangle\otimes A_j$ and stationary initial memory $I/2$.
The memory is two-dimensional and the emitted device register is four-dimensional.
The memory channel is $T=(1-\mu)\id+\mu\mathcal R_{I/2}$, and the exact CP comparison is
\[
 T^g\le_{\rm CP}\bigl[1+3(1-\mu)^g\bigr]\mathcal R_{I/2}.
\]
The Choi eigenvalue ratios to reset are $1+3(1-\mu)^g$ and $1-(1-\mu)^g$.
For stationary initial memory and
$O\coloneqq |0\rangle\langle3|+|3\rangle\langle0|$,
\[
 \langle O_i\rangle=0,\qquad
 \langle O_iO_j\rangle
 =4(1-3\mu/4)(\mu/4)(1-\mu)^{j-i-1}>0 \quad(i<j).
\]
To check this, the insertion map is
$\mathcal F_O(X)\coloneqq \sqrt{(1-3\mu/4)(\mu/4)}(X\sigma_3+\sigma_3X)$;
$\sigma_3$ is an eigenoperator of $T$ with eigenvalue $1-\mu$.
Thus correlations persist at every separation, while leakage remains exponentially small.
The correlation calculation concerns the unconditioned stationary resource.

For example, $\mu=1/2,g=8$ gives $\beta=259/256$ and
$\beta\kappa(4)\approx0.91086689799<1$.
At full threshold the bound is a fixed constant times
$2^{-cn}$, where $c\coloneqq-\log_2(\beta\kappa(4))/9\approx0.01496531575$.
On a fixed interval below full threshold, the choices of $A,B,h$ in the proof of Theorem~\ref{thm:mix} give the corresponding constants.

For the sharper singleton sufficient rate in this example, take $\mu\coloneqq 1/2$ and $g\coloneqq 6$.
Then $h=7$ and $\beta=67/64$, with
$(67/64)\sqrt{\kappa(4)}<1$.
Consequently $\delta_*(7,67/64)=1/14$.
Proposition~\ref{prop:sparse-rate} proves exponential security whenever
\[
 \frac tn\ge\frac{13}{14}+\eta
\]
for a fixed $\eta>0$ and sufficiently large $n$.
This bound is obtained from the displayed reset comparison for this particular memory channel.

The same reset comparison also illustrates the gain from retaining blocks.
For $\mu=1/2$, keep $g=6$ and $\beta=67/64$; the inequality $\beta\sqrt{\kappa(4)}<1$ makes the minimum in Equation~\eqref{mb:strict-criterion} equal to $1/2$ for qubit outputs.
Taking $\ell\coloneqq 24$ gives exponential security at every rate $t/n\ge3/5+\eta$ with a fixed $\eta>0$; taking $\ell\coloneqq 54$ similarly gives every rate $t/n\ge11/20+\eta$.
These are strict sufficient rate bounds for fixed block lengths.
The length cutoff depends on the block length and the list-recovery constants, while Theorem~\ref{mb:memory} reaches every fixed rate above one half by choosing a sufficiently long fixed block.

\paragraph{Nonprimitive memory.}
The usual sequential GHZ generator sends
$|b\rangle_R\mapsto|b\rangle_A|b\rangle_R$.
Starting from $|+\rangle_R$, the emitted registers together with the final memory form a GHZ state.
Retain that memory in the adversary reference; the share-dependent phase operations and a joint parity measurement give the known constant-advantage GHZ attack \cite{BKNY}.
The memory channel is dephasing and hence nonprimitive.

The next family tests the need for a fixed memory process or uniform control of its mixing as $n$ grows.

\begin{proposition}[A varying primitive memory]\label{prop:varying-memory}
For every $n\ge2$, there is an emission channel $\Psi_n:R\to A\otimes R$ with $\dim R=\dim A=2$ and primitive transfer channel whose $n$ emitted registers admit full-threshold one-qubit leakage of distance at least
\[
 (1-1/n)^n\cos\frac{\pi}{2p}
\]
for any prescribed pair of distinct secrets over every prime $p>n$.
The adversary reference includes preparation environments and the final memory.
Along every such sequence of primes the lower bound tends to $e^{-1}$.
\end{proposition}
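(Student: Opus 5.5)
The plan is to perturb the nonprimitive sequential GHZ generator described just above by a reset of probability $1/n$ per step. The unperturbed generator has a dephasing transfer channel, so it is not primitive. The perturbation makes the transfer channel primitive while leaving, with probability $(1-1/n)^n$, an exact GHZ branch that the adversary can identify.

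\emph{Construction.} Let $R$ and $A$ be qubits. Define $\Psi_n$ as follows, recording which branch occurred in an orthogonal classical flag sent to a preparation environment held by the adversary.
\begin{itemize}
\item With probability $1-1/n$, apply the copy isometry $|b\rangle_R\mapsto|b\rangle_A|b\rangle_R$.
\item With probability $1/n$, discard the memory, emit $|0\rangle_A$, and prepare the memory in $I/2$.
\end{itemize}
Tracing out the flag, $\Psi_n$ is a convex combination of CPTP maps and hence CPTP. Its transfer channel is
\[
 T_n=(1-1/n)\,\mathcal D+(1/n)\,\mathcal R_{I/2},
\]
where $\mathcal D$ is dephasing in the computational basis. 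Primitivity then follows in one step: for every nonzero $X\ge0$,
\[
 T_n(X)\ge(1/n)\Tr(X)\,I/2>0.
\]
Start the memory in $|+\rangle$. Let the adversary reference consist of the $n$ flags and the final memory, which the model permits, since extensions including preparation environments and the final memory are allowed.

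\emph{Attack on the good branch.} Let $G$ be the event that no reset occurred. It has probability $(1-1/n)^n$. On $G$, the emitted registers together with the final memory form the $(n+1)$-qubit GHZ state $(|0\rangle^{\otimes(n+1)}+|1\rangle^{\otimes(n+1)})/\sqrt2$. Fix distinct secrets $s\ne s'$ and choose $u\in\F_p^\times$ with $u(s'-s)=(p-1)/2$. Let $\lambda_i$ be the full-threshold reconstruction coefficients. Device $i$ applies the phase $\omega^{u\lambda_iY_i}$ to its $|1\rangle$ component and emits the qubit, so the channel is local with output dimension two. In the branch $G$ the total phase is $\omega^{us}$, exactly as in the proof of Proposition~\ref{sup:lower} with $Q=\varnothing$; the untouched memory qubit merely enlarges the GHZ support. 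The conditional state is the two-by-two block matrix with secret-independent diagonal and off-diagonal entry $\omega^{us}/2$. Hence the conditional distance equals $|\omega^{us}-\omega^{us'}|/2=\cos(\pi/(2p))$.

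\emph{Combining branches.} The reset branches may have any conditional distance, which is at least zero. Conditional states with distinct flag values are orthogonal classical blocks, so their trace norms add, as in the proof of Corollary~\ref{bh:separable}. Hence
\[
 \Adv(\rho_s,\rho_{s'})=\sum_{\text{flags}}\Pr(\text{flags})\,\Adv(\text{branch})\ge(1-1/n)^n\cos\frac{\pi}{2p}.
\]
Since $p>n\to\infty$, we have $\cos(\pi/(2p))\to1$ and $(1-1/n)^n\to e^{-1}$, which gives the stated limit.

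The main obstacle I expect is making the flag bookkeeping rigorous. One must check that the extension containing the flags and the memory is a legitimate share-independent resource. One must also check that discarding information never increases the distance while retaining the flags gives exactly the convex decomposition over branches. Both points follow from orthogonality of the classical flags.
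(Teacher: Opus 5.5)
Your proposal is correct and follows the paper's proof. The ingredients coincide: the same channel mixing the copy isometry $|b\rangle_R\mapsto|b\rangle_A|b\rangle_R$ with a flagged reset to $|0\rangle\langle0|_A\otimes I_R/2$ at probability $1/n$, one-step primitivity of $T_n$, the $(n+1)$-qubit GHZ branch of probability $(1-1/n)^n$ from $|+\rangle$, and additivity of trace distance over orthogonal, secret-independent flag branches. The only cosmetic difference is the choice of secrets: you choose $u$ with $u(s'-s)=(p-1)/2$ directly, as in Proposition~\ref{sup:lower}, whereas the paper distinguishes $0$ from $(p-1)/2$ and transports this to the prescribed pair by the affine substitution of Theorem~\ref{thm:sharp}.
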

\begin{proof}
Let $V|b\rangle\coloneqq |b\rangle_A|b\rangle_R$ for $b\in\{0,1\}$ and define, for $0<\epsilon<1$,
\[
 \Psi_\epsilon(X)\coloneqq (1-\epsilon)VXV^\dagger
   +\epsilon\Tr(X)|0\rangle\langle0|_A\otimes I_R/2.
\]
Retain a classical copy/reset branch flag from each emission in the adversary reference.
Tracing those flags gives exactly $\Psi_\epsilon$, so this is a permitted extension of the device marginal.
The transfer channel is
\[
 T_\epsilon(X)=(1-\epsilon)\mathcal D_Z(X)+\epsilon\Tr(X)I/2,
\]
where $\mathcal D_Z$ is dephasing in the computational basis.
It maps every nonzero positive operator to a positive definite operator and is therefore primitive in one step.
Its stationary state is $I/2$, and its eigenvalue on the Pauli operator $Z$ is $1-\epsilon$.

Start the memory in $|+\rangle$.
With probability $(1-\epsilon)^n$, no reset occurs and the emitted qubits together with the final memory form an $(n+1)$-qubit GHZ state.
This event is flagged and independent of all shares.
In the GHZ phase attack of \cite{BKNY}, device $i$ applies $\operatorname{diag}(1,\omega^{\lambda_iY_i})$, where $s=\sum_i\lambda_iY_i$.
Conditioned on no reset, the adversary receives
\[
 |G_s\rangle\coloneqq \frac{|0\rangle^{\otimes(n+1)}+\omega^s|1\rangle^{\otimes(n+1)}}{\sqrt2}.
\]
The trace distance between $|G_0\rangle$ and $|G_{(p-1)/2}\rangle$ is $\cos(\pi/(2p))$.
The affine substitution from Theorem~\ref{thm:sharp} transports this guarantee to any prescribed distinct pair.
The flags give orthogonal branches with secret-independent probabilities, so their contributions to trace distance add.
Keeping the no-reset contribution gives $(1-\epsilon)^n\cos(\pi/(2p))$.
Set $\epsilon\coloneqq 1/n$.
\end{proof}
For $\epsilon=1/n$, the transfer eigenvalue $1-\epsilon$ tends to one. For fixed $\epsilon>0$, the no-reset probability $(1-\epsilon)^n$ decays exponentially.
\section{Further structured resources}\label{sec:resources}
The marginal comparison also applies to resources specified by independent sources, stabilizer constraints, or a probability distribution on pure states.
These applications use different ways of locating a nearly product marginal.
The field and strategy uniformity follow from the same reduction.

\subsection{Independent source networks}\label{sec:edges}
Independent sources give a product comparison whenever no source touches two selected devices.
Let $G=(V,E_G)$ be a simple graph on $V=[n]$. Consider independent source states indexed by its edges
$\bigotimes_{e=\{i,j\}\in E_G}\Omega_{A_{i,e}A_{j,e}R_e}$, together with independent local states $\Omega_{A_{i,0}R_i}$.
Each vertex receives the incident edge registers, and each $R_e$ and $R_i$ belongs to the adversary.
For an independent vertex set $I$, give the attacker every initial register at $S\coloneqq V\setminus I$ and every share in $S$.
Each edge touching $I$ now belongs to a separate device--reference resource.
Edges internal to $S$ form an additional secret-independent reference.
The attacker can simulate the original $S$ channels later, using the revealed shares.
These channels commute with all $I$ channels because they act on disjoint input registers.

\begin{corollary}\label{cor:edge}
In the independent-source model above, suppose $G$ has an independent vertex set $I$ with $|I|\ge b_0 n$ for fixed $b_0\in(0,1]$.
Then there are $c>0,n_0$ such that every permitted one-qubit leakage strategy has distance at most $2^{-cn}$ for $n\ge n_0$ whenever
\[
 \frac tn\ge1-\frac{b_0}{2}+\eta
\]
for fixed $\eta>0$, uniformly over all primes $p>n$, evaluation points, permitted local maps, and finite source and reference dimensions.
The same rate holds for any fixed output dimension and for classical outputs.
\end{corollary}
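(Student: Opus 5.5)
The plan is to reduce Corollary~\ref{cor:edge} to the product-marginal case, Corollary~\ref{cor:product}, using the revelation argument sketched just before the statement. The reduction has three steps: reveal the shares of the devices in $S\coloneqq V\setminus I$, regroup the resource into independent device--reference pairs on $I$, and apply Theorem~\ref{thm:half} to the residual Shamir instance. Since $|S|\le(1-b_0)n$ and $t/n\ge 1-b_0/2+\eta$, we have $r\coloneqq|S|<t$ for large $n$, so Lemma~\ref{lem:conditioning} applies. Conditioned on $Y_S=z$, the shares on $I$ are coordinatewise affine images of an $(m,t-r)$ Shamir instance, where $m\coloneqq|I|$. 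The affine maps depend only on $z$ and are absorbed into the local channels on $I$.

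The next step is the regrouping. Hand the adversary every initial register located at a vertex of $S$, together with $Y_S$. Because $I$ is independent, each edge $e=\{i,j\}$ touching $I$ has exactly one endpoint, say $i$, in $I$. Its registers $A_{j,e}R_e$ therefore go to the reference, and the whole source becomes part of device $i$'s pair. Define $R'_i\coloneqq R_i\otimes\bigotimes_{e\ni i}(A_{j,e}R_e)$. Distinct $i\in I$ share no edges, so the resource on $I$ factors as $\bigotimes_{i\in I}\Omega_{A_iR'_i}$. Edges internal to $S$ are one further secret-independent tensor factor of the reference.

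The channels at $S$ act on registers disjoint from those at $I$, so they commute with the $I$-channels. Given $Y_S$, the adversary can apply them afterwards as secret-independent CPTP postprocessing. By data processing, the original distance is at most the distance in the revealed experiment. That distance is an average over $z$ of independent-pair distances for the residual instance, since the law of $Y_S$ is secret independent.

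For the final bound, check that the residual rate is above one half with margin. Here $(t-r)-m/2 = t-n+m/2\ge (1-b_0/2+\eta)n-n+b_0n/2=\eta n\ge\eta m$. Thus $(t-r)/m\ge1/2+\eta$, and Theorem~\ref{thm:half} applies with $d=2$. It is uniform over primes $p>n\ge m$, evaluation points (the affine images give new distinct nonzero points after rescaling), finite reference dimensions and chosen secrets. It gives $\Adv\le 2^{-c'm}\le 2^{-c'b_0 n}$ for $m\ge b_0 n$ large. For a fixed output dimension $d$, use Theorem~\ref{thm:half} with that $d$. For classical outputs, Corollary~\ref{cor:rate} with $K=d$ applies, because revelation does not change the output-state restriction on the $I$ devices.

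The step I expect to need most care is verifying the rate hypothesis of Theorem~\ref{thm:half} uniformly when $|I|$ exceeds $b_0n$. One route is to discard extra vertices of $I$ into $S$ until $m=\lceil b_0 n\rceil$; removing vertices keeps $I$ independent. This is unnecessary, though, because the margin computation above holds for every $m\ge b_0n$. The only remaining point is the condition $p>m$ with the original field, which holds trivially.
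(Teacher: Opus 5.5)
Your proposal is correct and follows the paper's own argument. You reveal $Y_S$ and give the adversary every register at $S$, so that independence of $I$ makes each edge touching $I$ part of a single device--reference pair; you then apply Theorem~\ref{thm:half} to the residual $(m,t-r)$ instance of Lemma~\ref{lem:conditioning} with margin $t-r-m/2\ge\eta n\ge\eta m$, average over the equally weighted revealed branches, and finish with the simulator and data processing. One minor correction: the residual instance keeps the original evaluation points $x_i$, $i\in I$, since only the share values and the residual secret change, and the residual secrets stay distinct for distinct $s,s'$, so no rescaling of points is needed.
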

\begin{proof}
Put $m\coloneqq |I|$ and $r\coloneqq n-m$.
Then $t-r-m/2\ge\eta n\ge\eta m$.
Apply Theorem~\ref{thm:half} to the conditional instance from Lemma~\ref{lem:conditioning}.
The revealed classical branches all have the same weight for both secrets, so the resulting bound is their probability-weighted average.
Finally apply the simulator and data processing.
\end{proof}
Bipartite source graphs admit $b_0=1/2$.
For $n\ge2$, a star admits the larger independent set of all $n-1$ leaves and the residual rate condition $(t-1)/(n-1)\ge1/2+\eta$.

\subsection{Graph-state resources}\label{sec:graph}
For graph states, a binary cut matrix determines the spectrum of the selected marginal.
We use the reduced-state and cut-rank formulas in \cite[arXiv version, Propositions~8 and~10]{Hein}.
Let $G=(V,E_G)$ be a simple graph on $V=[n]$, with adjacency matrix $\Gamma$ over $\F_2$. Its graph state $|G\rangle$ is the unique common $+1$ eigenvector of $K_v\coloneqq X_v\prod_{w\in N(v)}Z_w$, where $X,Z$ are Pauli operators and $N(v)$ is the set of neighbors of $v$.
For $I\subseteq V$, write $m\coloneqq |I|$, $S\coloneqq V\setminus I$, and
\[
 k_I\coloneqq \rank_{\F_2}\Gamma_{S,I}.
\]

\begin{proposition}\label{prop:cutrank}
The marginal on $I$ is a normalized projector of rank $2^{k_I}$.
In particular,
\[
 \rho_I\le2^{m-k_I}\frac{I_{2^m}}{2^m}.
\]
If $k_I=m$, the marginal is the product of $m$ maximally mixed qubits.
\end{proposition}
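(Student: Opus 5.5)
The plan is to use the stabilizer description of $|G\rangle$. Write
\[
|G\rangle\langle G|=2^{-n}\sum_{g\in\mathcal S}g ,
\]
where $\mathcal S$ is the abelian stabilizer group generated by the $K_v$. It has $2^n$ elements, and the element indexed by $c\in\F_2^V$ is, up to a sign $\pm1$, $K^c\coloneqq X^{c}Z^{\Gamma c}$. Tracing out $S$ kills every Pauli string that is not the identity on $S$. Hence
\[
\rho_I=2^{-m}\sum_{g\in\mathcal S_I}g|_I ,
\]
where $\mathcal S_I$ is the subgroup of stabilizer elements supported in $I$, and $g|_I$ is the restriction of $g$ to $I$. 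This is the reduced-state formula of \cite{Hein}; I would rederive it in one line from $\Tr(P)=0$ for nonidentity Paulis.

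The next step is to identify $\mathcal S_I$. The element $K^c$ is supported in $I$ exactly when two conditions hold:
\begin{itemize}
\item $c_S=0$, since the $X$ part on $S$ is $c_S$;
\item $(\Gamma c)_S=0$, which with $c_S=0$ becomes $\Gamma_{S,I}c_I=0$.
\end{itemize}
So $\mathcal S_I\cong\ker\Gamma_{S,I}\subseteq\F_2^I$, and $|\mathcal S_I|=2^{m-k_I}$. The restrictions $g|_I$ form an abelian group of commuting Paulis on $I$. This group does not contain $-I$, because $\mathcal S$ does not contain $-I$, and the map $g\mapsto g|_I$ is injective with signs preserved.

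Put
\[
P\coloneqq |\mathcal S_I|^{-1}\sum_{g\in\mathcal S_I}g|_I .
\]
Then $P$ is the projector onto the joint $+1$ eigenspace. It is idempotent because the group is closed, and it is Hermitian. Its trace is $2^m/|\mathcal S_I|=2^{k_I}$. Therefore
\[
\rho_I=2^{-m}|\mathcal S_I|\,P=2^{-k_I}P ,
\]
which is a normalized projector of rank $2^{k_I}$.

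For the operator inequality, use $P\le I$. This gives
\[
\rho_I\le 2^{-k_I}I=2^{m-k_I}\frac{I_{2^m}}{2^m}.
\]
If $k_I=m$, the kernel is trivial, so $\mathcal S_I=\{I\}$ and $\rho_I=I/2^m=(I_2/2)^{\otimes m}$.

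The only delicate point is the sign bookkeeping: one must check that products of the $K_v$ restricted to $I$ never produce $-I$. I would handle it by noting that $-I\notin\mathcal S$, because $\mathcal S$ stabilizes a nonzero vector.
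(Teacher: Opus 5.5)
Your proof is correct and follows the paper's route. The paper cites from Hein et al.\ the reduced-state formula $\rho_I=2^{-m}\sum_{g\in\mathcal S_I}g|_I$ and the cut-rank count $|\mathcal S_I|=2^{m-k_I}$, which you rederive from the stabilizer expansion, and it then concludes $\rho_I\le2^{-k_I}I_{2^m}$ as you do; your use of $-I\notin\mathcal S$ to ensure that only the identity term contributes to $\Tr P$ handles the one delicate point, so nothing is missing.
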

\begin{proof}
The cited formulas give nonzero eigenvalues $2^{-k_I}$ of multiplicity $2^{k_I}$ for $\rho_I$.
Thus $\rho_I\le2^{-k_I}I_{2^m}$, which gives the displayed bound and the full-rank case.
\end{proof}

For $n-m<t$, Theorem~\ref{thm:domination} therefore gives the finite graph-state bound
\[
 \Adv(\rho_s,\rho_{s'})
 \le 2^{m-k_I}\varepsilon_{\rm ref}(m,t-n+m,p).
\]
If every selected vertex is nonisolated in $G$, its one-qubit marginal is $I/2$.
Define
$\Dmax(\rho\|\sigma)\coloneqq \log_2\inf\{a:\rho\le a\sigma\}$,
$S(\rho)\coloneqq -\Tr(\rho\log_2\rho)$, and
$D(\rho\|\sigma)\coloneqq \Tr\rho(\log_2\rho-\log_2\sigma)$,
with relative entropies infinite when support inclusion fails.
Under this nonisolation hypothesis, the minimum product domination cost, over local density matrices $\tau_i$, is exactly $m-k_I$ bits, making $2^{m-k_I}$ the optimal product domination factor:
\[
 \inf_{\tau_i}\Dmax\!\left(\rho_I\middle\|\bigotimes_i\tau_i\right)=m-k_I.
\]
The upper bound is the preceding proposition.
For the reverse inequality, use $\Dmax\ge D$ and
\[
 D\!\left(\rho_I\middle\|\bigotimes_i\tau_i\right)
 =\sum_iS(\rho_i)-S(\rho_I)+\sum_iD(\rho_i\|\tau_i)
 \ge m-k_I.
\]
Here logarithms in entropic quantities are base two.

\paragraph{A deterministic dense family.}
Take two copies of $K_m$ and join corresponding vertices by a perfect matching.
The resulting $2m$-vertex graph is connected, has independence number at most two, and its matrix across the two cliques is $I_m$.
Thus one entire clique has a maximally mixed marginal.
Corollary~\ref{cor:product} gives exponential security at every rate above
$3/4$ by a fixed margin, for every fixed output dimension and for classical outputs.
Using the elementary scalar estimates instead gives the sufficient rates
$(1+\tau_{\rm ref})/2\approx0.9598606378$ for qubit outputs and
$(1+\tau_{\rm cl})/2\approx0.8586271094$ for classical-bit outputs.
Here the edges specify stabilizer generators; the resource is the associated pure graph state.

\paragraph{A random-graph family.}
For sufficiently large $n$, in $G(n,1/2)$ fix $I$ of size
$m\coloneqq \lfloor n/2\rfloor-\lceil\log_2n\rceil$.
The cross matrix has independent uniform binary entries.
For each nonzero column vector $v$, the probability that it lies in the kernel is $2^{-(n-m)}$.
A union bound gives
\[
 \Pr[k_I<m]\le(2^m-1)2^{-(n-m)}
 <2^{2m-n}\le n^{-2}.
\]
Thus the same asymptotic sufficient rate holds with probability tending to one.
The graph-state criterion is the rank of the cross matrix. For example, two leaves of a star are nonadjacent but have a correlated graph-state marginal.

\subsection{Typical pure resources}\label{sec:haar}
Typical entanglement estimates are standard; see \cite{HLW}.
A fixed subsystem smaller than half of the qubits is close to maximally mixed with high probability. The smooth marginal comparison makes that single event sufficient for all leakage strategies.

\begin{theorem}[Haar-typical security]\label{thm:haar}
For each fixed $\eta>0$ with $3/4+\eta\le1$, there are constants $c,b>0$ and $n_0$ such that, for every $n\ge n_0$, a Haar-random $n$-qubit pure resource has, except with probability at most $2^{-bn}$, chosen-secret leakage distance at most $2^{-cn}$ whenever
\[
 \frac tn\ge\frac34+\eta.
\]
The good-resource event works simultaneously for every prime $p>n$, every set of evaluation points, and every permitted one-qubit local strategy, including strategies chosen with knowledge of the resource.
The same conclusion holds for every fixed output dimension and for classical outputs.
\end{theorem}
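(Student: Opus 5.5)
The plan is to use the smoothed marginal comparison \eqref{eq:smooth} on a fixed selected set. Take $I\coloneqq\{1,\ldots,m\}$ with $m\coloneqq\lfloor(1/2-\zeta)n\rfloor$, where $\zeta>0$ is a small constant chosen from $\eta$. The resource is a pure Haar state on $n$ qubits, so the marginal $\Omega_{A_I}$ is the reduced state of a random bipartite pure state on $2^m\times 2^{n-m}$ dimensions. Standard typicality (Page/Hayden--Leung--Winter concentration, \cite{HLW}) gives, except with probability $2^{-bn}$, that $\Omega_{A_I}$ is close to maximally mixed. I will aim for the closeness statement in the form $\|\Omega_{A_I}-I/2^m\|_1\le 2^{-\zeta' n}$, using for instance the second moment $\E\Tr\Omega_{A_I}^2=(2^m+2^{n-m})/(2^n+1)$. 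Markov's inequality gives $\|\Omega_{A_I}-I/2^m\|_2^2\le 2^{-(n-2m)}\cdot 2^{\zeta n}$ with failure probability $2^{-\zeta n}$. Cauchy--Schwarz then yields trace distance at most $2^{m/2}2^{-(n-m)/2+\zeta n/2}=2^{-(n-2m)/2+\zeta n/2}\le 2^{-\zeta n/2}$. Fuchs--van de Graaf converts this into a purified distance $\delta\le 2^{-\zeta n/4}$ from $\bar\eta_{A_I}\coloneqq I/2^m$. The state $\bar\eta_{A_I}$ is exactly a product of maximally mixed qubits, so $\Lambda=1$. This single event depends only on the resource, not on $p$, $x$ or the strategy, which is what gives the simultaneity claim.

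Next I apply \eqref{eq:smooth} with $r=n-m<t$. The residual instance after revealing $Y_{[n]\setminus I}$ (Lemma~\ref{lem:conditioning}) has $m$ shares and threshold $t-r=t-n+m$. The residual rate is $(t-n+m)/m\ge 1-(n-t)/m\ge 1-(1/4-\eta)/(1/2-\zeta)$, which exceeds $1/2$ by a fixed margin once $\zeta$ is small relative to $\eta$. Theorem~\ref{thm:half} (or Corollary~\ref{cor:rate} with $K=d^2$ or $K=d$ for classical outputs) supplies $\varepsilon_{\rm ref}\le 2^{-c_1 m}$ uniformly over primes $p>n\ge m$, evaluation points, channels and references. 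The total bound is therefore $2\cdot2^{-\zeta n/4}+2^{-c_1m}\le 2^{-cn}$ for large $n$.

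The remaining care points are bookkeeping rather than new ideas. The adversary extension here is the trivial one, since the pure state leaves no reference, but \eqref{eq:smooth} allows arbitrary extensions anyway. Strategies may be chosen after the resource is seen: the bound holds deterministically on the good event for every strategy, so adaptivity is harmless. The main obstacle is the concentration step, namely getting an exponentially small failure probability together with an exponentially small purified distance. The second-moment/Markov route above handles both at once by trading $\zeta n$ of the exponent. If a sharper rate or failure exponent is wanted, I would instead invoke Lévy concentration on the sphere for the Lipschitz function $\psi\mapsto\|\Omega_{A_I}-I/2^m\|_1$, following \cite{HLW}.
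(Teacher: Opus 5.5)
Your proposal is correct and is essentially the paper's proof: a fixed set of $\lfloor(1/2-\zeta)n\rfloor$ qubits, the Haar second moment of $\Tr\rho_A^2$ with Markov's inequality, and then \eqref{eq:smooth} with $\Lambda=1$ and Theorem~\ref{thm:half} on the residual instance. The only difference is cosmetic: you pass through the trace norm and Fuchs--van de Graaf (giving $\delta\le2^{-\zeta n/4}$), whereas the paper bounds $P(\rho_A,I/d_A)^2\le d_A\Tr\rho_A^2-1$ directly; also, your intermediate Markov bound should read $2^{-(n-m)+\zeta n}$, which is the exponent your Cauchy--Schwarz step actually uses.
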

\begin{proof}
Choose a predetermined set $I$ of $m\coloneqq \lfloor(1/2-\xi)n\rfloor$ device qubits, where
$\xi\coloneqq \min\{1/8,\eta/2\}$.
Let $A$ denote these qubits and $B$ their complement, and write
$\rho_A\coloneqq \Tr_B|\psi\rangle\langle\psi|$, $d_A\coloneqq 2^m$, and $d_B\coloneqq 2^{n-m}$.
Let $\mathsf S$ swap two copies of $A\otimes B$.
The Haar second moment is
\[
 \E(|\psi\rangle\langle\psi|^{\otimes2})
 =\frac{I+\mathsf S}{d_Ad_B(d_Ad_B+1)}.
\]
This follows from unitary invariance on the symmetric subspace and its dimension.
The swap trick gives
\[
 \E\Tr(\rho_A^2)=\frac{d_A+d_B}{d_Ad_B+1}.
\]
Recall that $P(\rho,\sigma)^2=1-F(\rho,\sigma)^2$.
If $\lambda_j$ are the eigenvalues of $\rho_A$, then
\begin{align*}
 P(\rho_A,I/d_A)^2
 &\le2\bigl(1-F(\rho_A,I/d_A)\bigr)\\
 &=\sum_j(\sqrt{\lambda_j}-1/\sqrt{d_A})^2\\
 &\le d_A\sum_j(\lambda_j-1/d_A)^2
 =d_A\Tr(\rho_A^2)-1.
\end{align*}
Thus
\[
 \E P(\rho_A,I/d_A)^2
 \le\frac{d_A^2-1}{d_Ad_B+1}\le\frac{d_A}{d_B}
 \le2^{-2\xi n}.
\]
Markov's inequality shows that
$P(\rho_A,I/d_A)\le2^{-\xi n/2}$ except with probability at most $2^{-\xi n}$.

On that event, apply \eqref{eq:smooth} with $\Lambda=1$.
Writing $t_I\coloneqq t-n+m$ for the residual threshold, the rate hypothesis and flooring give
\[
 t_I-\frac m2\ge
 (\eta-\xi/2)n-\frac12
 \ge(\eta/2)m
\]
for sufficiently large $n$.
Theorem~\ref{thm:half} bounds the residual leakage exponentially in $m$, and
the approximation adds at most $2^{1-\xi n/2}$.
Since $m$ is linear in $n$, their sum is exponentially small in $n$.
The event depended only on $\rho_A$, so it gives the conclusion simultaneously for all listed strategies and fields.
\end{proof}
This is a statement about typical resources; the preceding graph-state examples and the preparation model below supply explicit structural alternatives.
\section{Preparation depth}\label{sec:depth}
A preparation circuit constrains which devices can share correlations.
Disjoint backward lightcones identify a product marginal, so a shallow preparation imposes a security bound.
The gates in this model act on complete device register groups, making their arity a bound on causal communication between devices.

Fix an integer arity $a\ge2$, independently of $n$, and partition all input registers into $n$ complete device register groups.
Initially their marginal is a product.
A depth-$\ell$ preparation consists of layers of disjoint CPTP gates, chosen before the sharing, each acting on at most $a$ complete groups.
Every causal influence must pass through these gates; no free global classical communication, adaptive feedforward, or postselection is included.
Independent local ancillas and local channels may be included.
The disjointness is at the group level: a layer of gates that are disjoint on individual registers can still couple many devices when a device holds many registers.
Any adversary reference extending the resulting device marginal is allowed.

Each output backward lightcone contains at most $a^\ell$ input groups.
Each input group belongs to at most $a^\ell$ output backward lightcones.
Consequently each output lightcone intersects at most $a^{2\ell}-1$ others.
A greedy choice gives at least
\[
 M\coloneqq \left\lceil\frac{n}{a^{2\ell}}\right\rceil
\]
outputs with disjoint backward lightcones.
Their joint marginal is a product: pull back arbitrary product observables to disjoint input supports and factor expectations in the product input.

Recall the Fourier cap $c_p=c_{4,p}=\min\{1,\kappa(4)+2\pi^2\kappa(4)/(3p^2)\}$.

\begin{theorem}[Full-threshold depth barrier]\label{thm:depth}
In the complete-group preparation model above, with fixed arity $a\ge2$ and depth $\ell$, suppose $t=n$ and $M=\lceil n/a^{2\ell}\rceil\ge2$.
Every permitted leakage strategy satisfies
\[
 \Adv(\rho_s,\rho_{s'})\le\min\{1,3c_p^{M-2}\}.
\]
For any fixed $\delta\in(0,1)$, a family of attacks with advantage at least $\delta$ requires
\[
 \ell\ge\frac12\log_a n-O_\delta(1).
\]
Conversely, the known GHZ leakage attack admits resource preparation depth $O(\log_a n)$.
Thus logarithmic depth is necessary and sufficient in order of growth for constant-advantage attacks at full threshold in this preparation model.
\end{theorem}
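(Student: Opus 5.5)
We plan to prove the theorem in three parts: an upper bound from a product marginal, the depth consequence, and a shallow GHZ preparation.

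\textbf{Upper bound.} The greedy argument before the theorem supplies a set $I$ of $M$ outputs with disjoint backward lightcones, and $\Omega_{A_I}=\bigotimes_{i\in I}\tau_i$ is an exact product. We apply Theorem~\ref{thm:domination} with $\Lambda=1$, which allows an arbitrary adversary extension of the device marginal. With $r=n-M$, the hypothesis $r<t=n$ holds because $M\ge1$. Revealing the $n-M$ complementary shares leaves, by Lemma~\ref{lem:conditioning}, a full-threshold $(M,M)$ Shamir instance over the same field $\F_p$. This instance is attacked by independent device--reference pairs with one-qubit outputs. For that residual instance we use the endpoint bound from the proof of Theorem~\ref{thm:sharp}: the simpler moment form of Theorem~\ref{thm:finite} at $t=n$ with $K=4$ gives $\Adv\le3c_{4,p}^{M-2}$, valid for $M\ge2$ and uniform in $p$ and the evaluation points. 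Combining this with the trivial bound $1$ gives $\min\{1,3c_p^{M-2}\}$.

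\textbf{Depth lower bound.} Suppose a family of attacks has advantage at least $\delta$. Because $c_p\le\kappa(4)\lambda_p$ with $p>n$, we have $c_p\le c_0<1$ for some absolute $c_0$ once $n$ exceeds a fixed bound. Then $3c_0^{M-2}\ge\delta$ forces $M\le2+\log(3/\delta)/\log(1/c_0)=O_\delta(1)$. Since $M\ge n/a^{2\ell}$, this gives $a^{2\ell}\ge n/O_\delta(1)$, that is, $\ell\ge\tfrac12\log_a n-O_\delta(1)$. The finitely many small $n$ are absorbed into the constant.

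\textbf{GHZ preparation.} We prepare an $n$-qubit GHZ state in depth $O(\log_a n)$ using two-group gates, which is allowed since $a\ge2$. Device $1$ starts in $|+\rangle$ and the others in $|0\rangle$. In each layer, every device already in the GHZ branch applies a CNOT to a fresh device. These gates act on disjoint pairs of complete groups, and each layer doubles the number of entangled devices, so $\lceil\log_2 n\rceil=O(\log_a n)$ layers suffice. The adversary needs no reference. Each device then applies the phase $\operatorname{diag}(1,\omega^{u\lambda_iY_i})$, and the adversary performs the parity/phase measurement. This is the case $r=t=n$ of Proposition~\ref{sup:lower}, which gives advantage $\cos(\pi/(2p))\ge\cos(\pi/6)$ for any prescribed secret pair, a constant.

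The main obstacle is the first step. We must check that the causal model really makes the $I$-marginal an exact product, so that pulling back through gates acting on complete groups factorizes, and that Theorem~\ref{thm:domination} applies to its arbitrary extension with the residual instance remaining one-qubit and full-threshold. Everything else is bookkeeping.
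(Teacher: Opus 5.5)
Your proposal is correct and takes essentially the same approach as the paper: product marginal on $M$ outputs with disjoint backward lightcones, transfer with $\Lambda=1$ (the paper cites Corollary~\ref{cor:product}), the second bound of Equation~\eqref{eq:finite} at $q=M$ giving $3c_p^{M-2}$, a fixed cap $(1+\kappa(4))/2$ on $c_p$ for $n\ge11$ to bound $M$, and a $\lceil\log_2 n\rceil$-depth CNOT tree for GHZ. The only difference is cosmetic: the paper uses local $X$-basis measurements with parity expectation $\cos(2\pi s/p)$ for separated secrets, while your appeal to the case $r=t=n$ of Proposition~\ref{sup:lower} gives $\cos(\pi/(2p))$ for every prescribed pair, which is equally valid and slightly stronger.
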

\begin{proof}
Choose $M$ outputs with disjoint backward lightcones and reveal the other shares.
The residual threshold equals $M$, and Corollary~\ref{cor:product} transfers the same bound from independent device--reference pairs.
In the second bound of Equation~\eqref{eq:finite}, the Fourier exponent is $q=M$, so that bound simplifies, for every $\lambda\ge1$, to $3c_p^{M-2}$.

Put $c_\circ\coloneqq(1+\kappa(4))/2<1$.
For $n\ge\lceil2\pi\sqrt{\kappa(4)/(3(1-\kappa(4)))}\rceil=11$ and $p>n$,
\[
 c_p\le\kappa(4)+\frac{2\pi^2\kappa(4)}{3n^2}\le\frac{1+\kappa(4)}{2}=c_\circ.
\]
Advantage at least $\delta$ then forces
\[
 M\le M_\delta\coloneqq 2+\frac{\log(3/\delta)}{\log(1/c_\circ)}.
\]
Since $M\ge n/a^{2\ell}$, the lower bound follows.
The case $M<2$ already forces a lower bound of the same order.

For the upper bound, a tree of controlled-NOT operations prepares an $n$-qubit GHZ state in depth $\lceil\log_2n\rceil$, which is $O(\log_a n)$ for fixed $a\ge2$.
Given a reconstruction set, device $i$ applies the phase determined by its share and its Lagrange coefficient, and all devices measure in the $X$ basis.
The parity has expectation $\cos(2\pi s/p)$.
Choosing separated secrets gives advantage bounded away from zero.
This uses the GHZ attack from \cite{BKNY} and establishes the matching depth order.
\end{proof}

For $u\coloneqq n-t$, the residual threshold on a product-marginal set of size $m$ is $m-u$.
Recall the fixed cap $c_\circ<1$.
If $m\ge2(u+1)$ and $c_p\le c_\circ$, use $c_\circ$ as the pointwise Fourier cap and choose $\lambda\coloneqq 1/c_\circ$ in the simpler moment estimate from the proof of Theorem~\ref{thm:finite}.
This gives
\[
 \Adv(\rho_s,\rho_{s'})
 \le c_\circ^{m-u}\left[(1+3c_\circ^{-2})^{u+1}-1\right].
\]
In particular, $m$ larger than a fixed constant times
$u+\log(1/\delta)+1$ guarantees advantage at most $\delta$.
Combining this with the lightcone packing gives a quantitative depth obstruction near full threshold.

\section{The domination-two extension}\label{ppt:proof}
We prove Theorem~\ref{ppt:domination-two} by realizing its local state
families in real Hilbert unit balls and checking the estimates of
Section~\ref{bh:section}. The scalar inputs remain
Lemmas~\ref{bh:truncated}, \ref{bh:real-gap}, \ref{bh:product-mds}
and~\ref{bh:large-support-sum}.

\begin{proof}[Proof of Theorem~\ref{ppt:domination-two}]
All states are supported on $\supp(\gamma_i)$ by
\eqref{ppt:domination}; every inverse below is restricted to that support.
Suppressing $i$, define Hermitian matrices
\[
 G\coloneqq\gamma^{1/2},\qquad
 W(y)\coloneqq\gamma^{-1/4}\sigma(y)\gamma^{-1/4},\qquad
 v(y)\coloneqq W(y)-G.
\]
The Hermitian matrices form a real Hilbert space with inner product
$\langle U,V\rangle\coloneqq\Tr(UV)$.
We have $0\le W(y)\le2G$, $\Tr(W(y)G)=1$, and $\Tr(G^2)=1$.
The trace of a product of two positive operators is nonnegative, so
\[
 \Tr(W(y)^2)\le2\Tr(W(y)G)=2,
 \qquad \|v(y)\|_2^2=\Tr(W(y)^2)-1\le1.
\]
Set
\[
 b_i\coloneqq\E_yv_i(y),\qquad
 u_i(y)\coloneqq v_i(y)-b_i,\qquad r_i\coloneqq\|b_i\|_2.
\]
The Hilbert-space variance identity gives
\begin{equation}\label{ppt:variance}
 r_i\le1,\qquad \|u_i(y)\|_2\le1+r_i,\qquad
 \E_y\|u_i(y)\|_2^2\le1-r_i^2.
\end{equation}

For every supported operator $Z$, Schatten H\"older with exponents
$4,2,4$ gives
\begin{equation}\label{ppt:holder}
 \|Z\|_1
 \le\|\gamma^{1/4}\|_4^2
       \|\gamma^{-1/4}Z\gamma^{-1/4}\|_2
 =\|\gamma^{-1/4}Z\gamma^{-1/4}\|_2.
\end{equation}
This applies to non-Hermitian Fourier coefficients as well.
For $a\ne0$, put $w_i(a)\coloneqq\|\widehat v_i(a)\|_2$.
The constant $G_i$ has no nonzero Fourier coefficients, so
\eqref{ppt:holder}, Parseval and Hermitian conjugation at opposite
frequencies give
\begin{equation}\label{ppt:fourier}
 \|\widehat\sigma_i(a)\|_1\le w_i(a),\qquad
 \sum_{a\ne0}w_i(a)^2\le1-r_i^2\le1,
 \qquad w_i(a)=w_i(-a)\le2^{-1/2}.
\end{equation}
Here $p$ is odd once the absolute length cutoff is at least two.

Define the centered support operators and their Hilbert-space counterparts by
\[
 T_A(z)\coloneqq\E_{X\in z+\mathcal G}
       \bigotimes_{i\in A}(\sigma_i(X_i)-\bar\sigma_i),\qquad
 M_A(z)\coloneqq\E_{X\in z+\mathcal G}
       \bigotimes_{i\in A}u_i(X_i).
\]
The tensor version of \eqref{ppt:holder}, with the density operator
$\gamma_A\coloneqq\bigotimes_{i\in A}\gamma_i$, yields
\begin{equation}\label{ppt:tensor-bound}
 \|T_A(z)\|_1\le\|M_A(z)\|_2.
\end{equation}
As in \eqref{bh:support-expansion}--\eqref{bh:exact-support}, with
$\sigma_i$ in place of $\rho_i$, multilinearity gives the support
expansion of $R_z-R_*$, uniform MDS marginals imply $T_A(z)=0$ for
$0<|A|\le\nu$, and normalized Fourier inversion gives the exact-support
formula for larger supports.
Consequently Lemma~\ref{bh:truncated} applies with cap $2^{-1/2}$ and
squared-mass parameter one. For fixed $c_0,\eta>0$, if
$\nu\ge c_0n$ and $\nu\le m\le n$ with $2\nu-m\ge\eta n$, then
\begin{equation}\label{ppt:low}
 \sum_{0<|A|\le m}\|T_A(z)\|_1
 \le(A_0n^{b_0}+1)2^{-\nu/2},
\end{equation}
where $A_0,b_0$ and the length cutoff depend only on $c_0,\eta$.

For the larger supports, define
\[
 K_i(x,y)\coloneqq\Tr(u_i(x)u_i(y)),\qquad
 B_A\coloneqq\sum_{i\in A}r_i^2.
\]
These kernels are real, have mean zero on $\F_p^2$, and satisfy
$|K_i(x,y)|\le(1+r_i)^2$.
For independent uniform $X,Y\in z+\mathcal G$,
\begin{equation}\label{ppt:gram}
 \|M_A(z)\|_2^2=\E_{X,Y}\prod_{i\in A}K_i(X_i,Y_i).
\end{equation}
Assume for now $n/3<\nu\le n/2$ and $|A|\ge\nu+1$.
With $j_0\coloneqq\lfloor(3\nu-n)/2\rfloor$ and $\gamma_*=(3/4)^{1/8}$,
the derivation of \eqref{bh:small-bias}, using
Lemmas~\ref{bh:real-gap} and~\ref{bh:product-mds} for $K_i/(1+r_i)^2$
on $\F_p^2$ and equations \eqref{ppt:tensor-bound}--\eqref{ppt:gram}, gives
\begin{equation}\label{ppt:small-bias}
 \|T_A(z)\|_1\le\gamma_*^{j_0/2}\prod_{i\in A}(1+r_i)
 \le\exp\!\left(\frac{j_0}{2}\log\gamma_*+\sqrt{nB_A}\right).
\end{equation}
For the complementary bias estimate, the calculation of Supplementary
Section~\ref{supp:bh-bias} applies with
$\|\sigma_i(x)-\bar\sigma_i\|_1\le\|u_i(x)\|_2\le1+r_i\le2$ from
\eqref{ppt:holder} and \eqref{ppt:variance}, uniformity within each
$\nu$-coordinate MDS marginal, and the variance bound in
\eqref{ppt:variance}. It gives
\begin{equation}\label{ppt:large-bias}
 \|T_A(z)\|_1
 \le\exp\!\left((n-2\nu)(\log2+\tfrac12)-\tfrac12B_A\right).
\end{equation}
If $r_i=1$, \eqref{ppt:variance} makes $u_i$ identically zero, so the
inequalities include that endpoint.

For $c_{\rm q}\le\nu/n\le1/2$, choose
$m\coloneqq\lfloor(1-10^{-8})n\rfloor$ and split the large supports at
$B_A=2^{-18}n$.
Then $2\nu-m\ge8\cdot10^{-9}n$.
Equations \eqref{ppt:small-bias} and~\eqref{ppt:large-bias} are exactly
the two scalar bounds used in Supplementary Section~\ref{supp:bh-bias},
so \eqref{bh:combined-bias} holds for these operators, and
Lemma~\ref{bh:large-support-sum}, whose proof uses only that bound, gives
\[
 \sum_{|A|>m}\|T_A(z)\|_1
 \le(n+1)e^{-10^{-6}n+\lambda_*/2},
 \qquad \lambda_*\coloneqq-\log\gamma_*.
\]
Combining this with \eqref{ppt:low} and the support expansion gives
\[
 \Adv(R_z,R_*)\le\frac12\left[
 (A_0n^{b_0}+1)2^{-\nu/2}+(n+1)e^{-10^{-6}n+\lambda_*/2}\right].
\]
Every constant and scalar input is the same as in the proof of
Theorem~\ref{bh:main}, so the same choices of $a,n_0$ give $e^{-an}$.
For larger $\nu$, restrict to the GRS subcode of dimension
$\nu_0\coloneqq\lceil c_{\rm q}n\rceil$.
For sufficiently large $n$ it lies in the direct range. Every original
coset is a uniform mixture of subcode cosets with the same $R_*$;
convexity completes the proof.
\end{proof}

\end{document}